\documentclass[11pt,a4paper]{article}
\usepackage[T1]{fontenc}
\usepackage[utf8]{inputenc}
\usepackage{lmodern}
\usepackage[margin=28mm]{geometry}
\usepackage{amsmath,amssymb,amsthm,mathtools}
\usepackage{microtype}
\usepackage{enumitem}
\usepackage{needspace}
\usepackage{fancyhdr}
\usepackage[hyphens]{url}
\usepackage[hidelinks,pdfencoding=auto]{hyperref}
\usepackage[capitalise,noabbrev]{cleveref}

\newcommand{\class}[1]{\mathsf{#1}}
\newcommand{\Pclass}{\class{P}}
\newcommand{\NP}{\class{NP}}
\newcommand{\coNP}{\class{coNP}}
\newcommand{\PH}{\class{PH}}
\newcommand{\SAT}{\class{SAT}}
\newcommand{\CSAT}{\class{CircuitSAT}}
\newcommand{\UNSAT}{\class{UNSAT}}
\newcommand{\UCSAT}{\class{UniqueCircuitSAT}}
\newcommand{\DTIME}{\class{DTIME}}
\newcommand{\RTIME}{\class{RTIME}}
\newcommand{\BPTIME}{\class{BPTIME}}
\newcommand{\ZPP}{\class{ZPP}}
\newcommand{\EXP}{\class{EXP}}
\newcommand{\NEXP}{\class{NEXP}}
\newcommand{\REXP}{\class{REXP}}
\newcommand{\UEXP}{\class{UEXP}}
\newcommand{\EXPH}{\class{EXPH}}
\newcommand{\BPEXP}{\class{BPEXP}}
\newcommand{\QH}{\class{QH}}
\newcommand{\BPQP}{\class{BPQP}}

\newcommand{\poly}{\mathrm{poly}}
\newcommand{\mc}{\text{-}\class{mc}}
\newcommand{\bits}{\{0,1\}}
\newcommand{\F}{\mathbb F}
\newcommand{\E}{\mathbb E}
\newcommand{\ind}{\mathbf 1}
\newcommand{\Sclass}{\class{S}_2}
\newcommand{\Llog}{\lambda}
\newcommand{\hyp}{\mathcal H_d}
\newcommand{\Hrel}{\hyp^{\mathrm{rel}}}
\newcommand{\Hsr}{\hyp^{\mathrm{sr}}}
\newcommand{\Hwr}{\hyp^{\mathrm{wr}}}
\newcommand{\Td}{T_d}
\DeclareMathOperator{\Sound}{Sound}
\DeclareMathOperator{\Complete}{Complete}

\theoremstyle{plain}
\newtheorem{theorem}{Theorem}[section]
\newtheorem{lemma}[theorem]{Lemma}
\newtheorem{proposition}[theorem]{Proposition}
\newtheorem{corollary}[theorem]{Corollary}
\theoremstyle{definition}
\newtheorem{definition}[theorem]{Definition}
\newtheorem{hypothesis}[theorem]{Hypothesis}
\theoremstyle{remark}
\newtheorem{remark}[theorem]{Remark}
\crefname{hypothesis}{Hypothesis}{Hypotheses}
\crefname{lemma}{Lemma}{Lemmas}
\crefname{proposition}{Proposition}{Propositions}
\crefname{corollary}{Corollary}{Corollaries}

\setlist[enumerate]{topsep=5pt,itemsep=3pt}
\numberwithin{equation}{section}

\hypersetup{
  pdftitle={Consequences of Polylogarithmic Membership Comparability for SAT},
  pdfauthor={Sebastian Ben Daniel},
  pdfsubject={Expanded version: relative-gap advice, SAT simulations, and oracle limits},
  pdfkeywords={membership comparability, SAT, polynomial hierarchy, advice, isolation}
}

\title{\LARGE\bfseries Consequences of Polylogarithmic\\
Membership Comparability for SAT}
\author{Sebastian Ben Daniel}
\date{September 26, 2026\\[3pt]\small Expanded version}

\begin{document}
\maketitle
\thispagestyle{plain}

\begin{abstract}
We study the consequences of membership comparators that exclude one
possible membership vector, deterministically or with a relative
advantage over uniform guessing. For every polynomially bounded arity,
a randomized polynomial-time comparator of error at most
$(1-1/\poly(n))2^{-t}$ gives
$\NP/\poly\cap\coNP/\poly$ recognition with common advice. The proof
uses limited independence, polynomial occurrence certificates, and a
self-contained positive-relation advice transfer.
For SAT at arity $O((\log n)^d)$, both this relative-gap hypothesis and
deterministic comparability imply $\PH=\Sclass^{\NP}$, the uniform
bound $\PH\subseteq\BPTIME(2^{O((\log n)^{d^2})})$, and symmetric
verification with polynomial-length certificates and an oracle-free
deterministic $2^{O((\log n)^d)}$ predicate. Polynomial-advice
deterministic decoding has the same exponent $d$.
Applying the randomized simulation to an unconditional diagonal language
yields, for every fixed $\varepsilon>0$,
$\mathrm{BPP}\subsetneq\BPTIME(2^{O((\log n)^{d^2+\varepsilon})})$,
without advice. The larger clock remains subexponential under every
fixed number of self-compositions. A layered oracle satisfies
deterministic comparability and $\NP^O=\coNP^O$ but excludes randomized
NP algorithms with smaller logarithmic power, establishing a relativized
limit on the SAT exponent $d$.
This expanded version also develops the full weak-advantage regime,
where saving $2^{-O((\log n)^d)}$ gives randomized SAT exponent $d$
and PH exponent $d^k$ at fixed level $k$; the quasipolynomial and
exponential hierarchy consequences; binary-comparator advice bounds;
and the certificate-length boundary between the randomized regimes.
Under deterministic comparability, uniform deterministic promise-unique
search additionally gives $\UEXP=\EXP$. The ordinary second-level
collapse $\PH=\Sigma_2^p$ for $d>1$ remains unproved.
\end{abstract}

\clearpage
\tableofcontents
\clearpage
\section{Scope, conventions, and main conclusions}
\label{sec:intro}

This paper studies randomized membership comparability beyond the
absolute-advantage regime. Its general relative-gap theorem applies to
arbitrary languages at polynomial arity; its SAT consequences track
uniform time, predicate time, witness length, and ordinary advice
separately. The reconstruction builds on Sivakumar~\cite{Siv99}; the
advice and isolation questions also draw on Ben Daniel's thesis
\cite{BD13} and binary-certificate manuscript~\cite{BD26}. The imported
structural results are the advice theorem of Amir, Beigel, and Gasarch
\cite{ABG03} and the strengthened Yap collapse of Cai, Chakaravarthy,
Hemaspaandra, and Ogihara~\cite{CCHO05}.

\paragraph{Relationship to the submission manuscript.}
This is the expanded version of
\emph{Relative-Gap Membership Comparability: Advice, SAT Simulations,
and Oracle Limits}. That focused manuscript contains the general
relative-gap theorem, its self-contained positive-relation proof, the
polylogarithmic SAT consequences, the bounded-error time separation,
and the oracle lower bound. The present version retains the full
weak-advantage analysis, the additional hierarchy and advice
consequences, and the examination of the remaining second-level
question. The two versions contain overlapping results from the same
work, not disjoint claims intended as independent submissions.

\Cref{sec:nd-hard-extension} uses an unconditional Kannan-style
truth-table diagonalization~\cite{kannan82}, adapted to signed
existential circuits. The languages $H$ and $G_h$, their fifth-level
alternating-time characterizations, and their circuit hardness are
independent of $\hyp$. Their randomized clocks, and the separate advice
containment $\PH\subseteq\NP/\poly$, are the two conditional inputs.
The diagonalization and the qualitative implication
$\QH\subseteq\BPQP\Rightarrow\mathrm{BPP}\subsetneq\BPQP$ are standard
background; the contributions recorded here are the quantitative
simulations, their exponent accounting, and the layered oracle bound.

The time separation has a specific position among hierarchy results.
Karpinski--Verbeek's padding argument separates BPP from bounded-error
subexponential time~\cite{karpinski-verbeek87}. Lu, Oliveira, and
Santhanam identify the smaller-gap question of separating
$\BPTIME(n)$ from $\BPTIME(T(n))$ when every fixed self-composition of
$T$ is still subexponential~\cite{lu-oliveira-santhanam21}.
Our conditional quasipolynomial clock falls in that regime.
\Cref{sec:oracle} supplies a complementary limit on relativizing
improvements to the randomized SAT clock.

All logarithms are base two. Write
\[
  \Llog(n)=\lceil\log_2(n+2)\rceil.
\]
The notation $\DTIME(2^{O(\Llog(n)^a)})$, and its randomized analogues,
denotes the union over constant multiplicative factors in the exponent.
Constants may depend on the language and the fixed algorithms, but not on
the input length. Replacing $\Llog(n)$ by $\log n$ gives the same asymptotic
classes. Algorithms for deterministic and bounded-error time containments are
uniformly clocked on every input and every random tape. Randomized one-sided error means that no-instance acceptance has
probability zero and yes-instance acceptance has probability at least $2/3$.
Bounded two-sided error means error at most $1/3$ on every input.

\begin{definition}[Membership comparability]
\label[definition]{def:membership-comparability}
A language $A$ is $k(n)$-membership comparable if a deterministic
polynomial-time algorithm, on an ordered tuple $(x_1,\ldots,x_t)$ with
$t\geq k(n)$ and $n=\max_i|x_i|$, outputs $b\in\bits^t$ such that
\[
 b\neq \bigl(\chi_A(x_1),\ldots,\chi_A(x_t)\bigr).
\]
The running time is polynomial in the total tuple encoding length.
\end{definition}

This is Sivakumar's maximum-individual-length convention~\cite{Siv99}.
Larger arity is a weaker hypothesis. Throughout, comparisons use the
threshold formulation just defined.

\begin{hypothesis}[Polylogarithmic comparability, $\hyp$]
\label[hypothesis]{hyp:main}
Fix an integer $d\geq 1$. There are a constant integer $c_0\geq 1$ and one
deterministic polynomial-time algorithm $g$ that compares every tuple of
SAT instances whose arity is at least $c_0\Llog(n)^d$, where $n$ is the
maximum encoding length of its formulas.
\end{hypothesis}

We assume standard explicit encodings of formulas and circuits. Every
named circuit input contributes to its encoding, and circuits have at
most their encoding length many input variables. Malformed encodings are
rejected. Reductions use uniquely determined gate variables when
preservation of the number of satisfying assignments is required.

\begin{definition}[The promise-unique problem]
An algorithm solves $\UCSAT$ if it rejects every circuit with zero
satisfying assignments and accepts every circuit with exactly one
satisfying assignment. Either answer is allowed on a circuit with more
than one satisfying assignment. Its time bound must hold on all circuits,
including the off-promise ones.
\end{definition}

\paragraph{Polynomial-length alternation with a longer predicate clock.}
For a time bound $T$, write $\Sigma_2^p[T]$ for the languages having a
characterization
\[
 x\in L\iff
 \exists y\in\bits^{p(n)}\ \forall z\in\bits^{q(n)}\ P(x,y,z)=1,
 \qquad n=|x|,
\]
where $p,q$ are fixed polynomials and the deterministic predicate is
clocked in time $O(T(n))$ on all permitted triples. Time is measured in
the original input length $n$, not in an expanded circuit description.
Define $\Pi_2^p[T]$ by reversing the two quantifiers. Bounds written as
$2^{O(\Llog(n)^a)}$ take a union over the constant in the exponent.
Thus $\Sigma_2^p[\poly]=\Sigma_2^p$, whereas
$\Sigma_2\mathrm{TIME}(T)$ may use quantified strings as long as $T$.
These conventions distinguish the resource bounds in
\cref{thm:certified-collapse,thm:s2-simulation} from those in \cref{thm:qh}.
The symmetric class $\Sclass^p[T]$, defined in
\cref{def:s2-clock}, additionally requires a winning certificate for the
appropriate prover against all opposing certificates.

For the larger time scales, define
\begin{align*}
 \QH&=\bigcup_{j,c\geq 1}
       \Sigma_j\mathrm{TIME}\bigl(2^{O(\Llog(n)^c)}\bigr),\\
 \BPQP&=\bigcup_{c\geq1}
       \BPTIME\bigl(2^{O(\Llog(n)^c)}\bigr).
\end{align*}
Here $j$ and $c$ are fixed integers in each constituent class.
$\Sigma_j\mathrm{TIME}$ has at most $j$ alternating quantifier blocks,
beginning existentially, with the indicated time bound. In particular,
$\QH$ does not allow an unbounded number of alternations.
We use
\[
 \EXP=\bigcup_{c\geq1}\DTIME(2^{n^c}),\quad
 \NEXP=\bigcup_{c\geq1}\mathrm{NTIME}(2^{n^c}),\quad
 \REXP=\bigcup_{c\geq1}\RTIME(2^{n^c}).
\]
$\UEXP$ is the corresponding union for nondeterministic machines with at
most one accepting computation on every input.

\paragraph{Unconditional diagonalization.}
There are fixed languages
\[
 H\in\Sigma_5\mathrm{TIME}(2^{O(n)}),\qquad
 G\in\QH\setminus(\NP/\poly\cup\coNP/\poly),
\]
where every length-$n$ slice of $H$, for $n\ge6$, requires more than
$\lfloor2^n/(10n)\rfloor$ gates in both the nondeterministic and
conondeterministic models of \cref{sec:nd-hard-extension}.
The language $G$ is the prefix projection with
$\ell(n)=\Llog(n)\max\{1,\lceil\log_2\Llog(n)\rceil\}$.
\Cref{lem:unconditional-nd-hard,cor:unconditional-qh-hard} prove this
statement without a membership-comparability hypothesis. In the next
theorem, $H$ and $G$ always denote these fixed languages.

\begin{theorem}[Consequences of $\hyp$]
\label{thm:main}
Under \cref{hyp:main}, the following hold.
\begin{enumerate}[label=(\roman*)]
 \item $\UCSAT$ has a deterministic algorithm with running time
       $2^{O(\Llog(n)^d)}$, and
       $\SAT,\CSAT\in\RTIME(2^{O(\Llog(n)^d)})$.
 \item
 \begin{equation}
   \PH\subseteq
   \BPTIME\bigl(2^{O(\Llog(n)^{d^2})}\bigr).
   \label{eq:main-ph}
 \end{equation}
 The exponent $d^2$ does not depend on the level of $\PH$.
 \item $\PH\subsetneq\QH=\BPQP$.
 \item $\NEXP=\REXP$ and $\UEXP=\EXP$.
 \item Every language in $\PH$ has a symmetric-alternation
       characterization with polynomial-length certificates and a
       deterministic predicate of exponent $d$:
 \begin{equation}
 \begin{split}
  \PH&\subseteq\Sclass^p\!\left[2^{O(\Llog(n)^d)}\right]\\
     &\subseteq
       \Sigma_2^p\!\left[2^{O(\Llog(n)^d)}\right]\cap
       \Pi_2^p\!\left[2^{O(\Llog(n)^d)}\right].
 \end{split}
 \label{eq:main-two-block}
 \end{equation}
 \item $\PH\subseteq\DTIME(2^{O(\Llog(n)^d)})/\poly$.
 \item $\EXPH=\BPEXP$.
  \item The fixed language $H$ belongs to
        $\BPTIME(2^{O(n^{d^2})})$; hence
        $\BPEXP\not\subseteq\NP/\poly\cup\coNP/\poly$.
        For every fixed positive integer $k$,
        $\BPTIME(2^{O(\Llog(n)^{d^2})})\not\subseteq\mathrm{SIZE}(n^k)$.
  \item The fixed language $G$ belongs to
        \[
        \BPTIME\!\left(2^{O((\Llog(n)\log\Llog(n))^{d^2})}\right),
        \]
        and therefore to
        $\BPTIME(2^{O(\Llog(n)^{d^2+\varepsilon})})$
        for every fixed $\varepsilon>0$.
 \item For every fixed $\varepsilon>0$,
       \[
       \mathrm{BPP}\subsetneq
       \BPTIME\!\left(2^{O(\Llog(n)^{d^2+\varepsilon})}\right).
       \]
       The separation concerns uniformly decided languages with no advice.
 \item
       \[
       \PH/\poly=\NP/\poly=\coNP/\poly,
       \qquad \BPQP\not\subseteq\PH/\poly.
       \]
\end{enumerate}
\end{theorem}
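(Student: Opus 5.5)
The plan starts from the Sivakumar-style reconstruction \cite{Siv99} and derives the remaining items from item (i) together with a separate advice statement. First we need a fast promise-unique decider. Take a circuit $C$ with at most $m$ variables. We form the self-reducibility tuple of SAT queries ``does $C$ have a satisfying assignment extending prefix $\pi$?'' over a block of $t=c_0\Llog(n)^d$ consecutive prefixes. If $C$ has at most one solution, the possible membership vectors are few: all zero, or a single indicator pattern consistent with one path. Each call to the comparator $g$ then excludes one candidate path. Running this on a tree of blocks of $t$ bits should isolate the unique witness among $2^{O(t)}$ candidates, which we verify directly. This gives the deterministic $2^{O(\Llog(n)^d)}$ bound for $\UCSAT$. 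Composing with Valiant--Vazirani isolation via pairwise-independent hashing, and checking the witnesses, gives $\SAT,\CSAT\in\RTIME(2^{O(\Llog(n)^d)})$, which is item (i).

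For (ii), we induct on the level $k$ of $\PH$ with the $\RTIME$ algorithm for SAT. The naive induction compounds to $d^k$. To get the level-independent $d^2$, we first establish $\PH\subseteq\NP/\poly$ (the advice transfer for the relative-gap theorem, or \cite{ABG03}). By \cite{CCHO05} this gives $\PH=\Sclass^{\NP}$, and $\Sclass^{\NP}$ is within $\ZPP^{\NP}$-type simulations with $\poly$ many SAT queries. Each query has length $\poly(n)$, and each is answered using the deterministic comparator at arity $\Llog^{d}$ of a polynomial-size instance. The second level of composition, where the randomized SAT-solver is itself fed into the comparator-based decoding of the $\NP$-oracle layer, yields $2^{O(\Llog(n)^{d^2})}$ once. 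Making this precise is the step I expect to be the \emph{main obstacle}: the clock must be bounded by a single composition, not one per hierarchy level.

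Item (v) comes from unrolling the same $\Sclass$ characterization. Each prover certificate is a polynomial-length string, namely the advice plus SAT witnesses. The oracle queries inside the predicate are replaced by the deterministic exponent-$d$ $\UCSAT$/decoding routine, using the certificates as the isolating hash choices, and $\Sclass^p\subseteq\Sigma_2^p\cap\Pi_2^p$ is immediate. Item (vi) follows from (v) by hardwiring the good hash seeds as advice.

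The remaining items are padding and diagonalization. For (iii) and (vii), we pad (ii) to $\QH\subseteq\BPQP$ and $\EXPH\subseteq\BPEXP$; the strictness $\PH\subsetneq\QH$ follows from the time hierarchy inside $\QH$. For (iv), padding (i) gives $\NEXP\subseteq\REXP$, and padding the deterministic $\UCSAT$ algorithm gives $\UEXP\subseteq\EXP$. For (viii) and (ix), we apply the padded (ii) to the fifth-level characterizations of $H$ and of its prefix projection $G$ with $\ell(n)$. Substituting $\ell(n)$ gives the $(\Llog\log\Llog)^{d^2}$ clock, and the circuit lower bounds transfer from the hardness of $H$. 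For (x), $G\in\BPTIME(2^{O(\Llog^{d^2+\varepsilon})})$ but $G\notin\NP/\poly\supseteq\mathrm{BPP}$, which gives the separation. Finally, for (xi), $\PH\subseteq\NP/\poly$ collapses $\PH/\poly$ onto $\NP/\poly=\coNP/\poly$, and $G$ witnesses $\BPQP\not\subseteq\PH/\poly$.
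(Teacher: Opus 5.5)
\textbf{There is a genuine gap at item (i), on which every other item rests.}

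With prefix queries ``$C$ has a satisfying assignment extending $\pi$'', the comparator may exclude a vector that was impossible anyway, and then it tells you nothing.
\begin{itemize}
\item If your $t$ prefixes are nested, \emph{every} circuit has a monotone membership vector $1^i0^{t-i}$. Nothing in $\hyp$ prevents a valid comparator from always outputting a fixed non-monotone vector on such tuples.
\item If the prefixes have equal length, a uniquely satisfiable circuit has vector $0^t$ or a unit vector. Nothing forces $g$ to exclude one of these rather than a vector of weight at least two.
\end{itemize}
So ``each call to $g$ excludes one candidate path'' is unjustified, and no $2^{O(t)}$ search tree results.

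The missing idea is Sivakumar's error-correcting encoding in \cref{prop:unique}:
\begin{itemize}
\item Query the $m$ bits of $P_a(u)\in\F_{2^m}$, with $m=\Theta(\Llog(N)^d)\ge32K$.
\item At each field point $u$, compare every $K$-subset of the $m$ coordinates. The true value $P_a(u)$ is never excluded.
\item Sauer--Shelah bounds the survivors by $\sum_{i<K}\binom mi<q^{1/3}$ \emph{whichever} patterns are excluded.
\item Sudan's list decoder, applicable because $q>\sqrt{2NL}$, recovers $P_a$ from the $q$ shortlists, and each candidate is checked directly in $C$.
\end{itemize}

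\textbf{Item (ii) is left unresolved, and the route you indicate points the wrong way.} Relativizing Cai's theorem gives $\Sclass^{\NP}\subseteq\ZPP^{\Sigma_2^p}$. A $\ZPP^{\NP}$ simulation with polynomially many SAT queries would already give PH exponent $d$, which is not known to follow. The needed step is \cref{lem:second}:
\begin{itemize}
\item For $L\in\Sigma_2^p$, put $B(x,u)=[\forall v\,R(x,u,v)]$. The complemented randomized SAT algorithm decides $B$ in time $2^{O(\Llog(n)^d)}$.
\item Amplify to error $2^{-p(n)-6}$ and sample \emph{one} tape $r$ shared by all $u$; a union bound makes it correct for every $u$.
\item Compile $u\mapsto\widehat B(x,u;r)$ into an explicit circuit of size $S=2^{O(\Llog(n)^d)}$.
\item Run the SAT algorithm on that circuit, in time $2^{O(\Llog(S)^d)}=2^{O(\Llog(n)^{d^2})}$.
\end{itemize}
Substitute this for the $\Sigma_2^p$ oracle in the truncated zero-error machine, with a coupling argument for adaptive queries. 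This gives the level-independent bound.

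\textbf{Item (vi) does not follow from (v) by hardwiring seeds.} The $\exists Y\,\forall Z$ game over the original $\Sclass^{\NP}$ certificates remains. Instead, use $\PH\subseteq\NP/\poly$ (\cref{lem:ph-np-advice}) to reduce each $L\in\PH$ to a single advised existential block. Then run certified search with its designated seed list.

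\textbf{Smaller points.}
\begin{itemize}
\item A hierarchy argument separates $\QH$ only from a fixed level $\Sigma_k^p$. So $\PH\subsetneq\QH$ needs either the collapse $\PH\subseteq\Sigma_3^p$, or the paper's route through $\PH\subseteq\NP/\poly$ and the language $G$.
\item The equalities $\QH=\BPQP$ and $\EXPH=\BPEXP$ also need the unconditional covering inclusions in the reverse direction.
\item In (v), the certificates should carry seed bundles, not SAT witnesses for queries that depend on the opponent's certificate. Each query must be answered by the OR of both provers' verified searches, so that a dishonest bundle can neither create nor suppress a positive answer.
\end{itemize}
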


In particular, an $O(\log^2 n)$-arity comparator gives
\[
  \PH\subseteq\BPTIME\bigl(2^{O((\log n)^4)}\bigr).
\]
For $d=1$, the unique-solution and PH simulation bounds are polynomial;
this is consistent with the known logarithmic-arity randomized
consequences~\cite{Siv99}. The hard-language and time-separation clocks
retain their separately displayed overheads.
For $d>1$, the predicate in \eqref{eq:main-two-block} is still allowed
quasipolynomial time. Its quantified strings remain polynomial, and its
exponent is $d$, not $d^2$. The proposed strengthening
\begin{equation}
 \SAT\in O(\log^2n)\mc
 \quad\Longrightarrow\quad \PH=\Sigma_2^p
 \label{eq:target}
\end{equation}
is \emph{not proved in this note}. Section~\ref{sec:gap} first diagnoses
the refutation-advice completeness obstruction and then avoids that test
by using certified satisfiability search. Section~\ref{sec:learning}
identifies the additional small-circuit condition equivalent to replacing
this search step by polynomial time; it does not claim that every possible
proof of \eqref{eq:target} must establish that condition.

\paragraph{Relativized lower bound.}
For each fixed integer $d\ge2$ and $c_0\ge1$, \cref{cor:oracle-hd}
constructs an oracle $A$ for which the threshold
$\lceil c_0\Llog(n)^d\rceil$ holds for the relativized complete set,
$\NP^A=\coNP^A$, and
\[
 \NP^A\not\subseteq\BPTIME^A\!\left(2^{\delta\Llog(n)^d}\right)
 \qquad(0<\delta<c_0).
\]
The comparison concerns the logarithmic power in the randomized SAT
upper bound. In this oracle world $\PH^A=\NP^A$, so the result is
compatible with the proposed second-level collapse.

\paragraph{Two regimes for randomized comparators.}
Write $\mathrm{err}_g(\tau)\le(1-\gamma(n))2^{-t}$, where $t$ is the
arity and $n$ the maximum individual-instance length. The baseline
$2^{-t}$ is the error of a uniform output. The two regimes treated in
\cref{sec:randomized,sec:relative-gap} are distinguished by the size of
the guaranteed relative saving $\gamma$.

If $\gamma(n)\ge1/\poly(n)$, the relative-gap hypothesis $\Hrel$ gives
all conclusions of \cref{thm:main} except the uniform deterministic
promise-unique algorithm and $\UEXP=\EXP$. In particular, it gives
$\PH=\Sclass^{\NP}$, the level-independent randomized PH exponent
$d^2$, and the polynomial-length symmetric simulation with predicate
exponent $d$. The deterministic binary-comparator bound
\eqref{eq:binary-advice} uses polynomial advice in this regime; its
one-sided randomized version retains linear advice.

If only $\gamma(n)\ge2^{-O(\Llog(n)^d)}$ is guaranteed, the weak
hypothesis $\Hwr$ still yields one-sided randomized SAT exponent $d$,
fixed-level PH exponent $d^k$, and the larger-time hierarchy equalities
and time separations of \cref{thm:wr}. For $d>1$, whether $\Hwr$
implies $\NP\subseteq\coNP/\poly$ remains unresolved here.

The distinction is the length of a heavy-output certificate. With
$b_n$ the tuple-code length, the construction uses
\[
 R=\Theta\bigl((b_n+1)/\gamma(n)^2\bigr),\qquad
 J=2^{k(n)}R,\qquad |\pi|=\Theta(R\log J).
\]
Limited independence gives polynomial-size sample-table descriptions in
both regimes. In the first, listing $R$ occurrence indices gives an
$\NP/\poly$ comparison relation, to which the full positive-relation
proof in Appendix~\ref{app:positive-abg} applies. In the second, the
occurrence list can be quasipolynomially long, so this structural
transfer is unavailable even though advised reconstruction still has
exponent $d$. At $d=1$, the weak saving is already inverse-polynomial
and the distinction disappears.

\section{A parameter-explicit promise-unique algorithm}
\label{sec:unique}

We record the two combinatorial and algebraic tools in the form required
for the rescaling. The reconstruction theorem is an imported algorithmic
result; the rest of the parameter calculation is given explicitly.

\begin{lemma}[Sauer--Shelah shortlisting; see~\cite{Sauer72,Siv99}]
\label[lemma]{lem:sauer}
Let $\mathcal S\subseteq\bits^m$. If no set of $K$ coordinates is
shattered by $\mathcal S$, where $1\leq K\leq m$, then
\[
 |\mathcal S|\leq\sum_{i=0}^{K-1}\binom mi.
\]
In particular, the bound applies when, for every $K$-subset of coordinates,
one prescribed pattern on that subset is forbidden.
\end{lemma}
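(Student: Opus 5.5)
We prove the bound by the standard shifting (compression) argument, carried out by induction on $m+K$. The second sentence of the lemma then follows at once. If, for every $K$-subset $I$ of coordinates, some pattern $p_I\in\bits^I$ is absent from the projection $\mathcal S|_I$, then no $K$-subset is shattered. Indeed, shattering $I$ means that $\mathcal S|_I=\bits^I$.

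Write $\Phi(m,K)=\sum_{i=0}^{K-1}\binom mi$. The base cases are handled as follows.
\begin{itemize}
 \item For $K=1$, no single coordinate is shattered. So every coordinate is constant on $\mathcal S$, which gives $|\mathcal S|\le1=\Phi(m,1)$.
 \item For $K=m$, the claim is that $\mathcal S\ne\bits^m$. This gives $|\mathcal S|\le 2^m-1=\Phi(m,m)$.
\end{itemize}

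For the inductive step, take $1<K<m$ and single out the last coordinate. Let $\mathcal S'$ be the projection of $\mathcal S$ onto the first $m-1$ coordinates. Let $\mathcal S''\subseteq\bits^{m-1}$ be the set of prefixes $u$ such that both $u0\in\mathcal S$ and $u1\in\mathcal S$. Each prefix in $\mathcal S'$ has one or two extensions in $\mathcal S$, and exactly the prefixes in $\mathcal S''$ have two. Hence
\[
|\mathcal S|=|\mathcal S'|+|\mathcal S''|.
\]

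We bound the two terms separately.
\begin{itemize}
 \item $\mathcal S'$ shatters no $K$-subset of $[m-1]$, since any set it shatters is also shattered by $\mathcal S$. Induction therefore gives $|\mathcal S'|\le\Phi(m-1,K)$.
 \item $\mathcal S''$ shatters no $(K-1)$-subset $I\subseteq[m-1]$. Otherwise $I\cup\{m\}$ would be shattered by $\mathcal S$, because every pattern on $I$ is realized by some $u\in\mathcal S''$, and both extensions $u0,u1$ lie in $\mathcal S$. Since $K-1\ge1$, induction gives $|\mathcal S''|\le\Phi(m-1,K-1)$.
\end{itemize}

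Pascal's rule $\binom{m-1}{i}+\binom{m-1}{i-1}=\binom mi$ gives $\Phi(m-1,K)+\Phi(m-1,K-1)=\Phi(m,K)$. This completes the induction.

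The only delicate step is checking that the parameters stay in the range $1\le K\le m$ where the induction applies.
\begin{itemize}
 \item In the bound for $\mathcal S'$, we need $K\le m-1$, which holds because $K<m$.
 \item In the bound for $\mathcal S''$, the requirement $K-1\le m-1$ holds, and $K-1\ge1$ holds by the base-case split.
 \item If $\mathcal S''$ is empty, the bound for it is trivial.
\end{itemize}
Everything else is routine. Alternatively, one could cite the statement from~\cite{Sauer72}. It is used here only in the forbidden-pattern form, to shortlist the candidate membership vectors of size $\Phi(m,K)$.
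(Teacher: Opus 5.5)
Your proof is correct. The paper does not prove this lemma: it records it as a classical tool with a citation to Sauer and Sivakumar, and \cref{prop:unique} uses only its forbidden-pattern consequence together with the entropy estimate in \eqref{eq:shortlist}.

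Your proposal therefore supplies a self-contained argument where the paper has none, and it goes through as written:
\begin{itemize}
 \item Splitting on the last coordinate gives $|\mathcal S|=|\mathcal S'|+|\mathcal S''|$.
 \item $\mathcal S'$ shatters no $K$-subset of the first $m-1$ coordinates, and $\mathcal S''$ shatters no $(K-1)$-subset of them.
 \item Pascal's rule closes the recursion.
 \item Your base cases $K=1$ and $K=m$, and your check that both inductive calls stay in the range $1\le K\le m$, are exactly what is needed.
\end{itemize}

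One correction concerns the label only. The argument you give is the restriction (projection) induction, not the shifting or compression argument. That argument instead applies down-shifts, which preserve $|\mathcal S|$ and never enlarge the collection of shattered sets. It ends in a down-closed family whose members are themselves shattered. You should rename the method in your first sentence; none of the mathematics depends on it.

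What your version buys over the citation is self-containment in a short paragraph. Citing is reasonable given how standard the lemma is.
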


\begin{lemma}[Sudan reconstruction; \cite{Sud96}, as used in~\cite{Siv99}]
\label[lemma]{lem:sudan}
Given a finite field $F$, an integer $D\geq1$, and $L$ distinct pairs in
$F\times F$, a deterministic algorithm running in time polynomial in
$|F|,D,L$ produces a polynomial-size list containing every polynomial
$p\in F[X]$ of degree at most $D$ that agrees with more than
$\sqrt{2LD}$ of the given pairs. Repeated first coordinates are permitted;
only the pairs must be distinct.
\end{lemma}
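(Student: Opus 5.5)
The plan is Sudan's interpolate-then-factor argument. \emph{Interpolation:} find a nonzero bivariate $Q(X,Y)\in F[X,Y]$ of $(1,D)$-weighted degree at most $m$, meaning only monomials $X^iY^j$ with $i+Dj\le m$ occur, such that $Q(a_\ell,b_\ell)=0$ for each of the $L$ given pairs $(a_\ell,b_\ell)$. \emph{Factoring:} output every $p$ of degree at most $D$ for which $Y-p(X)$ divides $Q$.

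For interpolation, the vanishing conditions are $L$ homogeneous linear equations in the coefficients of $Q$. A nonzero solution exists as soon as the number of admissible monomials exceeds $L$, and Gaussian elimination finds it in time polynomial in $L$ and $m$. With $J=\lfloor m/D\rfloor$, the monomial count is
\[
 \sum_{j=0}^{J}(m-Dj+1)=(J+1)(m+1)-\tfrac{D J(J+1)}{2}\ \ge\ (J+1)\bigl(\tfrac m2+1\bigr)\ >\ \tfrac{m^2}{2D}.
\]
Choose $m$ to be the least integer for which this count exceeds $L$. Since the count already exceeds $L$ once $m^2\ge 2LD$, this gives $m\le\lceil\sqrt{2LD}\rceil$.

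For correctness, suppose $p$ has degree at most $D$ and agrees with more than $\sqrt{2LD}$ of the pairs. First, the agreeing pairs have pairwise distinct first coordinates: two distinct pairs with the same first coordinate $a$ have different second coordinates, and $p(a)$ is a single value, so $p$ agrees with at most one of them. This is why repeated first coordinates are harmless and only distinctness of the pairs is needed. Next, the univariate polynomial $Q(X,p(X))$ has degree at most $m$, by the weighted-degree bound. It vanishes at each of these distinct points, and their number exceeds $\sqrt{2LD}$. The agreement count is an integer greater than $\sqrt{2LD}$, so it is at least $\lceil\sqrt{2LD}\rceil$ and at least $m+1$ whenever $\sqrt{2LD}$ is not an integer; the integer case needs the slightly sharper choice of $m$ noted below. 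In either case $Q(X,p(X))\equiv0$, and hence $Y-p(X)$ divides $Q$ in $F[X][Y]$.

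For the algorithm, it remains to find all roots $Y=p(X)$ of $Q$ in $F[X]$ with $\deg p\le D$. Their number is at most $\deg_Y Q\le m/D$, which bounds the list size polynomially. To find them deterministically in time polynomial in $|F|$ (not $\log|F|$), I would use Berlekamp's deterministic factorization over $F$, which is polynomial in $|F|$ and the degree, lifted to bivariate factorization. A simpler alternative is Roth--Ruckenstein root finding: first find the constant term $p(0)$ among the roots of $Q(0,Y)/X^{r}$ by brute force over the $|F|$ field elements, then recurse on $Q(X,XY+p(0))$ with the power of $X$ stripped, for $D+1$ levels. The branching is bounded by $\deg_Y Q$, so the total work is polynomial in $|F|$, $D$ and $L$.

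The main obstacle is the boundary arithmetic. I must ensure simultaneously that the monomial count exceeds $L$ and that $m$ is strictly less than every integer greater than $\sqrt{2LD}$, which matters when $2LD$ is a perfect square. I would handle this by using the exact count $(J+1)(m+1-DJ/2)$ rather than $m^2/(2D)$, and checking that $m=\lceil\sqrt{2LD}\rceil-1$ already suffices except in trivial cases such as $D\ge L$, where the lemma is immediate.
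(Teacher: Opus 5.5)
The paper gives no proof of this lemma; it imports it from Sudan, and your interpolate-then-root-find plan is that algorithm. You also handle repeated first coordinates correctly. The gap is in the boundary step you flag yourself, and you have the two cases reversed.

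Your bound $N(m)>m^2/(2D)$ on the number of admissible monomials only yields $m\le\lceil\sqrt{2LD}\rceil$. If $2LD$ is a perfect square, an integer $t>\sqrt{2LD}$ satisfies $t\ge\lceil\sqrt{2LD}\rceil+1\ge m+1$, so that case is already fine. If $2LD$ is \emph{not} a perfect square, the hypothesis allows $t=\lceil\sqrt{2LD}\rceil$. Then you only get $t\ge m$, which does not force $Q(X,p(X))\equiv0$. For example, with $D=1$ and $L=3$, agreement $t=3$ is permitted and your estimate gives only $m\le3$. Your remedy $m=\lceil\sqrt{2LD}\rceil-1$ is the right one, but you assert it rather than check it. The exception you allow for it is also unnecessary.

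The missing estimate is $N(m)>(m+1)^2/(2D)$ for every $m\ge0$. Write $m=DJ+r$ with $0\le r<D$ and put $u=r+1\in\{1,\dots,D\}$. Your exact count becomes $N(m)=(J+1)(DJ/2+u)$, and
\[
 N(m)-\frac{(m+1)^2}{2D}=\frac{DJ}{2}+u\Bigl(1-\frac{u}{2D}\Bigr)\ \ge\ \frac12 .
\]
Now take $m=\lceil\sqrt{2LD}\rceil-1$. Then $(m+1)^2\ge 2LD$, so $N(m)>L$. Every integer $t>\sqrt{2LD}$ satisfies $t\ge\lceil\sqrt{2LD}\rceil=m+1$. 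Hence the least $m$ with $N(m)>L$ is always strictly below the agreement count, in both cases and with no exceptions. Your algorithm is correct as stated; this inequality is what proves it.

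One smaller point concerns the Roth--Ruckenstein alternative. A per-node branching bound of $\deg_YQ$ over $D+1$ levels only gives $(\deg_YQ)^{D+1}$ nodes, which is not polynomial. What makes it polynomial is that each level has at most $\deg_YQ$ nodes. This holds because a child's polynomial, after removing the maximal power of $X$, has $Y$-degree at most the multiplicity of the corresponding root of the parent's $Q(0,Y)$. Those multiplicities sum to at most the parent's $Y$-degree.
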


\begin{proposition}[Rescaled unique-solution reconstruction]
\label[proposition]{prop:unique}
Under $\hyp$, $\UCSAT$ has a deterministic algorithm with running time
$2^{O(\Llog(N)^d)}$ on circuits of encoding length $N$. Moreover, there is
a deterministic search version $U'$ with the same clock that returns
an assignment or $\bot$. On every circuit, any returned assignment
satisfies that circuit; on every uniquely satisfiable circuit, $U'$
returns its unique satisfying assignment. On a satisfiable off-promise
circuit it may return either a genuine witness or $\bot$.
\end{proposition}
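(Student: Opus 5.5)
We follow Sivakumar's reconstruction of a unique witness, with the arity rescaled from logarithmic to $k=c_0\Llog(n)^d$. Let $C$ have encoding length $N$ and $m\le N$ inputs, and suppose $w\in\bits^m$ is its unique satisfying assignment.

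\textbf{Encoding the witness.} Choose a prime field $F$ with $|F|=2^{\Theta(\Llog(N)^d)}$; in particular $|F|$ is polynomially larger than $m$. Regard $w$ as a polynomial $p_w$ of degree at most $D=m$ over $F$. For each point $a\in F$ and each bit position $j$ of the $\lceil\log|F|\rceil$-bit encoding of $p_w(a)$, there is a polynomial-size circuit/SAT instance $\phi_{a,j}$. It asserts ``there is $y$ with $C(y)=1$ such that bit $j$ of $p_y(a)$ equals $1$.'' By uniqueness, $\chi_{\SAT}(\phi_{a,j})$ is exactly bit $j$ of $p_w(a)$. The instances have length $n=\poly(N)$, so $\Llog(n)=O(\Llog(N))$.

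\textbf{Shortlisting the values.} Build tuples of $k$ such instances: say $k/\lceil\log|F|\rceil$ points, each with all its bits, or more simply single bits across several points, following Sivakumar's grouping. Apply the comparator $g$ to every tuple. By \cref{lem:sauer}, the set of length-$m'$ vectors consistent with every excluded pattern has size at most $\sum_{i<k}\binom{m'}{i}$. When $m'=2^{O(\Llog(N)^d)}$ this is $2^{O(k\log m')}$, which is too large if enumerated naively. Instead, follow Sivakumar's local form. For each point $a$, run the comparator on $k$-tuples drawn from a block of bits and shortlist candidate values of $p_w(a)$. This gives a list of at most $2^{O(k)}$ candidate field elements per point, computed in time $2^{O(k)}\poly(N)$. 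Consistency is pruned exactly as in \cite{Siv99}. One obtains a candidate set $S_a\subseteq F$ of size at most $\poly(N)\cdot 2^{O(\Llog(N)^d)}$ that contains $p_w(a)$. Ideally $|S_a|\ll |F|/D$.

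\textbf{Reconstruction and output.} Form the $L=\sum_a|S_a|$ distinct pairs $(a,s)$ with $s\in S_a$. The polynomial $p_w$ agrees with $|F|$ of them. To apply \cref{lem:sudan} we need $|F|>\sqrt{2LD}$, that is, $|F|>2D\max_a|S_a|$. Choosing the constant in $|F|=2^{\Theta(\Llog(N)^d)}$ large relative to the shortlist size achieves this. Sudan's algorithm runs in $\poly(|F|,D,L)=2^{O(\Llog(N)^d)}$ and lists all close polynomials. For each listed polynomial we decode a string $y$ and evaluate $C(y)$. $U'$ returns the first satisfying $y$, or $\bot$ otherwise; the decision version accepts iff $U'$ returns an assignment.

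\textbf{Correctness on all inputs.} Soundness holds on every circuit because outputs are checked against $C$. Zero-witness circuits are rejected for the same reason. The clock is enforced by hard time bounds independent of the promise.

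\textbf{Main obstacle.} The delicate step is the shortlist-size versus field-size tradeoff. The shortlist per point must be $2^{O(k)}$, while $|F|$ must exceed $2D$ times that size. Both are $2^{\Theta(\Llog^d)}$, so the constants must be balanced. This requires using a grouping in which the $\log|F|$ bits of $p_w(a)$ fit within the comparator's arity, so that the Sauer--Shelah bound gives $|S_a|\le 2^{\log|F|}\cdot(1-\Omega(1))$ rather than something trivial. I would first try Sivakumar's original arrangement with tuples mixing many points. I would treat $k$ as $\Theta(\log|F|)$ and verify the inequality $\sum_{i<k}\binom{m'}{i}\ll |F|^{m'/\ldots}$ with explicit constants.
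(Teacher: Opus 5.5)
\textbf{Gap in the shortlisting step.} Your architecture is right: the bit language $\phi_{a,j}$, a per-point shortlist from \cref{lem:sauer}, decoding by \cref{lem:sudan}, and direct checking of every candidate, which gives soundness on all inputs and a promise-independent clock. However, the one quantitative step that makes the proof work is left unresolved, and the fix in your last paragraph points the wrong way.

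If the $m=\lceil\log_2|F|\rceil$ bits of $p_w(a)$ fit inside one comparator tuple ($m\le k$), each comparison excludes at most one candidate value. A valid comparator may exclude the same wrong value on every such tuple, leaving $|S_a|=|F|-1$. A bound $|S_a|\le(1-\Omega(1))|F|$ would not help either. The Sudan condition you derived, $|F|>2D\max_a|S_a|$, needs $|S_a|<|F|/(2D)$ with $D$ of order $N$.

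Choosing ``the constant in $|F|$ large relative to the shortlist size'' is also circular as stated. The Sauer--Shelah bound $\sum_{i<k}\binom{m}{i}$ itself grows with $m=\log_2|F|$. A per-point bound of the form $2^{O(k)}$ with an uncontrolled constant can exceed $|F|$ when $m=\Theta(k)$. Nothing in the argument requires mixing evaluation points within a tuple.

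\textbf{The missing idea.} Make the field bit length a large constant multiple of the arity, not at most the arity. Take $m=\lceil A\Llog(N)^d\rceil$ and $q=2^m$. Let $K=\lceil c_0\Llog(N^b)^d\rceil$, where $N^b$ bounds the length of the constructed formulas, so $K$ is the arity $\hyp$ guarantees for them. Choose $A$ large enough that $K\le m/32$.

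For each fixed point $u$, apply the comparator to every $K$-subset of the $m$ bit positions of $P_a(u)$. Then \cref{lem:sauer} and the binary-entropy estimate give
\[
|S_u|\le\sum_{i<K}\binom{m}{i}\le 2^{H_2(1/32)m}<q^{1/3}.
\]
This is a fixed power below one of the field size, so increasing $A$ shrinks the shortlist relative to $q$ rather than merely tracking it. Consequently $L\le q^{4/3}$. The Sudan condition $q>\sqrt{2NL}$ then reduces to $q^{2/3}>2N$, which holds for large $A$ because $m\ge A\Llog(N)$. Both $m$ and $K$ are $\Theta(\Llog(N)^d)$, so the $q\binom{m}{K}\le 2^{2m}$ polynomial-time comparator calls, the candidate filtering, and the decoding all stay within $2^{O(\Llog(N)^d)}$.

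\textbf{Minor point.} Pass the field modulus (your prime, or an irreducible polynomial) as part of each auxiliary instance. This keeps the bit language in $\NP$ without performing the field construction inside the verifier.
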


\begin{proof}
Let $C$ be the input circuit, with $w\leq N$ input variables. A circuit
with $w=0$ is evaluated directly; the search version returns the empty
assignment if its value is one and $\bot$ otherwise. Malformed encodings
are rejected, and their search output is $\bot$. Otherwise, for
$a\in\bits^w$ define
\[
 P_a(X)=\sum_{i=0}^{w-1}a_iX^i.
\]
Choose an integer
\begin{equation}
 m=\lceil A\Llog(N)^d\rceil,\qquad q=2^m,
 \label{eq:field-size}
\end{equation}
where the constant $A$ will be fixed sufficiently large.

\paragraph{Field construction and the auxiliary NP language.}
Construct a monic irreducible polynomial $h\in\F_2[X]$ of degree $m$,
and represent $F=\F_{2^m}$ as $\F_2[X]/(h)$. Even brute-force
construction costs only $2^{O(m)}$ time: enumerate monic degree-$m$
polynomials and test divisibility by monic polynomials of degrees at most
$\lfloor m/2\rfloor$. An irreducible polynomial exists in every degree.

To ensure that the auxiliary language is in $\NP$, we pass $h$ as part
of each auxiliary input; that language does \emph{not} perform the
exponential-time field construction. On a well-formed tuple $(C,h,u,j)$,
with $h$ monic, $\deg h=m$, $u\in\bits^m$, and $0\leq j<m$, let
\begin{equation}
 (C,h,u,j)\in B
 \quad\Longleftrightarrow\quad
 \exists a\in\bits^w\,
 \bigl[C(a)=1\ \land\ \operatorname{bit}_j(P_a(u)\bmod h)=1\bigr].
 \label{eq:bit-language}
\end{equation}
For the definition of $B$, arithmetic modulo any monic $h$ is enough;
irreducibility is only needed for the actual field chosen by the outer
algorithm. Evaluating the quotient-ring expression and $C(a)$ takes
polynomial time in the tuple length, so $B\in\NP$.

Reduce \eqref{eq:bit-language} to SAT by a fixed polynomial-time
many-one reduction, producing formulas $\phi_{u,j}$. Since
$m=O(\Llog(N)^d)\leq N$ for sufficiently large $N$, there is a fixed
integer $b$ such that
\begin{equation}
  |\phi_{u,j}|\leq N^b
  \qquad(u\in F,
  \ 0\leq j<m).
 \label{eq:constructed-size}
\end{equation}
The exponent $b$ is determined by the fixed auxiliary verifier and
reduction. Increasing a finite lower-size cutoff absorbs constants.
This is the point at which the size of the \emph{constructed} SAT
instances is taken into account.

\paragraph{Comparing coordinate subsets.}
Choose
\[
 K=\left\lceil c_0\Llog(N^b)^d\right\rceil.
\]
This arity is covered by $\hyp$ on every tuple of the formulas in
\eqref{eq:constructed-size}. Because $K=O(\Llog(N)^d)$, choose $A$
in \eqref{eq:field-size} large enough that, for all sufficiently large
$N$,
\begin{equation}
 1\leq K\leq m/32.
 \label{eq:arity-field-ratio}
\end{equation}
For a fixed $u\in F$, apply the comparator to the formulas indexed by
every $K$-subset of $\{0,\ldots,m-1\}$, using increasing coordinate
order. Each answer forbids one pattern on those coordinates. Let
$S_u\subseteq F$ be all $m$-bit field values satisfying every such
exclusion.

If $C$ has the unique satisfying assignment $a$, the actual membership
bits of $\phi_{u,0},\ldots,\phi_{u,m-1}$ are exactly the coordinates of
$P_a(u)$. Consequently $P_a(u)\in S_u$. Regardless of whether $C$
is promised, \cref{lem:sauer} bounds the list size. With $H_2$ denoting
binary entropy,
\begin{equation}
 |S_u|\leq\sum_{i=0}^{K-1}\binom mi
 \leq 2^{H_2(1/32)m}<2^{m/3}=q^{1/3}.
 \label{eq:shortlist}
\end{equation}
Here $H_2(1/32)<1/3$, and the usual entropy estimate bounds the lower
binomial tail. The strict numerical constant is inessential; a fixed
exponent below one suffices after adjusting the other parameters.

\paragraph{Reconstructing an assignment.}
Form the distinct pairs
\[
 \mathcal D=\{(u,v):u\in F,
                  \ v\in S_u\}.
\]
Their number $L$ is at most $q^{4/3}$. If $L=0$, reject (return $\bot$
in the search version). Otherwise apply
\cref{lem:sudan} with degree bound $D=N$. Increase $A$, if necessary,
so that
\begin{equation}
 q^{2/3}>2N,
 \qquad\text{and hence}\qquad
 q>\sqrt{2Nq^{4/3}}\geq\sqrt{2NL}.
 \label{eq:decoding-condition}
\end{equation}
For a uniquely satisfiable $C$, the polynomial $P_a$ has degree at most
$w-1\leq N$ and agrees with all $q$ pairs $(u,P_a(u))$. It must therefore
appear in the output list.

For each returned polynomial, test whether its coefficients beyond
position $w-1$ vanish and its first $w$ coefficients belong to
$\{0,1\}\subseteq F$. When they do, interpret them as an assignment and
check it directly in $C$. The search algorithm $U'$ returns the first
assignment that passes this check, in a fixed deterministic order, and
returns $\bot$ if none passes. The decision version accepts exactly when
$U'$ returns an assignment. Thus every unsatisfiable circuit is rejected,
every uniquely satisfiable circuit yields its unique witness, and even
off-promise outputs are certified by direct evaluation of $C$; on a
satisfiable nonunique input, $U'$ may also return $\bot$.

\paragraph{Uniform running time.}
Field construction costs $2^{O(m)}$. There are $q$ field points, $m$
auxiliary formulas per point, and at most $2^m$ coordinate subsets per
point. Explicitly checking every candidate field value against every
subset costs at most $2^{O(m)}\poly(N)$, even without optimizing this
enumeration. Reconstruction and candidate verification are polynomial
in $q,N,L$, hence also cost $2^{O(m)}\poly(N)$. Since $d\geq1$, this is
$2^{O(\Llog(N)^d)}$. All loops have these bounds on every input,
including off-promise inputs. Finitely many shorter inputs are handled
by exhaustive search, returning the first verified witness or $\bot$
in the search version.
\end{proof}

\begin{remark}[What is being rescaled]
The polynomial-evaluation encoding, coordinate shortlisting, and list
reconstruction are Sivakumar's method~\cite{Siv99}. The preceding proof
spells out its use at field degree $\Theta(\Llog(N)^d)$ and preserves the
exponent $d$ despite the SAT reductions. It does not assume that a
superpolynomial-sized circuit has a polynomial-sized explicit encoding.
\end{remark}

\section{Uniform isolation without advice}
\label{sec:isolation}

We use the constant-success, all-prefix isolation formulation from
\cite[Theorem~7.1 and Lemma~F.4]{BD26}. It is a parameter-preserving
application of the isolation method of Valiant and Vazirani~\cite{VV86}.

\begin{lemma}[Linear-seed isolation]
\label[lemma]{lem:isolation}
Let $U$ solve the promise problem $\UCSAT$ deterministically in time
$2^{O(\Llog(N)^d)}$ on circuits of length $N$. There is a randomized
algorithm for $\CSAT$ with one-sided error and the same asymptotic time
bound. Before constant amplification, its acceptance probability on a
satisfiable circuit is at least $3/16$.
\end{lemma}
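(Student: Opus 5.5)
The plan is to prove \cref{lem:isolation} by the Valiant--Vazirani isolation method with pairwise-independent hashing, adapted so that the random seed length is linear in the number of input variables. The key point is that the isolated circuits stay polynomial size; then $U$ can be run on them within the stated clock.

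Let $C$ be the input circuit, of length $N$ with $w\le N$ inputs. I handle $w=0$ and malformed inputs directly. For satisfiable $C$, the first step is to guess the size scale of the solution set. I pick $k\in\{0,\ldots,w\}$ uniformly, or better, enumerate all $k$, which costs only a factor $w+1$. I want to state the success bound without a $1/w$ loss, which is why the constant $3/16$ appears. So the algorithm iterates over every $k$ deterministically and shares the random bits across all prefixes.

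The second step is the hashing. I draw a random affine map $h(x)=Mx+v$ over $\F_2$, with $M$ a random Toeplitz $w\times w$ matrix and $v\in\bits^w$. This uses $O(w)$ seed bits and is pairwise independent. For each prefix length $k$, I form the circuit
\[
C_k(x)=C(x)\wedge h(x)_{1..k}=0^k .
\]
The prefixes of a single $h$ are the "all-prefix" family. Each $C_k$ has size $\poly(N)$, more precisely $O(N+w^2)$, which is at most $N^{O(1)}$.

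On each $C_k$ I run the search version $U'$ of \cref{prop:unique}, and I accept iff some returned assignment satisfies $C$. Because every output is checked directly against $C$, unsatisfiable circuits are never accepted, which gives one-sided error. The running time is $(w+1)\cdot 2^{O(\Llog(N^{O(1)})^d)}=2^{O(\Llog(N)^d)}$, since $\Llog(N^{b})=O(\Llog(N))$.

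The main step is the probability bound. Fix $s=|C^{-1}(1)|\ge1$ and choose $k$ with $2^{k-2}< s\le 2^{k-1}$ (or $k$ near $\log_2 s$ plus a constant). Pairwise independence gives, by inclusion--exclusion, that the probability exactly one solution lands in the fiber $h_{1..k}^{-1}(0)$ is at least
\[
s2^{-k}-\binom s2 2^{-2k}\ge s2^{-k}\bigl(1-s2^{-k-1}\bigr).
\]
With $s2^{-k}\in(1/4,1/2]$ this is at least $\tfrac14\cdot\tfrac34=3/16$.

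When $C_k$ is uniquely satisfiable, $U'$ returns its witness and we accept. Other prefixes can only add acceptance. This yields the constant $3/16$ before amplification. Constantly many repetitions then reach $2/3$ without changing the asymptotic clock.

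I expect the main obstacle to be the bookkeeping rather than the combinatorics. First, the uniform clock must hold on all random tapes and off-promise circuits; this is guaranteed by \cref{prop:unique}, since $U'$ is clocked on all circuits. Second, the size blow-up of $C_k$ must not move the exponent $d$; it only affects the constant, because $\Llog$ of a polynomial is linear in $\Llog(N)$. Third, Toeplitz prefixes must remain pairwise independent, which holds since each prefix of a random affine Toeplitz map is itself a random affine Toeplitz map.
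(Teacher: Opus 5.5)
Your construction follows the paper's outline: an affine Toeplitz hash, every prefix tested under one shared seed, and a second-moment bound at the right prefix. The probability step fails as written, however, in two places.

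First, the displayed inequality is wrong. $s2^{-k}-\binom s2 2^{-2k}$ is the Bonferroni lower bound for $\Pr[Z\ge1]$, not for $\Pr[Z=1]$. Already for $s=2$, pairwise independence gives $\Pr[Z=1]=2\cdot2^{-k}-2\cdot2^{-2k}$, which is below your bound. The correct estimate comes from $\ind[Z=1]\ge Z-Z(Z-1)$ and gives $\Pr[Z=1]\ge\mu-\mu^2$ with $\mu=s2^{-k}$. This still reaches $3/16$ on $(1/4,1/2]$, but only because $\mu(1-\mu)$ is increasing there; your factor-by-factor estimate would now give only $1/8$.

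Second, and more seriously, your $w\times w$ matrix only provides prefixes $k\le w$. The $k$ you select, $k=\lceil\log_2 s\rceil+1$, equals $w+1$ whenever $s>2^{w-1}$. At $k=w$ one has $\mu\in(1/2,1]$, and $\mu-\mu^2\to0$ as $\mu\to1$ (take $s=2^w-1$). Every smaller $k$ gives $\mu>1$ and a negative bound. So the $3/16$ guarantee is unproved for dense solution sets. The paper takes $T\in\F_2^{(w+1)\times w}$ and $b\in\F_2^{w+1}$, with seed length $3w+1$, exactly so that $r=\lceil\log M\rceil+1$ always lies in $\{1,\ldots,w+1\}$. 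You need that extra row, or a separate argument for $s>2^{w-1}$.

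Separately, the lemma is an implication for an \emph{arbitrary} deterministic promise solver $U$. Your proof should therefore not rely on the certified search version $U'$ of \cref{prop:unique}, which exists only under $\hyp$. Nothing is lost by using $U$ directly. If $C$ is unsatisfiable, every $C_k$ has zero satisfying assignments and $U$ rejects it by the definition of the promise problem. Hence the OR of the answers $U(C_k)$ already has one-sided error. An acceptance on a nonunique restriction is harmless because $C$ is then satisfiable. The rest of your bookkeeping is fine: polynomial size of each $C_k$, $\Llog(N^{O(1)})=O(\Llog(N))$, the clock on all circuits including off-promise ones, and constant amplification. You assert rather than prove that an affine Toeplitz map is pairwise independent, which the paper verifies by a diagonal-variable argument.
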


\begin{proof}
Let $C$ have $w\geq1$ inputs; the case $w=0$ is direct. Choose a uniform
binary Toeplitz matrix $T\in\F_2^{(w+1)\times w}$ and an independent
uniform $b\in\F_2^{w+1}$. There are $2w$ Toeplitz diagonals, so the seed
$s=(T,b)$ has length $3w+1$. Put $h_s(z)=Tz+b$.

For nonzero $v\in\F_2^w$, the $w+1$ coordinate linear forms of $Tv$
are independent as forms in the Toeplitz diagonals. To see this, index
the diagonal variables by row minus column and let $j_0$ be the least
index with $v_{j_0}=1$. In any nonempty linear combination of rows, the
largest selected row index has a highest-index diagonal variable that
cannot occur in any earlier selected row. The combination is therefore
nonzero. Thus $Tv$ is uniform. The independent shift implies that, for
$z\neq z'$, $h_s(z)$ and $h_s(z')$ are independent uniform vectors.

For every $r=1,\ldots,w+1$, construct
\[
 C_{s,r}(z)=C(z)\land[h_s(z)_{1\ldots r}=0^r]
\]
and return the OR of $U(C_{s,r})$ over all $r$. When $C$ is
unsatisfiable, every restricted circuit is unsatisfiable, and every
answer is zero.

Suppose the satisfying set has size $M>0$. Choose, for the analysis,
$r=\lceil\log M\rceil+1$, which lies in $\{1,\ldots,w+1\}$. If $Z$
is the number of assignments surviving this restriction, pairwise
independence gives
\[
 \mu=\E Z=M2^{-r}\in(1/4,1/2],
 \qquad \E[Z(Z-1)]\leq\mu^2.
\]
Since $\ind[Z=1]\geq Z-Z(Z-1)$ for each nonnegative integer $Z$,
\[
 \Pr[Z=1]\geq\mu-\mu^2\geq3/16.
\]
On this event the promise solver accepts this restricted circuit. An
acceptance on a nonunique restricted circuit is harmless, because the
original circuit is then satisfiable. Testing every prefix avoids the
factor-$w$ success loss incurred by selecting a single prefix length.

Every restricted circuit has size polynomial in $|C|$, and there are
polynomially many calls. The time bound is therefore preserved. A fixed
constant number of independent seeds, accepting if any trial accepts,
raises the success probability to at least $2/3$.
\end{proof}

\begin{corollary}
\label[corollary]{cor:random-sat}
Under $\hyp$,
\[
 \SAT,\CSAT\in\RTIME\bigl(2^{O(\Llog(n)^d)}\bigr).
\]
No advice is used. Every random seed in the algorithm is actually sampled.
\end{corollary}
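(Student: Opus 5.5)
The plan is to compose \cref{prop:unique} with \cref{lem:isolation} and then move from circuits to formulas by a standard reduction. \Cref{prop:unique} gives, under $\hyp$, a deterministic algorithm $U$ for $\UCSAT$. Its running time is $2^{O(\Llog(N)^d)}$ on circuits of encoding length $N$, and it is uniformly clocked, off-promise inputs included. This is exactly the hypothesis of \cref{lem:isolation}.

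The lemma then gives a one-sided-error randomized algorithm for $\CSAT$ in time $2^{O(\Llog(N)^d)}$ with success probability at least $3/16$. Repeating it six times with independent seeds, and accepting if any trial accepts, raises success to $1-(13/16)^6>2/3$. No-instances are still never accepted, because every restricted circuit is unsatisfiable and $U$ rejects circuits with zero satisfying assignments. So $\CSAT\in\RTIME(2^{O(\Llog(n)^d)})$.

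For $\SAT$, I would convert a formula $\phi$ of length $n$ in linear time into a circuit of length $O(n)$ with the same satisfying assignments. Since $\Llog(O(n))^d=O(\Llog(n)^d)$, the exponent bound is preserved up to a constant.

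Two bookkeeping points need stating explicitly.

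\emph{No advice.} The algorithm is uniform: \cref{prop:unique} builds the field by deterministic brute-force search and uses the fixed comparator $g$. The only randomness is the Toeplitz seed $(T,b)$ of length $3w+1$, which is genuinely sampled in each of the six trials.

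\emph{Uniform clock.} The algorithm must halt within the bound on every input and every random tape. This holds because $U$'s clock holds on all circuits, and every restricted circuit $C_{s,r}$ has size polynomial in $|C|$. I would also check that malformed inputs are rejected within the bound.

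The main point to verify is the size blow-up inside isolation. Each $C_{s,r}$ has length $\poly(N)$, so $U$ runs in $2^{O(\Llog(\poly(N))^d)}$. Since $\Llog(N^c)=O(\Llog(N))$, this is again $2^{O(\Llog(N)^d)}$. There are $w+1$ such calls per trial, which adds only a polynomial factor, so the exponent $d$ is preserved.
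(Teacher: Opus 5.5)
Your proposal is correct and follows the paper's route. The corollary is \cref{prop:unique}, a uniformly clocked deterministic $\UCSAT$ solver whose field is found by brute-force search so that no advice enters, substituted into \cref{lem:isolation}, whose only coins are the sampled Toeplitz seeds, followed by the standard formula-to-circuit map. Your bookkeeping is what the argument needs: $\Llog(N^{O(1)})=O(\Llog(N))$ absorbs every polynomial blow-up, so the formula-to-circuit map only has to be polynomial rather than linear, and constant-fold repetition keeps one-sided error; that amplification to $2/3$ is already built into \cref{lem:isolation}.
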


\section{From a SAT algorithm to a second-level simulation}
\label{sec:second}

\begin{lemma}[Common-random-tape simulation]
\label[lemma]{lem:second}
If $\CSAT\in\RTIME(2^{O(\Llog(n)^d)})$ for a fixed integer $d\geq1$,
then
\begin{equation}
 \Sigma_2^p\cup\Pi_2^p
 \subseteq\BPTIME\bigl(2^{O(\Llog(n)^{d^2})}\bigr).
 \label{eq:second-level}
\end{equation}
\end{lemma}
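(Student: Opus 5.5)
The plan is to simulate the $\NP^{\NP}$ structure of a $\Sigma_2^p$ language by fixing \emph{one common random tape} for the inner $\RTIME$ algorithm. Hard-wiring that tape turns the inner test into a deterministic circuit over the outer witness. The assumed $\CSAT$ algorithm is then applied a second time, to a circuit of size $2^{O(\Llog(n)^d)}$; composing the two clocks gives the exponent $d^2$.

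\textbf{Setup.} Let $L\in\Sigma_2^p$ with $x\in L\iff\exists y\in\bits^{p(n)}\,\forall z\in\bits^{q(n)}\,P(x,y,z)=1$. Let $\mathcal A$ be the one-sided $\RTIME(2^{O(\Llog(n)^d)})$ algorithm for $\CSAT$, uniformly clocked.

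\textbf{Amplification.} For $x$ and $y$, let $E_{x,y}$ be the polynomial-size circuit $z\mapsto\neg P(x,y,z)$. Run $\mathcal A$ on $E_{x,y}$ with $p(n)+2$ independent segments of a tape $r$, accepting if any run accepts. Call this $\mathcal A'$. It still never accepts an unsatisfiable circuit. On a satisfiable $E_{x,y}$ it errs with probability at most $3^{-(p(n)+2)}<2^{-p(n)}/3$. Its tape length and running time stay $2^{O(\Llog(n)^d)}$.

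\textbf{The circuit over the outer witness.} For fixed $r$, define $D_{x,r}(y)=1$ iff $\mathcal A'(E_{x,y};r)$ rejects. Since $\mathcal A'$ is clocked uniformly on every input and tape, the standard tableau construction turns this computation, with $r$ hard-wired, into a circuit. That circuit has size $N\le 2^{O(\Llog(n)^d)}$ and is constructible deterministically within that time. Hence $\Llog(N)=O(\Llog(n)^d)$.

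\textbf{The algorithm.} Sample $r$, build $D_{x,r}$, run $\mathcal A$ on it with fresh randomness, and output its answer. The running time is $2^{O(\Llog(N)^d)}=2^{O(\Llog(n)^{d^2})}$. Correctness has two cases.
\begin{itemize}
\item \emph{If $x\in L$.} Take a witness $y^*$. Then $E_{x,y^*}$ is unsatisfiable, so one-sidedness forces $D_{x,r}(y^*)=1$ for \emph{every} $r$. Thus $D_{x,r}$ is always satisfiable, and $\mathcal A$ accepts with probability at least $2/3$.
\item \emph{If $x\notin L$.} Every $E_{x,y}$ is satisfiable. By a union bound over the $2^{p(n)}$ strings $y$, with probability at least $2/3$ over $r$ the amplified algorithm accepts every $E_{x,y}$. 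Then $D_{x,r}$ is unsatisfiable, and $\mathcal A$ never accepts it. So the acceptance probability is at most $1/3$.
\end{itemize}
This places $\Sigma_2^p$ in $\BPTIME(2^{O(\Llog(n)^{d^2})})$ with error at most $1/3$. $\Pi_2^p$ follows because bounded-error classes are closed under complement.

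\textbf{Main obstacle.} I expect the hardest step to be the resource accounting of the middle step. The circuit for $D_{x,r}$ must have size genuinely $2^{O(\Llog(n)^d)}$, including the hard-wired tape and the polynomial amplification factor. Its encoding length, not just its gate count, must be the quantity plugged into $\mathcal A$'s clock. The clock must also hold on all tapes, so that the tableau has a fixed size independent of $r$. I would handle this by first fixing an explicit clock $2^{c\Llog(n)^d}$ for $\mathcal A'$, then padding the tableau circuit to exactly that size. Only then does the relation $\Llog(N)\le c'\Llog(n)^d$ give $(\Llog(N))^d=O(\Llog(n)^{d^2})$.
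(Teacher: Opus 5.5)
Your proposal is correct and takes essentially the paper's route. You amplify the inner test to error below $2^{-p(n)}$ and share one random tape across all outer witnesses via a union bound; you then compile the fixed-tape computation into an explicit circuit of size $2^{O(\Llog(n)^d)}$ and rerun the SAT algorithm, where $\Llog(N)^d=O(\Llog(n)^{d^2})$ gives the exponent. The only minor difference is that you use one-sidedness so each case has a single failure event, whereas the paper treats the inner test as a generic bounded-error algorithm and applies a union bound to both failure events.
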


\begin{proof}
Let $L\in\Sigma_2^p$. There are a polynomial $p$ and a polynomial-time
predicate $R$ such that, for $n=|x|$,
\begin{equation}
 x\in L\iff
 \exists u\in\bits^{p(n)}\ \forall v\in\bits^{p(n)}\ R(x,u,v)=1.
 \label{eq:sigma-two-form}
\end{equation}
Set $B(x,u)=\ind[\forall v\ R(x,u,v)=1]$. A circuit computing
$\neg R(x,u,\cdot)$ has polynomial size in $n$. Complementing the
assumed randomized circuit-satisfiability algorithm gives a bounded-error
algorithm for $B$ in time $2^{O(\Llog(n)^d)}$.

Amplify this algorithm so that, on every pair $(x,u)$, its error is at
most $2^{-p(n)-6}$. This requires only $O(p(n)+1)$ repetitions and
preserves the time bound. Fix a uniform clock and pad the coin usage to
one common length. Write the resulting computation as
$\widehat B(x,u;r)$, where $r$ is its random tape.

For a fixed $x$, take one common random tape for all possible $u$.
Independence between the answers for different $u$ is unnecessary:
a union bound gives
\begin{equation}
 \Pr_r\bigl[\forall u\in\bits^{p(n)},
                 \ \widehat B(x,u;r)=B(x,u)\bigr]
 \geq1-2^{p(n)}2^{-p(n)-6}=1-1/64.
 \label{eq:common-tape}
\end{equation}

For the sampled $r$, uniformly compile the deterministic computation
$u\mapsto\widehat B(x,u;r)$ into an explicit circuit $D_{x,r}$.
Hardwire $x$ and $r$; only $u$ is free. A standard clocked computation
simulation constructs the circuit in time polynomial in the clock bound,
so its description length satisfies
\begin{equation}
  S(n)=|D_{x,r}|\leq2^{O(\Llog(n)^d)}.
 \label{eq:compiled-size}
\end{equation}
On the event in \eqref{eq:common-tape},
\[
 D_{x,r}\in\CSAT\iff\exists u\ B(x,u)=1\iff x\in L.
\]
Apply the assumed circuit-satisfiability algorithm to $D_{x,r}$, using
fresh coins and reducing its error to at most $1/64$. Its time is
\begin{align}
 2^{O(\Llog(S(n))^d)}
 &=2^{O((\Llog(n)^d)^d)}\notag\\
 &=2^{O(\Llog(n)^{d^2})}.
 \label{eq:time-compose}
\end{align}
Construction of $D_{x,r}$ is within this bound. By a union bound over
failure of \eqref{eq:common-tape} and the final circuit-satisfiability
test, total error is at most $1/32$. Complementation gives the same
bound for $\Pi_2^p$.
\end{proof}

\begin{remark}[The explicit-description cost]
The input to the final SAT algorithm in this proof has quasipolynomial
length $S(n)$, not length $n$. Formula \eqref{eq:time-compose} records
this composition. The proof therefore yields exponent $d^2$, not $d$,
and does not turn a quasipolynomial circuit into a polynomial-size one.
\end{remark}

\section{A level-independent simulation of the polynomial hierarchy}
\label{sec:ph}

The following two results are imported from the literature. They are not
consequences attributed to the binary-certificate theorem.

\begin{theorem}[Polynomial-arity advice; {\cite[Theorem~4.4, long version]{ABG03}}]
\label{thm:abg}
Let $k(n)$ be polynomially bounded and let $A$ be a language. Suppose a
relation $W(t,b)$, on $k(n)$-tuples of $n$-bit strings and $k(n)$-bit
vectors, satisfies
\[
 \forall t\ \exists b\ W(t,b),
 \qquad W(t,b)\Longrightarrow b\neq\chi_A(t).
\]
Then
\[
 A\in\NP^W/\poly\ \cap\ \mathrm{coNP}^W/\poly.
\]
In particular,
\begin{equation}
 \poly\mc\subseteq\NP/\poly\cap\coNP/\poly.
 \label{eq:abg}
\end{equation}
\end{theorem}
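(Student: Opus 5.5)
This is the Amir--Beigel--Gasarch advice theorem. I would prove it with the Sauer--Shelah counting of \cref{lem:sauer} plus a greedy advice construction, relativized to $W$. Throughout, fix a length $n$ and write $k=k(n)$. Call a tuple $t=(x_1,\ldots,x_k)$ of length-$n$ strings, together with a vector $b$ satisfying $W(t,b)$, a \emph{$W$-constraint}. Such a constraint rules out one pattern on the coordinates $x_1,\ldots,x_k$ of the characteristic vector $\chi_{A^{=n}}\in\bits^{2^n}$.

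The core is a generalized Sauer--Shelah statement. Let $\mathcal S\subseteq\bits^m$ be a set of candidate characteristic vectors on an index set of size $m$, and suppose that for every $k$-subset of coordinates some pattern is forbidden to every member of $\mathcal S$. Then \cref{lem:sauer} gives $|\mathcal S|\le\sum_{i<k}\binom mi\le m^{k}$.

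I would use this to build the advice by an iterative ``halving'' or ``shattering'' argument. Consider the set $\mathcal C$ of all vectors consistent with every $W$-constraint; it contains $\chi_{A^{=n}}$. The goal is to find polynomially many strings $y_1,\ldots,y_s$, with $s=\poly(n,k)$, and a set $T$ of at most $k-1$ of them, with two properties. First, for every $x$, the membership $\chi_A(x)$ should be forced by a single $W$-constraint on a tuple drawn from $\{x\}\cup T$, given the known bits $\chi_A(y)$ for $y\in T$. Second, these forcing constraints should go in the correct direction.

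The concrete mechanism is as follows. Choose a maximal set $T$ of at most $k-1$ length-$n$ strings such that the actual vector $\chi_A|_T$, extended, still leaves the set of length-$n$ strings ``unresolved''. By a counting argument, once $T$ is maximal, for each $x$ there is a $W$-constraint on the tuple $T\cup\{x\}$, padded with dummy known strings, that excludes exactly the pattern $(\chi_A|_T,1-\chi_A(x))$. Making this precise is the ABG ``hardest tuple'' step. The advice then consists of $T$ together with its bits, which is $\poly(n)$ long.

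With this advice, the algorithm for membership is direct. To certify $x\in A$, the $\NP^W$ machine guesses the padding strings and a vector $b$, queries $W(t,b)$, and checks that $b$ agrees with the advice bits on $T$ and has a $0$ in the coordinate of $x$. Since $W(t,b)$ implies $b\neq\chi_A(t)$, this forces $\chi_A(x)=1$. The symmetric guess, with a $1$ in the coordinate of $x$, certifies $x\notin A$, giving membership in $\coNP^W/\poly$.

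The consequence \eqref{eq:abg} follows by taking $W(t,b)\iff b=g(t)$ for the comparator $g$. This relation is polynomial-time, so $\NP^W=\NP$, and $\forall t\,\exists b\,W(t,b)$ holds because $g$ is total.

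The main obstacle is the existence of the maximal $T$ of size below $k$ with the forcing property. Here polynomial arity matters, because the advice size scales like $k\cdot n$. I would first try a potential-function argument: repeatedly add a string $y$ on which the candidate set $\mathcal C$, restricted to vectors agreeing with the current advice, is not yet determined. Each such addition should shrink the Sauer--Shelah bound by a polynomial factor, so fewer than $k$ additions are possible. Failing that, I would follow the ABG long-version argument verbatim, relativized to $W$.
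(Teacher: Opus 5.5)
Your derivation of \eqref{eq:abg} from the relational statement, via $W(t,b)\iff b=g(t)$, is correct and matches the paper. The core step, however, is false as stated. Your verifier certifies $\chi_A(x)$ from one accepted pair $W(t,b)$ in which $t$ consists of $x$ and advised strings of known label. For completeness it needs $b$ to agree with those known labels at every coordinate except the one holding $x$. Nothing in the hypotheses supplies such ``nearly true'' excluded vectors.

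For a counterexample, take any $A\in\Pclass$, any $k\ge2$, and $W(t,b)\iff b=\overline{\chi_A(t)}$. This $W$ is polynomial-time, total, and never accepts the true vector. Yet every accepted vector disagrees with the truth at every advised coordinate, so your verifier accepts no input, whatever $T$ and padding you choose. The theorem holds here; your mechanism just does not prove it.

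The Sauer--Shelah potential argument does not repair this. Fixing the true label of an undetermined string may eliminate only one consistent vector. Even a halving strategy only yields a rule that quantifies over consistent characteristic vectors on $\bits^n$, which is not an $\NP^W$ computation. Also, in the Amir--Beigel--Gasarch argument the bound $k-1$ counts stages, not anchor strings. Deferring to ``the ABG argument verbatim'' leaves the actual proof unstated.

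What is missing is the cover-versus-residual dichotomy, which the paper reproduces in positive-relation form in Appendix~\ref{app:positive-abg}. One descends from arity $k$ through suffix relations $W_m$ whose last $k-m$ coordinates are filled from advised anchor pools with their true labels. At stage $m$, an anchor $z$ covers an $(m-1)$-tuple $\mathbf t$ if some $W_m$-accepted vector on $(\mathbf t,z)$ has the true bit $a(z)$ in the $z$-coordinate. Greedily adding anchors that each cover a quarter of the residual tuples gives pools of size $O(nm)$, and the verifier guesses one anchor per pool. If every stage is covered, you get essentially your single-certificate verifier; this is the only case your plan treats.

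If the covering stops with a nonempty residual set $T$, the situation is different. For each $x$, more than $3/4$ of the $\mathbf u\in T$ have the property that $W_m$-accepted vectors on $(\mathbf u,x)$ exist and all have $x$-bit $1-a(x)$. A fixed advised list of $M=16(n+1)+1$ tuples from $T$ has a strict majority of such good positions for all $x$ simultaneously. The verifier for a requested bit $b$ must then exhibit positive $W$-certificates with $x$-bit $1-b$ at $(M+1)/2$ distinct positions. Soundness holds because at most $(M-1)/2$ positions are bad.

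In the example above every cover set at stage $k$ is empty, so the construction stops immediately in this residual case. Nothing in your proposal plays the role of this majority-of-certificates step.
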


To obtain \eqref{eq:abg}, take $W(t,b)$ to mean that the
polynomial-time comparator outputs $b$ on $t$. Then $W\in\Pclass$.
This uses the original theorem's two-sided nondeterministic advice
conclusion, rather than the weaker summary in
\cite[Section~3.1]{BD13}.

\begin{theorem}[Strengthened Yap collapse; \cite{CCHO05}]
\label{thm:yap}
If $\NP\subseteq\coNP/\poly$, then
\begin{equation}
 \PH=\Sclass^{\NP}\subseteq\ZPP^{\Sigma_2^p}.
 \label{eq:yap}
\end{equation}
The last containment uses the relativized symmetric-alternation
simulation $\Sclass\subseteq\ZPP^{\NP}$ of Cai~\cite{cai-s2zpp}, as also recorded in~\cite{CCHO05}.
\end{theorem}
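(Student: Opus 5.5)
Since this result is imported from \cite{CCHO05}, my plan is to reconstruct the standard argument in three stages. First, use Yap's theorem to collapse $\PH$ to $\Sigma_3^p$. Second, place $\Sigma_3^p$ inside $\Sclass^{\NP}$, so that $\PH=\Sclass^{\NP}$. Third, relativize Cai's theorem to get the $\ZPP^{\Sigma_2^p}$ bound.

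\textbf{Advice for the complement.} The hypothesis $\NP\subseteq\coNP/\poly$ gives $\coNP\subseteq\NP/\poly$. So there are a nondeterministic polynomial-time machine $N$ and an advice sequence $(\alpha_m)$ such that, for every formula $\varphi$ of length $m$, $N(\varphi,\alpha_m)$ has an accepting path iff $\varphi\in\UNSAT$. Call an advice string $\alpha$ \emph{sound} at length $m$ if no satisfiable $\varphi$ of length $m$ has an accepting path of $N(\varphi,\alpha)$. Call it \emph{complete} if every unsatisfiable $\varphi$ of length $m$ has one.

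The crucial observation is about how to expose unsoundness. It suffices to exhibit $\varphi$, a satisfying assignment, and an accepting path of $N(\varphi,\alpha)$. That is a polynomial-size certificate checkable in polynomial time. Incompleteness cannot be certified this way. Yap's proof uses this asymmetry as follows. A $\Sigma_3^p$ computation has the form $\exists\,\forall\,\exists$. The innermost $\forall\exists$ part is a $\Pi_2^p$ condition $\forall z\,[\varphi_z\in\SAT]$. Under the hypothesis, the $\SAT$ test on $\varphi_z$ can be replaced by a nondeterministic guess that is verified with guessed advice. This yields $\Sigma_3^p\subseteq\Sigma_2^p$ after the standard quantifier merging, and hence $\PH=\Sigma_2^p$.

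\textbf{Symmetric-alternation verifier.} For the strengthened statement I will show $\Sigma_2^p\subseteq\Sclass^{\NP}$. Since $\Sclass^{\NP}\subseteq\Sigma_3^p\cap\Pi_3^p$ holds trivially, together with the previous collapse this gives $\PH=\Sclass^{\NP}$.

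Take $L\in\Sigma_2^p$, written as $x\in L\iff\exists u\,\forall v\,R(x,u,v)$. I will write $\Pi_2^p$ as $\coNP^{\NP}$ and convert it, using the complement advice, into $\coNP/\poly$-style checks.
\begin{itemize}
\item The yes-prover supplies $u$ together with an advice string $\alpha$.
\item The no-prover supplies an advice string $\beta$ together with the claim that each $u$ fails.
\end{itemize}
The verifier has an $\NP$ oracle, and it first uses the oracle to ask whether $\alpha$, or $\beta$, is unsound at the relevant length. Each such query is an $\NP$ question by the certificate above. A prover whose advice is unsound loses. With advice shown sound, the verifier decides each prover's claim with one $\NP$-oracle query. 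Soundness makes every nondeterministic acceptance of $N$ a genuine refutation, so a sound string can never falsely certify a formula unsatisfiable.

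The honest prover supplies the correct $\alpha_m$, which is both sound and complete. This is what lets the honest prover win against every opposing certificate, while the opponent's incomplete advice merely fails to help it.

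\textbf{Main obstacle and final step.} The hardest step is arranging the verifier so that completeness is never needed from the adversary's advice. That is, every claim a prover must defend has to be phrased as ``$N$ accepts with my advice'', which is a positive $\NP$ event checkable by the oracle. It must never be phrased as ``$N$ rejects'', which would require completeness. My first attempt is to give each prover the burden of exhibiting accepting paths only, and to use the oracle solely for soundness tests and for ordinary $\SAT$ queries.

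Finally, Cai's theorem $\Sclass\subseteq\ZPP^{\NP}$ relativizes. Relative to an $\NP$ oracle it yields $\Sclass^{\NP}\subseteq\ZPP^{\NP^{\NP}}=\ZPP^{\Sigma_2^p}$, which completes \eqref{eq:yap}.
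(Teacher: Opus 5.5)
Your overview (Yap's theorem to $\Sigma_3^p$, then $\Sigma_3^p\subseteq\Sclass^{\NP}$, then relativized Cai) is the right plan. The execution, however, departs from it at a step that fails. The paper imports this theorem without proof, but its \cref{prop:empty} isolates your error exactly: the claim that replacing the $\SAT$ tests by guessed advice yields $\Sigma_3^p\subseteq\Sigma_2^p$, hence $\PH=\Sigma_2^p$.

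Writing $\forall z\,[\varphi_{y,z}\in\SAT]$ as $\exists\alpha\,[\mathrm{check}(\alpha)\land\forall z,w\ \neg N(\varphi_{y,z},\alpha,w)]$ is valid only if every $\alpha$ passing the check is \emph{complete} for $\UNSAT$. Soundness, the property you can test in $\coNP$, does not help here. The reject-all advice is sound and makes the right side true on every input. Completeness is a $\Pi_2^p$ condition (compare \eqref{eq:complete-pi2}), and testing it restores the third quantifier block. Yap's theorem therefore gives only $\PH=\Sigma_3^p$. A second-level collapse from $\NP\subseteq\coNP/\poly$ is not known, and the paper treats the corresponding completeness test as an extra assumption in \cref{prop:robust}.

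As a result, your second stage proves only $\Sigma_2^p\subseteq\Sclass^{\NP}$. That inclusion holds unconditionally and needs no advice, since the $\NP$ oracle alone decides whether a proposed $u$ fails. The substantive inclusion is $\Sigma_3^p\subseteq\Sclass^{\NP}$, and your competing-advice verifier proves it once you move it up one level:
\begin{itemize}
\item Write $x\in L\iff\exists y\,\forall z\,\exists w\,R(x,y,z,w)$, with yes-certificate $(y,\alpha)$ and no-certificate $\beta$.
\item The verifier rejects if $\alpha$ is unsound and accepts if $\beta$ is unsound.
\item Otherwise it asks the oracle whether some $z$ makes $N(\psi_{x,y,z},\alpha)$ or $N(\psi_{x,y,z},\beta)$ accept, and it accepts iff the answer is no.
\end{itemize}
Soundness of both strings makes a positive answer genuine. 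Completeness of whichever string is designated makes a negative answer genuine. Hence each honest prover wins against every opponent.

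Note that the yes-prover defends a negative claim here. What matters is not phrasing every claim as acceptance, but that the OR over both strings is exact once one of them is designated. This is the same mechanism as in \cref{lem:two-bundles,thm:s2-simulation}. Then $\Sigma_3^p\subseteq\Sclass^{\NP}\subseteq\Sigma_3^p\cap\Pi_3^p$ yields $\PH=\Sclass^{\NP}$, and your relativization of Cai's theorem completes the statement.
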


Here $\Sclass^\NP$ is symmetric alternation with a polynomial-time
NP-oracle verifier: each yes-instance has a polynomial-length certificate
that wins against all opposing certificates, and each no-instance has a
polynomial-length certificate that defeats all opposing certificates.
We also use the standard containment
$\Sclass^{\NP}\subseteq\Sigma_3^p\cap\Pi_3^p$. Thus
\eqref{eq:yap} supplies a third-level polynomial-time characterization
as well as a zero-error oracle simulation. For $\ZPP^{\Sigma_2^p}$,
the expected polynomial time and zero-error correctness are with the
exact oracle.

\begin{proposition}
\label[proposition]{prop:structural}
Under $\hyp$, the known structural collapse
\[
 \PH=\Sclass^{\NP}\subseteq\ZPP^{\Sigma_2^p}
\]
holds.
\end{proposition}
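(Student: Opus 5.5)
The plan is to chain the two imported theorems. Under $\hyp$, SAT is $k(n)$-membership comparable with $k(n)=\lceil c_0\Llog(n)^d\rceil$, and this arity is polynomially bounded. So \cref{thm:abg} applies: take $W(t,b)$ to be ``the comparator $g$ outputs $b$ on $t$''. Then $W\in\Pclass$, and $\forall t\,\exists b\,W(t,b)$ holds because $g$ always halts with an output. The implication $W(t,b)\Rightarrow b\neq\chi_{\SAT}(t)$ is exactly the comparability guarantee. This yields $\SAT\in\NP/\poly\cap\coNP/\poly$.

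Since $\SAT$ is $\NP$-complete under polynomial-time many-one reductions, and $\coNP/\poly$ is closed under such reductions (the advice for length $n$ is taken at the polynomially bounded reduced lengths), we get $\NP\subseteq\coNP/\poly$. Then \cref{thm:yap} gives $\PH=\Sclass^{\NP}\subseteq\ZPP^{\Sigma_2^p}$ directly.

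The one point needing care is matching the comparability convention to the hypothesis of \cref{thm:abg}. That theorem is stated for $k(n)$-tuples of $n$-bit strings, whereas $\hyp$ compares every tuple of arity at least $c_0\Llog(n)^d$ with $n$ the maximum length. I would restrict $g$ to tuples of exactly $k(n)$ strings, each of length exactly $n$; this is a special case covered by $\hyp$, since the maximum length is then $n$. Malformed strings can be handled by a fixed padding convention for formulas, or by noting that malformed encodings are non-members so the excluded vector remains valid. I expect no real obstacle beyond this bookkeeping, because the statement is intentionally a direct composition of the two cited results.
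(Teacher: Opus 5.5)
Your proposal is correct and follows the same route as the paper: \cref{thm:abg} applied to the comparator's output relation $W\in\Pclass$ gives $\SAT\in\coNP/\poly$, bundling advice over the polynomially bounded reduction-output lengths gives $\NP\subseteq\coNP/\poly$, and \cref{thm:yap} finishes. Your extra bookkeeping on exact-length tuples and malformed strings is harmless; for a tuple with a malformed entry you can also simply output any vector with a $1$ in that coordinate, since the true bit there is $0$.
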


\begin{proof}
The arity in $\hyp$ is polynomially bounded. By \cref{thm:abg},
$\SAT\in\coNP/\poly$. Polynomial-time reductions to SAT, with advice
for all possible reduction-output lengths bundled together, give
$\NP\subseteq\coNP/\poly$. Equivalently,
$\coNP\subseteq\NP/\poly$. Apply \cref{thm:yap}.
\end{proof}

\begin{theorem}[Uniform randomized simulation of all of $\PH$]
\label{thm:ph}
Under $\hyp$,
\[
 \PH\subseteq\BPTIME\bigl(2^{O(\Llog(n)^{d^2})}\bigr).
\]
\end{theorem}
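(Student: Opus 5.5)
The plan is to combine the structural collapse of \cref{prop:structural} with the common-random-tape simulation of \cref{lem:second}. First, by \cref{prop:structural}, $\PH=\Sclass^{\NP}\subseteq\ZPP^{\Sigma_2^p}$. Every language in $\PH$ is therefore decided by a zero-error expected-polynomial-time oracle machine $M$ with a $\Sigma_2^p$ oracle $O$. Truncating $M$ at a fixed polynomial clock (three times its expected time) and outputting reject on timeout gives, by Markov, a $\BPP^{\Sigma_2^p}$ machine: it is strictly polynomial time, has error at most $1/3$, and makes at most $p(n)$ oracle queries, each of length at most $p(n)$. The point is that the level of $\PH$ only affects the polynomial $p$, never the exponent.

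Second, by \cref{cor:random-sat}, $\CSAT\in\RTIME(2^{O(\Llog(n)^d)})$. \Cref{lem:second} then gives a bounded-error algorithm $\mathcal O$ for $O$ running in time $2^{O(\Llog(m)^{d^2})}$ on queries of length $m$. We simulate $M$ and answer each query $y$ (with $|y|\le p(n)$) by running $\mathcal O$ on $y$, amplified by majority vote so that its error is at most $1/(100\,p(n))$. This takes $O(\log p(n))$ repetitions. Each call costs $2^{O(\Llog(p(n))^{d^2})}=2^{O(\Llog(n)^{d^2})}$, because $\Llog(p(n))=O(\Llog(n))$ for a fixed polynomial $p$. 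There are polynomially many calls, so the total time stays $2^{O(\Llog(n)^{d^2})}$. A union bound over the at most $p(n)$ queries shows that, with probability at least $1-1/100$, every oracle answer is correct. On that event the simulation reproduces $M$ exactly, so the total error is at most $1/3+1/100$. A constant number of repetitions with a majority vote brings this back below $1/3$.

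The main obstacle is adaptivity: the queries of $M$ depend on earlier simulated answers. Even so, the union bound applies. For any fixed coins of $M$, the run with simulated answers coincides with the exact-oracle run unless some query along that run is answered wrongly. Each such query is answered wrongly with probability at most $1/(100p(n))$, conditioned on the history, and there are at most $p(n)$ queries, which bounds the failure probability. A second point to check is uniform clocking. The truncated $M$ halts on every tape, and each call to $\mathcal O$ is uniformly clocked by \cref{lem:second}, so the whole simulation is uniformly clocked within the stated bound.
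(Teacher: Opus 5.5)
Your proposal is correct and follows the paper's proof essentially step for step. It uses the collapse $\PH\subseteq\ZPP^{\Sigma_2^p}$ from \cref{prop:structural}, Markov truncation, query answering by the amplified algorithm from \cref{cor:random-sat,lem:second}, and a first-error coupling for adaptive queries. The only difference is bookkeeping: the paper truncates at $64p(n)$ and amplifies each query to error $1/(64Q(n))$, so the total error is already $1/32$ and your final majority vote over repeated runs is unnecessary.
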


\begin{proof}
Fix $L\in\PH$. By \cref{prop:structural}, there are an exact oracle
$A\in\Sigma_2^p$ and a zero-error randomized oracle machine $M^A$
whose expected running time is at most a polynomial $p(n)$.
Truncate the computation after $64p(n)$ steps, returning an arbitrary
bit on timeout. With the exact oracle, Markov's inequality bounds the
timeout probability by $1/64$. Set $Q(n)=\lceil64p(n)\rceil+1$.
Every truncated path makes at most $Q(n)$ queries, each of length at
most $Q(n)$.

By \cref{cor:random-sat,lem:second}, $A$ has a bounded-error algorithm
with time $2^{O(\Llog(t)^{d^2})}$ on inputs of length $t$. Amplify it to
error at most $1/(64Q(n))$ on each query. The additional factor is
$O(\log Q(n))$.

Run the truncated machine, replacing each oracle query by this algorithm
with fresh random coins. To handle adaptive queries, couple this run to
the truncated exact-oracle run using the same outer random tape. Until
the first incorrect simulated answer, the two runs make identical
queries. Conditional on all preceding query answers being correct, the
new query is fixed by the history and the fresh simulation coins give
error at most $1/(64Q(n))$. Summing the probabilities of the possible
first errors bounds any divergence by $1/64$.

Outside the timeout and divergence events, the answer is correct.
Total error is therefore at most $1/32$. There are polynomially many
simulated queries of polynomial length. Their combined, worst-case
clocked running time is
\[
 \poly(n)\,
 2^{O(\Llog(Q(n))^{d^2})}
 =2^{O(\Llog(n)^{d^2})}.
\]
The polynomial $p$ and hidden constants may depend on $L$, but the
exponent $d^2$ does not depend on its level in $\PH$.
\end{proof}

\section{The quasipolynomial hierarchy}
\label{sec:qh}

\begin{lemma}[Closure under a quasipolynomial quantifier block]
\label[lemma]{lem:qp-block}
Assume $\CSAT\in\RTIME(2^{O(\Llog(n)^d)})$ for fixed $d$.
Let $B(x,u)$ have a bounded-error randomized quasipolynomial-time
algorithm, and let $q(n)$ be a quasipolynomial, time-constructible bound.
Then the languages
\[
 \{x:\exists u\in\bits^{q(|x|)}\ B(x,u)
                         \},\qquad
 \{x:\forall u\in\bits^{q(|x|)}\ B(x,u)
                         \}
\]
belong to $\BPQP$.
\end{lemma}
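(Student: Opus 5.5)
The plan is to repeat the common-random-tape argument of \cref{lem:second}, with the polynomial bound $p(n)$ replaced by the quasipolynomial bound $q(n)$. The universal case then follows by complementation, since $\BPQP$ is closed under complement and $\forall u\,B$ is the complement of $\exists u\,\neg B$.

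\textbf{Step 1: amplify $B$ below a union bound over all $u$.} Fix a bounded-error algorithm for $B$ with time $2^{O(\Llog(n)^c)}$ on the pair $(x,u)$. Since $|u|=q(n)\le 2^{O(\Llog(n)^{c'})}$, this time is still quasipolynomial in $n$, say $T(n)=2^{O(\Llog(n)^{e})}$. Amplify by majority vote over $O(q(n)+1)$ independent repetitions, so that the error on every pair is at most $2^{-q(n)-6}$. The cost is a factor $O(q(n))$, which keeps the time quasipolynomial. Then pad the coin usage to one common length and fix a uniform clock, obtaining a deterministic procedure $\widehat B(x,u;r)$. As in \eqref{eq:common-tape}, a union bound over all $2^{q(n)}$ strings $u$ shows that a single tape $r$ is correct simultaneously for every $u$ with probability at least $1-1/64$.

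\textbf{Step 2: compile and test satisfiability.} For the sampled $r$, compile $u\mapsto\widehat B(x,u;r)$ into an explicit circuit $D_{x,r}$ with $x$ and $r$ hardwired and only $u$ free. A standard clocked simulation produces it in time polynomial in the clock, so $S(n)=|D_{x,r}|\le 2^{O(\Llog(n)^{e'})}$. On the good event, $D_{x,r}\in\CSAT$ holds exactly when $x$ is in the existential language. Run the assumed $\CSAT$ algorithm on $D_{x,r}$ with fresh coins, amplified to error at most $1/64$. Its running time is
\[
2^{O(\Llog(S(n))^d)}=2^{O(\Llog(n)^{e'd})},
\]
which is again quasipolynomial, since $\Llog(S(n))=O(\Llog(n)^{e'})$. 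A union bound over the two failure events gives total error at most $1/32$, and all loops are uniformly clocked. Hence the existential language lies in $\BPQP$.

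\textbf{Main obstacle.} The delicate step is the accounting in Step 1. We need the amplified error $2^{-q(n)-6}$ to beat the union bound over $2^{q(n)}$ strings while keeping the repetition count, the padded tape length, and the compiled circuit size all quasipolynomial in $|x|$, not just in $|(x,u)|$. We also need "quasipolynomial" to be measured in $n=|x|$ throughout. This works because composing a quasipolynomial with a quasipolynomial is quasipolynomial: $\Llog(2^{\Llog(n)^a})^b=O(\Llog(n)^{ab})$. Time-constructibility of $q$ is what lets us clock the enumeration of coins and the circuit compilation uniformly, including on inputs where $B$'s algorithm runs off its nominal bound.
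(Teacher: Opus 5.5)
Your proposal is correct and follows the paper's proof essentially step for step. Both arguments amplify $B$ to error $2^{-q(n)-6}$ with $O(q(n)+1)$ repetitions, share one random tape across all $u$ via the union bound of \eqref{eq:common-tape}, compile the fixed-tape computation into a quasipolynomial-size circuit, decide it with the assumed $\CSAT$ algorithm using the same two-event error analysis, and handle the universal case by complementing $B$ and the final answer; your bookkeeping $\Llog(S(n))^d=O(\Llog(n)^{e'd})$ just spells out the paper's remark that fixed compositions of quasipolynomials remain quasipolynomial.
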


\begin{proof}
For the existential language, amplify the algorithm for $B$ to error
$2^{-q(n)-6}$ for every $u$. The repetition count is $O(q(n)+1)$,
not $2^{q(n)}$. Both the combined input length $n+q(n)$ and the
amplified running time are quasipolynomial in $n$: fixed compositions
and products of quasipolynomials remain quasipolynomial.

Sample a common random tape for all $u$ and use the union bound as in
\eqref{eq:common-tape}. Compile the fixed-tape computation into a circuit
with $q(n)$ free inputs and quasipolynomial explicit description length.
The assumed circuit-satisfiability algorithm decides its satisfiability
in quasipolynomial time. The same two-event error analysis applies.
For the universal language, complement $B$, use the existential result,
and complement the final answer.
\end{proof}

\begin{theorem}[Collapse at quasipolynomial time]
\label{thm:qh}
Under $\hyp$, $\QH=\BPQP$.
\end{theorem}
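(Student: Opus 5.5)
We prove the two inclusions separately. The inclusion $\BPQP\subseteq\QH$ is unconditional and uses no quantifier blocks. A bounded-error quasipolynomial-time language lies in the first level $\Sigma_1\mathrm{TIME}(2^{O(\Llog(n)^c)})$, since the existential block may simply be left unused. No Sipser--Gács--Lautemann argument is needed here: $\QH$ is defined with alternating time, and a zero-length existential block combined with deterministic simulation of all coin sequences stays within $2^{O(\Llog(n)^{c})}\cdot 2^{2^{O(\Llog(n)^c)}}$. That bound is too large, so instead we cite Lautemann's argument scaled to quasipolynomial time. It places $\BPTIME(T)$ in $\Sigma_2\mathrm{TIME}(\poly(T))$, and $\poly(2^{O(\Llog(n)^c)})$ is again quasipolynomial. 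Hence $\BPQP\subseteq\QH$.

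For the main inclusion $\QH\subseteq\BPQP$, assume $\hyp$. By \cref{cor:random-sat}, $\CSAT\in\RTIME(2^{O(\Llog(n)^d)})$, so \cref{lem:qp-block} is available. Fix $L\in\Sigma_j\mathrm{TIME}(2^{O(\Llog(n)^c)})$. First we normalize $L$ into prenex form
\[
x\in L\iff \exists u_1\forall u_2\cdots Q_j u_j\ R(x,u_1,\ldots,u_j)=1 .
\]
Each $u_i$ has length at most a time-constructible quasipolynomial $q(n)$, padded to exactly $q(n)$. The predicate $R$ is deterministic and runs in quasipolynomial time. This is the standard conversion of alternating time to quantified form: record the guesses of each alternation phase as a string, then verify the final deterministic phase.

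We then induct downward on the block index. Define $B_j=R$, which is deterministic and hence bounded-error quasipolynomial. Define $B_{i-1}(x,u_1,\ldots,u_{i-1})=Q_iu_i\,B_i(x,u_1,\ldots,u_i)$. Apply \cref{lem:qp-block} to $B_i$, treating the tuple $(x,u_1,\ldots,u_{i-1})$ as the lemma's input. This gives a bounded-error quasipolynomial algorithm for $B_{i-1}$. After $j$ steps, $L=B_0\in\BPQP$.

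The main obstacle is the accounting inside this induction. At each stage the lemma's "input" has length $n+iq(n)$, which is quasipolynomial in $n$ rather than $n$ itself. The lemma's conclusion is therefore quasipolynomial in that longer length, and we must check that the resulting bound stays quasipolynomial in the original $n$. The argument is that the class of functions $2^{O(\Llog(n)^{c})}$, taken over all constant $c$, is closed under fixed composition: $\Llog(2^{\Llog(n)^c})^{c'}=O(\Llog(n)^{cc'})$. Each step multiplies the exponent power by at most a constant factor, here $d$ times the current power, and $j$ is fixed. So after $j$ steps the power is at most $c\,d^{O(j)}$, a constant depending only on $L$.

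Two further checks are needed. The first is that the quantifier length $q$ is measured in the original $n$ while the lemma's bound $q(|x|)$ is stated in the lemma's input length. This is handled by reading $x'=(x,u_1,\ldots,u_{i-1})$ and computing $q(n)$ from the recoverable component $x$; the lemma's proof is insensitive to this choice. The second is that error amplification at each level is applied before the next union bound, exactly as \cref{lem:qp-block} requires.
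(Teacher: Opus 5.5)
Your proposal is correct and follows the paper's proof. The forward inclusion is exactly the paper's innermost-block-first iteration of \cref{lem:qp-block} after putting the $\Sigma_j$ constituent in prenex form, and the reverse inclusion is the quasipolynomially scaled Lautemann covering argument, which the paper writes out explicitly (amplify to error below $1/(4m)$ with $m=O(t\log t)$ coins, then use $m$ shifts); you should, however, delete the opening of your reverse-direction paragraph claiming that $\BPQP$ lies in $\Sigma_1\mathrm{TIME}$ via an unused block and that no Sipser--G\'acs--Lautemann argument is needed, since that claim is false and contradicts the argument you actually use.
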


\begin{proof}
For $\QH\subseteq\BPQP$, express any fixed constituent of $\QH$ by a
fixed number of alternating quantified blocks of quasipolynomial length
and a deterministic quasipolynomial-time predicate. This form also follows
by encoding the choices in each phase of a bounded-alternation machine;
only choices actually used by a branch affect the deterministic
simulation. Starting with the innermost block, apply
\cref{lem:qp-block} repeatedly. The number of applications is fixed, so
the resulting running time remains quasipolynomial. This direction uses
$\hyp$ only through \cref{cor:random-sat}.

For the reverse inclusion, which is unconditional, let a bounded-error
machine have time bound $t(n)=2^{O(\Llog(n)^c)}$, with $t(n)\geq n+2$.
Repeat it $K\lceil\log(t(n)+2)\rceil$ times and take the majority.
For a sufficiently large fixed $K$, Chernoff amplification yields error
$\varepsilon(n)$ and a padded coin count $m(n)$ satisfying
\[
 m(n)=O(t(n)\log(t(n)+2)),
 \qquad \varepsilon(n)<\frac1{4m(n)}.
\]
Indeed, the error decreases as $t(n)^{-\Omega(K)}$, whereas $m(n)$
increases by only a logarithmic factor. This establishes the inequality
without defining the repetition count circularly in terms of the final
number of coins. Increase the constant for finitely many small inputs.
Write the amplified, deterministic-on-its-tape computation as $D(x,r)$
for $r\in\bits^m$.

Fix $x$, and let $G_x=\{r:D(x,r)=1\}$. For a yes-instance,
$|G_x|\geq(1-\varepsilon)2^m$. Choose $m$ independent uniform shifts
$s_1,\ldots,s_m\in\bits^m$. Any fixed point is uncovered by
$\bigcup_i(G_x\oplus s_i)$ with probability at most $\varepsilon^m$.
A union bound bounds the probability that some point is uncovered by
$2^m\varepsilon^m<1$. Thus there exists a collection of $m$ shifts
covering the whole cube.

For a no-instance, $|G_x|\leq\varepsilon2^m$. For every collection of
$m$ shifts, their union has size at most
$m\varepsilon2^m<2^m$, so it cannot cover. Consequently,
\begin{equation}
 x\in L\iff
 \exists s_1,\ldots,s_m\in\bits^m\
 \forall r\in\bits^m\quad
 \bigvee_{i=1}^m D(x,r\oplus s_i)=1.
 \label{eq:cover}
\end{equation}
The existential block has length $m^2$, the universal block has length
$m$, and evaluating the disjunction takes quasipolynomial time. This
is a second-level quasipolynomial alternating computation, proving
$\BPQP\subseteq\QH$.
\end{proof}

\begin{remark}
The proof eliminates a fixed number of alternating blocks with
quasipolynomial, rather than polynomial, resource bounds. The covering
argument \eqref{eq:cover} accounts for all its shifts as quantified
strings; none are free advice.
\end{remark}

\section{Exponential-time padding consequences}
\label{sec:exp}

\begin{theorem}
\label{thm:exp}
Under $\hyp$,
\[
 \NEXP=\REXP,\qquad \UEXP=\EXP.
\]
\end{theorem}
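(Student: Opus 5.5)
The plan is to prove the two equalities separately by padding. In each case we translate an exponential-time instance into a circuit whose size is exponential in $n$. The key point is that a bound of $2^{O(\Llog(N)^d)}$ at $N=2^{n^c}$ is $2^{O(n^{cd})}$, which is still exponential. The trivial inclusions $\REXP\subseteq\NEXP$ and $\EXP\subseteq\UEXP$ hold unconditionally: a one-sided randomized machine is read as a nondeterministic machine guessing its coins, and a deterministic machine has at most one accepting computation.

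For $\NEXP\subseteq\REXP$, let $L\in\mathrm{NTIME}(2^{n^c})$ be decided by a verifier $V$ that reads a witness of length $2^{n^c}$ and runs in time $2^{n^c}$. On input $x$, uniformly build the circuit $C_x(y)=V(x,y)$. A standard clocked tableau construction gives size $N\le 2^{O(n^c)}$ and takes time polynomial in $N$. Then run the algorithm of \cref{cor:random-sat} for $\CSAT$ on $C_x$. This uses time $2^{O(\Llog(N)^d)}=2^{O(n^{cd})}$, with one-sided error: it never accepts an unsatisfiable circuit and accepts a satisfiable one with probability at least $2/3$. Hence $L\in\RTIME(2^{O(n^{cd})})\subseteq\REXP$.

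For $\UEXP\subseteq\EXP$, let $L$ be decided by an unambiguous machine $M$ running in time $2^{n^c}$. Encode each computation path of $M$ as a choice string $y$, and form the circuit $C_x(y)$ that accepts exactly when $y$ describes an accepting computation of $M$ on $x$. Each accepting computation must correspond to exactly one $y$; to ensure this, the encoding uses exactly the choice bits, padded canonically. A possible alternative is to include the whole tableau, with gate and tableau variables uniquely determined by the choices, following the paper's convention on preserving the number of satisfying assignments. With either encoding, $C_x$ has zero satisfying assignments when $x\notin L$ and exactly one when $x\in L$. Now apply the deterministic promise solver of \cref{prop:unique}. It runs in time $2^{O(\Llog(N)^d)}=2^{O(n^{cd})}$ even off-promise, and here the promise holds on every input, so the output is correct. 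Thus $L\in\DTIME(2^{O(n^{cd})})\subseteq\EXP$.

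The main obstacle is the parsimony bookkeeping in the $\UEXP$ case. If the machine halts before using all $2^{n^c}$ choice bits, the unused bits must be forced to a canonical value, for example zero; otherwise a single accepting path would yield many satisfying assignments. The same care is needed for the intermediate tableau variables, which must be functionally determined by the choices. The $\NEXP$ case has no such issue, because off-promise behaviour is irrelevant for one-sided randomized satisfiability. The remaining step is to check that the circuit encodings have length $2^{O(n^c)}$ and can be produced within the time budget.
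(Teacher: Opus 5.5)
Your proposal is correct and follows the paper's proof essentially step for step. For $\NEXP\subseteq\REXP$ you compile the exponential-time verifier into a circuit of length $N=2^{n^{O(1)}}$ and run \cref{cor:random-sat}. For $\UEXP\subseteq\EXP$ you use a canonical choice encoding with unused bits forced to zero, so the circuit has at most one satisfying assignment and \cref{prop:unique} decides it in time $2^{O(\Llog(N)^d)}=2^{n^{O(1)}}$. The reverse inclusions are the same unconditional ones as in the paper: guess the coins, and note that deterministic machines are unambiguous.
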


\begin{proof}
Let $L\in\NEXP$. Fix a nondeterministic machine for $L$ with clock
$t(n)=2^{n^a}$ for some constant $a$. On input $x$, a standard explicit
computation encoding constructs, in $t(n)^{O(1)}$ time, a circuit or
formula $F_x$ such that
\[
 x\in L\iff F_x\in\SAT,
 \qquad |F_x|=N\leq t(n)^{O(1)}=2^{n^{O(1)}}.
\]
Run \cref{cor:random-sat} on $F_x$. Construction and solution together
take $2^{n^{O(1)}}$ time and preserve one-sided error, since an
unsatisfiable $F_x$ is never accepted. Hence $\NEXP\subseteq\REXP$.
Conversely, guess the random tape of an $\REXP$ algorithm. Its uniform
clock bounds the tape length exponentially; one-sided soundness excludes
accepting tapes on no-instances, while a yes-instance has an accepting
tape. Thus $\REXP\subseteq\NEXP$.

Now suppose the original machine is unambiguous. Choose a canonical
fixed-length encoding of its nondeterministic choices: a deterministic
verifier consumes a choice bit only at a binary nondeterministic step
and requires all unused bits of the encoding to be zero. Halting and
padding conventions are fixed. Each computation branch then has exactly
one choice encoding. Compile this verifier into a circuit. Its internal
gate values are determined by the free choice bits; a subsequent gate
encoding as a formula likewise gives each accepting input a unique
extension. The resulting circuit or formula has exactly as many
satisfying assignments as the machine has accepting computations, hence
zero or one. Apply \cref{prop:unique}. The time is
\[
 2^{O(\Llog(N)^d)}=2^{n^{O(1)}},
\]
so $\UEXP\subseteq\EXP$. Deterministic machines are unambiguous,
which supplies the reverse inclusion.
\end{proof}

\subsection{The exponential hierarchy}

For fixed $k\ge1$, let $\Sigma_k\mathrm{EXP}$ be the union over $c\ge1$
of the classes of languages having a characterization by $k$ alternating
quantifier blocks, beginning existentially, over strings of length at
most $2^{n^c}$, with a deterministic predicate clocked in time
$2^{n^c}$. Put $\EXPH=\bigcup_k\Sigma_k\mathrm{EXP}$ and
$\BPEXP=\bigcup_c\BPTIME(2^{n^c})$. For a
time-constructible $t(n)\ge n+2$, write
\[
  \mathrm{pad}_t(x)=x\,1\,0^{t(|x|)}.
\]
This map is injective and computable in time $O(t(|x|))$.
Its delimiter is the rightmost one in the padded string. To recognize
its image on a string of length $N$, find that delimiter, recover the
candidate prefix $x$, and compare the trailing-zero count with $t(|x|)$.
Use a fixed time constructor for $t$ and truncate it after a sufficiently
large constant times $N$ steps. If $t(|x|)\le N$, the constructor finishes;
a timeout therefore cannot reject a valid padded string. A reported value
larger than $N$, or one unequal to the trailing-zero count, is rejected.
Thus image recognition and recovery are polynomial-time even on malformed
inputs; they never run for an unbounded value of $t(|x|)$.

\begin{lemma}[Padding a bounded-alternation characterization]
\label[lemma]{lem:padding}
Assume $\hyp$. Let $L$ have a characterization by a fixed number $k$ of
alternating quantifier blocks, beginning existentially, over strings
of length at most $t(n)$,
with a deterministic predicate clocked in time $t(n)^{O(1)}$, where
$t(n)\ge n+2$ is time-constructible. Then $L$ is decided with error at
most $1/3$ in time
\[
  t(n)^{O(1)}+2^{O(\Llog(t(n))^{d^2})}.
\]
\end{lemma}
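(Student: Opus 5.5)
The plan is a padding argument: move $L$ to a language $L'$ in $\PH$, apply \cref{thm:ph} to $L'$, and map back. Let
\[
 L'=\{\mathrm{pad}_t(x):x\in L\}.
\]
On a string $y$ of length $N$, first use the polynomial-time image recognition described before the lemma to recover the candidate prefix $x$, or reject if $y$ is not a padded string. For a valid $y=\mathrm{pad}_t(x)$ with $n=|x|$, we have $N=n+1+t(n)$, so $N>t(n)$ and $N\le 2t(n)+1$. Each quantifier block then has length at most $t(n)<N$. The predicate runs in time $t(n)^{O(1)}=N^{O(1)}$ once $x$ is extracted from $y$. Pad each block to a fixed polynomial length in $N$, requiring the unused bits to be ignored or zero. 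This gives a $\Sigma_k^p$ characterization of $L'$, with the recognition check placed inside the predicate so that malformed strings are rejected. Hence $L'\in\Sigma_k^p\subseteq\PH$. Note that $k$ is fixed and the polynomials depend only on the given characterization.

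Next apply \cref{thm:ph} under $\hyp$. It gives a bounded-error algorithm for $L'$ running in time $2^{O(\Llog(N)^{d^2})}$ on inputs of length $N$. To decide $x\in L$, compute $\mathrm{pad}_t(x)$ in time $O(t(n))$ using time-constructibility, and run the $L'$ algorithm on it. Since $N\le 2t(n)+1$, we have $\Llog(N)\le\Llog(t(n))+O(1)$. So the simulation costs $2^{O(\Llog(t(n))^{d^2})}$, and the total time is $t(n)^{O(1)}+2^{O(\Llog(t(n))^{d^2})}$. The error is at most $1/32\le 1/3$ by \cref{thm:ph}, and it is preserved because the reduction is deterministic and exact.

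I expect the main obstacle to be checking that $L'$ genuinely lies in $\PH$ with \emph{fixed} polynomial bounds. This is why the reduction from $t(n)^{O(1)}$ predicate time to $N^{O(1)}$ must hold with a single constant exponent. The constants hidden in \cref{thm:ph} depend on $L'$, and hence on the fixed characterization of $L$, but not on $n$, which is all the $O$-notation requires. A subtler point is that \cref{thm:ph} is stated with the exponent $d^2$ uniform across levels. So I would cite it directly instead of iterating \cref{lem:second} $k$ times, which would give exponent $d^{k}$ or worse.
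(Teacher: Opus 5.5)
Your proposal is correct and follows the paper's proof essentially step for step. You pad to $L'=\{\mathrm{pad}_t(x):x\in L\}$, put the image check and the original predicate inside a polynomial-time $\Sigma_k^p$ predicate in $N$ (blocks padded, unused bits ignored), apply \cref{thm:ph}, and use $N\le 2t(n)+1$ to get $\Llog(N)=O(\Llog(t(n)))$. You also correctly note that relying on the level-independent exponent of \cref{thm:ph}, rather than iterating \cref{lem:second}, is what keeps the exponent at $d^2$.
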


\begin{proof}
Let $L'=\{\mathrm{pad}_t(x):x\in L\}$. On a string of length $N$, a
polynomial-time predicate first checks that the string lies in the
image of $\mathrm{pad}_t$ and recovers $x$, so that $t(|x|)<N$. It then
evaluates the given predicate of $L$. The original blocks can be padded to a common length at most a
polynomial in $N$ and interpreted canonically, ignoring unused bits.
Their original length bounds are reconstructed from $x$. The combined
predicate is clocked in time $N^{O(1)}$. Strings outside the image are rejected by this predicate
under every choice of the quantified strings. Hence
$L'\in\Sigma_k^p\subseteq\PH$, and \cref{thm:ph} gives a
bounded-error algorithm for $L'$ with time $2^{O(\Llog(N)^{d^2})}$.
To decide $x$, construct $\mathrm{pad}_t(x)$ and run that algorithm.
Since $N=|x|+1+t(|x|)\le 2t(|x|)$, we have
$\Llog(N)=O(\Llog(t(|x|)))$.
\end{proof}

\begin{corollary}\label[corollary]{cor:exph}
Under $\hyp$, $\EXPH=\BPEXP$.
\end{corollary}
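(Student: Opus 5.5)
Two inclusions are needed: $\EXPH\subseteq\BPEXP$ from \cref{lem:padding}, and $\BPEXP\subseteq\EXPH$ by a covering argument that does not use $\hyp$.

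\emph{$\EXPH\subseteq\BPEXP$.} Take $L\in\EXPH$, so $L\in\Sigma_k\mathrm{EXP}$ for some fixed $k$ and $c$. Then $L$ has $k$ alternating blocks over strings of length at most $2^{n^c}$ and a predicate clocked in time $2^{n^c}$. Set $t(n)=2^{n^c}+n+2$. This is time-constructible, satisfies $t(n)\ge n+2$, and bounds both the block lengths and the predicate time, so it meets the hypotheses of \cref{lem:padding}. The lemma decides $L$ with error at most $1/3$ in time
\[
 t(n)^{O(1)}+2^{O(\Llog(t(n))^{d^2})}.
\]
Since $\Llog(t(n))=O(n^c)$, the second term is $2^{O(n^{cd^2})}$. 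The whole bound is therefore $2^{n^{O(1)}}$ with a fixed exponent, and $L\in\BPEXP$. The lemma is stated for $\Sigma$-type characterizations, which covers every $\Sigma_k\mathrm{EXP}$, so no separate treatment of $\Pi$-levels is needed.

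\emph{$\BPEXP\subseteq\EXPH$.} This direction is unconditional and repeats the covering argument \eqref{eq:cover} from \cref{thm:qh} at exponential scale.
\begin{enumerate}
 \item Take a bounded-error machine with clock $t(n)=2^{n^a}$.
 \item Amplify by majority over $K\lceil\log(t(n)+2)\rceil$ runs. This gives a coin count $m(n)=O(t(n)\log t(n))$ and error $\varepsilon<1/(4m)$, with the repetition count chosen so the definition is not circular.
 \item Conclude that $x\in L$ iff there exist shifts $s_1,\ldots,s_m$ such that for every $r$ the disjunction $\bigvee_i D(x,r\oplus s_i)=1$ holds. The shift-existence and union-bound counting are identical to the quasipolynomial case.
 \item Check the resources. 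The existential block has length $m^2$ and the universal block has length $m$, both at most $2^{n^{c}}$ for a suitable constant $c$. The predicate makes $m$ simulations of time $\poly(t(n))$, so it also runs in time $2^{n^{c}}$.
\end{enumerate}
Hence $L\in\Sigma_2\mathrm{EXP}\subseteq\EXPH$.

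The only delicate point is the exponent bookkeeping in the first inclusion. We need $\Llog(t(n))^{d^2}$ to remain polynomial in $n$, and it is, because $d$ is fixed and $\Llog(2^{n^c})=\Theta(n^c)$. Everything else is a direct reuse of \cref{lem:padding} and \eqref{eq:cover}.
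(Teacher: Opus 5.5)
Your proposal is correct and follows essentially the same route as the paper. The paper also obtains $\EXPH\subseteq\BPEXP$ by applying \cref{lem:padding} with $t(n)=2^{n^c}$, adjusted so that $t(n)\ge n+2$, giving time $2^{O(n^c)}+2^{O(n^{cd^2})}$. It obtains the reverse inclusion unconditionally from the covering characterization \eqref{eq:cover}, with $m(n)=2^{O(n^c)}$ coins.
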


\begin{proof}
Let $L\in\Sigma_k\mathrm{EXP}$ with bound $t(n)=2^{n^c}$,
modified on finitely many short lengths so that $t(n)\ge n+2$.
Lemma~\ref{lem:padding} gives time
$2^{O(n^c)}+2^{O(n^{cd^2})}$, so $L\in\BPEXP$.

For the reverse inclusion, which is unconditional, apply the covering
argument from the proof of \cref{thm:qh} with clock $t(n)=2^{n^c}$. The
amplified coin count is $m(n)=2^{O(n^c)}$, the existential block has
$m(n)^2$ bits, and the universal block and the predicate clock are
$2^{O(n^c)}$. Thus $\BPEXP\subseteq\Sigma_2\mathrm{EXP}$.
\end{proof}

\subsection{Unconditional diagonalization and conditional randomized simulation}
\label{sec:nd-hard-extension}

We first construct the hard languages by diagonalization and prefix
projection, then apply the randomized simulation of $\hyp$.

Circuits have fan-in-two gates over the full binary basis, which contains
all sixteen two-input functions, including constants, and one designated
output gate. Size counts gates. Descriptions use a fixed format recording
each gate's function and two predecessors; their length is
$O(s\log(m+s))$ for $m$ ordinary inputs and $s$ gates. Invalid descriptions
are evaluated by a fixed default convention and guarded where required.
Unused gates are permitted, and the output need not be the last gate.

An existential circuit is a Boolean circuit $D(x,z)$ interpreted by
$x\mapsto\exists z\,D(x,z)$. A circuit with $s$ gates uses at most $2s$
distinct witness inputs. Renumbering them and allowing unused inputs,
we may take $z\in\bits^{2s}$. A \emph{signed existential circuit} is a
pair $(D,\sigma)$, $\sigma\in\bits$, computing
\[
 f_{D,\sigma}(x)=\bigl(\exists z\,D(x,z)\bigr)\oplus\sigma.
\]
Polynomial-size families with sign fixed to $0$ characterize
$\NP/\poly$; those with sign fixed to $1$ characterize $\coNP/\poly$.
The lower bounds below exclude both signs separately at each stated
input length. Write
$\Sigma_5 E=\Sigma_5\mathrm{TIME}(2^{O(n)})$ for five-block alternating
single-exponential time, not for a polynomial-time oracle class.

\begin{lemma}[Unconditional signed nondeterministic diagonalization]
\label[lemma]{lem:unconditional-nd-hard}
There is a language $H\in\Sigma_5 E$ whose
length-$m$ slice, for every $m\ge6$, is computed by no signed existential
circuit with at most
\[
 s(m)=\left\lfloor\frac{2^m}{10m}\right\rfloor
\]
gates.
\end{lemma}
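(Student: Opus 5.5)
We would prove \cref{lem:unconditional-nd-hard} by a Kannan-style truth-table diagonalization. The language is defined slice by slice through a lexicographically first hard truth table, and membership is then characterized by a constant number of alternating exponential-length blocks.

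\textbf{Counting step.} Fix $m\ge6$ and $s=s(m)$. A signed existential circuit with at most $s$ gates has $m+2s$ inputs. Each gate is specified by one of sixteen functions and two predecessors among at most $m+3s$ nodes, and there are also an output choice and a sign. The number of functions $\bits^m\to\bits$ so computed is therefore at most
\[
 2\,(m+3s)\,\bigl(16(m+3s)^2\bigr)^{s}=2^{O(s\log(m+s))}=2^{O(s\,m)}.
\]
With $s\le 2^m/(10m)$ this is $2^{c\cdot2^m/10}$ for a constant $c$. We then check explicitly that $c/10<1$ for $m\ge6$, so the count is strictly below $2^{2^m}$. This constant bookkeeping, including the $\log(m+3s)\le m$ bound and the small case $m=6$, is the step we expect to need the most care. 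If the constant is too tight, we would count canonical descriptions more cleverly, for example by ordering gates and bounding predecessor choices by $\binom{m+3s}{2}$.

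Hence some truth table on $m$ bits is computed by no such circuit, with either sign. Define $H_m$ to be the lexicographically first such table, and put $x\in H$ iff $|x|=m\ge6$ and $H_m(x)=1$. Shorter lengths are empty.

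\textbf{Alternation step.} For $|x|=m$, we use the characterization
\[
 x\in H\iff\exists T\in\bits^{2^m}\Bigl[T(x)=1\wedge \mathrm{Hard}(T)\wedge \forall T'<T\ \neg\mathrm{Hard}(T')\Bigr].
\]
Here $\mathrm{Hard}(T)$ states that for all $(D,\sigma)$ of size at most $s$ there exists an input $y$ such that $f_{D,\sigma}(y)\neq T(y)$. Expanding $f_{D,\sigma}$ adds quantifiers over the witness $z$: the statement $f\ne T$ at $y$ becomes either ($T(y)=1$ and the value is $0$), a $\forall z$ condition, or ($T(y)=0$ and the value is $1$), an $\exists z$ condition. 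The sign $\sigma$ swaps these two cases.

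So $\mathrm{Hard}(T)$ is $\forall (D,\sigma)\,\exists y\,[\forall z\ldots\vee\exists z\ldots]$. Merging the two branches by quantifying $z$ and $z'$ jointly gives the form $\forall\exists\forall$, which is $\Pi_3$. Its negation $\neg\mathrm{Hard}(T')$ is $\Sigma_3$, so $\forall T'<T\,\neg\mathrm{Hard}(T')$ is $\forall\exists\forall\exists$, which is $\Pi_4$.

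Putting this after the leading $\exists T$ yields $\Sigma_5$. The hardness conjunct is absorbed into the same blocks by interleaving the quantified variables. All quantified strings have length $2^{O(m)}$. The predicate evaluates a circuit of size at most $2^m$ and reads bits of $T$, so it runs in time $2^{O(m)}$. This establishes $H\in\Sigma_5E$.

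We will verify the block merging carefully. In particular, the conjunction of a $\Pi_3$ formula and a $\Pi_4$ formula under $\exists T$ stays within five blocks.
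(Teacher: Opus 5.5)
Your proposal is correct and follows essentially the same route as the paper. Both use a crude count of signed functions (yours, at most $2(m+3s)\bigl(16(m+3s)^2\bigr)^s$, has logarithm about $2sm\le 2^m/5$ plus lower-order terms, so the check at $m\ge6$ is comfortable and needs no cleverer counting), the lexicographically first hard table, and the merged five-block form $\exists T\,\forall(D,\sigma,T')\,\exists(y,a,D',\sigma')\,\forall(b,y',a')\,\exists b'$ built from a $\Pi_3$ hardness clause and a $\Pi_4$ minimality clause, with only minor bookkeeping to add (validity guards on guessed circuit descriptions and padding circuits with fewer than $s$ gates), as the paper does.
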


\begin{proof}
Fix $m\ge6$ and $s=s(m)\ge1$. Circuits with fewer than $s$ gates can
be padded by unused gates. With $m$ ordinary inputs, $2s$ witness inputs,
and $s$ gate positions, there are at most
\[
 s\bigl(16(m+3s)^2\bigr)^s
\]
underlying circuits, and at most twice that many signed functions.
Counting even cyclic or otherwise invalid predecessor choices only
increases this upper bound. Since $m+3s\le2^{m+2}$, its base-two logarithm
is at most
\[
 m+1+s(2m+8)
 \le m+1+2^m\left(\frac15+\frac{4}{5m}\right)
 <2^m.
\]
For the last inequality, $(m+1)/2^m\le7/64$ and
$1/5+4/(5m)\le1/3$ for $m\ge6$. Some truth table on $m$ bits therefore
has neither type of circuit. Let $T_m$ be the lexicographically first
such table, and define $H(x)=T_{|x|}[x]$ for $|x|\ge6$, with $H(x)=0$
on shorter inputs.

We exhibit a fixed fifth-level characterization with exponential resource
bounds. Let $\mathcal D_m$ denote the well-formed circuit descriptions
with $s$ gate slots, $m$ ordinary inputs, and $2s$ witness inputs. All
such descriptions have one fixed length $2^{O(m)}$. For a truth table
$T$, put $t_{T,\sigma}(y)=T[y]\oplus\sigma$ and define
\begin{align*}
 M(T,D,\sigma;y,a,b)
 &=\bigl(t_{T,\sigma}(y)\land\neg D(y,b)\bigr)
   \lor\bigl(\neg t_{T,\sigma}(y)\land D(y,a)\bigr),\\
 E(T,D,\sigma;y,a,b)
 &=\bigl(t_{T,\sigma}(y)\land D(y,b)\bigr)
   \lor\bigl(\neg t_{T,\sigma}(y)\land\neg D(y,a)\bigr).
\end{align*}
Here $y$ has $m$ bits and $a,b$ have $2s$ bits. Directly from the two
possible values of $t_{T,\sigma}(y)$,
\begin{align*}
 T\ne f_{D,\sigma}
 &\iff\exists(y,a)\,\forall b\ M(T,D,\sigma;y,a,b),\\
 T=f_{D,\sigma}
 &\iff\forall(y,a)\,\exists b\ E(T,D,\sigma;y,a,b).
\end{align*}
Thus $T$ is hard exactly when
$\forall(D,\sigma)\exists(y,a)\forall b\ M$, with an invalid-description
guard. An earlier table $T'$ has a small signed circuit exactly when
$\exists(D',\sigma')\forall(y',a')\exists b'\ E$, with a validity test.

For $x\in\bits^m$, the characterization of $H(x)=1$ is therefore
\[
 \exists T\ \forall(D,\sigma,T')\
 \exists(y,a,D',\sigma')\ \forall(b,y',a')\ \exists b'\ \Phi,
\]
where $\Phi$ is the conjunction of
\begin{gather*}
 T[x]=1,\\
 D\notin\mathcal D_m\ \lor\ M(T,D,\sigma;y,a,b),\\
 T'\not<_{\mathrm{lex}}T\ \lor\
 \bigl(D'\in\mathcal D_m\land
 E(T',D',\sigma';y',a',b')\bigr).
\end{gather*}
Tables $T,T'$ have $2^m$ bits. Invalid circuit descriptions are evaluated
by a fixed default convention and are handled by the displayed guards.
The hardness and minimality clauses use disjoint subsidiary variables;
merging their blocks preserves their meanings. The formula therefore
selects exactly $T=T_m$ and its bit at $x$.

Every block and the matrix computation have size $2^{O(m)}$, even if
truth-table access is implemented by a full scan. Hence
$H\in\Sigma_5\mathrm{TIME}(2^{O(m)})=\Sigma_5 E$ unconditionally.
The circuit lower bounds hold at every $m\ge6$ by the definition of $T_m$.
\end{proof}

\begin{lemma}[Unconditional slowly growing prefix projection]
\label[lemma]{lem:unconditional-projection}
Let $h(n)\ge1$ be a polynomial-time computable integer-valued function
with $h(n)\to\infty$, and suppose
$\ell(n)=\Llog(n)h(n)\le n$ for all sufficiently large $n$.
There is a language
\[
 G_h\in\Sigma_5\mathrm{TIME}\bigl(2^{O(\ell(n))}\bigr)
 \setminus\bigl(\NP/\poly\cup\coNP/\poly\bigr),
\]
without assuming $\hyp$. Its length-$n$ slice has no signed existential
circuit with at most
\[
 \left\lfloor\frac{2^{\ell(n)}}{10\ell(n)}\right\rfloor
\]
gates for all sufficiently large $n$.
\end{lemma}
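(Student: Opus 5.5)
Define $G_h$ as the prefix projection of the language $H$ from \cref{lem:unconditional-nd-hard}. Concretely, for $|x|=n$ with $\ell(n)\ge6$ and $\ell(n)\le n$, put
\[
 G_h(x)=H\bigl(x_{1\ldots\ell(n)}\bigr),
\]
and set $G_h(x)=0$ on the finitely many remaining lengths. Since $h$ is polynomial-time computable, $\ell(n)$ is computable in time polynomial in $n$, and the prefix extraction is immediate.

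For the upper bound, substitute the prefix $x'=x_{1\ldots\ell(n)}$ into the fifth-level characterization of $H$ from \cref{lem:unconditional-nd-hard}. Every block there, and the matrix clock, has size $2^{O(|x'|)}=2^{O(\ell(n))}$. Prepending the polynomial-time computation of $\ell(n)$ and of the prefix adds only $\poly(n)$ time. This stays within $2^{O(\ell(n))}$ because $\ell(n)\ge\Llog(n)$, using $h\ge1$. Hence $G_h\in\Sigma_5\mathrm{TIME}(2^{O(\ell(n))})$ with no hypothesis.

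For the lower bound, argue by restriction. Suppose a signed existential circuit $(D,\sigma)$ computes the length-$n$ slice of $G_h$ with at most $s(\ell(n))$ gates. Hardwire the last $n-\ell(n)$ ordinary inputs to zero. Constant inputs can be realized by existing constant gates or simply absorbed, so gate count does not increase. The sign and the witness inputs are unchanged. The result is a signed existential circuit on $\ell(n)$ inputs computing the length-$\ell(n)$ slice of $H$ with at most $\lfloor2^{\ell(n)}/(10\ell(n))\rfloor$ gates. This contradicts \cref{lem:unconditional-nd-hard} once $\ell(n)\ge6$, which holds for all large $n$ since $h\to\infty$. The only care needed is the gate-count convention: with full-basis gates, fixing an input to a constant and rewiring gates that read it to a unary function of the other predecessor keeps the count, and the circuit remains fan-in two.

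Finally, exclude $\NP/\poly\cup\coNP/\poly$. A language in $\NP/\poly$ has, at every length $n$, a sign-$0$ existential circuit of size at most $n^k$ for some fixed $k$, and $\coNP/\poly$ similarly gives sign $1$. Compare this with the lower bound:
\[
 2^{\ell(n)}/(10\ell(n))=n^{\Omega(h(n))}/O(n),
\]
since $\ell(n)=\Llog(n)h(n)$ and $\ell(n)\le n$. Because $h\to\infty$, this exceeds $n^k$ for all large $n$, giving the contradiction.

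The main, if mild, obstacle is precisely this last comparison. It requires translating the $\NP/\poly$ advice-machine model into polynomial-size signed existential circuits at every length, including encoding the advice as hardwired constants. It also requires checking that the bound $2^{\ell}/(10\ell)$ is superpolynomial, which uses $2^{\Llog(n)}\ge n+2$.
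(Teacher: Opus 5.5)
Your proposal is correct and follows the paper's proof essentially step for step. It uses the same prefix projection of $H$ and the same substitution into the five-block characterization, with polynomial overhead absorbed via $\ell(n)\ge\Llog(n)$. It also uses the same zero-restriction of the trailing inputs in the full binary basis, and the same comparison showing that $2^{\ell(n)}/(10\ell(n))$ eventually exceeds every fixed power of $n$. One small point: the paper explicitly handles a gate both of whose predecessors become constant, by making it a constant gate with a surviving input as a dummy predecessor. Your ``unary function of the other predecessor'' wording does not cover that case.
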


\begin{proof}
Fix a cutoff after which $6\le\ell(n)\le n$, and use the unconditional
language $H$ from \cref{lem:unconditional-nd-hard}. Above the cutoff set
\[
 G_h(x)=H(x_1\cdots x_{\ell(|x|)}),
\]
and set it to zero below the cutoff. Substitute this prefix into the
five-block characterization of $H$. All blocks and the matrix have size
$2^{O(\ell(n))}$. Polynomial-time computation of $h$ and prefix extraction
are absorbed in that bound, since $\ell(n)\ge\Llog(n)$ eventually.
This proves the alternating-time upper bound unconditionally.

If a signed existential circuit of the indicated size computed $G_h$
on length-$n$ inputs, fix its final $n-\ell(n)$ ordinary input bits to
zero. With the full binary basis, replacing constant predecessors does
not increase gate count: the resulting Boolean function at each gate is
again one of the sixteen available functions, using a surviving input
as a dummy predecessor if necessary. The sign and witness inputs are
unchanged. The restricted circuit computes $H$ on every $\ell(n)$-bit
input, contradicting \cref{lem:unconditional-nd-hard}.

For every fixed $k$,
\[
 \log_2\left(\frac{2^{\ell(n)}}{10\ell(n)n^k}\right)
 =\ell(n)-k\log_2 n-O(\log\ell(n))\longrightarrow+\infty.
\]
Indeed, $\ell(n)/\log_2 n\ge h(n)\to\infty$ and
$\log\ell(n)\le\log n$ eventually. Thus the lower bound exceeds every
constant multiple of every fixed power of $n$. The language $G_h$ does
not depend on a proposed advice exponent $k$.
\end{proof}

\begin{corollary}[Unconditional hardness in the quasipolynomial hierarchy]
\label[corollary]{cor:unconditional-qh-hard}
Unconditionally,
\[
 \QH\not\subseteq\NP/\poly\cup\coNP/\poly.
\]
In particular, the choice
$h(n)=\max\{1,\lceil\log_2\Llog(n)\rceil\}$ gives a single language
\[
 G_h\in\Sigma_5\mathrm{TIME}\!\left(
          2^{O(\Llog(n)\log\Llog(n))}\right)
 \subseteq\QH
\]
that lies outside both advice classes.
\end{corollary}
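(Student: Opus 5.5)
The plan is to instantiate \cref{lem:unconditional-projection} with the displayed choice $h(n)=\max\{1,\lceil\log_2\Llog(n)\rceil\}$ and then observe that the resulting alternating-time bound lies inside a single constituent of $\QH$.

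First I check the hypotheses of \cref{lem:unconditional-projection}. The function $h$ is integer-valued and at least one. It is polynomial-time computable, since $\Llog(n)$ is obtained from the binary length of $n+2$ and then its logarithm is rounded up. It tends to infinity because $\Llog(n)\to\infty$. For the bound $\ell(n)\le n$ eventually, note that $\ell(n)=\Llog(n)h(n)=O(\log n\cdot\log\log n)$, which is $o(n)$; it therefore holds beyond some fixed cutoff, and the lemma only needs that. The lemma then yields
\[
G_h\in\Sigma_5\mathrm{TIME}\bigl(2^{O(\ell(n))}\bigr)\setminus\bigl(\NP/\poly\cup\coNP/\poly\bigr),
\]
with no appeal to $\hyp$. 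Substituting $\ell(n)=\Llog(n)\max\{1,\lceil\log_2\Llog(n)\rceil\}=O(\Llog(n)\log\Llog(n))$ gives the displayed time bound.

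Next I show membership in $\QH$. Since $\log\Llog(n)\le\Llog(n)$, we have $\ell(n)=O(\Llog(n)^2)$, so
\[
\Sigma_5\mathrm{TIME}\bigl(2^{O(\ell(n))}\bigr)\subseteq\Sigma_5\mathrm{TIME}\bigl(2^{O(\Llog(n)^2)}\bigr).
\]
The right-hand side is the constituent of $\QH$ with $j=5$ and $c=2$. This uses a fixed number of alternations, as the definition of $\QH$ requires. The general claim $\QH\not\subseteq\NP/\poly\cup\coNP/\poly$ follows because the single language $G_h$ is a witness: it lies in neither advice class.

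The only point needing care is the meaning of "outside both advice classes". \Cref{lem:unconditional-projection} asserts that the gate lower bound exceeds every fixed polynomial for all sufficiently large $n$. It also excludes both signs of existential circuits. A language in $\NP/\poly$ would have sign-$0$ polynomial-size existential circuits at every length, and a language in $\coNP/\poly$ would have sign-$1$ ones. Either would contradict the lower bound at large $n$. Since that lemma already performs this verification, the corollary is essentially bookkeeping. I expect no real obstacle beyond confirming that $h$ meets the lemma's computability and growth conditions.
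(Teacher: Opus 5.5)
Your proposal is correct and matches the paper's proof, which applies \cref{lem:unconditional-projection} with this $h$ and observes that $\ell(n)=O(\Llog(n)\log\Llog(n))$ is polylogarithmic, so the resulting $\Sigma_5$ alternating-time class is a constituent of $\QH$. Your hypothesis check and the explicit bound $\ell(n)\le\Llog(n)^2$, which places $G_h$ in the constituent with $j=5$ and $c=2$, spell out what the paper states in one line.
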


\begin{proof}
Apply \cref{lem:unconditional-projection}. For this choice,
$\ell(n)=O(\Llog(n)\log\Llog(n))$ is polylogarithmic, so the displayed
alternating-time class is a constituent of $\QH$.
For a general $h$ in that lemma, membership in $\QH$ is asserted only
when $\ell(n)\le\Llog(n)^{O(1)}$; the condition $\ell(n)\le n$ alone
does not give this containment.
\end{proof}

\begin{theorem}[Quantitative randomized simulation of the diagonal languages]
\label{thm:diagonal-randomized}
Assume $\hyp$. The same unconditionally defined languages satisfy
\[
 H\in\BPTIME\bigl(2^{O(m^{d^2})}\bigr),\qquad
 G_h\in\BPTIME\bigl(2^{O((\Llog(n)h(n))^{d^2})}\bigr).
\]
In particular,
\[
 \BPEXP\not\subseteq\NP/\poly\cup\coNP/\poly.
\]
\end{theorem}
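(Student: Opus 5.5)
The plan is to apply the padding lemma, \cref{lem:padding}, directly to the fifth-level characterizations from \cref{lem:unconditional-nd-hard,lem:unconditional-projection}.

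\emph{The language $H$.} \Cref{lem:unconditional-nd-hard} gives $H$ a characterization with five alternating blocks beginning existentially. Every quantified string has length $2^{O(m)}$, and the deterministic matrix is clocked in time $2^{O(m)}$. Take $t(m)=2^{cm}$ with $c$ large enough. This $t$ is time-constructible, satisfies $t(m)\ge m+2$, and makes every block length at most $t(m)$ and the predicate clock $t(m)^{O(1)}$. Blocks of unequal length are padded canonically, as in the proof of \cref{lem:padding}. That lemma then decides $H$ with error at most $1/3$ in time
\[
t(m)^{O(1)}+2^{O(\Llog(t(m))^{d^2})}=2^{O(m)}+2^{O(m^{d^2})}=2^{O(m^{d^2})},
\]
using $\Llog(2^{cm})=O(m)$ and $d\ge1$.

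\emph{The language $G_h$.} Set $t(n)=2^{c\ell(n)}$ with $\ell(n)=\Llog(n)h(n)$. Time-constructibility follows because $h$ is polynomial-time computable and $\ell(n)\le n$. The substituted characterization from \cref{lem:unconditional-projection} has blocks and matrix of size $2^{O(\ell(n))}$. The prefix extraction and the computation of $h$ are polynomial in $n$, and this is $t(n)^{O(1)}$ because $\ell(n)\ge\Llog(n)$ eventually. Finitely many short inputs, below the cutoff, are handled by table lookup. \Cref{lem:padding} then gives time
\[
2^{O(\ell(n))}+2^{O(\ell(n)^{d^2})}=2^{O((\Llog(n)h(n))^{d^2})}.
\]

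\emph{The separation.} Since $2^{O(m^{d^2})}\le 2^{m^{c'}}$, we get $H\in\BPEXP$. \Cref{lem:unconditional-nd-hard} excludes signed existential circuits of size $\lfloor 2^m/(10m)\rfloor$ at every length $m\ge6$, with either sign. This bound exceeds every polynomial, so $H\notin\NP/\poly$ and $H\notin\coNP/\poly$.

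The main point needing care is the hypothesis check in \cref{lem:padding}. The lemma needs the quantified strings to have length at most $t(n)$ and the matrix to be clocked in time $t(n)^{O(1)}$ uniformly on all inputs, including invalid circuit descriptions. The validity guards in the characterization already handle those cases. The other thing to confirm is that the padded language lies in $\Sigma_5^p$ with respect to the padded length $N\le 2t(n)$, which is exactly what the proof of \cref{lem:padding} establishes. Everything else is exponent arithmetic.
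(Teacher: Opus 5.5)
Your proof is correct and matches the paper's argument for $H$. The one difference there is that the paper takes $t(m)=2^{c(m+1)}$, so that $t(m)\ge m+2$ holds even at $m=0$, where your $2^{cm}$ fails; this is a harmless finite-length point. For $G_h$ the paper does not apply \cref{lem:padding} a second time: it runs the algorithm for $H$ on the length-$\ell(n)$ prefix and absorbs the polynomial preprocessing using $\ell(n)\ge\Llog(n)$. That shortcut spares the time-constructibility check for $2^{c\ell(n)}$, but your direct route is also valid.
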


\begin{proof}
Apply \cref{lem:padding} to the unconditional five-block characterization
of $H$, with $t(m)=2^{c(m+1)}$ for a sufficiently large fixed integer $c$.
The resulting clock is
$2^{O(m)}+2^{O(m^{d^2})}=2^{O(m^{d^2})}$.
For $G_h$, run this algorithm on the prefix of length $\ell(n)$.
Polynomial preprocessing is absorbed by the stated bound. The algorithm
is uniform and has error at most $1/3$ on every input. The near-maximum
hardness of $H$ eventually exceeds every polynomial, giving the final
assertion.
\end{proof}

A deterministic circuit is an existential circuit with no witness
inputs and sign $0$, of the same gate size. Thus the deterministic
exponential circuit lower bound is included in the same theorem.

\begin{corollary}[Arbitrarily small exponent overhead]
\label[corollary]{cor:qp-no-np-poly}
Under $\hyp$, the fixed language from \cref{cor:unconditional-qh-hard}
belongs to
\[
 \BPTIME\left(2^{O((\Llog(n)\log\Llog(n))^{d^2})}\right)
 \setminus\bigl(\NP/\poly\cup\coNP/\poly\bigr).
\]
For every fixed $\varepsilon>0$, the same language also belongs to
\[
 \BPTIME\left(2^{O(\Llog(n)^{d^2+\varepsilon})}\right)
 \setminus\bigl(\NP/\poly\cup\coNP/\poly\bigr).
\]
\end{corollary}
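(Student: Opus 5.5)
The plan is to combine \cref{thm:diagonal-randomized} for the upper bound with \cref{lem:unconditional-projection,cor:unconditional-qh-hard} for the lower bound. Take $h(n)=\max\{1,\lceil\log_2\Llog(n)\rceil\}$, so that $\ell(n)=\Llog(n)h(n)$. First I will check the hypotheses of \cref{lem:unconditional-projection}:
\begin{itemize}
\item $h$ is polynomial-time computable;
\item $h(n)\to\infty$;
\item $\ell(n)\le n$ for all sufficiently large $n$.
\end{itemize}
Then $G_h\notin\NP/\poly\cup\coNP/\poly$ unconditionally, as already recorded in \cref{cor:unconditional-qh-hard}. This exclusion does not depend on $\hyp$ or on any clock, so it transfers to every $\BPTIME$ class that contains $G_h$.

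For the upper bound, \cref{thm:diagonal-randomized} gives under $\hyp$ that $G_h\in\BPTIME(2^{O((\Llog(n)h(n))^{d^2})})$. Since $h(n)\le\log_2\Llog(n)+1=O(\log\Llog(n))$ for large $n$, we have $\Llog(n)h(n)=O(\Llog(n)\log\Llog(n))$. Raising this to the fixed power $d^2$ only changes the constant: $(C x)^{d^2}=C^{d^2}x^{d^2}$. This yields the first display.

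For the second display, fix $\varepsilon>0$ and write $x=\Llog(n)$. I will show that
\[
(x\log x)^{d^2}=x^{d^2}(\log x)^{d^2}\le x^{d^2+\varepsilon}
\]
for all sufficiently large $x$. This holds because $(\log x)^{d^2}=o(x^{\varepsilon})$ when $d$ and $\varepsilon$ are fixed. On the finitely many remaining lengths, the two clocks differ by at most a constant factor in the exponent, which the $O(\cdot)$ absorbs. Hence the containment $\BPTIME(2^{O((x\log x)^{d^2})})\subseteq\BPTIME(2^{O(x^{d^2+\varepsilon})})$ holds for the uniform clocked algorithm. To make it literal, I will reclock the algorithm at the larger time-constructible bound. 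This is harmless because the original algorithm halts within that bound on every input and every tape.

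No step is a genuine obstacle. The only points needing care are:
\begin{itemize}
\item confirming that $\ell(n)\le n$ and $\ell(n)\ge6$ eventually, so that the cutoff in \cref{lem:unconditional-projection} exists;
\item confirming that the $\log\Llog$ factor is absorbed uniformly for each fixed $\varepsilon$, with constants allowed to depend on $\varepsilon$ but not on $n$.
\end{itemize}
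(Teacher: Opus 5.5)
Your proposal is correct and matches the paper's proof: you apply \cref{thm:diagonal-randomized} with $h(n)=\max\{1,\lceil\log_2\Llog(n)\rceil\}$, take the exclusion from $\NP/\poly\cup\coNP/\poly$ from the unconditional projection lemma, and absorb the extra factor using $(\log\Llog(n))^{d^2}=o(\Llog(n)^{\varepsilon})$. Your checks of the hypotheses of \cref{lem:unconditional-projection} and of the finitely many small lengths are details the paper leaves implicit.
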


\begin{proof}
Use \cref{thm:diagonal-randomized} with
$h(n)=\max\{1,\lceil\log_2\Llog(n)\rceil\}$ for the first assertion.
For each fixed $\varepsilon>0$,
\[
 (\log\Llog(n))^{d^2}=o(\Llog(n)^\varepsilon),
\]
so that very same language has the second displayed time bound.
\end{proof}

\begin{remark}[Exact-clock fixed-polynomial lower bounds]
\label[remark]{rem:kannan-clock}
For each fixed $k\ge1$, projecting $H$ to
$\ell_k(n)=(k+2)\Llog(n)$ ordinary input bits, with a fixed convention at
short lengths, gives $L_k\in\Sigma_5^p$ whose circuit size eventually
exceeds every constant multiple of $n^k$; under $\hyp$,
\cref{thm:ph} places $L_k$ in
$\BPTIME(2^{O(\Llog(n)^{d^2})})$.
This retains the exact exponent $d^2$ but permits a different $L_k$ for
each $k$, unlike the single-language conclusion of
\cref{cor:qp-no-np-poly}.
\end{remark}

\begin{corollary}[Conditional bounded-error time separation without advice]
\label[corollary]{cor:conditional-bpp-hierarchy}
Under $\hyp$, for every fixed $\varepsilon>0$,
\begin{equation}
 \mathrm{BPP}\subsetneq
 \BPTIME\!\left(2^{O(\Llog(n)^{d^2+\varepsilon})}\right).
 \label{eq:conditional-bpp-hierarchy}
\end{equation}
More precisely,
\[
 \mathrm{BPP}\subsetneq
 \BPTIME\!\left(2^{O((\Llog(n)\log\Llog(n))^{d^2})}\right).
\]
Separately, each displayed larger time class is contained in $\BPQP$.
The strict inclusions compare uniformly decided languages and use no advice.
\end{corollary}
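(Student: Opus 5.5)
The separating language will be the fixed language $G=G_h$ of \cref{cor:unconditional-qh-hard}, with $h(n)=\max\{1,\lceil\log_2\Llog(n)\rceil\}$. Under $\hyp$, \cref{cor:qp-no-np-poly} already places $G$ in
\[
 \BPTIME\!\left(2^{O((\Llog(n)\log\Llog(n))^{d^2})}\right)
 \subseteq\BPTIME\!\left(2^{O(\Llog(n)^{d^2+\varepsilon})}\right),
\]
and shows that $G\notin\NP/\poly\cup\coNP/\poly$. Two things remain. First, $G\notin\mathrm{BPP}$. Second, $\mathrm{BPP}$ is contained in each displayed class. The inclusion into $\BPQP$ is immediate, since both clocks are bounded by $2^{O(\Llog(n)^{d^2+1})}$.

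For the containment $\mathrm{BPP}\subseteq\BPTIME(T)$, it is enough to note that $T(n)\ge n^c$ eventually for every fixed $c$. Indeed, the exponent $(\Llog(n)\log\Llog(n))^{d^2}$ is at least $\Llog(n)\log\Llog(n)$ because $d\ge1$, and this is $\omega(\log n)$. Every polynomial clock is therefore absorbed, after adjusting finitely many inputs.

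For $G\notin\mathrm{BPP}$, I would use Adleman's theorem, $\mathrm{BPP}\subseteq\Pclass/\poly$. Since $\Pclass/\poly\subseteq\NP/\poly$, membership of $G$ in $\mathrm{BPP}$ would put it in $\NP/\poly$. That contradicts the unconditional lower bound of \cref{lem:unconditional-projection}, whose gate bound exceeds every polynomial by the computation in that proof. Here a deterministic circuit counts as a signed existential circuit with no witnesses and sign $0$, as remarked after \cref{thm:diagonal-randomized}. This step is uniform: $G$ is decided by a uniform bounded-error machine in the larger clock, and it lies outside $\mathrm{BPP}$ even with polynomial advice. So the separation compares uniformly decided languages and uses no advice, as the statement requires.

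The only delicate point is that a single language must work for every $\varepsilon$. It does, because $G$ is fixed independently of $\varepsilon$, and $(\log\Llog(n))^{d^2}=o(\Llog(n)^{\varepsilon})$ converts the precise clock into each $\varepsilon$-clock. I expect no real obstacle here. The substantive work — the conditional randomized clock of \cref{thm:diagonal-randomized}, which rests on \cref{lem:padding} and \cref{thm:ph} — has already been done.
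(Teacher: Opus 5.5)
Your proposal is correct and follows the paper's proof. Adleman's inclusion $\mathrm{BPP}\subseteq\Pclass/\poly$, together with the unconditional hardness of $G_h$, excludes $G_h$ from BPP; \cref{cor:qp-no-np-poly} gives that one language both conditional clocks; and each clock eventually dominates every polynomial. One small slip: for $\varepsilon>1$ the clock $2^{O(\Llog(n)^{d^2+\varepsilon})}$ is not bounded by $2^{O(\Llog(n)^{d^2+1})}$, so for the $\BPQP$ containment use the integer exponent $\lceil d^2+\varepsilon\rceil$ instead.
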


\begin{proof}
Adleman's inclusion gives
$\mathrm{BPP}\subseteq\Pclass/\poly$~\cite{Adl78}. For completeness,
amplify a BPP algorithm's error below $2^{-n-1}$ and fix, by a union
bound over $\bits^n$, one polynomial-length random tape correct on all
length-$n$ inputs. Hardwiring this tape yields polynomial-size circuits.
The unconditional $G_h$ from \cref{cor:unconditional-qh-hard} is outside
$\NP/\poly\cup\coNP/\poly$, hence outside $\Pclass/\poly$ and BPP.
Under $\hyp$, \cref{cor:qp-no-np-poly} places that language in the
logarithmic-overhead clock and in the displayed class for every
$\varepsilon>0$. Each of these clocks eventually dominates every
polynomial, proving containment of BPP as well as strictness.

The qualitative dependency chain is
\[
 \hyp\xRightarrow{\text{\cref{cor:random-sat,thm:qh}}}
 \QH\subseteq\BPQP,
 \qquad
 G_h\in\QH\setminus\Pclass/\poly\quad\text{unconditionally}.
\]
The numerical exponent in \eqref{eq:conditional-bpp-hierarchy} uses,
in addition, the explicit PH bound in \cref{thm:ph}, retained by
\cref{lem:padding,thm:diagonal-randomized}. It is not obtained merely
by substituting into the unparameterized equality $\QH=\BPQP$.
\end{proof}

\paragraph{Position relative to known separations.}
Karpinski--Verbeek proved, by padding and deterministic time hierarchies,
that
\[
 \mathrm{BPP}\subsetneq\class{BPSUBEXP},\qquad
 \class{BPSUBEXP}=\bigcap_{\eta>0}\BPTIME(2^{n^\eta})
\]
\cite{karpinski-verbeek87}; their chapter explicitly gives a separation
from the fixed clock $n^{\log n}$ as well. Lu, Oliveira, and Santhanam
frame the remaining smaller-gap question as whether
$\BPTIME(n)\subsetneq\BPTIME(T(n))$ for a clock $T$ that remains
subexponential after any fixed number of self-compositions
\cite[Section~1]{lu-oliveira-santhanam21}.
For a fixed $a>1$ and $T_a(n)=2^{C\Llog(n)^a}$, every fixed iterate
satisfies
\[
 T_a^{\circ r}(n)=2^{O(\Llog(n)^{a^r})}=2^{o(n)}.
\]
Thus \eqref{eq:conditional-bpp-hierarchy}, with $a=d^2+\varepsilon$,
is an advice-free conditional separation in precisely this
self-composition-stable regime, rather than merely a separation from
BPP to an unspecified larger randomized time class.

There are distinct conditional and advice-based routes to hierarchy
results. Barak proved a hierarchy with $O(\log\log n)$ advice bits and
also observed that fully uniform machines have a hierarchy if BPP has a
complete problem~\cite{Bar02}. Fortnow and Santhanam reduced the advice
to one bit~\cite{FS04}. Lu, Oliveira, and Santhanam give further
conditional hierarchies from improvements to pseudodeterministic
algorithms for circuit acceptance probabilities
\cite[Theorem~2]{lu-oliveira-santhanam21}. The promise-BPTIME hierarchy
of~\cite{He25} concerns promise problems rather than total languages.

The qualitative implication
\[
 \QH\subseteq\BPQP\quad\Longrightarrow\quad
 \mathrm{BPP}\subsetneq\BPQP
\]
uses the unconditional diagonal language and Adleman's theorem.
For example, $\NP\subseteq\mathrm{BPP}$ supplies a polynomial-time
randomized SAT algorithm, and \cref{lem:qp-block,thm:qh} then give
$\QH\subseteq\BPQP$. Under $\hyp$, the parameter-preserving simulation
supplies the explicit clock in \eqref{eq:conditional-bpp-hierarchy}.
The oracle in \cref{cor:oracle-hd} complements this upper bound:
relative to it the same hypothesis holds, but NP itself has no
bounded-error algorithm of clock $2^{\delta\Llog(n)^d}$ for
$0<\delta<c_0$.

\begin{corollary}[The advice containment separates PH from QH]
\label[corollary]{cor:ph-strict-qh}
If $\coNP\subseteq\NP/\poly$, then
\begin{gather*}
 \PH/\poly=\NP/\poly=\coNP/\poly,\\
 \QH\not\subseteq\PH/\poly,\qquad \PH\subsetneq\QH.
\end{gather*}
In particular these conclusions hold under $\hyp$. Under $\hyp$,
\cref{thm:qh} additionally gives
\[
 \PH\subsetneq\QH=\BPQP,\qquad
 \BPQP\not\subseteq\PH/\poly.
\]
\end{corollary}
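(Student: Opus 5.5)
The plan is to derive the advice equalities from the Karp--Lipton-type collapse that $\coNP\subseteq\NP/\poly$ forces at the advice level, and then to obtain both separations from the unconditional language $G_h$ of \cref{cor:unconditional-qh-hard}.

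\emph{Step 1: $\NP/\poly=\coNP/\poly$.} Given $\coNP\subseteq\NP/\poly$, take $L\in\coNP/\poly$ witnessed by $L'\in\coNP$ and advice $\alpha_n$, so that $x\in L\iff (x,\alpha_{|x|})\in L'$. By hypothesis $L'\in\NP/\poly$ with advice $\beta$. Concatenating $\alpha_n$ with $\beta$ at the single relevant length $|(x,\alpha_n)|$, which is determined by $n$, places $L$ in $\NP/\poly$. The reverse inclusion follows by complementation: $\NP\subseteq\coNP/\poly$ is equivalent to the hypothesis.

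\emph{Step 2: $\PH/\poly=\NP/\poly$.} It suffices to show $\Sigma_k^p\subseteq\NP/\poly$ for every $k$, by induction on $k$; advice for $\PH/\poly$ is then absorbed exactly as in Step 1. For the inductive step, write $L\in\Sigma_{k+1}^p$ as $\exists u\,[(x,u)\in L_k']$ with $L_k'\in\Pi_k^p$. By induction and Step 1, $L_k'\in\coNP/\poly\subseteq\NP/\poly$, so there are an $\NP$ verifier and advice at the polynomially many relevant lengths. Then $L$ is in $\NP/\poly$: guess $u$ and the verifier's witness, and bundle the advice for all lengths $|(x,u)|$ with $|u|\le p(n)$.

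This step is the main obstacle. The induction uses \emph{correct} advice, so soundness is preserved: the advice is fixed per length, and the $\NP/\poly$ machine is correct on every input. The care needed is only in arguing that the advice remains polynomial once the lengths are bundled.

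\emph{Step 3: separations.} The trivial inclusion $\NP/\poly\subseteq\PH/\poly$ completes the chain of equalities. \Cref{cor:unconditional-qh-hard} gives $G_h\in\QH\setminus(\NP/\poly\cup\coNP/\poly)=\QH\setminus\PH/\poly$, hence $\QH\not\subseteq\PH/\poly$. Since $\PH\subseteq\PH/\poly$, $\PH\ne\QH$. The inclusion $\PH\subseteq\QH$ holds because polynomial time is quasipolynomial, so $\PH\subsetneq\QH$.

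Under $\hyp$, the hypothesis $\coNP\subseteq\NP/\poly$ holds by the proof of \cref{prop:structural}, which uses \cref{thm:abg}. Combining this with $\QH=\BPQP$ from \cref{thm:qh} yields $\PH\subsetneq\QH=\BPQP$ and $\BPQP\not\subseteq\PH/\poly$.
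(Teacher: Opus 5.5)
Your proposal is correct and takes essentially the same approach as the paper's proof. The paper cites \cref{lem:ph-np-advice} for closure of $\NP/\poly$ under complementation and under polynomial projections, which your Steps~1 and~2 reprove inline by induction on the level; it then absorbs the outer $\PH/\poly$ advice by bundling, and uses the unconditional language of \cref{cor:unconditional-qh-hard} together with \cref{thm:abg} and \cref{thm:qh} exactly as you do.
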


\begin{proof}
By \cref{lem:ph-np-advice}, the advice hypothesis implies
$\PH\subseteq\NP/\poly$ and $\NP/\poly=\coNP/\poly$.
Applying an additional polynomial advice string to an NP/poly language
adds no power: bundle the outer advice with the inner advice at every
relevant polynomially bounded encoding length. Thus
$\PH/\poly\subseteq\NP/\poly$; the reverse inclusion follows from
$\NP\subseteq\PH$.
The unconditional language in \cref{cor:unconditional-qh-hard} belongs
to QH but not to this advice class. Since $\PH\subseteq\QH$ always,
this proves the strict inclusion without using \cref{thm:ph} or any
randomized simulation. Under $\hyp$, \cref{thm:abg} supplies the
advice hypothesis; combining the conclusions with \cref{thm:qh} gives
the two statements involving BPQP.
\end{proof}

\begin{remark}[Other semantic classes]
Unconditional circuit lower bounds include
$\class{MA}_{\EXP}\not\subseteq\Pclass/\poly$~\cite{bft98} and
near-maximum deterministic circuit hardness for symmetric
single-exponential time: Chen, Hirahara, and Ren proved a version with
one advice bit~\cite{chen-hirahara-ren24}, and Li removed that
advice~\cite{li24}. The present randomized placement instead uses the
PH simulation and retains its explicit longer clock.
\end{remark}

Together, \cref{prop:unique,cor:random-sat,thm:ph,thm:qh,thm:exp,cor:exph}
prove the unique-solution, randomized, and larger-time-scale assertions.
The unconditional lemmas in this subsection supply the diagonal languages;
\cref{thm:diagonal-randomized,cor:qp-no-np-poly,rem:kannan-clock,cor:conditional-bpp-hierarchy}
give their conditional randomized placements and time separations.
\Cref{cor:ph-strict-qh} separates the independent advice argument from
these simulations. The deterministic-advice assertion is proved in
\cref{cor:ph-advice}, and the symmetric-alternation assertion in
\cref{thm:s2-simulation}.

\section{Advice-preserving consequences}
\label{sec:advice}

The uniform simulation of $\PH$ above uses no designated advice in its
final algorithm. The binary-certificate theorem in~\cite{BD26} gives a
different consequence, where the small advice bound is retained.

\begin{proposition}[Time-scaled advice-preserving isolation]
\label[proposition]{prop:advice}
Under $\hyp$, suppose a polynomial-time predicate $R$, polynomially
bounded polynomial-time computable functions $a,w$, and a designated advice sequence
$\alpha_n\in\bits^{a(n)}$ satisfy
\[
 x\in L\iff\exists z\in\bits^{w(n)}\ R(x,\alpha_n,z)=1.
\]
Then
\[
 L\in\DTIME\bigl(2^{O(\Llog(n)^d)}\bigr)
            /O(a(n)+w(n)+n).
\]
Consequently,
\begin{equation}
 2\mc\subseteq
 \DTIME\bigl(2^{O(\Llog(n)^d)}\bigr)/O(n).
 \label{eq:binary-advice}
\end{equation}
\end{proposition}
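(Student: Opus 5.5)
The plan is to derandomize the isolation step of \cref{lem:isolation} with advice, following the binary-certificate template of~\cite{BD26}, and use the deterministic promise-unique search algorithm $U'$ of \cref{prop:unique} as the only source of computation.

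Fix $n$. For each $x\in\bits^n$, form the circuit $C_x(z)=R(x,\alpha_n,z)$. It has $w(n)$ witness inputs and size $\poly(n)$, with $\alpha_n$ hardwired. For a seed $s=(T,b)$ of length $3w(n)+1$ and a prefix length $r\le w(n)+1$, form the restricted circuit $C_{x,s,r}$ as in \cref{lem:isolation}. By that lemma, for each $x\in L$ of length $n$, a uniform seed isolates some prefix with probability at least $3/16$.

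An averaging argument then gives a fixed short list of seeds that works for every yes-instance of length $n$. Over $O(n)$ independent seeds, the probability that a given $x\in L$ is isolated by none of them is $(13/16)^{O(n)}<2^{-n}$. A union bound over the at most $2^n$ yes-instances of length $n$ shows that some tuple $s_1,\dots,s_{cn}$ isolates every $x\in L\cap\bits^n$. The advice consists of $\alpha_n$ together with this seed tuple.

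That gives length $a(n)+O(n\,w(n))$, which is not the claimed $O(a+w+n)$, and this is the main obstacle. To get linear advice I would replace the independent seeds with a single seed $s$ of length $O(w+n)$ that is pseudorandom over the pair, namely a pairwise-independent hash family keyed by $s$ and extended by an extra $O(n)$ bits. An alternative is to follow the all-prefix, constant-success formulation of~\cite[Theorem~7.1]{BD26} directly. In that formulation the Toeplitz seed is combined with $O(n)$ extra bits selecting a hash from a family in which the failure probability for fixed $x$ is below $2^{-n-1}$. One concrete way to do this is to derive the $O(n)$ trial seeds along an expander walk from one $O(w)$-bit start seed. Each step costs $O(1)$ bits and still gives error $2^{-\Omega(\text{steps})}$, so the total advice is $O(w+n)$.

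I would first try the expander-walk derivation. If that fails, I would fall back on citing the corresponding statement of~\cite{BD26}, taking care that its recycled seeds retain the all-prefix success guarantee of \cref{lem:isolation}.

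The deterministic algorithm, given the advice, loops over all derived seeds $s_i$ and all prefixes $r$. For each pair it runs $U'$ on $C_{x,s_i,r}$ and accepts if and only if some returned assignment $z$ satisfies $R(x,\alpha_n,z)=1$.

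Soundness holds because outputs of $U'$ are always certified, so no $x\notin L$ is ever accepted, even when the advice is wrong. Completeness follows from the isolation property of the chosen seeds together with \cref{prop:unique}. Each restricted circuit has size $\poly(n)$, and there are polynomially many calls, so the time is $\poly(n)\cdot2^{O(\Llog(\poly(n))^d)}=2^{O(\Llog(n)^d)}$.

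For \eqref{eq:binary-advice}, I would use the known fact (from~\cite{BD26}, or the Amir--Beigel--Gasarch/Sivakumar analysis for $2$-membership comparability) that every $L\in2\mc$ has the form $x\in L\iff\exists z\in\bits^{O(n)}R(x,\alpha_n,z)$ with $|\alpha_n|=O(n)$. Here the advice is a maximal-chain element and the witness is a comparison path of linear length. I would take this as the imported input, since no earlier result in the paper supplies it. Substituting $a,w=O(n)$ into the main statement then gives $\DTIME(2^{O(\Llog(n)^d)})/O(n)$.
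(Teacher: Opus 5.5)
Your proposal is correct and follows the paper's proof: compile $R(x,\alpha_n,\cdot)$, run \cref{prop:unique} inside the isolation procedure of \cref{lem:isolation}, hit all at most $2^n$ good-seed sets of density at least $3/16$ in $\bits^{3w(n)+1}$ with one explicit expander walk of length $O(n)$ (the paper cites \cite[Lemma~F.5]{BD26}, giving $3w(n)+1+O(n)$ extra advice bits), and obtain \eqref{eq:binary-advice} by substituting the imported linear-advice, linear-certificate verifier of \cite[Theorem~3.1]{BD26}. One small correction: your claim that no $x\notin L$ is accepted ``even when the advice is wrong'' holds for an arbitrary seed component but not for a corrupted $\alpha_n$; this does not affect the result, since the advice class requires correctness only under the designated advice.
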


\begin{proof}
If $w(n)=0$, evaluate $R$ directly with the designated advice at that
length. Otherwise, compile $R(x,\alpha_n,\cdot)$ into a circuit on exactly $w(n)$ free
input bits. Use \cref{prop:unique} in the isolation procedure of
\cref{lem:isolation}. For each positive $n$-bit input, seeds making
that procedure accept occupy at least $3/16$ of the $3w(n)+1$-bit seed
space; every seed rejects a negative input.

By the explicit-expander hitting lemma in
\cite[Lemma~F.5]{BD26}, one walk of length $O(n)$ hits all these at
most $2^n$ dense good-seed sets. Its description uses
$3w(n)+1+O(n)$ bits, and its vertices are reconstructed in polynomial
time. Append this description to $\alpha_n$. The deterministic
interpreter reconstructs the walk and ORs the isolation tests at its
vertices. This is correct simultaneously on all inputs of length $n$.
All circuits have polynomial size and there are polynomially many
promise-solver calls, so the time remains
$2^{O(\Llog(n)^d)}$. This is exactly the resource accounting in
\cite[Theorem~7.1]{BD26}, with only the solver time changed.

For \eqref{eq:binary-advice}, use the verifier from
\cite[Theorem~3.1]{BD26}: its common advice has $3n+5$ bits and its
certificates have at most $5n+12$ bits. A fixed-length version of that
verifier, provided in the source, has the same existential semantics.
Substitution makes the retained advice $O(n)$.
\end{proof}

\begin{remark}[No free seed and no robust-advice claim]
The walk in \cref{prop:advice} is counted as ordinary advice, fixed for
all inputs of a length. In \cref{cor:random-sat} the seeds are instead
sampled at runtime. These are different conclusions. Neither conclusion
makes arbitrary candidate advice sound or complete. The designated-advice
qualification of~\cite[Corollary~3.5 and Appendix~H]{BD26} is preserved.
\end{remark}

\subsection{A deterministic, level-independent bound for all of $\PH$}

\begin{lemma}[Closure of nondeterministic polynomial advice]
\label[lemma]{lem:ph-np-advice}
If $\coNP\subseteq\NP/\poly$, then
\[
 \NP/\poly=\coNP/\poly,
 \qquad
 \PH\subseteq\NP/\poly\cap\coNP/\poly.
\]
The intersection here is of two advice classes; it does not assert a
robust $(\NP\cap\coNP)$ predicate on arbitrary advice--input pairs.
\end{lemma}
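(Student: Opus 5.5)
The plan is to prove both assertions by induction on the level of $\PH$, using advice bundling. The only two ingredients are the hypothesis $\coNP\subseteq\NP/\poly$ and the fact that advice strings for polynomially many bounded input lengths can be concatenated into one polynomial-length advice string.

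\textbf{Step 1: $\NP/\poly=\coNP/\poly$.} Let $L\in\coNP/\poly$, witnessed by a $\coNP$ language $L'$ and advice $\beta_n$, so that $x\in L\iff(x,\beta_n)\in L'$. Pairs $(x,\beta_n)$ with $|x|=n$ have length polynomial in $n$. Since $L'\in\coNP\subseteq\NP/\poly$, there are an $\NP$ language $L''$ and advice $\gamma_N$ for $L'$ at those lengths. Use the concatenation $\beta_n\gamma_{N(n)}$ as the new advice. This gives $\coNP/\poly\subseteq\NP/\poly$. Taking complements gives the reverse inclusion.

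\textbf{Step 2: induction on levels.} I will prove by induction on $k$ that $\Sigma_k^p\cup\Pi_k^p\subseteq\NP/\poly$. The base case $k=1$ is the hypothesis together with $\NP\subseteq\NP/\poly$.

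For the inductive step, write $L\in\Sigma_{k+1}^p$ as $x\in L\iff\exists y\,(x,y)\in L_1$ with $L_1\in\Pi_k^p$. By induction, $L_1\in\NP/\poly$ via an $\NP$ verifier $V$ and advice $\alpha_N$. All relevant pairs $(x,y)$ with $|x|=n$ have a single padded length $N(n)$; if the pairing allows several lengths, I will bundle the advice over all of them. Then
\[
x\in L\iff\exists y\,\exists z\;V\bigl((x,y),\alpha_{N(n)},z\bigr)=1,
\]
so $L\in\NP/\poly$ with advice $\alpha_{N(n)}$. This works because, for each fixed $y$, the $\NP/\poly$ characterization of $L_1$ is correct under the designated advice, and merging two existential quantifiers is sound.

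For $\Pi_{k+1}^p$, I apply the same argument to the complement, which lies in $\Sigma_{k+1}^p$. This places the complement in $\NP/\poly$, so $L\in\coNP/\poly$, and Step 1 converts this to $\NP/\poly$. The last conversion must be applied at every level, not just once. Since each level uses only finitely many such conversions, all advice stays polynomial.

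\textbf{Step 3: conclusion.} Since $\PH=\bigcup_k\Sigma_k^p$, we get $\PH\subseteq\NP/\poly$, and by Step 1 this equals $\coNP/\poly$. Finally I will note the caveat in the statement. Only the designated advice is required to make the characterization correct; arbitrary advice may make it fail. So the argument yields membership in the intersection of two advice classes, not a robust predicate.

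The main obstacle is bookkeeping in the inductive step. The advice for $L_1$ is indexed by the length of $(x,y)$, not by $n$. I will fix a padding convention so that, for each $n$, only polynomially many lengths occur, and then concatenate the corresponding advice strings. Everything else is routine.
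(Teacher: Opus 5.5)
Your proposal is correct and follows essentially the same route as the paper. The paper also proves that $\NP/\poly$ is closed under complementation by composing the hypothesis with bundled advice, closes it under existential (and, by complementation, universal) polynomial-length projection, and applies a fixed number of projections to reach each level of $\PH$. Your level-by-level induction is an explicit unfolding of that closure argument.
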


\begin{proof}
Put $\mathcal D=\NP/\poly$. For $L\in\mathcal D$, choose $K\in\NP$
and designated polynomial-length advice $a_n$ such that
$x\in L\iff\langle x,a_n\rangle\in K$. Since $\overline K\in\coNP$,
the hypothesis supplies an $\NP/\poly$ verifier for $\overline K$.
Bundle $a_n$ with the latter verifier's designated advice for every
possible pair-encoding length up to the relevant polynomial bound.
This gives an $\NP/\poly$ verifier for $\overline L$. The total bundle
is polynomial, so $\mathcal D$ is closed under complementation.

The class $\mathcal D$ is also closed under existential polynomial-length
projection: guess the quantified string and an NP witness, and bundle
the designated advice for all relevant combined-input lengths. By
complementation it is closed under universal polynomial-length
projection as well. Starting with polynomial-time predicates and
applying a fixed number of such projections proves the assertion for
all of $\PH$. Complement closure also identifies $\mathcal D$ with
$\coNP/\poly$.
\end{proof}

\begin{corollary}[Deterministic simulation with polynomial advice]
\label[corollary]{cor:ph-advice}
Under $\hyp$,
\begin{equation}
 \PH\subseteq
 \DTIME\bigl(2^{O(\Llog(n)^d)}\bigr)/\poly.
 \label{eq:ph-deterministic-advice}
\end{equation}
The exponent $d$ is independent of the level of $\PH$; the advice
polynomial and constants may depend on the language.
\end{corollary}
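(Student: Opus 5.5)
The plan is to combine \cref{lem:ph-np-advice} with \cref{prop:advice}. Under $\hyp$, \cref{thm:abg} gives $\SAT\in\coNP/\poly$, and hence $\coNP\subseteq\NP/\poly$ after bundling the advice for every reduction-output length, exactly as in the proof of \cref{prop:structural}. \Cref{lem:ph-np-advice} then places every $L\in\PH$ in $\NP/\poly$. Concretely, there are a polynomial-time predicate $R$, polynomially bounded polynomial-time computable functions $a,w$, and a designated advice sequence $\alpha_n\in\bits^{a(n)}$ such that
\[
 x\in L\iff\exists z\in\bits^{w(n)}\ R(x,\alpha_n,z)=1.
\]
Before applying \cref{prop:advice}, I will normalize this form. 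The advice length $a(n)$ and witness length $w(n)$ should be fixed functions of $n$, which can be arranged by padding advice and witnesses and having $R$ ignore unused bits. The NP verifier should run on the pair $\langle x,\alpha_n\rangle$, whose length is polynomial in $n$, so $R$ is polynomial time in $n$.

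Next I apply \cref{prop:advice} to this characterization. It yields
\[
 L\in\DTIME\bigl(2^{O(\Llog(n)^d)}\bigr)/O(a(n)+w(n)+n),
\]
and the advice length is polynomial because $a$ and $w$ are. Inside that proposition, the interpreter compiles $R(x,\alpha_n,\cdot)$ into a polynomial-size circuit. It runs the isolation procedure of \cref{lem:isolation} with the promise-unique solver of \cref{prop:unique}, at the vertices of an advised expander walk. Every circuit passed to the solver has size $\poly(n)$, so each call costs $2^{O(\Llog(\poly(n))^d)}=2^{O(\Llog(n)^d)}$, and there are polynomially many calls.

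The point to emphasize, and the only real obstacle, is level-independence. Unlike \cref{thm:ph}, no circuit of quasipolynomial size is ever fed to the solver, so the $d\mapsto d^2$ composition of \cref{lem:second} never occurs. All the alternation in $\PH$ is absorbed into the single NP/poly verifier supplied by \cref{lem:ph-np-advice}. The cost of that absorption is only polynomial growth in the advice and witness lengths, and this growth depends on $L$ but does not enter the exponent. What needs checking is that the $\NP/\poly$ form produced by \cref{lem:ph-np-advice} matches the hypothesis of \cref{prop:advice}. That proposition requires a single designated advice string per input length $n$ and a deterministic polynomial-time $R$ with a fixed witness length. The bundling in \cref{lem:ph-np-advice} over all relevant encoding lengths gives exactly such a string, so the match holds after the padding normalization above. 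The constants, the advice polynomial, and the exponent in the witness length may all depend on $L$, which is consistent with the statement.
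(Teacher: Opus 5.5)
Your proposal is correct and is essentially the paper's argument. The paper obtains the same $\NP/\poly$ verifier from \cref{prop:structural,lem:ph-np-advice} and explicitly notes that substituting it into \cref{prop:advice}, as you do, is an equivalent route that gives the sharper $O(a(n)+w(n)+n)$ advice accounting; its main line instead compiles the verifier into fixed-length circuits $C_{x,a_n}$ and runs the certified search $F$ of \cref{lem:certified-search} with designated advice $\alpha_{N(n)}$, which trades the imported expander-walk hitting lemma of \cite{BD26} for an elementary union-bound seed list using $O(N^2)$ extra advice bits.
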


\begin{proof}
\Cref{prop:structural} gives $\coNP\subseteq\NP/\poly$. By
\cref{lem:ph-np-advice}, every fixed $L\in\PH$ has a polynomial-time
verifier $V$ and polynomial advice $a_n$ with
\[
 x\in L\iff\exists w\ V(x,a_n,w)=1.
\]
Compile this predicate into a circuit $C_{x,a_n}$ on its polynomially
many witness inputs. Use a fixed layout and hardwire $x,a_n$ in
fixed-width constant slots, without simplification. All circuits at
input length $n$ then have the same length $N(n)=n^{O(1)}$.

Apply the certified-search procedure of \cref{lem:certified-search},
proved independently below, with its designated $\alpha_{N(n)}$.
The deterministic interpreter accepts exactly when
$F(C_{x,a_n},\alpha_{N(n)})\neq\bot$. Negative instances have no
satisfying witness, and positive instances yield one. The advice
$(a_n,\alpha_{N(n)})$ is polynomial, and the time is
$2^{O(\Llog(N(n))^d)}=2^{O(\Llog(n)^d)}$.

Equivalently, one can substitute the same verifier directly into
\cref{prop:advice}; that route gives its more explicit
$O(a(n)+w(n)+n)$ advice accounting. Neither route certifies arbitrary
choices of the original advice $a_n$.
\end{proof}

\section{The completeness obstruction and certified-search elimination}
\label{sec:gap}

\subsection{Why soundness-only refutation advice fails}

The first subsection identifies a failed implication, not a separation
theorem. The source binary theorem explicitly distinguishes correctness
under designated advice from correctness on arbitrary advice--input
pairs~\cite[Section~3 and Appendix~H]{BD26}. The second subsection
avoids that completeness test by changing the advice's role: it assists
in finding a satisfiability witness, which is checked directly. The
resulting two-block predicate is quasipolynomial-time for $d>1$.

From $\coNP\subseteq\NP/\poly$, fix a polynomial-time verifier $V$
and a designated polynomial-size package $a_N$ containing the advice
for every formula length at most $N$. For valid formulas $\varphi$,
\begin{equation}
 \varphi\in\UNSAT\iff
 \exists w\ V(\varphi,a_N,w)=1,
 \qquad |\varphi|\leq N.
 \label{eq:advised-unsat}
\end{equation}
All witnesses and assignment strings below have a fixed polynomial
bound in $N$, with irrelevant trailing positions ignored under canonical
encoding conventions. Quantification over $\varphi$ is over valid
formulas of length at most $N$.

Define
\begin{equation}
 \Sound_N(a)\iff
 \forall\varphi,w,z\quad
 \bigl[V(\varphi,a,w)=1\Longrightarrow\varphi(z)=0\bigr].
 \label{eq:sound}
\end{equation}
This is a coNP predicate in $(1^N,a)$: failure has a formula, an accepted
alleged UNSAT proof, and a satisfying assignment as a polynomial-size
certificate.

Let $L\in\Pi_2^p$, written as
\[
 x\in L\iff\forall u\ \exists v\ R(x,u,v)=1.
\]
Construct $\psi_{x,u}$ such that
$\psi_{x,u}\in\SAT\iff\exists v\ R(x,u,v)=1$.
All these formulas have length at most $N=\poly(|x|)$.
A tempting second-level characterization is
\begin{equation}
 \exists a\left[
   \Sound_N(a)\ \land\
   \forall u,w\ \neg V(\psi_{x,u},a,w)
 \right].
 \label{eq:false-collapse}
\end{equation}
It has the desired $\exists\forall$ shape, but it need not characterize
$L$.

\begin{proposition}[Reject-all advice defeats soundness-only checking]
\label[proposition]{prop:empty}
The designated-advice guarantee \eqref{eq:advised-unsat} does not
justify replacing $x\in L$ by \eqref{eq:false-collapse}.
\end{proposition}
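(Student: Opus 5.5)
The plan is to exhibit a designated-advice-compatible verifier $V$ together with an advice string $a^\ast$ that satisfies $\Sound_N$ trivially but makes \eqref{eq:false-collapse} true for instances outside $L$. The natural candidate is \emph{reject-all} advice. Given any verifier $V$ with designated packages $a_N$ satisfying \eqref{eq:advised-unsat}, first modify it to a verifier $V'$ on an extended advice alphabet. Let $V'(\varphi,0a,w)=V(\varphi,a,w)$ and $V'(\varphi,1a,w)=0$ for all $\varphi,w$. Use the designated package $0a_N$. Then $V'$ still satisfies \eqref{eq:advised-unsat}, still runs in polynomial time, and its advice length changes by one bit. So $V'$ is a legitimate instance of the guarantee. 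Alternatively, if one prefers not to alter $V$, note that the guarantee \eqref{eq:advised-unsat} says nothing about nondesignated strings, so the argument only needs \emph{some} admissible verifier for which the replacement fails.

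Next, verify soundness of the bad advice. Take $a^\ast=1\,0^{|a_N|}$. Since $V'(\varphi,a^\ast,w)=0$ for every $\varphi,w$, the implication in \eqref{eq:sound} holds vacuously, so $\Sound_N(a^\ast)$ is true. The second conjunct $\forall u,w\ \neg V'(\psi_{x,u},a^\ast,w)$ is also vacuously true. Hence \eqref{eq:false-collapse} holds for \emph{every} $x$.

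It then remains to pick a nontrivial $L\in\Pi_2^p$ with some $x\notin L$; for instance $L=\emptyset$ given by the predicate $R\equiv0$, or any language with a no-instance. For such $x$, \eqref{eq:false-collapse} holds but $x\notin L$, so the characterization fails. The diagnosis is that $\Sound_N$ only forbids false UNSAT proofs and never demands completeness, so the existential advice quantifier can select the empty verifier. Enforcing completeness, $\forall\varphi\,[\varphi\in\UNSAT\Rightarrow\exists w\,V]$, would add a further alternation. This is the obstruction the next subsection avoids.

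The only delicate point is the interpretive one: making precise that the proposition is about the implication from the designated-advice guarantee, so a single counterexample verifier consistent with \eqref{eq:advised-unsat} suffices. The padding trick above handles this. I would also remark that if $V$ is held fixed, the same failure occurs whenever some nondesignated advice makes $V$ reject everything, and that this can never be excluded by $\Sound_N$ alone.
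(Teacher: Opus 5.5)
Your proposal is correct and follows the paper's proof: both add a one-bit mode header that leaves the designated advice's verifier unchanged and makes the other header reject every proof. That package is vacuously sound and makes \eqref{eq:false-collapse} accept every input. The only differences are cosmetic: your header convention is swapped, and you explicitly pick a language with a no-instance, which the paper leaves implicit.
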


\begin{proof}
Replace the advice format by a one-bit mode header. Under header~$1$,
simulate the original verifier; under header~$0$, reject every proposed
proof. Extend the designated advice with header~$1$, so
\eqref{eq:advised-unsat} continues to hold with only one additional
advice bit. A header-$0$ package $a_{\mathrm{empty}}$ is sound because
it proves nothing. It also satisfies
\[
 \forall u,w\ \neg V(\psi_{x,u},a_{\mathrm{empty}},w)
\]
for every input $x$. Thus \eqref{eq:false-collapse} accepts all inputs.
\end{proof}

The needed completeness assertion is
\begin{equation}
 \Complete_N(a)\iff
 \forall\varphi\in\UNSAT_{\leq N}\ \exists w\
 V(\varphi,a,w)=1.
 \label{eq:complete}
\end{equation}
For clarity, this condition can be written without a restricted domain as
\begin{equation}
 \forall\varphi\ \exists w,z\quad
       [V(\varphi,a,w)=1\ \lor\ \varphi(z)=1].
 \label{eq:complete-pi2}
\end{equation}
For an unsatisfiable formula the second disjunct is impossible, whereas
for a satisfiable one a satisfying assignment makes the assertion true.
This is a $\Pi_2^p$ condition. Existentially guessing $a$ and then
imposing it reintroduces the third quantifier block; the preceding
argument supplies no coNP test for completeness.

\begin{proposition}[A sufficient additional completeness-enforcement lemma]
\label[proposition]{prop:robust}
Suppose there are a polynomial-time $V$, polynomial advice and witness
bounds, and a predicate $G(1^N,a)\in\coNP$ satisfying:
\begin{enumerate}[label=(\roman*)]
 \item for each $N$, some permitted advice $a$ satisfies $G(1^N,a)$;
 \item for every $a$ satisfying $G(1^N,a)$ and every formula
       $|\varphi|\leq N$,
       \[
        \varphi\in\UNSAT\iff\exists w\ V(\varphi,a,w)=1.
       \]
\end{enumerate}
Then $\PH=\Sigma_2^p$.
\end{proposition}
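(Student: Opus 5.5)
It suffices to show $\Pi_2^p\subseteq\Sigma_2^p$, since then $\Sigma_2^p=\Pi_2^p$ and the hierarchy collapses to its second level by the standard inductive argument. I will repair the tempting characterization \eqref{eq:false-collapse} by replacing the soundness-only test $\Sound_N$ with the assumed coNP predicate $G$. Take $L\in\Pi_2^p$ written as $x\in L\iff\forall u\ \exists v\ R(x,u,v)=1$. As in the discussion preceding \cref{prop:empty}, construct the formulas $\psi_{x,u}$, each of length at most $N=\poly(|x|)$, such that $\psi_{x,u}\in\SAT\iff\exists v\ R(x,u,v)=1$. The proposed characterization is
\[
 x\in L\iff\exists a\ \bigl[G(1^N,a)\ \land\ \forall u,w\ \neg V(\psi_{x,u},a,w)=1\bigr],
\]
where $a$ ranges over strings of the permitted polynomial advice length for $N$.

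For correctness, suppose first that $x\in L$. Condition (i) supplies some $a$ with $G(1^N,a)$. Every $\psi_{x,u}$ is satisfiable, so condition (ii), which applies to this $a$, shows that no $w$ makes $V(\psi_{x,u},a,w)=1$. Hence the bracketed formula holds. Conversely, suppose some $a$ satisfies the bracketed formula. Because $G(1^N,a)$ holds, condition (ii) applies to $a$. If $x\notin L$, there is a $u$ with $\psi_{x,u}$ unsatisfiable, and (ii) then gives a $w$ with $V(\psi_{x,u},a,w)=1$. This contradicts the universal clause, so $x\in L$. This is exactly where completeness, missing from $\Sound_N$ in \cref{prop:empty}, is used: advice certified by $G$ cannot reject everything.

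Next comes the quantifier count. $G\in\coNP$, so $G(1^N,a)\iff\forall y\ G'(1^N,a,y)$ for a polynomial-time $G'$ with polynomially bounded $y$. The bracket therefore becomes a single universal block $\forall(y,u,w)$ over a polynomial-time matrix. With the leading $\exists a$, this is a $\Sigma_2^p$ form, giving $\Pi_2^p\subseteq\Sigma_2^p$, so $\PH=\Sigma_2^p$.

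The main point needing care is the bookkeeping. The formulas $\psi_{x,u}$ must all have length at most one $N$ computable from $|x|$, and (ii) must cover all formulas of length at most $N$; the hypothesis is stated exactly this way. The lengths of $a$, $y$, $u$ and $w$ must also be fixed polynomials in $|x|$, which holds because they are polynomial in $N$. Malformed formulas can be excluded by the validity conventions fixed earlier. No real obstacle remains beyond noting that the completeness test is now absorbed into a coNP predicate, so no third quantifier block appears, in contrast with \eqref{eq:complete-pi2}.
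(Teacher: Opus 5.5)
Your proposal is correct and follows the paper's proof essentially verbatim. It uses the same characterization $\exists a\,[G(1^N,a)\land\forall u,w\ \neg V(\psi_{x,u},a,w)]$ and the same appeals to (i) and (ii) for the two directions. It likewise merges the universal witnesses of the coNP predicate $G$ with $(u,w)$ to obtain $\Pi_2^p\subseteq\Sigma_2^p$.
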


\begin{proof}
For $L\in\Pi_2^p$ and $\psi_{x,u}$ as above,
\[
 x\in L\iff
 \exists a\left[
 G(1^N,a)\land\forall u,w\ \neg V(\psi_{x,u},a,w)
 \right].
\]
On a yes-instance, use an advice package supplied by (i); soundness
ensures that no satisfiable $\psi_{x,u}$ has an accepted UNSAT proof.
On a no-instance, some $\psi_{x,u}$ is unsatisfiable, and completeness
in (ii) gives a proof under every package passing $G$.
Because $G$ is in coNP, its universal witnesses can be combined with
$(u,w)$ after the existential advice block. Hence
$\Pi_2^p\subseteq\Sigma_2^p$, which collapses the hierarchy to that
level.
\end{proof}

\begin{remark}[Status of the sufficient condition]
\cref{prop:robust} is an implication from an \emph{additional} assumption.
No construction of such a predicate $G$ from $\hyp$ is given here.
In particular, common advice for the NP and coNP interpretations of a
binary comparator does not by itself supply this predicate.
\end{remark}

\subsection{Certified search with advice}
\label{subsec:certified-search}

\begin{lemma}[Certified search under arbitrary candidate advice]
\label[lemma]{lem:certified-search}
Under $\hyp$, there are a fixed polynomial-length advice format
$A(N)=O(N^2)$ and a deterministic algorithm $F(C,\alpha)$ with the
following properties. On a circuit $C$ of encoding length $N$ and any
permitted advice string, $F$ runs in time
$2^{O(\Llog(N)^d)}$ and returns an assignment or $\bot$.
\begin{enumerate}[label=(\alph*)]
 \item For every candidate advice $\alpha$, if
       $F(C,\alpha)=v\neq\bot$, then $C(v)=1$.
 \item For every $N$, there is a single $\alpha_N\in\bits^{A(N)}$
       such that $F(C,\alpha_N)\neq\bot$ for every satisfiable
       circuit $C$ of encoding length $N$.
\end{enumerate}
Malformed advice is rejected. The running-time bound holds on all
permitted advice, not merely on the designated sequence.
\end{lemma}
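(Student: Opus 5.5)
We combine the deterministic search version $U'$ of \cref{prop:unique} with the isolation procedure of \cref{lem:isolation}. The seeds of that procedure are derandomized by a common advice string. Soundness then comes for free, because every output is checked directly in $C$.

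\textbf{Step 1: the algorithm $F$.} On input $(C,\alpha)$, the algorithm $F$ first checks that $\alpha$ has the fixed format. It must be a list of $N$ seeds $s_1,\ldots,s_N$, each of length $3N+1$, padded to exactly $A(N)=O(N^2)$ bits. Any other string is rejected with output $\bot$.

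For a well-formed $\alpha$, let $w\le N$ be the number of inputs of $C$. For each $i$, use the first $3w+1$ bits of $s_i$ as a Toeplitz-and-shift seed. For each prefix length $r\in\{1,\ldots,w+1\}$, build the restricted circuit $C_{s_i,r}$ of \cref{lem:isolation}. The restriction is imposed in the fixed syntactic way, so that its inputs are exactly the inputs of $C$. Run $U'$ on each $C_{s_i,r}$. Whenever $U'$ returns an assignment $v$, evaluate $C(v)$ directly, and return the first $v$ in a fixed order with $C(v)=1$. If no such $v$ appears, return $\bot$.

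There are $O(N^2)$ calls. Each restricted circuit has length $\poly(N)$, so each call costs $2^{O(\Llog(N)^d)}$. All loops are bounded independently of $\alpha$. This gives the running-time claim on every permitted advice string.

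\textbf{Step 2: property (a).} This is immediate from Step 1. Any returned $v$ has passed the explicit test $C(v)=1$. Moreover, \cref{prop:unique} guarantees that $U'$ returns only genuine witnesses of $C_{s_i,r}$, and these are in particular witnesses of $C$. Nothing depends on $\alpha$ being designated.

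\textbf{Step 3: property (b), by a counting argument over advice.} Fix a satisfiable $C$ of length $N$. By the analysis in \cref{lem:isolation}, a uniform seed yields, with probability at least $3/16$, some $r$ for which $C_{s,r}$ has exactly one satisfying assignment. On such a circuit $U'$ returns that unique witness, which satisfies $C$. So a uniform seed is good for $C$ with probability at least $3/16$.

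Now take $N$ independent uniform seeds. All of them fail for $C$ with probability at most $(13/16)^N$. There are at most $2^N$ circuits of encoding length $N$. If $(13/16)^N 2^N<1$, a union bound shows that some single list of seeds works for every satisfiable $C$ at once. However, $(13/16)\cdot2>1$, so $N$ seeds are not enough.

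\textbf{The main obstacle: the seed count.} The fix is to use $cN$ seeds with $c$ chosen so that $(13/16)^{c}<1/2$; $c=4$ suffices, since $(13/16)^4<0.44$. The advice length is then $4N(3N+1)=O(N^2)$, which is still within the format. Alternatively one could use the expander walk of \cite[Lemma~F.5]{BD26} to obtain $O(N)$ advice, but the plain union bound already meets $A(N)=O(N^2)$ and keeps the proof self-contained. I would set the format length to exactly $4N(3N+1)$ and take $\alpha_N$ to be the lexicographically first string that is good for all satisfiable circuits of length $N$. Finitely many short lengths are handled by the exhaustive search already built into $U'$.
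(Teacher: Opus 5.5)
Your proposal is correct and follows the paper's proof essentially step for step: $U'$ run on the Toeplitz prefix restrictions, seeds taken from the first $3w+1$ bits of fixed-length advice blocks, soundness by checking the witness directly, and a union bound over the at most $2^N$ circuits of length $N$. The only difference is the number of blocks. Your $4N$ blocks suffice for existence, since $2^N(13/16)^{4N}<1$. The paper instead takes $16(N+1)$ blocks, which pushes the miss probability below $1/4$; it later reuses this to note that a uniformly sampled seed list works with probability above $3/4$.
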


\begin{proof}
Use the witness-producing algorithm $U'$ from \cref{prop:unique}.
Its output is always checked in the circuit passed to it, including
on off-promise inputs.

\paragraph{One advice format for all circuits of length $N$.}
Set
\[
 M=16(N+1),\qquad \rho_N=3N+1,\qquad A(N)=M\rho_N.
\]
Parse a well-formed $\alpha$ as $M$ blocks
$s_1,\ldots,s_M\in\bits^{\rho_N}$. A circuit of length $N$ may
have any number $w\leq N$ of free inputs. When $w\geq1$, use the
first $3w+1$ bits of each $s_i$ as its Toeplitz-hash seed, in the
fixed format of \cref{lem:isolation}. This projection of a uniform
$\rho_N$-bit block is uniform on the required seed space. In
particular, circuits of the same encoding length but different input
arities still share one advice length and parsing convention. No free
dummy witness variables are introduced. When $w=0$, evaluate $C$
directly, returning the empty assignment if true and $\bot$ if false.

For $w\geq1$, for every $i=1,\ldots,M$ and $r=1,\ldots,w+1$,
form $C_{s_i,r}$ as in \cref{lem:isolation} and run $U'$ on it.
Return the first assignment $v$ that it supplies after explicitly
checking both its length and $C_{s_i,r}(v)=1$. If no call supplies
such an assignment, return $\bot$.

\paragraph{Soundness for every candidate advice.}
Every restricted circuit has the same free assignment variables as
$C$ and satisfies
\[
 C_{s_i,r}(v)=1\Longrightarrow C(v)=1.
\]
Thus every returned assignment is genuine, whatever seeds the advice
contains. A malformed string cannot cause acceptance, and an
unsatisfiable input returns $\bot$ for every advice string.

\paragraph{One advice string succeeds simultaneously.}
For a fixed satisfiable circuit $C$, \cref{lem:isolation} shows that
a uniform seed has probability at least $3/16$ of making some prefix
restriction uniquely satisfiable. On that restriction $U'$ returns
the unique witness. Hence $M$ independent uniform blocks miss all
successful restrictions with probability at most
\begin{equation}
 (13/16)^M\leq e^{-3(N+1)}<2^{-N-2}.
 \label{eq:certified-miss}
\end{equation}
There are at most $2^N$ valid circuit encodings of length $N$.
A union bound makes the probability that any satisfiable one is missed
less than $1/4$. Some fixed list therefore works for all of them.
This existence argument requires neither recognizing the good seeds
nor finding the list uniformly.

\paragraph{Time and advice.}
There are $M(w+1)=O((N+1)^2)$ calls to $U'$, each on a circuit of
size $N^{O(1)}$ (with harmless constant adjustments for small $N$).
All construction and verification costs are polynomial. Because
$\Llog(N^{O(1)})=O(\Llog(N))$, the total time is
$2^{O(\Llog(N)^d)}$. The advice length $M(3N+1)$ is $O(N^2)$.
\end{proof}

\paragraph{Optional linear-advice compression.}
The common ambient seed space above also permits an $O(N)$-bit advice
version. Each satisfiable length-$N$ circuit defines a set of density
at least $3/16$ in $\bits^{3N+1}$. By
\cite[Lemma~F.5]{BD26}, one short expander-walk description hits all
at most $2^N$ such sets and has length $3N+1+O(N)=O(N)$.
Reconstructing its polynomially many seeds preserves the clock and
all-advice soundness. The results below use only the elementary
$O(N^2)$ bound of \cref{lem:certified-search}.

\subsection{Eliminating the final existential block}

\begin{theorem}[Polynomial-length two-block simulation]
\label{thm:certified-collapse}
Under $\hyp$,
\begin{equation}
 \PH\subseteq
 \Sigma_2^p\!\left[2^{O(\Llog(n)^d)}\right]
 \cap
 \Pi_2^p\!\left[2^{O(\Llog(n)^d)}\right].
 \label{eq:certified-two-block}
\end{equation}
All quantified strings have length polynomial in the original input
length. The deterministic predicate has exponent $d$, independently
of the language's level in $\PH$. No advice is supplied outside the
quantifier blocks, and no oracle is used by the predicate.
\end{theorem}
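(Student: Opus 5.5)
The plan is to combine the Yap-style collapse structure with the certified-search lemma (\cref{lem:certified-search}). The advice string $\alpha_N$ will be guessed in the outer existential block. Its completeness failures are harmless, because every returned witness is checked directly; its soundness holds for every candidate advice. Concretely, fix $L\in\PH$. By \cref{prop:structural} and \cref{lem:ph-np-advice}, $L\in\NP/\poly\cap\coNP/\poly$, so it suffices to handle $\Sigma_2^p$-style characterizations whose inner $\exists$ block is replaced by a call to $F$. I will first show $\PH\subseteq\Sigma_2^p[2^{O(\Llog(n)^d)}]$. Closure of $\PH$ under complement then gives the $\Pi_2^p$ part: apply the $\Sigma_2$ result to $\overline L$ and negate.

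For the $\Sigma_2$ part, use the Karp--Lipton style argument. Since $\PH=\Sigma_2^p$ relative to nothing is not available, I instead use $\coNP\subseteq\NP/\poly$ to write $L$ as $x\in L\iff\exists w\,V(x,a_n,w)$ with designated advice $a_n$. The advice $a_n$ cannot be guessed freely, since it is not robust (\cref{prop:empty}), so I will not guess $a_n$. Instead, write $L\in\Sigma_k^p$ with alternating blocks, and reduce the alternation inductively to the form
\[
 x\in L\iff\exists y\ \forall z\ \exists v\ R(x,y,z,v),
\]
with $y,z,v$ of polynomial length. This is possible because $\PH=\Sclass^{\NP}\subseteq\Sigma_3^p$. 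Let $C_{x,y,z}$ be the circuit in $v$ computing $R(x,y,z,\cdot)$, padded to a fixed length $N(n)=n^{O(1)}$. Then characterize
\[
 x\in L\iff\exists(y,\alpha)\ \forall z\quad F(C_{x,y,z},\alpha)\neq\bot .
\]
The predicate runs in time $2^{O(\Llog(N)^d)}=2^{O(\Llog(n)^d)}$ by \cref{lem:certified-search}. On a yes-instance, take the good $y$ and the designated $\alpha_{N}$; part (b) makes every satisfiable $C_{x,y,z}$ succeed. On a no-instance, for every $y$ some $z$ makes $C_{x,y,z}$ unsatisfiable, and part (a) forces $F=\bot$ for every $\alpha$. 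The quantified strings have length $|y|+A(N)=\poly(n)$, and no outside advice or oracle is used.

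The main obstacle is the step that obtains the $\exists\forall\exists$ form with a polynomial-time matrix for an arbitrary level of $\PH$, rather than only for $\Pi_2^p$. I will use the standard containment $\Sclass^{\NP}\subseteq\Sigma_3^p$ together with \cref{thm:yap}, so that every $L\in\PH$ lies in $\Sigma_3^p$, which has exactly this shape. I also need the padding of all $C_{x,y,z}$ at a given $n$ to a common encoding length, so that a single $\alpha_N$ serves uniformly. I will handle this as in \cref{cor:ph-advice}, with a fixed layout and constant slots.
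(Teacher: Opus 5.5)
Your proposal is correct and follows the paper's proof essentially step for step. You obtain $\PH\subseteq\Sigma_3^p$ from \cref{prop:structural} and $\Sclass^{\NP}\subseteq\Sigma_3^p$, hardwire $x,y,z$ into fixed-length circuits, and guess $(y,\alpha)$ in a single existential block, with \cref{lem:certified-search}(b) giving completeness and (a) giving soundness for every $\alpha$; you then negate the construction for $\overline L$ to get the $\Pi_2^p$ side. The opening detour through $\NP/\poly\cap\coNP/\poly$ and the advice $a_n$ plays no role in the final argument and can simply be dropped.
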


\begin{proof}
By \cref{prop:structural}, every $L\in\PH$ belongs to
$\Sigma_3^p$. Choose fixed polynomials and a polynomial-time
predicate $R$ so that, with canonical padding of the three blocks,
\begin{equation}
 x\in L\iff
 \exists y\in\bits^{p(n)}\ \forall z\in\bits^{q(n)}\
 \exists w\in\bits^{r(n)}\ R(x,y,z,w)=1.
 \label{eq:certified-sigma-three}
\end{equation}

For each $n$, uniformly construct a circuit
$\Gamma_n(x,y,z,w)$ computing $R$ on these fixed input lengths.
Hardwire $x,y,z$ in fixed-width constant slots, retaining the entire
layout without simplifying gates. Equal-length encodings of zero
and one constant gates, or fixed-slot syntactic padding, ensure that
all resulting circuits $\varphi_{x,y,z}$ have exactly the same
encoding length $N(n)=n^{O(1)}$. Only $w$ remains free. In particular,
\begin{equation}
 \varphi_{x,y,z}\in\CSAT
 \iff\exists w\ R(x,y,z,w)=1.
 \label{eq:certified-hardwire}
\end{equation}
Padding changes neither this equivalence nor the set of witness
assignments; it introduces no free auxiliary inputs.

With $F$ from \cref{lem:certified-search}, we claim
\begin{equation}
 x\in L\iff
 \exists y\in\bits^{p(n)}\
 \exists\alpha\in\bits^{A(N(n))}\
 \forall z\in\bits^{q(n)}\quad
 F(\varphi_{x,y,z},\alpha)\neq\bot.
 \label{eq:certified-collapse}
\end{equation}
If $x\in L$, choose $y$ from \eqref{eq:certified-sigma-three} and
choose the single $\alpha_{N(n)}$ supplied by
\cref{lem:certified-search}(b). Every circuit in
\eqref{eq:certified-hardwire} is then satisfiable, and that same
advice works for every $z$.

Conversely, suppose the right side of \eqref{eq:certified-collapse}
holds for some $y,\alpha$. For every $z$, part~(a) of the lemma
makes the nonfailure output an actual satisfying assignment of
$\varphi_{x,y,z}$. Equation~\eqref{eq:certified-hardwire} therefore
supplies the required $w$ for every $z$, proving $x\in L$.
No soundness or completeness test for $\alpha$ is needed.

Merge $(y,\alpha)$ into one existential block. Its length is
$p(n)+O(N(n)^2)=\poly(n)$, and the universal block has length
$q(n)=\poly(n)$. Circuit construction takes polynomial time, and
the predicate clock is
\[
 2^{O(\Llog(N(n))^d)}=2^{O(\Llog(n)^d)}.
\]
This proves the existential--universal containment. Apply the same
construction to $\overline L\in\PH$ and negate its deterministic
predicate to obtain a universal--existential characterization of $L$.
\end{proof}

\begin{remark}[What this removes, and what it does not]
The advice in \eqref{eq:certified-collapse} helps produce positive
witnesses; it does not assert the absence of refutations. Advice that
makes search fail cannot turn a false instance into a true one.
For a false instance and each $y$, some $z$ has an unsatisfiable
$\varphi_{x,y,z}$, on which every advice gives $\bot$.
Thus the reject-all construction of \cref{prop:empty} is harmless
here. This avoids the completeness test of \cref{prop:robust}; it
does not construct that test.

For $d=1$ the predicate is polynomial-time, and the theorem gives
$\PH=\Sigma_2^p=\Pi_2^p$, consistent with Sivakumar's
logarithmic-arity consequence $\NP=\class{RP}$~\cite{Siv99}.
For $d>1$, \eqref{eq:certified-two-block} is a two-block simulation
with a longer predicate clock, not a collapse to the usual second
level of $\PH$. Nor is the reverse inclusion of these longer-clock
classes in $\PH$ asserted.
\end{remark}

\subsection{Symmetric alternation with competing seed lists}

\begin{definition}[Symmetric alternation with a longer predicate clock]
\label[definition]{def:s2-clock}
For a time bound $T$, a language $L$ belongs to $\Sclass^p[T]$ if there are
fixed polynomials $p,q$ and a deterministic predicate $P$, clocked in
time $O(T(n))$ on all triples $(x,Y,Z)$ with $|Y|=p(n)$ and
$|Z|=q(n)$, such that
\[
  x\in L\implies\exists Y\,\forall Z\;P(x,Y,Z)=1,\qquad
  x\notin L\implies\exists Z\,\forall Y\;P(x,Y,Z)=0.
\]
As above, exponential $O(\cdot)$ notation denotes a union over its
constant, with time measured in the original input length.
Thus $\Sclass^p[\poly]=\Sclass^p$.
For $d>1$, membership in $\Sclass^p[2^{O(\Llog(n)^d)}]$ is a
longer-clock symmetric simulation, not membership in ordinary
$\Sclass^p$. No reverse containment of the longer-clock class in $\PH$
is asserted.
\end{definition}

\begin{lemma}\label[lemma]{lem:s2-in-sigma2}
$\Sclass^p[T]\subseteq\Sigma_2^p[T]\cap\Pi_2^p[T]$.
\end{lemma}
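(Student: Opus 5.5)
The plan is to follow the standard argument that $\Sclass^p\subseteq\Sigma_2^p\cap\Pi_2^p$. The only thing to watch is that the predicate clock and the quantified lengths are carried over unchanged.

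Fix $L\in\Sclass^p[T]$ with polynomials $p,q$ and a predicate $P$ clocked in time $O(T(n))$, as in \cref{def:s2-clock}.

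\textbf{$\Sigma_2$ characterization.} I claim
\[
 x\in L\iff\exists Y\in\bits^{p(n)}\ \forall Z\in\bits^{q(n)}\ P(x,Y,Z)=1 .
\]
The forward direction is the yes-clause of the definition.

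For the converse, suppose $x\notin L$. The no-clause supplies some $Z^*$ with $P(x,Y,Z^*)=0$ for every $Y$. So every candidate $Y$ is defeated by $Z^*$, and the right side fails.

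This is already a $\Sigma_2^p[T]$ characterization. The quantified lengths are the fixed polynomials $p,q$. The predicate is $P$ itself, clocked in $O(T(n))$ on all permitted triples, and time is measured in $n$, exactly as the definition of $\Sigma_2^p[T]$ requires.

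\textbf{$\Pi_2$ characterization.} Symmetrically, I claim
\[
 x\in L\iff\forall Z\in\bits^{q(n)}\ \exists Y\in\bits^{p(n)}\ P(x,Y,Z)=1 .
\]
If $x\in L$, the winning $Y^*$ from the yes-clause works for every $Z$.

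If $x\notin L$, the winning $Z^*$ from the no-clause makes $P(x,Y,Z^*)=0$ for all $Y$, so the universal statement fails.

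The only bookkeeping point is the quantifier order. In $\Pi_2^p[T]$ the universal block comes first, so the block lengths are $q(n)$ and then $p(n)$. Since the definition of $\Pi_2^p[T]$ allows arbitrary fixed polynomials in each block, I just rename the variables, and the predicate is again $P$ with its arguments reordered.

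I do not expect a genuine obstacle here. The one step to state carefully is that the clock of $P$ is required on \emph{all} triples of the given lengths, not only on winning certificates. That is what makes the new predicate uniformly clocked in time $O(T(n))$ under both quantifier orders. No complementation or negation of $P$ is needed, so the clock is not changed at all.
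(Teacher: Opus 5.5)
Your proof is correct and follows the paper's argument: the $\exists Y\,\forall Z$ characterization comes from the yes-certificate together with the no-certificate $Z^*$, and the $\forall Z\,\exists Y$ characterization uses the same unchanged predicate $P$. The paper states the second part in the equivalent complemented form $x\notin L\iff\exists Z\,\forall Y\;P(x,Y,Z)=0$, which is logically identical to your direct version.
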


\begin{proof}
We claim $x\in L\iff\exists Y\,\forall Z\;P(x,Y,Z)=1$. The forward
direction is the definition. If $x\notin L$, the no-certificate $Z^*$
gives $P(x,Y,Z^*)=0$ for every $Y$, so the right side fails. In the
same way, $x\notin L\iff\exists Z\,\forall Y\;P(x,Y,Z)=0$, which is a
universal--existential characterization of $L$ with predicate
$P(x,Y,Z)$.
\end{proof}

Fix a polynomial $Q$. A \emph{seed bundle} for input length $n$ is a
string $\beta=(\beta_1,\dots,\beta_{Q(n)})$ with
$\beta_N\in\bits^{A(N)}$, where $A(N)=16(N+1)(3N+1)$ is the fixed advice length in
\cref{lem:certified-search}. Its length $\sum_{N\le Q(n)}A(N)$ is a fixed polynomial in
$n$. The \emph{designated bundle} has $\beta_N=\alpha_N$ for every
$N\le Q(n)$. For two bundles $\beta,\beta'$ and a circuit $C$ of
encoding length $N\le Q(n)$, define
\[
  \mathcal O_{\beta,\beta'}(C)=1\iff
  F(C,\beta_N)\ne\bot\ \text{ or }\ F(C,\beta'_N)\ne\bot .
\]

\begin{lemma}[Two bundles give an exact oracle]\label[lemma]{lem:two-bundles}
Under $\hyp$, for all bundles $\beta,\beta'$ and every circuit $C$ of
length at most $Q(n)$:
\begin{enumerate}
\item[(a)] if $\mathcal O_{\beta,\beta'}(C)=1$, then $C$ is satisfiable;
\item[(b)] if $\beta$ or $\beta'$ is the designated bundle, then
$\mathcal O_{\beta,\beta'}(C)=1$ exactly when $C$ is satisfiable.
\end{enumerate}
Each evaluation takes time $2^{O(\Llog(n)^d)}$ on all bundles.
Malformed circuit encodings, including the empty encoding, receive answer
zero before any bundle component is indexed.
\end{lemma}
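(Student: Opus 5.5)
The plan is to reduce all three assertions directly to \cref{lem:certified-search}, applied separately to each of the at most two evaluations of $F$ that define $\mathcal O_{\beta,\beta'}(C)$.

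\emph{Malformed inputs.} The evaluator first parses $C$. If the encoding is malformed or empty, it outputs $0$ immediately and never reads a bundle component. This convention is consistent with (a), since nothing is asserted, and with (b), because we regard malformed encodings as unsatisfiable; \cref{lem:certified-search} already rejects them. For a well-formed $C$, let $N=|C|\le Q(n)$. Then $\beta_N$ and $\beta'_N$ are both defined and both have the permitted length $A(N)$, so $F$ may be run on each.

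\emph{Part (a).} If $\mathcal O_{\beta,\beta'}(C)=1$, then one of the two calls, say $F(C,\beta_N)$, returns some $v\neq\bot$. By \cref{lem:certified-search}(a), which holds for \emph{every} candidate advice, $C(v)=1$, so $C$ is satisfiable. The same reasoning applies if the successful call is the one using $\beta'_N$. No assumption on either bundle is needed here.

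\emph{Part (b).} Suppose, by symmetry of the disjunction, that $\beta$ is the designated bundle, so that $\beta_N=\alpha_N$. If $C$ is satisfiable, then \cref{lem:certified-search}(b) gives $F(C,\alpha_N)\neq\bot$, and hence $\mathcal O=1$. The converse direction is exactly part (a). The only point to verify is that the designated $\alpha_N$ is indexed by the circuit's actual encoding length. This holds because the bundle supplies one component for each length $N\le Q(n)$, so every circuit length up to $Q(n)$ is covered.

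\emph{Running time.} Each evaluation consists of parsing, selecting two components of a polynomial-length bundle, and running $F$ twice. By \cref{lem:certified-search}, each run of $F$ takes time $2^{O(\Llog(N)^d)}$ on all permitted advice. Since $N\le Q(n)$, we have $\Llog(N)=O(\Llog(n))$, which gives the bound $2^{O(\Llog(n)^d)}$ uniformly over all bundles.

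The main obstacle is bookkeeping rather than mathematics. The delicate step is ensuring that the uniform time bound of \cref{lem:certified-search} holds on arbitrary, not merely designated, advice, and that malformed circuits are screened out before the length-indexed lookup. Both of these are already supplied by the lemma's statement together with the parsing convention described above.
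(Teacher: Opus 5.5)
Your proof is correct and follows the paper's argument: part (a) is \cref{lem:certified-search}(a) applied to both calls, and part (b) uses the designated component via \cref{lem:certified-search}(b) together with (a). The clock likewise comes from the all-advice time bound of \cref{lem:certified-search} and $\Llog(Q(n))=O(\Llog(n))$; your screening of malformed and empty encodings before any component is indexed matches the convention stated in the lemma.
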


\begin{proof}
Part (a) is \cref{lem:certified-search}(a) applied to both calls. For (b), a satisfiable
$C$ receives a witness from the designated component by \cref{lem:certified-search}(b),
and an unsatisfiable $C$ receives $\bot$ from both calls by (a). The
clock of \cref{lem:certified-search} holds on all permitted advice, and
$\Llog(Q(n))=O(\Llog(n))$.
\end{proof}

\begin{theorem}[Symmetric simulation of the polynomial hierarchy]
\label{thm:s2-simulation}
Under $\hyp$,
\[
  \PH=\Sclass^{\NP}\subseteq
  \Sclass^p\bigl[2^{O(\Llog(n)^d)}\bigr].
\]
All certificates have length polynomial in the original input length,
the predicate is deterministic and uses no oracle, and the exponent $d$
does not depend on the level of $\PH$.
\end{theorem}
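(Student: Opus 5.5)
The equality $\PH=\Sclass^{\NP}$ is \cref{prop:structural}, so the work is the containment $\Sclass^{\NP}\subseteq\Sclass^p[2^{O(\Llog(n)^d)}]$. Fix $L\in\Sclass^{\NP}$ with polynomial-time oracle verifier $M^{\SAT}(x,Y,Z)$, polynomial certificate lengths $p,q$, and a polynomial bound $Q(n)$ on the number and length of oracle queries. I will replace the oracle by the two-bundle oracle $\mathcal O_{\beta,\beta'}$ of \cref{lem:two-bundles}. Queries are recast as circuits of length at most $Q(n)$, the reduction being fixed and polynomial-time. The yes-prover supplies one bundle and the no-prover the other, so whichever side is honest guarantees exactness.

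The new predicate is
\[
 P\bigl(x,(Y,\beta),(Z,\beta')\bigr)=M^{\mathcal O_{\beta,\beta'}}(x,Y,Z).
\]
On a yes-instance, take $Y^*$ winning in the original game and $\beta$ the designated bundle. Then for every $(Z,\beta')$, \cref{lem:two-bundles}(b) makes every oracle answer exactly correct. The run of $M$ therefore coincides with the exact-oracle run, including adaptive queries, since each query is determined by the preceding correct answers. Hence $P=1$. On a no-instance, take $Z^*$ winning for the no-side and $\beta'$ designated. Symmetrically, every $(Y,\beta)$ yields the exact run and $P=0$. Part (a) of the lemma is not even needed for correctness of the game, only for robustness.

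It remains to check resources. Certificates have length $p(n)+|\beta|$ and $q(n)+|\beta'|$, and bundle lengths are a fixed polynomial in $n$. The predicate is clocked: $M$ runs polynomially many steps, and each oracle call costs $2^{O(\Llog(n)^d)}$ on \emph{all} bundles by \cref{lem:two-bundles}, including malformed ones. The total is therefore $\poly(n)\,2^{O(\Llog(n)^d)}$, with no oracle. The exponent $d$ is independent of the level because every $L\in\PH$ is reduced to the single form $\Sclass^{\NP}$ before this step.

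The main obstacle is the clocking on adversarial bundles and the fixed-length bookkeeping. Each query must be mapped to a circuit whose encoding length $N\le Q(n)$ indexes the component $\beta_N$. Malformed queries must get answer zero, consistently with the exact oracle; I would map non-formulas to a fixed unsatisfiable circuit, or pre-reject them as in the lemma. The time bound must also hold regardless of what the opposing prover writes, and the uniform clock of \cref{lem:certified-search} on all permitted advice is exactly what secures this.
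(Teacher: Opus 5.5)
Your proposal is correct and follows essentially the same route as the paper. Each prover's certificate is augmented with a seed bundle, and every reduced CircuitSAT query is answered by the two-bundle oracle $\mathcal O_{\beta,\beta'}$. The honest side is handled by induction on the adaptive query history using Lemma~\ref{lem:two-bundles}(b), and the all-bundle clock gives predicate time $2^{O(\Llog(n)^d)}$.

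One correction to your aside that part~(a) is needed only for robustness: soundness of $F$ under arbitrary advice is used for correctness, but it is hidden inside part~(b). It is exactly what makes~(b) hold against an adversarial opposing bundle, since an unsatisfiable query must also receive $\bot$ from the opponent's component.
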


\begin{proof}
By \cref{prop:structural}, $\PH=\Sclass^{\NP}$. Fix
$L\in \Sclass^{\NP}$ with a clocked polynomial-time oracle verifier
$V$, whose clock holds for every oracle-answer sequence, and certificate lengths $p_0(n),q_0(n)$. Replace its fixed $\NP$ oracle by a polynomial-time many-one
reduction to $\CSAT$; in particular, a SAT oracle is handled by the
usual formula-to-circuit map.
Because $V$ is clocked, a polynomial $Q$ bounds the encoding length of
every query made on any triple $(x,y,z)$ with $|x|=n$.

The new certificates are $Y=(y,\beta)$ and $Z=(z,\beta')$, where $y$
and $z$ are certificates of $V$ and $\beta,\beta'$ are seed bundles for
length $n$. The predicate $P(x,Y,Z)$ parses both certificates, treating
a malformed bundle component as advice on which $F$ returns $\bot$, and
simulates $V(x,y,z)$, answering each query $C$ by
$\mathcal O_{\beta,\beta'}(C)$. It is clocked by the polynomial clock
of $V$ and polynomially many evaluations as in
Lemma~\ref{lem:two-bundles}, for a total of $2^{O(\Llog(n)^d)}$ on all
triples.

Let $x\in L$, let $y^*$ win against every opposing certificate of $V$,
and put $Y^*=(y^*,\beta^*)$ with $\beta^*$ designated. Fix any
$Z=(z,\beta')$. By induction on the number of queries, the simulated
run and the exact-oracle run of $V^{\CSAT}(x,y^*,z)$ make
the same queries: as long as the histories agree, the next query is the
same, and Lemma~\ref{lem:two-bundles}(b) makes its simulated answer
exact. Hence $P(x,Y^*,Z)=V^{\CSAT}(x,y^*,z)=1$. The case
$x\notin L$ is symmetric, with the no-prover supplying its winning $z^*$
and the designated bundle. The query induction is necessary because the
queries can depend on both original certificates and on earlier answers;
completeness of the designated bundle holds for all circuits within the
length bound, not merely for a preselected list of queries.
\end{proof}

\begin{corollary}
\Cref{thm:certified-collapse} follows from Theorem~\ref{thm:s2-simulation} and
Lemma~\ref{lem:s2-in-sigma2}. This route uses neither the containment
$\Sclass^{\NP}\subseteq\Sigma_3^p\cap\Pi_3^p$ nor the third-level
form~\eqref{eq:certified-sigma-three}.
\end{corollary}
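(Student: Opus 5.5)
The corollary follows by composing two results already proved: \cref{thm:s2-simulation} places $\PH$ in the longer-clock symmetric class, and \cref{lem:s2-in-sigma2} turns any symmetric characterization into both two-block forms with the same clock. The plan is as follows.

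\textbf{Step 1.} Fix $L\in\PH$. \Cref{thm:s2-simulation} gives fixed polynomials $p,q$ and a deterministic oracle-free predicate $P$ that witness $L\in\Sclass^p[2^{O(\Llog(n)^d)}]$. The predicate $P$ is clocked on all triples $(x,Y,Z)$ with $|Y|=p(n)$ and $|Z|=q(n)$. Here $Y=(y,\beta)$ and $Z=(z,\beta')$ consist of original certificates together with seed bundles.

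\textbf{Step 2.} Apply \cref{lem:s2-in-sigma2} to this same triple $(p,q,P)$ with $T(n)=2^{C\Llog(n)^d}$, for the constant $C$ attached to $L$. This yields
\[
 x\in L\iff\exists Y\,\forall Z\;P(x,Y,Z)=1,
\]
which is a $\Sigma_2^p[T]$ characterization. It also yields
\[
 x\in L\iff\forall Z\,\exists Y\;P(x,Y,Z)=1.
\]
This second equivalence comes from negating $x\notin L\iff\exists Z\,\forall Y\;P=0$, and it is the $\Pi_2^p[T]$ characterization. Taking the union over $C$ gives \eqref{eq:certified-two-block}.

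\textbf{Step 3.} Check the side conditions claimed in \cref{thm:certified-collapse}. All quantified strings are the certificates $Y,Z$, which are polynomial in $n$ because a seed bundle has length $\sum_{N\le Q(n)}A(N)=\poly(n)$. The predicate is deterministic and oracle-free, since the oracle was already replaced by $\mathcal O_{\beta,\beta'}$. Its exponent is $d$ independently of the level of $\PH$, since the clock is inherited unchanged. No external advice appears, because every seed list is quantified.

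\textbf{Step 4.} Confirm that this route bypasses the third-level form. The argument never invokes \eqref{eq:certified-sigma-three}; it uses only \cref{prop:structural} (inside \cref{thm:s2-simulation}) for $\PH=\Sclass^{\NP}$. It also never uses $\Sclass^{\NP}\subseteq\Sigma_3^p\cap\Pi_3^p$.

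The only point needing care is in Step 2. \cref{lem:s2-in-sigma2} needs the predicate clocked on \emph{all} triples, not just on winning certificates. \cref{thm:s2-simulation} provides exactly this, because malformed bundles are parsed as $\bot$-advice and \cref{lem:two-bundles} bounds evaluation time on every bundle. So I expect no real obstacle beyond making this observation explicit.
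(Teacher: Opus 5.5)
Your proposal is correct and follows exactly the paper's route. \Cref{thm:s2-simulation} gives $\PH\subseteq\Sclass^p[2^{O(\Llog(n)^d)}]$ with polynomial-length certificates and an oracle-free predicate, and \cref{lem:s2-in-sigma2} turns that same $(p,q,P)$ into both two-block characterizations with the unchanged clock, never using \eqref{eq:certified-sigma-three} or $\Sclass^{\NP}\subseteq\Sigma_3^p\cap\Pi_3^p$; your closing caution about clocking on all triples is already built into \cref{def:s2-clock}, so it holds automatically once membership in $\Sclass^p[T]$ is established.
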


\begin{remark}[What each bundle protects]
A dishonest bundle cannot produce a false positive answer, because every
positive answer carries a checked witness. It cannot suppress a true
positive answer either, because the honest bundle supplies the witness.
Each prover therefore needs only its own designated advice, and neither
bundle is tested for completeness. A refutation-advice interpreter
alone does not provide the same construction: it supplies existential
certificates, not a procedure that extracts them with all-advice soundness
within the desired clock. Certificates for a predetermined query list do
not cover arbitrary queries depending on the opponent's certificate.
This diagnoses that attempted simulation, not an impossibility theorem
for every use of refutation advice.
\end{remark}

\begin{remark}[Polynomial-time replacement]\label[remark]{rem:s2-poly}
Replace $F$ by a polynomial-time certified search $F_0$ as in
\cref{prop:certified-poly}(ii), replacing $A(N)$ in each bundle by
the corresponding polynomial advice bound of $F_0$. Summing that bound
over $N\le Q(n)$ still gives polynomial-length certificates. The same
proof then gives
$\Sclass^{\NP}\subseteq \Sclass^p$. Since $\SAT\in\Pclass/\poly$
implies $\NP\subseteq\coNP/\poly$, \cref{thm:yap} then gives
$\PH=\Sclass^p\subseteq\ZPP^{\NP}$~\cite{cai-s2zpp},
the known symmetric-alternation form of the Karp--Lipton collapse. The
seed components of the certificates in Theorem~\ref{thm:s2-simulation}
depend only on $n$, but the components $y,z$ do not; no oblivious
version is claimed.
\end{remark}

\section{Weak prediction and the information in exclusions}
\label{sec:learning}

\subsection{Distribution-universal weak prediction already gives circuits}

\begin{proposition}[Weak prediction versus exact circuits]
\label[proposition]{prop:minimax}
For a language $A$, the following are equivalent.
\begin{enumerate}[label=(\roman*)]
 \item $A\in\Pclass/\poly$.
 \item There are a fixed polynomial circuit-size bound $s(n)$ and a
       function $0<\gamma(n)\leq1/2$ with
       $1/\gamma(n)\leq\poly(n)$ such that, for every $n$ and every
       distribution $\mu$ on $\bits^n$, some circuit $C_\mu$ of size
       at most $s(n)$ satisfies
       \[
        \Pr_{x\sim\mu}[C_\mu(x)=\chi_A(x)]
            \geq\tfrac12+\gamma(n).
       \]
\end{enumerate}
No efficient method for finding $C_\mu$ from $\mu$ is assumed.
\end{proposition}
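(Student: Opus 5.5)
The direction (i)$\Rightarrow$(ii) is immediate, and (ii)$\Rightarrow$(i) goes through the minimax theorem followed by majority sampling, i.e.\ boosting in the style of Impagliazzo's hard-core argument or Freund--Schapire. For (i)$\Rightarrow$(ii), take $C_\mu$ to be the exact circuit $C_n$ for $A$ at length $n$. It is independent of $\mu$, has size $s(n)=\poly(n)$, and succeeds with probability $1\geq\tfrac12+\gamma(n)$ for $\gamma=1/2$.

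For (ii)$\Rightarrow$(i), fix $n$ and consider the finite zero-sum game in which the row player picks $x\in\bits^n$ and the column player picks a circuit $C$ of size at most $s(n)$. The column player's payoff is $\ind[C(x)=\chi_A(x)]$. Hypothesis (ii) says that against every mixed row strategy $\mu$ some pure column strategy earns at least $\tfrac12+\gamma$. Von Neumann's minimax theorem then gives a single distribution $\nu$ on circuits of size at most $s(n)$ such that, for every fixed $x$,
\[
 \Pr_{C\sim\nu}[C(x)=\chi_A(x)]\geq\tfrac12+\gamma(n).
\]
The game is finite, since there are finitely many circuits of bounded size, so the theorem applies directly.

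Next, sample $k$ independent circuits $C_1,\ldots,C_k$ from $\nu$. For a fixed $x$, Hoeffding's inequality shows that the majority vote is wrong with probability at most $e^{-2\gamma^2k}$. Choose $k=\lceil (n+1)/\gamma(n)^2\rceil$. Then this probability is below $2^{-n}$, and a union bound over the $2^n$ inputs yields one fixed tuple whose majority is correct on all of $\bits^n$. The majority of $k$ bits has a circuit of size $O(k\log k)$, or $\poly(k)$ by any standard construction. The total size is therefore $k\,s(n)+\poly(k)=\poly(n)$, because $1/\gamma(n)\leq\poly(n)$. This gives $A\in\Pclass/\poly$.

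The only delicate point is the minimax step, namely interchanging ``for every $\mu$ there is a circuit'' into ``one distribution over circuits works for every $x$''. This needs only finiteness of both strategy sets, and the circuits $C_\mu$ do not need to be found efficiently. I expect no further obstacle. The one thing to double-check is that the size bound $s(n)$ is uniform over $\mu$, which the statement provides, so that the strategy set does not depend on $\mu$.
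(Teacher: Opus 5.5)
Your proof is correct and follows the paper's argument: finite minimax yields a single distribution $\nu$ over size-$s(n)$ circuits that is correct with probability at least $\tfrac12+\gamma(n)$ on every input. You then sample $O((n+1)/\gamma(n)^2)$ circuits, apply a Chernoff--Hoeffding bound and a union bound over $\bits^n$, and hardwire the majority. Your bound $e^{-2\gamma^2k}$ on the event that at most $k/2$ of the sampled circuits are correct already covers ties, so you do not need the paper's choice of an odd sample count.
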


\begin{proof}
An exact polynomial-size circuit proves (i)$\Rightarrow$(ii).
For the converse, fix $n$ and consider the finite zero-sum game between
size-$s(n)$ circuits and inputs, with payoff one for a correct answer
and zero for an incorrect one. Finite minimax, in the program--input
form used in~\cite[Section~5.3]{BD13}, supplies a distribution $\nu$
over these circuits such that
\[
 \Pr_{C\sim\nu}[C(x)=\chi_A(x)]\geq\tfrac12+\gamma(n)
 \quad\text{for every }x\in\bits^n.
\]
Choose an odd integer $M=O((n+1)/\gamma(n)^2)$ and sample $M$
circuits independently from $\nu$. For a fixed input, a Chernoff bound
makes the probability that their majority is wrong at most
$\exp(-2M\gamma(n)^2)$. Choose the constant in $M$ to make this less
than $2^{-n-1}$. A union bound over $2^n$ inputs shows that some fixed
sampled majority is correct everywhere. Hardwire these circuits and
the majority gate into one circuit of polynomial size. This proves
$A\in\Pclass/\poly$.
\end{proof}

The hypothesis of \cref{prop:minimax}(ii), specialized to SAT, would
therefore give the small-circuit conclusion itself. It is not a weaker
proved intermediate consequence of superlogarithmic comparability.
By contrast, the distributional theorem in
\cite[Theorem~5.2.1]{BD13} gives success
\begin{equation}
 \frac{2^{k-1}}{2^k-1}
   =\frac12+\frac{1}{2(2^k-1)},
 \label{eq:thesis-advantage}
\end{equation}
with short comparator-based advice. Its advantage becomes smaller than
every inverse polynomial when $k(n)=\omega(\log n)$.

\subsection{A local exclusion-only experiment}

The next experiment concerns only the information in an excluded
vector. The relativized construction in \cref{sec:oracle} separately
addresses oracle-polynomial-time comparators.

\begin{proposition}[Posterior bias from one or several exclusions]
\label[proposition]{prop:posterior}
Let $X$ be uniform in $\bits^k$. Conditional on $X$, sample an excluded
vector $B$ uniformly from $\bits^k\setminus\{X\}$.
After observing $B=b$, the best prediction of any specified coordinate
$X_i$ has success probability
\[
 \frac{2^{k-1}}{2^k-1}.
\]
More generally, after $q<2^k$ conditionally independent such exclusions,
the posterior advantage over $1/2$ for any specified coordinate is at
most
\begin{equation}
 \frac{q}{2(2^k-q)}.
 \label{eq:posterior-q}
\end{equation}
\end{proposition}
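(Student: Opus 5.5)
Both parts are direct Bayesian calculations. The plan is to compute the joint distribution of $X$ and the observed exclusions, identify the posterior on $X$, and read off the coordinate marginal.

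\emph{Single exclusion.} For fixed $b$, the joint probability is
\[
\Pr[X=x,B=b]=2^{-k}\cdot\frac{1}{2^k-1}\,\ind[x\neq b].
\]
The posterior on $X$ given $B=b$ is therefore uniform on $\bits^k\setminus\{b\}$, a set of size $2^k-1$. Fix a coordinate $i$.
\begin{itemize}
\item Among these $2^k-1$ vectors, $2^{k-1}$ have $X_i=1-b_i$.
\item Only $2^{k-1}-1$ have $X_i=b_i$.
\end{itemize}
The Bayes-optimal predictor thus outputs $1-b_i$, and it succeeds with posterior probability exactly $2^{k-1}/(2^k-1)$ for every $b$. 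Averaging over $b$ gives the stated value. Optimality holds because any predictor, even a randomized one, succeeds with probability at most the posterior mass of the heavier value of $X_i$ given the observation.

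\emph{Several exclusions.} Let $b^{(1)},\dots,b^{(q)}$ be the observed exclusions and let $S$ be their set of distinct values, so $|S|=s\le q<2^k$. Conditionally on $X=x\notin S$, the likelihood of the observed sequence is $(2^k-1)^{-q}$, independent of $x$; it is $0$ when $x\in S$. Hence the posterior is uniform on $\bits^k\setminus S$, a set of size $2^k-s\ge 2^k-q>0$.

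For coordinate $i$, let $s_0$ and $s_1$ count the elements of $S$ with $i$-th bit $0$ and $1$. The posterior probability of $X_i=c$ is
\[
\frac{2^{k-1}-s_c}{2^k-s}.
\]
The advantage of the better prediction over $1/2$ is
\[
\frac{|s_1-s_0|}{2(2^k-s)}\le\frac{s}{2(2^k-s)}\le\frac{q}{2(2^k-q)},
\]
because $t\mapsto t/(2^k-t)$ is increasing on $[0,2^k)$. The bound \eqref{eq:posterior-q} holds pointwise for every observation sequence, so it also holds on average.

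The only step needing care is the likelihood computation when repeated exclusions occur. The key point is that each exclusion's distribution depends on $X$ only through the constraint $B\neq X$, with the same normalizer $2^k-1$ for every $x$. That is what makes the posterior uniform on the complement of $S$. At $q=1$ the general bound reproduces the exact advantage $1/(2(2^k-1))$, which serves as a consistency check with \eqref{eq:thesis-advantage}.
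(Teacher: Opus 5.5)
Your proposal is correct and follows the paper's proof essentially step for step: you obtain the same uniform posterior on the complement of the set of distinct exclusions, the same counts $s_0,s_1$, and the same chain $|s_1-s_0|/(2(2^k-s))\le s/(2(2^k-s))\le q/(2(2^k-q))$. Your explicit monotonicity argument for $t\mapsto t/(2^k-t)$ and your Bayes-optimality justification only spell out steps that the paper leaves implicit.
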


\begin{proof}
For a fixed observed $b$, Bayes' rule makes $X$ uniform on the
$2^k-1$ remaining vectors. Among them, $2^{k-1}$ have
$X_i=1-b_i$, and $2^{k-1}-1$ have $X_i=b_i$. Predicting $1-b_i$
is optimal and gives the stated success.

For the general case, let $E$ be the set of distinct observed excluded
vectors and put $e=|E|\leq q$. Conditional independence gives equal
likelihood $(2^k-1)^{-q}$ for every possible $X\notin E$ and likelihood
zero for $X\in E$. Thus the posterior is uniform outside $E$.
Let $e_0,e_1$ count the vectors of $E$ with coordinate $i$ equal to
zero and one, respectively. The difference between the two remaining
coordinate counts is $|e_0-e_1|\leq e$. Hence the posterior advantage
is
\[
 \frac{|e_0-e_1|}{2(2^k-e)}
 \leq\frac{e}{2(2^k-e)}
 \leq\frac{q}{2(2^k-q)}.
\]
\end{proof}

For polynomial $q(n)$ and $k(n)=\omega(\log n)$,
\eqref{eq:posterior-q} is superpolynomially small. This shows why a
local interpretation of arbitrary exclusions need not produce the
inverse-polynomial advantage required by \cref{prop:minimax}.
A stronger SAT-specific argument could still exploit the syntax,
self-reducibility, reductions, or actual program of the comparator;
none of those are present in this experiment.

\subsection{The exact polynomial-time replacement condition}

\begin{proposition}[Polynomial-time certified search versus small circuits]
\label[proposition]{prop:certified-poly}
The following statements are equivalent, without assuming $\hyp$.
\begin{enumerate}[label=(\roman*)]
 \item $\SAT\in\Pclass/\poly$.
 \item There are a fixed polynomial $q$ and a deterministic
       polynomial-time procedure $F_0(C,\beta)$ that returns an
       assignment or $\bot$, is sound under every permitted advice
       $\beta\in\bits^{q(N)}$, and has one designated $\beta_N$
       succeeding on every satisfiable length-$N$ circuit.
 \item Circuit satisfiability has polynomial-size search circuits:
       a circuit family of polynomial size outputs a satisfying
       assignment on every satisfiable input circuit.
\end{enumerate}
In (iii), arbitrary outputs on unsatisfiable inputs may be filtered
by direct witness verification.
\end{proposition}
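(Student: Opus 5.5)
I will prove the cycle (i)$\Rightarrow$(iii)$\Rightarrow$(ii)$\Rightarrow$(i), keeping every step unconditional.

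\emph{(i)$\Rightarrow$(iii).} This is standard search-to-decision self-reducibility. Since $\SAT\in\Pclass/\poly$ and $\CSAT$ reduces to $\SAT$ by a fixed polynomial-time reduction, $\CSAT$ has polynomial-size decision circuits $D_N$ for every input length up to a polynomial bound. The search circuit fixes the input variables one at a time. For each coordinate it forms the restriction $C|_{v_1\cdots v_{i-1},0}$, padded by fixed-slot hardwiring to a fixed length, and asks the appropriate $D$ whether it is satisfiable. It sets the bit to $0$ if so and to $1$ otherwise. Unrolling the at most $N$ stages, each consisting of polynomial-size restriction logic and one decision circuit, gives a polynomial-size circuit that outputs a satisfying assignment whenever $C$ is satisfiable. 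Outputs on unsatisfiable inputs are arbitrary, which is exactly the freedom allowed in (iii).

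\emph{(iii)$\Rightarrow$(ii).} Take $\beta_N$ to be the description of the size-$s(N)$ search circuit $S_N$ for length-$N$ circuits, padded to a fixed length $q(N)=O(s(N)\log s(N))$. The procedure $F_0(C,\beta)$ parses $\beta$ as a circuit, rejecting malformed strings, and evaluates it on the encoding of $C$ to get a candidate $v$. It checks $C(v)=1$ directly and returns $v$ if the check passes and $\bot$ otherwise. This runs in polynomial time on all permitted advice. It is sound for every $\beta$ because of the explicit check, which is the same filtering used in \cref{lem:certified-search}(a). With $\beta_N$ it succeeds on every satisfiable length-$N$ circuit.

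\emph{(ii)$\Rightarrow$(i).} Hardwire the designated $\beta_N$ into the polynomial-time computation $C\mapsto[F_0(C,\beta_N)\neq\bot]$, and compile this into a polynomial-size circuit by the standard clocked simulation. By soundness it outputs $0$ on unsatisfiable circuits, and by designated success it outputs $1$ on satisfiable ones. So $\CSAT\in\Pclass/\poly$, and composing with the formula-to-circuit map gives $\SAT\in\Pclass/\poly$.

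The main obstacle is the bookkeeping in (i)$\Rightarrow$(iii). The restricted circuits at different stages must have controlled encoding lengths so that a single polynomial family of decision circuits covers them. I will handle this with the fixed-width constant-slot hardwiring convention used in the proof of \cref{thm:certified-collapse}, which keeps every restriction at one length $N$. The other two directions are routine.
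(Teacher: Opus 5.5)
Your proof is correct and uses the same ingredients as the paper: search-to-decision self-reduction, a universal evaluator followed by a direct witness check for all-advice soundness, and hardwiring of designated advice to obtain circuits. Your (iii)$\Rightarrow$(ii) and (ii)$\Rightarrow$(i) steps match the paper's.

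The difference is where the self-reduction sits. The paper proves (i)$\Leftrightarrow$(ii) and (ii)$\Leftrightarrow$(iii), and places the self-reduction in the uniform interpreter $F_0$. Its advice is the list of designated decision circuits for \emph{every} length up to a polynomial bound $b(N)$ on the restriction lengths. It then obtains (ii)$\Rightarrow$(iii) by compiling the hardwired witness-producing computation.

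You instead close the cycle (i)$\Rightarrow$(iii)$\Rightarrow$(ii)$\Rightarrow$(i) by running the self-reduction inside a nonuniform circuit. This saves one implication, but it is exactly why you need fixed-length restrictions. A circuit produces fixed-width outputs and cannot dispatch to a decision circuit chosen by a restriction's actual length. The paper's polynomial-time interpreter can compute that length and select the matching advised circuit, so it needs only a length bound, not a padding convention.

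Your fixed-slot convention does the job, with two small corrections. The common restriction length only needs to be some fixed polynomial in $N$, not exactly $N$. The stages beyond a circuit's actual number of inputs, including the ground case, should also be padded or ignored explicitly.
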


\begin{proof}
For (ii)$\Rightarrow$(i), hardwire $\beta_N$ into the polynomial-time
predicate $[F_0(C,\beta_N)\neq\bot]$. Soundness and completeness
make it an exact circuit-satisfiability decider with polynomial advice,
hence with polynomial-size circuits. Restricting to formula inputs
or applying the standard formula-to-circuit map proves (i).
Hardwiring and compiling the witness-producing computation also
proves (ii)$\Rightarrow$(iii).

For (i)$\Rightarrow$(ii), polynomial-time reductions imply
$\CSAT\in\Pclass/\poly$. Fix a polynomial $b$ bounding the
encoding lengths of all restrictions used when self-reducing an
input circuit of length $N$. Supply as advice the designated
circuit-satisfiability decision circuits for all lengths at most
$b(N)$. There are polynomially many such circuits, each of
polynomial size, so the total advice length is some fixed polynomial
$q(N)$.

The interpreter searches one assignment bit at a time. It tests the
zero restriction with the appropriate advised circuit and chooses
zero if that circuit answers yes; otherwise it chooses one. With
the designated decision circuits, a satisfying extension is
preserved at every step on every satisfiable input. At the end it
checks the proposed assignment directly in the original circuit,
returning it only if it satisfies the circuit. For malformed advice
it returns $\bot$; well-formed but incorrect advice can cause failure
but cannot defeat the final check. Circuits supplied as advice are
explicit, acyclic, size-bounded descriptions, so the running time
is polynomial for all permitted advice.

Finally, for (iii)$\Rightarrow$(ii), supply the appropriate search
circuit as advice, evaluate it with a polynomial-time universal
circuit evaluator, and check its output directly. Correct search
circuits give completeness, and the output check gives all-advice
soundness. Fixed padding handles different numbers of assignment
bits and the ground-circuit case.
\end{proof}

This is an equivalence for the \emph{existence} of a polynomial-time
certified-search replacement with \emph{some polynomial} advice bound.
It neither accelerates the particular list-reconstruction program of
\cref{prop:unique} nor promises the specific quadratic or linear
advice bounds of \cref{lem:certified-search}. The equivalence concerns this certified-search route to
$\PH=\Sigma_2^p$.

With such a replacement, the proof of
\eqref{eq:certified-collapse} applies to every $L\in\Sigma_3^p$
with a polynomial-time predicate, without needing $\hyp$ first.
It gives $\Sigma_3^p\subseteq\Sigma_2^p$, and therefore the usual
second-level hierarchy collapse; Remark~\ref{rem:s2-poly} strengthens
this to $\PH=\Sclass^p\subseteq\ZPP^{\NP}$. In this certified-search route,
\cref{prop:certified-poly} and \cref{prop:minimax} identify the
same additional small-circuit condition in different forms.

\subsection{What the resource bounds do not establish}

\paragraph{Lists and evaluation points.}
The choice $m\geq32K$ in \cref{prop:unique} is a convenient sufficient
parameter choice, not a necessity theorem of Sauer--Shelah. With
$m=\Theta(K)$, its general binomial shortlist bound permits
$2^{\Theta(K)}$ surviving values. Such a bound is tight for arbitrary
set systems, but no matching lower bound for the particular
SAT-derived lists is proved here.
If $t$ evaluation points each have at most $\ell$ candidate values,
then the worst-case bound $L\leq t\ell$ makes
$t>2\ell D$ a sufficient inequality for the stated Sudan threshold.
Merely selecting fewer evaluation points supplies no guaranteed
improvement in $\ell$. This diagnoses the cost of the present explicit
shortlisting analysis; it does not rule out other advice,
reconstruction methods, or circuit-specific structure.

\paragraph{Why the uniform randomized exponent is unchanged.}
\Cref{thm:s2-simulation} bounds a deterministic predicate, not a uniform
bounded-error algorithm for the whole symmetric game. The proved uniform
randomized bound remains \cref{thm:ph}, with exponent $d^2$.
The square arises at one identifiable explicit-description step.

For the application of \cref{lem:second} under $\hyp$, choose the
particular SAT algorithm of \cref{cor:random-sat}, not an arbitrary
randomized algorithm satisfying the lemma's time hypothesis. Its only
coins are polynomial-length isolation seeds, followed by deterministic
reconstruction. The amplified tape $r$ of $\widehat B$ in that application
therefore has polynomial length: there are $O(p(n)+1)$ repetitions, each
using polynomially many seed bits. Thus $D_{x,r}$ has a polynomial-length
\emph{succinct program description} $(x,r)$ with the fixed algorithm and
clock implicit, while its explicit circuit description can have size
$2^{O(\Llog(n)^d)}$.

Applying the reconstruction of \cref{prop:unique} to this succinctly
described circuit would require comparing instances of the language
\begin{equation}
 \begin{split}
 \mathcal A=\bigl\{(x,r,h,t,j):\exists u\in\bits^{p(n)}\;
  [&\widehat B(x,u;r)=1\\[-2pt]
   &\land\operatorname{bit}_j(P_u(t)\bmod h)=1]\bigr\},
 \end{split}
 \label{eq:succinct-auxiliary}
\end{equation}
with the same monic-modulus conventions as in \eqref{eq:bit-language}.
The instance fields and the witnesses have polynomial length, but the
verifier uses quasipolynomial time. The usual polynomial-time
Cook--Levin reduction for an $\NP$ verifier is therefore not available
on these succinct inputs. Unrolling this longer computation gives an
\emph{explicit} SAT instance of length $2^{O(\Llog(n)^d)}$; at that
length $\hyp$ guarantees comparison arity only
$O(\Llog(n)^{d^2})$. The field degree in \eqref{eq:field-size} grows
accordingly.

On a tape $r$ satisfying the good-tape condition in
\eqref{eq:common-tape} for a fixed $x$, membership in
\eqref{eq:succinct-auxiliary} agrees with the $\Sigma_2^p$ predicate
\[
 \exists u\ \forall v\quad
 [R(x,u,v)=1\land
  \operatorname{bit}_j(P_u(t)\bmod h)=1].
\]
This is agreement on \emph{good-tape instances}, not a proof that the
full language $\mathcal A$, including bad tapes, belongs to $\Sigma_2^p$.
A hypothesis supplying arity $O(\Llog(n)^d)$ in the succinct input length
for this auxiliary language and its isolation-restricted variants,
or with a carefully specified matching promise, would remove this
explicit-length loss and allow the same reconstruction at exponent $d$.
Such a hypothesis is different from $\hyp$ and has not been derived here.

Neither random-tape length nor sampling the isolation seed list causes
the square. Equation~\eqref{eq:certified-miss} shows that a uniformly
sampled seed list works simultaneously for every satisfiable length-$N$
circuit with probability greater than $3/4$. The separate original
$\NP/\poly$ advice $a_n$ in \cref{cor:ph-advice} is not an isolation seed
list and has no efficient sampler supplied by this proof. The
oracle lower bound of \cref{sec:oracle} concerns the randomized SAT
exponent $d$, rather than the level-independent PH exponent $d^2$.

\paragraph{Why a coNP advice predicate is not interchangeable.}
Suppose an advised coNP decision rule represents satisfiability as
$D_a(C)=1\iff\forall u\ V(C,a,u)=1$. Under arbitrary candidate
advice, the condition that this rule never accepts an unsatisfiable
circuit is
\[
 \forall C\ \exists u,v\quad
       [V(C,a,u)=0\ \lor\ C(v)=1],
\]
with all strings polynomially bounded. This is a $\Pi_2^p$ condition,
not a universal polynomial-time verification supplied by the advice
theorem. Direct checking of the positive witnesses returned by $F$
avoids that condition by using positive certificates that can be
checked directly.

\section{A relativized limit on the reconstruction exponent}
\label{sec:oracle}

The reconstruction and isolation arguments in
\cref{prop:unique,cor:random-sat} relativize when the reductions compile
explicit oracle computations and every satisfiability instance is
interpreted relative to the same oracle. The polynomial length overhead
preserves the power of $\Llog(n)$.
For each fixed admissible arity function $K$, the construction below
provides an oracle in which the complete set is $K$-membership
comparable and $\NP=\coNP$, but a unary language in NP requires
bounded-error time on the scale of $2^{K(n)}$. For
$K(n)=\Theta(\Llog(n)^d)$, this gives a matching logarithmic-power
lower bound for the randomized SAT upper bound.

\subsection{Conventions}

For an oracle $A$, let
\[
  \mathrm{K}^A=\bigl\{\langle M,x,1^t\rangle:
  M^A \text{ has an accepting path on } x
  \text{ of length at most } t\bigr\},
\]
where $M$ ranges over nondeterministic oracle machines.
Fix a prefix-free code for $M$ and encode the triple as
$\operatorname{code}(M)1^t0x$. This is polynomial-time parseable and
has length $|\operatorname{code}(M)|+t+1+|x|\ge t$, so a computation
certifying membership of an instance of length at most $m$ queries only
strings of length at most $m$. Every language in $\mathrm{NP}^A$ reduces
to $\mathrm{K}^A$ by the oracle-independent map
$x\mapsto\langle M,x,1^{p(|x|)}\rangle$. Relativized membership comparability is the definition in
\cref{sec:intro}, applied to the language $\mathrm K^A$, with a
deterministic polynomial-time $A$-oracle comparator. Oracle query tapes
are initially blank, and writing a query is charged to the running time.
Malformed instance encodings are outside $\mathrm K^A$.

An \emph{arity function} is a function $K:\mathbb N\to\mathbb N$ that is
computable in time polynomial in $n$, nondecreasing, satisfies
$K(n)\ge 2$, and satisfies $K(n)\le n/2$ for all sufficiently large $n$.
Every such $K$ is polynomially bounded (including its finitely many
exceptional values). Write $\hyp^A$ for the complete-set formulation
with threshold $\lceil c_0\Llog(n)^d\rceil$. For this application take
$K(n)=\max\{2,\lceil c_0\Llog(n)^d\rceil\}$ and handle the finitely
many smaller thresholds directly. Constants in an exact threshold refer
to this fixed complete-set encoding; polynomial reductions between
encodings preserve the logarithmic power $d$.

\emph{Oracle layout.} The oracle has the form
\[
  A=\{0u:u\in R\}\ \cup\
  \bigl\{1\langle 1^m,\tau,i\rangle:\ Q_m(\tau)_i=1\bigr\},
\]
where $R\subseteq\{0,1\}^*$, and for each $m$ the \emph{level-$m$ table}
$Q_m$ assigns a vector $Q_m(\tau)\in\{0,1\}^{K(m)}$ to every
$K(m)$-tuple $\tau$ of instance encodings of length at most $m$. The
pairing is polynomial-time parseable, has length polynomial in
$m+|\tau|+\log(i+1)$, and satisfies
$|\langle 1^m,\tau,i\rangle|\ge m+1$. Thus each level-$m$ table
string has length at least $m+2$. The fields must obey the indicated
arity and instance-length bounds and $1\le i\le K(m)$; all other
strings have oracle value zero.

\begin{lemma}[Levels]\label[lemma]{lem:levels}
Whether an instance of length at most $m$ belongs to $\mathrm{K}^A$ is
determined by $R\cap\{0,1\}^{\le m}$ and the tables $Q_{m'}$ with
$m'<m$.
\end{lemma}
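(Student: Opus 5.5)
The plan is to fix an instance $y=\operatorname{code}(M)1^t0x$ of length at most $m$, with $M$, $x$, $t$ parsed in polynomial time. Malformed encodings are outside $\mathrm K^A$ regardless of $A$, so they need no further argument. For a well-formed instance, membership asks whether some nondeterministic path of $M^A$ on $x$ of length at most $t$ accepts. The claim to establish is that every such path depends on $A$ only through strings covered by $R\cap\bits^{\le m}$ and the tables $Q_{m'}$ with $m'<m$.

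The first step is a length bound on queries. A path of length at most $t$ writes each query on an initially blank query tape, and writing is charged to the running time. So every query string $s$ has $|s|\le t\le |y|\le m$, and this holds on every path, accepting or not.

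The second step classifies such a query using the oracle layout. There are three cases.
\begin{enumerate}[label=(\alph*)]
 \item If $s=0u$, then $|u|<m$, so the answer $\ind[u\in R]$ is determined by $R\cap\bits^{\le m}$; indeed $R\cap\bits^{<m}$ suffices.
 \item If $s=1\langle 1^{m'},\tau,i\rangle$ is a well-formed level-$m'$ table string, then the layout gives $|s|\ge m'+2$. Hence $m'\le m-2<m$, and the answer is $Q_{m'}(\tau)_i$.
 \item Every other string has oracle value zero independently of $R$ and of the tables. Deciding whether a string is well formed, including the arity and instance-length bounds on $\tau$ and the range $1\le i\le K(m')$, is a syntactic test that does not consult $A$.
\end{enumerate}

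The third step compares two oracles $A$ and $A'$ that agree on $R\cap\bits^{\le m}$ and on $Q_{m'}$ for all $m'<m$. The goal is to show that the set of accepting paths of length at most $t$ is the same under both. I would argue by induction on the step number along a fixed choice sequence: the configurations coincide, so the next query is the same string, and by the second step its answer is the same. Hence $y\in\mathrm K^A\iff y\in\mathrm K^{A'}$.

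The main obstacle is the bookkeeping on lengths rather than any combinatorial difficulty. One must confirm that the encoding really gives $|y|\ge t$, which holds because the $1^t$ field is included. One must also confirm that the pairing bound $|\langle 1^{m'},\tau,i\rangle|\ge m'+1$ holds for every well-formed table string, including large $\tau$ and $i$. The strict inequality $m'<m$, obtained in case (b), is exactly what will later let the tables $Q_m$ be defined level by level with respect to instances that are already fixed.
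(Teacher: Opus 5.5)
Your proposal is correct and follows the paper's own argument: every query on a path of length at most $t\le|y|\le m$ has length at most $m$, and a table string of that length encodes a level $m'$ with $m'+2\le m$. Your extra bookkeeping spells out what the paper's two-line proof leaves implicit: the step-by-step comparison of two agreeing oracles, the observation that $R\cap\{0,1\}^{<m}$ already suffices, and the purely syntactic nature of the well-formedness test.
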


\begin{proof}
All queries have length at most $m$. A table string of length at most
$m$ encodes a level $m'$ with $m'+2\le m$.
\end{proof}

For a level-$m$ tuple $\tau=(y_1,\dots,y_{K(m)})$ write
\[
  \chi_m(\tau)=\bigl(\chi_{\mathrm K^A}(y_1),\dots,
  \chi_{\mathrm K^A}(y_{K(m)})\bigr).
\]
By Lemma~\ref{lem:levels} this vector is determined by
$R\cap\{0,1\}^{\le m}$ and the tables below level $m$.

\subsection{The random oracle}

\begin{definition}[Distribution]\label[definition]{def:oracle-dist}
Choose the following independently.
\begin{itemize}
\item For each $n$: a uniform bit $\beta_n$ and a uniform
$s_n\in\{0,1\}^n$. Put $R\cap\{0,1\}^n=\{s_n\}$ if $\beta_n=1$ and
$R\cap\{0,1\}^n=\emptyset$ otherwise.
\item For each level $m$ and each level-$m$ tuple $\tau$: a uniform
vector $U_m(\tau)\in\{0,1\}^{K(m)}$ and a uniform nonzero vector
$W_m(\tau)\in\{0,1\}^{K(m)}$.
\end{itemize}
Define the tables by induction on $m$:
\[
  Q_m(\tau)=
  \begin{cases}
    U_m(\tau) & \text{if } U_m(\tau)\ne\chi_m(\tau),\\
    \chi_m(\tau)\oplus W_m(\tau) & \text{otherwise.}
  \end{cases}
\]
By Lemma~\ref{lem:levels}, $\chi_m(\tau)$ is determined by $R$ and the
tables already defined, so the induction is well founded.
\end{definition}

For each $m$ let
\begin{align*}
  \mathcal G_m&=\sigma\bigl(R\cap\{0,1\}^{\le m},\ U_{m'},W_{m'}:m'<m\bigr),\\
  \mathcal F_m&=\sigma\bigl(R\cap\{0,1\}^{<m},\ U_{m'},W_{m'}:m'<m\bigr).
\end{align*}
The tables below level $m$ are measurable with respect to
$\mathcal F_m$, and every vector $\chi_m(\tau)$ is measurable with
respect to $\mathcal G_m$.

\begin{lemma}[Conditional law of the table]\label[lemma]{lem:table-law}
Conditioned on $\mathcal G_m$, the entries $Q_m(\tau)$ are independent,
and each is uniform on $\{0,1\}^{K(m)}\setminus\{\chi_m(\tau)\}$.
\end{lemma}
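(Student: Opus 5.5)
The plan is to separate two facts: the table entries $Q_m(\tau)$ are functions of the level-$m$ variables $(U_m(\tau),W_m(\tau))$ together with $\chi_m(\tau)$, which is $\mathcal G_m$-measurable; and these level-$m$ variables are independent of $\mathcal G_m$. Once both are in place, we fix the conditioning information and compute the law of a single entry.

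\emph{Step 1 (independence).} By \cref{def:oracle-dist}, the families $\{U_m(\tau),W_m(\tau)\}_\tau$ at level $m$ are chosen independently of each other, of $R$, and of the table variables at levels $m'\neq m$. The $\sigma$-algebra $\mathcal G_m$ is generated by $R\cap\bits^{\le m}$ and by $U_{m'},W_{m'}$ with $m'<m$. Hence the collection $\{(U_m(\tau),W_m(\tau))\}_\tau$ is independent of $\mathcal G_m$, and its members are mutually independent.

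\emph{Step 2 (dependence on $\chi_m$ only).} By \cref{lem:levels} and the remark following it, each $\chi_m(\tau)$ is $\mathcal G_m$-measurable. The point to verify is that \cref{lem:levels} only uses tables $Q_{m'}$ with $m'<m$, and these are themselves measurable with respect to $\mathcal F_m\subseteq\mathcal G_m$ by the well-founded induction. Therefore $Q_m(\tau)=\Phi(U_m(\tau),W_m(\tau),\chi_m(\tau))$ for a fixed deterministic map $\Phi$. Conditioning on $\mathcal G_m$ freezes every $\chi_m(\tau)$ at some value $c_\tau$. The entries $Q_m(\tau)=\Phi(U_m(\tau),W_m(\tau),c_\tau)$ are then functions of independent random pairs, so they are conditionally independent. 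This follows from the standard lemma on functions of independent variables together with a $\mathcal G_m$-measurable argument; one can substitute because the pairs are independent of $\mathcal G_m$.

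\emph{Step 3 (law of one entry).} Fix $c=c_\tau$ and write $k=K(m)$. For any $v\ne c$,
\[
\Pr[Q=v]=\Pr[U=v]+\Pr[U=c]\Pr[W=v\oplus c]=2^{-k}+2^{-k}\cdot\frac{1}{2^k-1}=\frac{1}{2^k-1}.
\]
The second term uses that $v\oplus c$ is nonzero. On the other hand, $Q=c$ is impossible: when $U\ne c$ we have $Q=U\ne c$, and when $U=c$ we have $Q=c\oplus W\ne c$ because $W\ne0$. So each entry is uniform on $\bits^k\setminus\{c\}$.

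The main obstacle is the measurability bookkeeping in Step 2. One must confirm that the well-founded induction makes each $\chi_m(\tau)$ $\mathcal G_m$-measurable without pulling in any level-$m$ randomness. This holds because level-$m$ table strings have length at least $m+2$ and are never queried by instances of length at most $m$.
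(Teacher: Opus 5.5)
Your proof is correct and follows the paper's own argument. Both use the independence of the level-$m$ pairs $(U_m(\tau),W_m(\tau))$ from $\mathcal G_m$ and from each other, together with the $\mathcal G_m$-measurability of $\chi_m(\tau)$, and then compute the law of one entry. The only difference is cosmetic: you compute $\Pr[Q=v]=1/(2^k-1)$ directly, whereas the paper notes that both branches, $U_m(\tau)\neq\chi$ and $U_m(\tau)=\chi$, are already uniform on the complement of $\chi$.
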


\begin{proof}
The pairs $(U_m(\tau),W_m(\tau))$ are independent of $\mathcal G_m$ and
of one another. Given $\mathcal G_m$, the value $\chi=\chi_m(\tau)$ is
fixed. On the event $U_m(\tau)\ne\chi$, the entry is uniform on the
complement of $\chi$. On the complementary event it equals
$\chi\oplus W_m(\tau)$, which is also uniform on that complement.
\end{proof}

\subsection{The theorem}

\begin{theorem}[A relativized limit]\label{thm:oracle}
Let $K$ be an arity function. With probability~$1$ over the
distribution of Definition~\ref{def:oracle-dist}, the oracle $A$
satisfies:
\begin{enumerate}
\item[(a)] $\mathrm{K}^A$ is $K(n)$-membership comparable by a
deterministic polynomial-time oracle comparator;
\item[(b)] $\mathrm{NP}^A=\mathrm{coNP}^A$;
\item[(c)] for every time-constructible $T$ with $T(n)\ge n$ and
$T(n)\,2^{-K(n)}\to0$,
\[
  L_R=\bigl\{1^n: R\cap\{0,1\}^n\ne\emptyset\bigr\}
  \in\mathrm{NP}^A\setminus\mathrm{BPTIME}^A\bigl(T(n)\bigr).
\]
\end{enumerate}
Part~(a) holds for every oracle in the support of the distribution.
\end{theorem}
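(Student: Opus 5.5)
We prove the three parts separately; (a) is deterministic and (b), (c) are probability-one statements over the distribution of Definition~\ref{def:oracle-dist}.

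\textbf{Part (a).} The comparator reads off the table. On a tuple $(y_1,\dots,y_t)$ with $t\ge K(n)$ and $n=\max_i|y_i|$, it takes the first $K(n)$ entries as the level-$n$ tuple $\tau$. It queries the strings $1\langle 1^n,\tau,i\rangle$ for $i=1,\dots,K(n)$, obtaining $Q_n(\tau)$. It outputs $Q_n(\tau)$ on those coordinates and arbitrary bits on the rest. By construction $Q_n(\tau)\ne\chi_n(\tau)$ for every outcome of $U,W$, so the output differs from the true membership vector on every oracle in the support. The queries have length polynomial in the tuple encoding, since $K(n)$ is polynomially bounded and polynomial-time computable. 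Malformed components are outside $\mathrm K^A$ by convention and can be included in $\tau$ unchanged.

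\textbf{Part (b).} We show $\overline{\mathrm K^A}\in\mathrm{NP}^A$, which gives $\mathrm{NP}^A=\mathrm{coNP}^A$. For an instance $y$ of length $m$, take the level-$m'$ tuple $\tau_y=(y,y_2,\dots,y_{K(m')})$ for a suitable $m'\ge m$, padded with instances already known to be in $\mathrm K^A$ whose acceptance witnesses can be guessed. For example, use trivially accepting instances, which have length at most $m'$ when $K(m')\le m'/2$. Then $Q_{m'}(\tau_y)$ excludes one vector. If every padding coordinate is certified to be $1$, the excluded vector's first bit only tells us which pattern is absent, so a single exclusion is not decisive.

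The sharper route is to use tuples that contain $y$ together with $K-1$ certified members, and many of them: ranging over choices of padding members gives many independent uniformly excluded vectors. With high probability some table entry excludes $(1,1,\dots,1)$ exactly when $y\notin\mathrm K^A$. If $y\in\mathrm K^A$, then $(1,\dots,1)$ is the true vector and is never excluded. If $y\notin\mathrm K^A$, the true vector is $(0,1,\dots,1)$, and each entry independently equals $(1,\dots,1)$ with probability $1/(2^{K}-1)$. A coNP-style witness for $y\notin\mathrm K^A$ is therefore a padded tuple $\tau$ whose table entry equals all-ones, together with membership witnesses for the padding.

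For this to succeed with probability one on every $y$, there must be enough padding choices: at least $2^{K}\cdot\poly(m)$ tuples at some polynomially larger level $m'$. These are available because the number of level-$m'$ instances is exponential in $m'$. Each failing $y$ has probability $\exp(-\Omega(m'))$, and a Borel--Cantelli bound over all $y$ completes the argument, choosing $m'$ polynomial in $m$ with enough tuples to make the failure probabilities summable. Lemma~\ref{lem:table-law} supplies the conditional independence of the entries given $\mathcal G_{m'}$. The witness is checkable by an NP machine because the padding members have short accepting paths.

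\textbf{Part (c).} Membership $L_R\in\mathrm{NP}^A$ holds by guessing $s_n$ and querying $0s_n$. The lower bound is the main obstacle: the tables at levels $\ge n$ leak information about $\beta_n$, since $\chi_{m}(\tau)$ depends on $R\cap\{0,1\}^n$ through $\mathrm K^A$. Fix a $T$-time machine $M$ and input $1^n$, and condition on everything except $(\beta_n,s_n)$ and the table randomness at levels $\ge n$.

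\begin{itemize}
\item \emph{Direct queries.} Queries $0u$ hit $s_n$ with probability at most $T(n)/2^n$.
\item \emph{Table queries.} A table query at level $m\ge n$ returns a vector uniform on the complement of $\chi_m(\tau)$. If $\chi_m$ changes when $\beta_n$ flips, the two answer distributions have statistical distance at most $1/(2^{K(m)}-1)\le 2/2^{K(n)}$. Monotonicity of $K$ is used here.
\item \emph{Hybrid.} Couple the runs with $\beta_n=0$ and $\beta_n=1$, query by query (adaptively, via Lemma~\ref{lem:table-law}, level by level). The total variation distance is then $O(T(n)2^{-K(n)}+T(n)2^{-n})\to0$.
\end{itemize}

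Consequently $M$ cannot have bounded error on $1^n$ for more than a vanishing fraction of oracles at large $n$. Independence across sufficiently sparse $n$ (or the standard measure-one argument combined with a union over the countably many machines) gives probability one. The delicate point is handling the dependence of higher-level $\chi$ on $R_n$ through the induction; the coupling handles it because each answer is resampled from a distribution that is nearly uniform regardless of $\chi$.
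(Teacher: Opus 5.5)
\textbf{Verdict: parts (a) and (b) are essentially the paper's argument, but part (c) has a genuine gap in the coupling step, together with a misstated per-length bound.}

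\textbf{Parts (a) and (b).} These follow the paper's proof. The comparator reads off $Q_n(\tau)$. A nonmembership certificate is a tuple of trivially accepting companions together with $x$ whose table entry is $1^{k}$. The paper stays at level $m=|x|$, with companions $z_w=\langle M_{\mathrm{acc}},w,1^1\rangle$ and $|w|=m-c_1$. Then $K(m)\le m/2$ already gives $N_m2^{-k}\ge 2^{m/2-c_1}/m$, so no larger level $m'$ is needed. Note also that Borel--Cantelli gives only ``all but finitely many $y$''. Those exceptions, and the short lengths, must be hardwired into the verifier, which you omit.

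\textbf{The coupling in (c).} Your per-query distance $1/(2^{K(m)}-1)$ and the final order $T(n)(2^{-K(n)}+2^{-n})$ are the right quantities. The justification, however, does not work. Coupling ``query by query, via Lemma~\ref{lem:table-law}, level by level'' needs each answer to be uniform on $\{0,1\}^{K(m)}\setminus\{\chi_m(\tau)\}$ conditionally on the run's history. Lemma~\ref{lem:table-law} gives this only conditionally on $\mathcal G_m$.

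An adaptive machine may first query an entry $\sigma$ at a level $\ell>m$. The corrected value $Q_\ell(\sigma)$ avoids $\chi_\ell(\sigma)$, and $\chi_\ell(\sigma)$ can depend on $Q_m(\tau)$ for $n\le m<\ell$. So the observed answer changes the conditional law of $Q_m(\tau)$. Similarly, if you couple the two worlds by sharing $U,W$, the event that the corrected-oracle run touches $\sigma$ is not independent of $U_\ell(\sigma)$.

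The missing idea is the paper's raw hybrid $\tilde A$, in which every entry at level $\ell\ge n$ is replaced by $U_\ell(\tau)$:
\begin{itemize}
\item In $\tilde A$ the table answers at levels $\ge n$ are raw $U$ values, independent of $\beta_n$.
\item The true run coincides with the hybrid run until the hybrid run queries a collision, that is, an entry with $U_\ell(\sigma)=\chi_\ell(\sigma)$.
\item Condition on every primitive variable other than $U_\ell(\sigma)$. This fixes both the event that $\sigma$ is the $j$-th touched entry and $\chi_\ell(\sigma)$, which depends only on levels below $\ell$. This gives the collision bound $t(n)2^{-K(n)}$.
\item The hybrids for $\beta_n=0$ and $\beta_n=1$ then differ only at $0s_n$.
\end{itemize}

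\textbf{The per-length bound.} You conclude that $M$ has bounded error on $1^n$ for only a vanishing fraction of oracles. As stated this is false: a machine that always rejects is correct whenever $\beta_n=0$, which happens with probability $1/2$. What the argument yields is
\[
\Pr[C_n\mid\mathcal F_n]\le\tfrac12+6t(n)\bigl(2^{-K(n)}+2^{-n}\bigr).
\]
So the sparse-length step is essential, not an optional alternative to a ``standard measure-one argument''. The events at different lengths are also not independent, since higher tables are corrected using $\chi$ vectors that depend on lower $R$. You therefore need the conditional form used in the paper. Choose $n_{i+1}>t(n_i)$, so that $C_{n_1},\dots,C_{n_{i-1}}$ are $\mathcal F_{n_i}$-measurable. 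Then chain the conditional bounds $\Pr[C_{n_i}\mid\mathcal F_{n_i}]\le 3/4$ to get $\Pr[\bigcap_{j\le i}C_{n_j}]\le(3/4)^i$. Finish with the countable union over clocked machines.
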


\begin{proof}[Proof of~(a)]
On a tuple $(y_1,\dots,y_t)$ with $n=\max_i|y_i|$ and $t\ge K(n)$, the
comparator forms the level-$n$ tuple $\tau=(y_1,\dots,y_{K(n)})$,
queries the $K(n)$ bits of $Q_n(\tau)$, and outputs them followed by
$t-K(n)$ zeros. The first $K(n)$ coordinates already differ from the
true membership vector, since every table entry avoids
$\chi_n(\tau)$. Repeated instances are allowed, and the running time is
polynomial in the tuple length because $K$ is polynomially bounded
and the query encoding has polynomial length.
\end{proof}

\begin{proof}[Proof of~(b)]
Fix a nondeterministic oracle machine $M_{\mathrm{acc}}$ that accepts
within one step without making a query, and put $z_w=\langle M_{\mathrm{acc}},w,1^1\rangle$. Then
$z_w\in\mathrm{K}^A$ for every oracle, and $|z_w|=|w|+c_1$ for a
constant $c_1$.

\emph{Certificates.} For sufficiently large $m$, let $x$ be an instance of length $m$ and put
$k=K(m)$. A certificate for $x\notin\mathrm K^A$ consists of $k-1$
distinct strings $w_1,\dots,w_{k-1}$ of length $m-c_1$ such that
$Q_m(z_{w_1},\dots,z_{w_{k-1}},x)=1^{k}$. The verifier checks the
format and makes $k$ oracle queries.

\emph{Soundness, for every oracle in the support.} The companions
belong to $\mathrm K^A$. If $x\in\mathrm K^A$, the true vector would be
$1^k$, which the table entry avoids. Hence the certificate forces
$x\notin\mathrm K^A$.

\emph{Completeness, with probability~$1$.} Partition the $2^{m-c_1}$
companion strings into
$N_m=\lfloor 2^{m-c_1}/(k-1)\rfloor$ disjoint groups of size $k-1$. Fix
$x$ of length $m$. Its membership and all the resulting truth vectors
are $\mathcal G_m$-measurable. By Lemma~\ref{lem:table-law}, conditioned
on $\mathcal G_m$ with $x\notin\mathrm K^A$, the $N_m$ entries are
independent, and each equals $1^k$ with probability $1/(2^k-1)$. Thus
\[
  \Pr\bigl[x\notin\mathrm K^A \text{ and } x \text{ has no certificate}\bigr]
  \le \Bigl(1-\frac{1}{2^k}\Bigr)^{N_m}
  \le \exp\bigl(-N_m2^{-k}\bigr).
\]
For large $m$, $k\le m/2$ and
$2^{m-c_1}/(k-1)\ge2$, so
\[
 N_m2^{-k}\ge
 \frac{2^{m-c_1-k}}{2(k-1)}
 \ge\frac{2^{m/2-c_1}}{m}.
\] A union bound over the at most $2^{m+1}$
instances of length $m$ gives a summable sequence in $m$. By the
Borel--Cantelli lemma, with probability~$1$ only finitely many
unsatisfiable instances lack certificates; hardwire their finite set into the verifier, together with a truth
table for the finitely many short lengths. This is a fixed machine for
each selected oracle, not an advice string varying with the input length.

Hence $\overline{\mathrm K^A}\in\mathrm{NP}^A$. Since every
$\mathrm{coNP}^A$ language reduces to $\overline{\mathrm K^A}$ by an
oracle-independent map, $\mathrm{coNP}^A\subseteq\mathrm{NP}^A$, and
therefore $\mathrm{NP}^A=\mathrm{coNP}^A$.
\end{proof}

For part~(c), clearly $L_R\in\mathrm{NP}^A$: guess $s$ and query $0s$.
The lower bound rests on one estimate per input length.

\begin{lemma}[One input length]\label[lemma]{lem:one-length}
Let $M$ be a probabilistic oracle machine that halts within $t(n)\ge n$
steps on every input, oracle and coin sequence. Let $C_n$ be the event
that $M^A(1^n)$ outputs $\chi_{L_R}(1^n)$ with probability at least
$2/3$ over its coins. Then, for every $n$ and almost surely,
\[
  \Pr\bigl[C_n\mid\mathcal F_n\bigr]
  \le \frac12+6\,t(n)\bigl(2^{-K(n)}+2^{-n}\bigr).
\]
\end{lemma}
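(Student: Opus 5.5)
The plan is a coupling argument. Fix the conditioning on $\mathcal F_n$ and compare two oracle worlds built from the same remaining randomness. World~$0$ forces $\beta_n=0$. World~$1$ forces $\beta_n=1$ with the sampled $s_n$. Both worlds use the same $R$ at lengths greater than $n$ and the same vectors $U_m(\tau),W_m(\tau)$ for $m\ge n$, and the tables in each world are defined by the recursion of Definition~\ref{def:oracle-dist}. Write $A^0,A^1$ for the resulting oracles and $C_n^0,C_n^1$ for the events ``$M^{A^b}(1^n)$ is correct with probability at least $2/3$'' with the correct answer $b$. Since $\beta_n$ is uniform and independent of everything else,
\[
 \Pr[C_n\mid\mathcal F_n]=\tfrac12\bigl(\Pr[C_n^0\mid\mathcal F_n]+\Pr[C_n^1\mid\mathcal F_n]\bigr)\le\tfrac12+\tfrac12\Pr[C_n^0\wedge C_n^1\mid\mathcal F_n].
\]
On $C_n^0\wedge C_n^1$, the acceptance probabilities of $M^{A^0}(1^n)$ and $M^{A^1}(1^n)$ over the coins differ by at least $1/3$. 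Let $\Delta$ be the probability over the coins that the two runs diverge. Then $C_n^0\wedge C_n^1$ forces $\Delta\ge1/3$, and Markov's inequality gives $\Pr[C_n^0\wedge C_n^1\mid\mathcal F_n]\le3\,\E[\Delta\mid\mathcal F_n]$.

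It therefore suffices to show $\E[\Delta\mid\mathcal F_n]\le t(n)\bigl(2\cdot2^{-K(n)}+2^{-n}\bigr)$ for each fixed coin sequence; averaging over coins, this gives the stated bound with room to spare. Fix the coins and run $M$ in world~$0$, stopping at the first query whose answer differs in the two worlds. Up to that query the two runs agree. I will bound, query by query, the probability that the $j$-th query is the first differing one, and then sum over at most $t(n)$ queries. The oracle strings on which $A^0$ and $A^1$ can differ are of two kinds. The first is the string $0s_n$. The second is a table string of some level $m\ge n$ whose entry $Q_m(\tau)$ differs between the worlds. Tables below level $n$ and $R$ at other lengths coincide, because they are $\mathcal F_n$-measurable or shared.

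For the string $0s_n$: in world~$0$, the answers before the query do not depend on $s_n$. They cannot, for two reasons. World~$0$ ignores $s_n$, and the tables in world~$0$ are computed from $\chi$-vectors that do not involve $s_n$. So the query equals $0s_n$ with probability at most $2^{-n}$.

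For a table entry $Q_m(\tau)$ with $m\ge n$: in both worlds $Q_m(\tau)=U_m(\tau)$ unless $U_m(\tau)$ equals $\chi^0_m(\tau)$ or $\chi^1_m(\tau)$. Each of these $\chi$-vectors is determined by $R$ and tables strictly below level $m$. Hence they are independent of the pair $(U_m(\tau),W_m(\tau))$. At the first query of this entry, the world-$0$ history has not revealed $U_m(\tau)$. The probability of a collision is then at most $2\cdot2^{-K(m)}\le2\cdot2^{-K(n)}$, using that $K$ is nondecreasing.

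The main obstacle is making the ``fresh at first query'' claim rigorous under adaptivity. The history may already contain other entries of level $m$ and entries of higher levels, and those higher entries are influenced, through their $\chi$-vectors, by $Q_m$-values including possibly $Q_m(\tau)$. I would handle this as follows. Condition on the history together with all $U,W$ except the pair at $(m,\tau)$. Then observe that on the event that no divergence has yet occurred, the world-$0$ history depends on $(U_m(\tau),W_m(\tau))$ only through $Q^0_m(\tau)$. It reveals $Q^0_m(\tau)$ only if this entry was queried earlier, and we are at its first query. Higher-level $\chi$-vectors depend on $Q^0_m(\tau)$ only when a computation queries this entry, which would already have counted as such a query in the history, or else the dependence is through a $\mathrm K^A$ membership whose witnessing paths were not exposed. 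If this becomes delicate, I would fall back on a cruder argument: union-bound over the at most $t(n)$ distinct entries and charge each collision event unconditionally to its own fresh $U_m(\tau)$. This still gives the per-query bound $2\cdot2^{-K(n)}$ and loses only constants, which the factor $6$ absorbs.
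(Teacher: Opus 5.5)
Your reduction is sound up to the table estimate. The symmetrization over $\beta_n$, the observation that $C_n^0\wedge C_n^1$ forces $\Delta\ge1/3$, Markov's inequality, and the $0s_n$ bound are all correct, and the constants fit. (For the last of these, note that the world-$0$ oracle is a function of variables independent of $s_n$.)

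The gap is exactly at the step you flagged, and your resolution rests on a false claim. Condition, as you do, on all primitive variables except $(U_m(\tau),W_m(\tau))$. Suppose $M$ earlier queried a higher entry $Q^0_{m'}(\tau')$ with $m'>m$. Its answer is $U_{m'}(\tau')$ or $\chi^0_{m'}(\tau')\oplus W_{m'}(\tau')$, according to whether $U_{m'}(\tau')=\chi^0_{m'}(\tau')$. The vector $\chi^0_{m'}(\tau')$ can depend on $Q^0_m(\tau)$, because some instance $\langle M',x,1^{t'}\rangle$ in $\tau'$ may have a machine $M'$ that reads that entry. Those reads are queries of $M'$, not of $M$, so they are not ``already counted in the history''. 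Moreover, such a corrected answer need not cause a divergence (for instance when $\chi^0_{m'}(\tau')=\chi^1_{m'}(\tau')$), so it can lie in the pre-divergence history. Hence the history before the first query of $(m,\tau)$ can carry information about $Q^0_m(\tau)$.

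This matters because
\[
\{U_m(\tau)=\chi^1_m(\tau)\}=\{Q^0_m(\tau)=\chi^1_m(\tau)\}\cap\{U_m(\tau)\ne\chi^0_m(\tau)\}.
\]
So the world-$1$ collision, checked along the world-$0$ run, is essentially a function of exactly the value that run may have partially learned. The conditional bound $2\cdot2^{-K(m)}$ at the first query is therefore not established. The fallback does not repair this. The touched entries are chosen adaptively, so there is no fixed list of $t(n)$ entries to union-bound over. Summing $\Pr[\tau_j=\sigma\wedge\text{collision at }\sigma]$ over all $\sigma$ requires the factorization through $\Pr[\tau_j=\sigma]$, and that factorization is what is missing.

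The missing idea is to find the first bad query along a run whose history provably ignores the entry in question. The paper uses the raw hybrid $\tilde A$, in which every entry at a level $\ge n$ is answered by its raw value $U_\ell(\cdot)$. There, earlier answers to higher entries are their own independent $U$ values, not corrected tables. So conditioning on everything except $U_\ell(\sigma)$ fixes both the event $\{\tau_j=\sigma\}$ and the real vector $\chi_\ell(\sigma)$, which gives exactly $2^{-K(\ell)}\Pr[\tau_j=\sigma]$. The real run agrees with the hybrid run until the hybrid run first touches a collision. The paper then compares $\tilde A_1$ with $\tilde A_0$ at $0s_n$, and uses the independence of $\tilde a_0$ from $\beta_n$ in place of your symmetrization.

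The same device repairs your two-world version. Run on $\tilde A_0$ until it first queries either $0s_n$ or an entry with $U_\ell(\sigma)\in\{\chi^0_\ell(\sigma),\chi^1_\ell(\sigma)\}$. Before that query both real runs coincide with this raw run, and the conditioning argument yields your bound $t(n)\bigl(2\cdot2^{-K(n)}+2^{-n}\bigr)$.

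Alternatively, bound each world's own collisions along its own run. Given everything else, $A^b$ depends on $(U_m(\tau),W_m(\tau))$ only through $Q^b_m(\tau)$. The event $\{U_m(\tau)=\chi^b_m(\tau)\}$ has conditional probability $2^{-K(m)}$ given every value of $Q^b_m(\tau)$, so it is independent of that value.
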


\begin{proof}
Condition on $\mathcal F_n$ throughout. The remaining randomness
consists of $\beta_n$, $s_n$, $R$ above length $n$, all $U_\ell,W_\ell$
with $\ell\ge n$, and the coins of $M$. All of it is independent of
$\mathcal F_n$.

\emph{Hybrid oracles.} Let $\tilde A$ agree with $A$ on the
$R$-part and on all tables below level $n$, and let every table entry at
a level $\ell\ge n$ be $U_\ell(\tau)$. Let $\tilde A_0$ and $\tilde A_1$
be obtained from $\tilde A$ by setting $R\cap\{0,1\}^n$ to $\emptyset$
and to $\{s_n\}$, respectively, so that $\tilde A=\tilde A_{\beta_n}$.
Put
\[
  a=\Pr_{\mathrm{coins}}\bigl[M^A(1^n)=1\bigr],\qquad
  \tilde a_b=\Pr_{\mathrm{coins}}\bigl[M^{\tilde A_b}(1^n)=1\bigr].
\]

\emph{Step 1: replacing the tables.} The oracles $A$ and $\tilde A$
differ only on the strings of entries $\sigma$ at levels $\ell\ge n$
with $U_\ell(\sigma)=\chi_\ell(\sigma)$; call these entries
\emph{collisions}. With the same coins, the runs of $M$ on $A$ and on
$\tilde A$ coincide until the run on $\tilde A$ first queries a string of
a collision.

Fix an entry $\sigma$ at level $\ell\ge n$ and an index $j$. Let
$\tau_j$ be the $j$-th distinct entry at levels $\ge n$ touched by the
run on $\tilde A$, with a null value if fewer than $j$ entries are touched.
Condition on every primitive random variable other than
$U_\ell(\sigma)$, including the coins. The event $\{\tau_j=\sigma\}$
is then fixed: the raw-oracle computation can be simulated until its
first query to $\sigma$ without using that entry. The vector
$\chi_\ell(\sigma)$ is also fixed, because it depends only on $R$ and
tables at levels strictly below $\ell$. This remains true when the
computation has previously touched higher-level entries: in the hybrid
those answers are their raw, independent $U$ values, not the corrected
tables. Therefore, conditionally on $\mathcal F_n$,
\begin{align*}
 &\Pr[\tau_j=\sigma,\ U_\ell(\sigma)=\chi_\ell(\sigma)
       \mid\mathcal F_n]\\
 &\hspace{12mm}=
 2^{-K(\ell)}\Pr[\tau_j=\sigma\mid\mathcal F_n]\\
 &\hspace{12mm}\le
 2^{-K(n)}\Pr[\tau_j=\sigma\mid\mathcal F_n],
\end{align*}
because $K$ is nondecreasing. Summing over $\sigma$ and over the at most
$t(n)$ indices $j$ shows that the runs differ with probability at most
$t(n)2^{-K(n)}$, jointly over the coins and the remaining oracle
randomness. Let $c$ be the probability over the coins that the runs on
$A$ and on $\tilde A$ differ. Then $|a-\tilde a_{\beta_n}|\le c$ and
$\mathbb E[c\mid\mathcal F_n]\le t(n)2^{-K(n)}$.

\emph{Step 2: hiding the planted string.} The oracles $\tilde A_0$ and
$\tilde A_1$ differ only at the string $0s_n$. Let
$h=\Pr_{\mathrm{coins}}[M^{\tilde A_0}(1^n)\text{ queries }0s_n]$. Then
$|\tilde a_1-\tilde a_0|\le h$. The oracle $\tilde A_0$ is determined
by $\mathcal F_n$, by $R$ above length $n$, and by the tables $U_\ell$
with $\ell\ge n$, so it is independent of $s_n$. A run makes at most
$t(n)$ queries, hence $\mathbb E[h\mid\mathcal F_n]\le t(n)2^{-n}$.

\emph{Step 3: the bound.} Put $\Delta=c+h$, so that
$|a-\tilde a_0|\le\Delta$. On $C_n\cap\{\Delta<1/6\}$, if $\beta_n=1$
then $a\ge2/3$ and $\tilde a_0>1/2$, while if $\beta_n=0$ then
$a\le1/3$ and $\tilde a_0<1/2$. The quantity $\tilde a_0$ is a function
of variables independent of $\beta_n$, and $\beta_n$ is uniform. Hence
this event has conditional probability at most $1/2$. By Markov's
inequality,
$\Pr[\Delta\ge1/6\mid\mathcal F_n]\le6\,\mathbb E[\Delta\mid\mathcal F_n]$,
which gives the stated bound.
\end{proof}

\begin{proof}[Proof of~(c)]
It suffices to treat one probabilistic oracle machine $M$ with one
clock $t=aT$ for a constant $a$, since there are countably many pairs $(M,a)$ and countably many
time-constructible functions. Time constructibility permits an explicit
clock that applies to all oracles and tapes with constant-factor
overhead, so every machine counted by $\mathrm{BPTIME}^A(T)$ is
covered by such a clocked machine. Then
$t(n)2^{-K(n)}\to0$, and $t(n)2^{-n}\to0$ because $K(n)\le n/2$
eventually. Put
$\varepsilon_n=6\,t(n)(2^{-K(n)}+2^{-n})$.

Choose lengths $n_1<n_2<\cdots$ with $\varepsilon_{n_i}\le1/4$ and
$n_{i+1}>t(n_i)$. On input $1^{n_j}$ the machine queries only strings of
length at most $t(n_j)<n_{i}$ for $j<i$. These are $R$-strings below
length $n_i$ and table strings of levels at most $n_i-2$. Hence the
events $C_{n_1},\dots,C_{n_{i-1}}$ are $\mathcal F_{n_i}$-measurable,
and Lemma~\ref{lem:one-length} gives
\[
  \Pr\Bigl[\,\bigcap_{j\le i}C_{n_j}\Bigr]
  \le\Bigl(\frac34\Bigr)^{i}\longrightarrow0.
\]
If $M^A$ decided $L_R$ with bounded error, every $C_{n_j}$ would hold.
So this happens with probability~$0$. A countable union of null events
is null, and together with~(a) and~(b) this proves the theorem. The product argument follows the random-oracle diagonalization
method of Bennett and Gill~\cite{bennett-gill81}; the probability law
here is the layered distribution above, not their independent fair-bit
oracle.
\end{proof}

\subsection{Consequences}

\begin{corollary}[Optimality of the randomized exponent among
relativizing arguments]\label[corollary]{cor:oracle-hd}
Fix integers $d\ge2$ and $c_0\ge1$. There is an oracle $A$ relative to which
$\hyp^A$ holds with threshold $\lceil c_0\lambda(n)^d\rceil$, $\mathrm{NP}^A=\mathrm{coNP}^A$,
and
\[
  \mathrm{NP}^A\not\subseteq
  \mathrm{BPTIME}^A\bigl(2^{\delta\lambda(n)^d}\bigr)
  \quad\text{for every fixed } 0<\delta<c_0 .
\]
In particular $\mathrm{NP}^A\not\subseteq\mathrm{BPP}^A$.
\end{corollary}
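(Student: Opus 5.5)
The plan is to apply \cref{thm:oracle} with the arity function $K(n)=\max\{2,\lceil c_0\Llog(n)^d\rceil\}$ and then choose $T$ suitably in part~(c). First I will check that $K$ is an arity function. It is computable in polynomial time and nondecreasing, since $\Llog$ is nondecreasing. We have $K(n)\ge2$ by construction. Finally $K(n)\le n/2$ for all sufficiently large $n$, because $c_0\Llog(n)^d=O((\log n)^d)=o(n)$ for fixed $d,c_0$.

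\Cref{thm:oracle} gives, with probability one, an oracle satisfying (a), (b) and (c) simultaneously. Part~(c) must hold for all the countably many relevant clocks at once. The theorem's proof already takes a countable union of null events, over all admissible $T$, machines and constants, so a single oracle $A$ in the probability-one set works.

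Part~(a) gives comparability of $\mathrm K^A$ at every arity $t\ge K(n)$. Whenever $\lceil c_0\Llog(n)^d\rceil\ge2$ this is exactly the threshold $\lceil c_0\Llog(n)^d\rceil$. For the finitely many small $n$ where $\lceil c_0\Llog(n)^d\rceil<2$, the only issue is arity one. In that case the comparator may simply output the complement of the membership bit, and there are only finitely many instances of those lengths, so this can be hardwired. Thus $\hyp^A$ holds for the complete set. Part~(b) is $\NP^A=\coNP^A$ directly.

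For the separation, fix $0<\delta<c_0$ and set $T(n)=2^{\delta\Llog(n)^d}$. This $T$ is time-constructible and eventually at least $n$, because $d\ge2$ makes $\delta\Llog(n)^d\ge\log_2 n$ for large $n$. I would adjust $T$ on finitely many lengths so that $T(n)\ge n$ everywhere, which does not change the class. Then
\[
 T(n)2^{-K(n)}\le 2^{(\delta-c_0)\Llog(n)^d}\to0,
\]
so~(c) applies. Hence $L_R\in\NP^A\setminus\BPTIME^A(T)$. The union over the constant in the exponent is not needed here, since the statement fixes $\delta$.

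One point to watch is that the countable union in (c) must range over all $\delta$. This is fine: I take the rational $\delta<c_0$, and every real $\delta$ is dominated by a rational $\delta'$ with $\delta<\delta'<c_0$. Then $\NP^A\not\subseteq\mathrm{BPP}^A$ follows because every polynomial is eventually below $2^{\delta\Llog(n)^d}$ when $d\ge2$.

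The main obstacle, if any, is only the bookkeeping of thresholds at small lengths and the simultaneity over $\delta$. Both are routine.
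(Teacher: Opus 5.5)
Your proposal is correct and follows the paper's proof. You apply \cref{thm:oracle} with $K(n)=\max\{2,\lceil c_0\Llog(n)^d\rceil\}$, use a clock $2^{\delta\Llog(n)^d}$ for rational $\delta<c_0$, pass to irrational $\delta$ by monotonicity, and use $d\ge2$ to get a superpolynomial clock. The one difference is minor: the paper takes $T(n)=2^{\lceil\delta\Llog(n)^d\rceil}$ so the clock is integer-valued and genuinely time-constructible, which only adds a harmless factor $2$ in $T(n)2^{-K(n)}\le2^{1-(c_0-\delta)\Llog(n)^d}$.
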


\begin{proof}
Apply Theorem~\ref{thm:oracle} with
$K(n)=\max\{2,\lceil c_0\Llog(n)^d\rceil\}$. For rational $0<\delta<c_0$, the
bound $T(n)=2^{\lceil\delta\lambda(n)^d\rceil}$ is time-constructible and
satisfies $T(n)2^{-K(n)}\le 2^{1-(c_0-\delta)\lambda(n)^d}\to0$.
Irrational $\delta$ follow by monotonicity. For each positive $\delta$ this clock eventually dominates every
polynomial, since $d\ge2$; a finite change ensures $T(n)\ge n$ at
all lengths.
\end{proof}

\begin{remark}[Scope of the relativized limit]
For this same oracle, reconstruction and isolation put
$\NP^A\subseteq\RTIME^A(2^{O(\Llog(n)^d)})$. The lower bound excludes
$\BPTIME^A(2^{O(\Llog(n)^{d-\varepsilon})})$ for every fixed
$0<\varepsilon<d$, and hence fixes the logarithmic power $d$ among
relativizing bounds. The leading constant is measured in the chosen
complete-set encoding.
More generally, for every admissible $K(n)=\omega(\log n)$,
\cref{thm:oracle} gives an oracle with a $K(n)$-comparator but
$\NP^A\not\subseteq\mathrm{BPP}^A$; thus the logarithmic-arity
conclusion $\NP=\mathrm{RP}$ of~\cite{Siv99} has no general
relativizing extension to this superlogarithmic range.
The oracle depends on $K$. Finally, $\NP^A=\coNP^A$ implies
$\PH^A=\NP^A$: the construction obstructs faster randomized SAT
algorithms while remaining compatible with the proposed second-level
collapse \eqref{eq:target}.
\end{remark}

\begin{remark}[Why the same table also collapses NP and coNP]
Syntactically accepting companions turn the rare answer $1^{K(m)}$
into a certificate of non-membership. The independent entries make
such certificates plentiful enough for nondeterministic search, while
the collision estimate limits what a short randomized computation can
learn. A different comparator distribution would be required for a
construction separating the second and third levels under the same
comparability hypothesis.
\end{remark}

\section{Randomized comparators}\label{sec:randomized}

\subsection{Error below a uniform guess}

A \emph{randomized comparator} for a language $A$ is a probabilistic
polynomial-time algorithm $g(\tau;r)$ that outputs a $t$-bit vector on
every encoded $t$-tuple $\tau$. Its \emph{error} on $\tau$ is
\[
  \mathrm{err}_g(\tau)=\Pr_r[g(\tau;r)=\chi_A(\tau)].
\]
Thus an error is the one forbidden output, not disagreement with a
specified canonical output. The running time is polynomial in the total
tuple encoding length on every coin sequence. Use an explicit tuple
encoding whose length $\ell$ is at least both the arity $t$ and the maximum
individual-instance length $n$. For each fixed algorithm, pad its random
tape to a fixed, polynomial-time computable polynomial bound in $\ell$;
unused bits are ignored. Each repetition below uses a fresh independent
tape unless coins are explicitly fixed as advice.

\begin{remark}[Why the threshold is $2^{-t}$]
\label[remark]{rem:uniform-guess}
The algorithm that ignores its input and outputs a uniform vector in
$\bits^t$ has error exactly $2^{-t}$ on every $t$-tuple, for every
language. Allowing error at least this baseline is therefore vacuous.
In the growing-arity setting considered here, any fixed positive error
allowance is also vacuous, after handling the finitely many exceptional
small tuples directly; this observation does not apply to arbitrary
fixed-arity, sub-baseline error requirements. The elimination formulation
and the logarithmic-arity error-reduction argument in the thesis already
use this baseline~\cite[Section~4.1 and Lemma~5.4.6]{BD13}.
\end{remark}

Throughout this section, $n$ is the maximum individual length, as in
\cref{def:membership-comparability}; $\ell$ is the total tuple encoding
length. Fixed constants in arity and exponent bounds can be enlarged to
integers.

\begin{definition}[Weak randomized comparability,
$\hyp^{\mathrm{wr}}$]\label[definition]{def:wr}
Fix an integer $d\ge1$. There are integers $c_0,e\ge1$ and a randomized
comparator $g$ for $\SAT$ such that every tuple of arity
$t\ge c_0\Llog(n)^d$ satisfies
\[
  \mathrm{err}_g(\tau)\le(1-\gamma(n))2^{-t},
  \qquad \gamma(n)=2^{-\lceil e\Llog(n)^d\rceil}.
\]
A larger guaranteed relative saving also satisfies the hypothesis by
using this lower bound for the saving.
\end{definition}

A deterministic comparator satisfies \cref{def:wr} with zero error.
Pseudodeterministic comparators with a correct canonical output and
zero-error comparators with constant nonfailure probability are
sufficient examples for the stronger, inverse-polynomial relative-gap
regime of \cref{def:relative-saving}: $O(\ell+1)$ independent repetitions,
using whole-vector majority or the first nonfailure output, respectively,
reduce the true-vector error below $2^{-\ell-1}\le2^{-t-1}$.
An expected-polynomial-time zero-error procedure can first be truncated
at twice its expected-time bound to obtain the required constant
nonfailure probability and a polynomial clock on every tape.

\subsection{Amplification inside the reconstruction clock}

\begin{lemma}[Mode amplification]\label[lemma]{lem:mode}
Let $g$ be a randomized comparator, and let $\tau$ be a $t$-tuple with
$\mathrm{err}_g(\tau)\le(1-\gamma)2^{-t}$, where $0<\gamma\le1$.
For $0<\eta<1$, draw
\begin{equation}
 M=\left\lceil10\cdot2^t\gamma^{-2}\ln(2/\eta)\right\rceil
 \label{eq:mode-samples}
\end{equation}
independent samples of $g(\tau;\cdot)$ and return their empirical mode,
breaking ties in a fixed order. The returned vector is the true vector
with probability at most $\eta$. The sampling cost is $M$ evaluations
of $g$; counting the sampled vectors adds
$M\,\mathrm{poly}(t,\log(M+1))$ deterministic overhead.
\end{lemma}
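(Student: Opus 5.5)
The plan is to compare the empirical frequency of the true vector $v^*=\chi_A(\tau)$ against the empirical frequency of a single fixed ``heavy'' competitor, with no union bound over all $2^t$ vectors. Write $p_v=\Pr_r[g(\tau;r)=v]$. Since the probabilities sum to one over $2^t$ vectors and $p_{v^*}\le(1-\gamma)2^{-t}$, the remaining $2^t-1$ vectors carry mass at least $1-(1-\gamma)2^{-t}$. Some vector $u\neq v^*$ therefore satisfies
\[
 p_u\ \ge\ \frac{1-(1-\gamma)2^{-t}}{2^t-1}\ \ge\ 2^{-t}\ \ge\ p_{v^*}+\gamma2^{-t}.
\]
Fix such a $u$ for the analysis only; the algorithm never needs to know it. If the mode is $v^*$, then the count of $v^*$ is at least the count of $u$. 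So it suffices to bound $\Pr[N_{v^*}\ge N_u]$, where $N_v$ is the number of samples equal to $v$. Tie-breaking does not matter, because we bound the weak inequality.

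Next, consider the per-sample variable $Z_i=\ind[X_i=v^*]-\ind[X_i=u]\in\{-1,0,1\}$. Its mean is $\mu=p_{v^*}-p_u\le-\gamma2^{-t}$, and its variance is at most $p_{v^*}+p_u$. The cleanest route is to condition on the number $S$ of samples landing in $\{v^*,u\}$. Given $S$, each such sample equals $v^*$ with probability
\[
 \theta=\frac{p_{v^*}}{p_{v^*}+p_u}\le\tfrac12-\frac{\gamma2^{-t}}{2(p_{v^*}+p_u)}.
\]
Hoeffding for a Binomial$(S,\theta)$ then gives $\Pr[N_{v^*}\ge S/2\mid S]\le\exp(-2S(\tfrac12-\theta)^2)$. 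Setting $q=p_{v^*}+p_u$, this is $\exp\bigl(-S\gamma^2 2^{-2t}/(2q^2)\bigr)$.

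Two cases finish the argument. If $q$ is large, Hoeffding on $\sum Z_i$ directly gives $\exp(-M\gamma^2 2^{-2t}/2)$, which is not enough, so the conditional (Bernstein-type) route is the one that works. Take $S$ concentrated near $Mq$: a Chernoff lower tail gives $S\ge Mq/2$ except with probability $\exp(-Mq/8)$. On that event the bound is $\exp\bigl(-M\gamma^2 2^{-2t}/(4q)\bigr)$. Using $q\le 2p_u$ and $p_u\le1$ does not directly help. The usable fact is that the exponent $M\gamma^2 2^{-2t}/(4q)$ is smallest when $q$ is large. I expect this regime to be the main obstacle, because when $p_u$ is much larger than $2^{-t}$ the conditional gap shrinks. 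But then $p_u-p_{v^*}\ge p_u-2^{-t}$ is itself large, so I would split on whether $p_u\le 2\cdot2^{-t}$. If it is, then $q\le 3\cdot2^{-t}$ and the exponent is at least $M\gamma^2 2^{-t}/12$. If not, the gap is at least $p_u/2\ge q/4$, and the exponent is $\Omega(Mq)\ge\Omega(M2^{-t})$. Also $Mq\ge M2^{-t}\ge10\gamma^{-2}\ln(2/\eta)$ makes the Chernoff term small. In every case, with the constant $10$ in \eqref{eq:mode-samples}, each of the two failure terms is at most $\eta/2$, and a union bound gives total probability at most $\eta$. I would verify the constants last, adjusting the split point if $10$ is too tight.

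Finally, the cost accounting is direct. There are $M$ calls to $g$. To find the mode, sort the $M$ sampled $t$-bit vectors, or insert them into a balanced dictionary, and then scan for the largest run, choosing the first in the fixed order on ties. This costs $O(M\log M)$ comparisons of $t$-bit strings with counters of $O(\log(M+1))$ bits, which is $M\,\poly(t,\log(M+1))$.
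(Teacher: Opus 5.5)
Your proof shares the paper's key idea: compare the true vector with one fixed pattern $u$ of probability at least $2^{-t}$, so no union bound over all $2^t$ patterns is needed. Your concentration step is different. The paper fixes the threshold $(1-\gamma/2)M/2^t$ and bounds two one-sided tails: a multiplicative Chernoff lower tail for the heavy count, and a Bernstein upper tail for the true count, dominated by a binomial of mean $(1-\gamma)M/2^t$. You instead condition on $S=N_{v^*}+N_u$ and apply Hoeffding to the conditional binomial. That route can work, but your case analysis does not reach the constant $10$, which is part of the statement.
\begin{itemize}
\item In your first case, $q\le 3\cdot2^{-t}$ gives exponent $M\gamma^2 2^{-t}/12\ge(5/6)\ln(2/\eta)$. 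The failure term is then $(\eta/2)^{5/6}>\eta/2$.
\item In your second case, gap $\ge q/4$ gives conditional bias $\ge 1/8$, hence $\exp(-Mq/64)\le\exp(-M2^{-t}/32)\le\exp(-\tfrac{5}{16}\gamma^{-2}\ln(2/\eta))$. This exceeds $\eta/2$ once $\gamma^2>5/16$.
\item No split point repairs both estimates. At $\gamma=1$ the first case needs the split below $2.5\cdot2^{-t}$, and the second needs it above $6.4\cdot2^{-t}$.
\end{itemize}

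The fix is to keep the exact gap instead of lowering it to $\gamma2^{-t}$ or $q/4$. On $\{S\ge Mq/2\}$ your conditional bound is $\exp\bigl(-M(p_u-p_{v^*})^2/(4q)\bigr)$. Write $p_u=x2^{-t}$ and $p_{v^*}=y2^{-t}$, with $x\ge1$ and $0\le y\le1-\gamma$. On this region $(x-y)^2/(x+y)$ is increasing in $x$ and decreasing in $y$. It is therefore at least $\gamma^2/(2-\gamma)\ge\gamma^2/2$, attained at $x=1$, $y=1-\gamma$. Hence this term is at most $\exp(-M\gamma^22^{-t}/8)\le(\eta/2)^{5/4}$. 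Also $\Pr[S<Mq/2]\le\exp(-Mq/8)\le\exp(-M2^{-t}/8)\le(\eta/2)^{5/4}$. Both terms are at most $\eta/2$ for $M$ as in \eqref{eq:mode-samples}, and no case split is needed.

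With that repair, your route is a legitimate alternative. It adapts to the actual gap and is sharper when $p_u\gg2^{-t}$, but it needs an extra tail bound for $S$ and a worst-case optimization. The paper's fixed-threshold split avoids both and reads off the constants directly. Your cost accounting matches the paper's.
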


\begin{proof}
Put $B=2^t$, let $p_b$ be the probability of output $b$, and let $X_b$
count that output among the $M$ samples. Some $b^*$ has $p_{b^*}\ge1/B$;
since $p_\chi\le(1-\gamma)/B<1/B$, this pattern is not the true vector
$\chi=\chi_A(\tau)$. Set
\[
 \theta=(1-\gamma/2)M/B.
\]
If the empirical mode equals $\chi$, then $X_\chi\ge X_{b^*}$, so at
least one of $X_\chi\ge\theta$ and $X_{b^*}\le\theta$ occurs. No union
bound over all output patterns is needed.

The mean $\mu^*$ of $X_{b^*}$ is at least $M/B$, and
$\theta\le(1-\gamma/2)\mu^*$. Thus
\[
 \Pr[X_{b^*}\le\theta]
 \le\exp(-\gamma^2\mu^*/8)
 \le\exp(-\gamma^2M/(8B)).
\]
For the other event, $X_\chi$ is stochastically dominated by a binomial
variable of mean $\mu_0=(1-\gamma)M/B$. With
$a=\theta-\mu_0=\gamma M/(2B)$, Bernstein's inequality gives
\[
 \Pr[X_\chi\ge\theta]
 \le\exp\left(-\frac{a^2}{2\mu_0+2a/3}\right)
 \le\exp(-\gamma^2M/(10B)).
\]
The case $\gamma=1$ also follows directly because $X_\chi=0$.
Both displayed bounds are at most $\eta/2$ for
\eqref{eq:mode-samples}. Sorting the sampled strings and counting equal
runs implements the mode with the stated polynomial overhead in their
bit lengths. In the algorithms below, $\gamma$ is dyadic; replacing
$\ln(2/\eta)$ by $\lceil\log_2(2/\eta)\rceil$ makes the repetition
count an explicit integer upper bound with the same asymptotic cost.
\end{proof}

\begin{proposition}[Randomized reconstruction]\label[proposition]{prop:rand-reconstruction}
Under $\hyp^{\mathrm{wr}}$ there is a randomized search algorithm $U''$
that, on every circuit $C$ of encoding length $N$ and every coin
sequence, halts within $2^{O(\Llog(N)^d)}$ steps and returns an assignment
or $\bot$. Every returned assignment satisfies $C$, on every tape. If
$C$ is uniquely satisfiable, $U''$ returns its satisfying assignment
with probability at least $2/3$.
\end{proposition}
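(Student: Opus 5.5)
We follow the proof of \cref{prop:unique} step by step, replacing each deterministic comparator call with a call to the mode-amplified randomized comparator of \cref{lem:mode}. The parameters $m=\lceil A\Llog(N)^d\rceil$, $q=2^m$, the auxiliary language $B$ of \eqref{eq:bit-language}, the formulas $\phi_{u,j}$ with $|\phi_{u,j}|\le N^b$, and the arity $K=\lceil c_0\Llog(N^b)^d\rceil$ are unchanged. For the $K$-subset $S$ of coordinates at field point $u$, the tuple $\tau_{u,S}$ consists of the corresponding formulas in increasing order. The weak relative saving at that maximum length is $\gamma=2^{-\lceil e\Llog(N^b)^d\rceil}$. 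Thus $1/\gamma=2^{O(\Llog(N)^d)}$ and $2^K=2^{O(\Llog(N)^d)}$, both within the target clock.

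The total number of comparator calls is at most $J=q\binom{m}{K}\le 2^{2m}$. We choose the per-call failure parameter $\eta=1/(3J)$, so that $\ln(2/\eta)=O(m)$. By \eqref{eq:mode-samples}, each tuple then uses $M=2^{K}\gamma^{-2}\cdot O(m)=2^{O(\Llog(N)^d)}$ samples of $g$. Each sample runs in time polynomial in the tuple length, which is $\poly(N)$ because $K\le m\le N$. The empirical mode $\hat b_{u,S}$ serves as the excluded pattern, and $S_u$ is defined as the set of field values avoiding every $\hat b_{u,S}$.

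The deterministic parts of the original argument go through on every tape, whatever the comparator outputs. The Sauer--Shelah bound \eqref{eq:shortlist} uses only the fact that one pattern is forbidden on each $K$-subset, so $|S_u|<q^{1/3}$ holds unconditionally. The Sudan step, the list filtering, and the direct check $C(v)=1$ also do not depend on correctness of the comparator. Hence every returned assignment satisfies $C$ on every tape. The uniform clock is $2^{O(m)}\cdot M\cdot\poly(N)=2^{O(\Llog(N)^d)}$ on all coin sequences, because $M$ is fixed in advance rather than adaptive.

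For correctness on a uniquely satisfiable $C$ with witness $a$, we need $P_a(u)\in S_u$ for every $u$. This holds exactly when no mode $\hat b_{u,S}$ equals the true membership vector of $\tau_{u,S}$. By \cref{lem:mode}, each such event has probability at most $\eta$. A union bound over the at most $J$ tuples gives a total failure probability of at most $1/3$. On the complementary event the deterministic Sudan analysis applies verbatim, and $U''$ returns $a$.

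The main point needing care is that the hypothesis $\Hwr$ is stated in terms of the maximum individual length $n$ of the tuple entries, which may be smaller than $N^b$. We handle this by using the lower bound $n\le N^b$ in the monotone quantities: a smaller $n$ only increases $\gamma(n)$ and decreases the required arity, so $K$ and $\gamma$ computed at $N^b$ remain valid. We also verify that ties and fixed tape padding do not affect the clock, and treat finitely many small $N$ by exhaustive search.
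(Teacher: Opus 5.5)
Your proposal is correct and matches the paper's proof essentially step for step. You reuse the parameters of \cref{prop:unique}, replace each of the at most $q\binom mK\le q^2$ comparisons by the mode amplification of \cref{lem:mode} with $\gamma=\gamma(N^b)$ (valid by monotonicity), observe that the Sauer--Shelah bound, the Sudan step, the direct check and the clock hold on every tape, and finish with a union bound; the only cosmetic difference is your choice $\eta=1/(3J)$ with $J=q\binom mK$, where the paper takes $\eta=1/(3q^2)$.
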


\begin{proof}
Use the field, auxiliary formulas, and parameters $m,K,q$ of
\cref{prop:unique}, with the threshold constant from
\cref{def:wr}. The constructed formulas have length at most $N^b$ as
in \eqref{eq:constructed-size}. There are at most
$q\binom mK\le q^2$ comparisons, all of arity $K$. Replace each by
\cref{lem:mode}, with
\[
 \gamma=\gamma(N^b),\qquad \eta=1/(3q^2).
\]
This is a valid lower bound on the relative saving for every such call,
because $\gamma$ is nonincreasing and the maximum individual length is
at most $N^b$.

\emph{Clock on every tape.} Every comparator call returns one $K$-bit
pattern for its coordinate subset. Whether or not that pattern is the
true vector, the surviving set shatters no $K$-subset. Hence
\eqref{eq:shortlist}, and all the reconstruction and verification loop
bounds in \cref{prop:unique}, hold on every coin sequence. Empty pair
lists return $\bot$ directly. An amplified comparison uses
\[
 2^K\gamma(N^b)^{-2}\,O(m)
\]
samples, up to harmless integer rounding. Since
$K=O(\Llog(N)^d)$ and
\[
 \gamma(N^b)^{-2}
 \le 2^{2e(b+1)^d\Llog(N)^d+2},
\]
the sample cost, the mode-counting overhead, and all $q^2$ calls together
remain within $2^{O(\Llog(N)^d)}$. Finitely many smaller inputs and
ground circuits are handled directly as before.

\emph{Soundness.} Each reconstructed candidate is checked in $C$ before
it is returned. Thus no coin sequence creates a false witness, and
unsatisfiable circuits always return $\bot$.

\emph{Success on unique inputs.} A union bound makes every amplified
comparison correct with probability at least $2/3$. Conditional on
this event and uniqueness, the true evaluation $P_a(u)$ survives at
every field point. The polynomial agrees with all $q$ true point--value
pairs and is recovered by \cref{lem:sudan}, exactly as in
\cref{prop:unique}. It passes the final direct check.
\end{proof}

\begin{corollary}[Randomized SAT algorithm]\label[corollary]{cor:rand-sat}
Under $\hyp^{\mathrm{wr}}$,
\[
 \SAT,\CSAT\in\RTIME\bigl(2^{O(\Llog(n)^d)}\bigr).
\]
For $d=1$, this gives $\NP=\class{RP}$.
\end{corollary}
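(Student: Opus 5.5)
The plan is to repeat the proof of \cref{lem:isolation}, substituting the randomized promise-unique searcher $U''$ of \cref{prop:rand-reconstruction} for the deterministic solver $U$. Given a circuit $C$ with $w\ge1$ inputs (the case $w=0$ is evaluated directly), we draw a single Toeplitz seed $s=(T,b)$ of length $3w+1$. For every $r=1,\ldots,w+1$ we build the restricted circuit $C_{s,r}$ and call $U''$ on it with fresh coins. The algorithm accepts exactly when some call returns an assignment $v$ that, on direct evaluation, satisfies $C_{s,r}$ and hence $C$. A SAT formula is first sent through the standard formula-to-circuit map, which costs only a polynomial blow-up.

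One-sided error is immediate. By \cref{prop:rand-reconstruction}, every assignment $U''$ returns satisfies its input circuit on every tape. An unsatisfiable $C$ therefore produces unsatisfiable restrictions only, so every call returns $\bot$ and the no-instance acceptance probability is exactly zero. For a satisfiable $C$ we fix, for the analysis only, the prefix length $r=\lceil\log M\rceil+1$. The pairwise-independence calculation of \cref{lem:isolation} shows that $C_{s,r}$ is uniquely satisfiable with probability at least $3/16$ over $s$. Conditioned on that event, the fresh coins of $U''$ recover the unique witness with probability at least $2/3$. Because these coins are independent of $s$, the success probability is at least $(3/16)(2/3)=1/8$. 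Running a fixed constant number of independent trials and accepting if any trial accepts raises this above $2/3$, and it leaves no-instance acceptance at zero.

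The running time then follows. Every $C_{s,r}$ has encoding length $N^{O(1)}$, and there are $w+1\le N+1$ calls per trial. The per-tape clock of \cref{prop:rand-reconstruction} gives each call time $2^{O(\Llog(N^{O(1)})^d)}=2^{O(\Llog(N)^d)}$ on every coin sequence, so the whole algorithm is uniformly clocked at $2^{O(\Llog(n)^d)}$. For $d=1$ this bound is polynomial, which gives $\NP\subseteq\class{RP}$; since $\class{RP}\subseteq\NP$ holds unconditionally, $\NP=\class{RP}$.

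The main obstacle is the probability composition, because the analysis now involves two independent sources of randomness: the hash seed and the solver's coins. The point to verify is that the event ``$U''$ succeeds'' is only required on the promise instance $C_{s,r}$. Off-promise calls are harmless in both directions, since every output is checked and every clock holds on all tapes. Once that is confirmed, the product bound for success and the zero-error bound for rejection both go through without difficulty.
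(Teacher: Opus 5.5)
Your proposal is correct and follows the paper's proof essentially step for step. You substitute $U''$ for the deterministic promise solver in the isolation construction of \cref{lem:isolation}, accept only on a directly checked witness, and multiply the $3/16$ isolation probability by the independent $2/3$ reconstruction success to get $1/8$ per trial before constant amplification; the only difference is cosmetic, since you fix the analysis prefix $r=\lceil\log M\rceil+1$ where the paper takes the first prefix whose restriction is unique, and both choices depend only on the seed, so independence of the solver's coins applies equally.
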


\begin{proof}
Use the isolation construction of \cref{lem:isolation} with $U''$ in
place of the deterministic promise solver, and accept only upon finding
a checked assignment. An unsatisfiable input is never accepted. For a
satisfiable input, a uniform isolation seed makes some prefix restriction
unique with probability at least $3/16$. Conditional on that seed,
choose the first such prefix only for the analysis; its independent
reconstruction coins succeed with probability at least $2/3$. One
trial therefore accepts with probability at least $1/8$. A fixed
constant number of independent trials raises this to $2/3$, preserving
the clock. For $d=1$ it is polynomial, and the usual reduction to SAT
gives $\NP=\class{RP}$.
\end{proof}

\subsection{Consequences of weak randomized comparability}

\begin{theorem}[Weak randomized comparability]\label{thm:wr}
Assume $\hyp^{\mathrm{wr}}$. Then the following hold.
\begin{enumerate}[label=(\roman*),beginpenalty=10000]
 \item For every fixed $k\ge1$,
 \[
  \Sigma_k^p\cup\Pi_k^p\subseteq
  \BPTIME\bigl(2^{O(\Llog(n)^{d^k})}\bigr).
 \]
 In particular, $\PH\subseteq\BPQP$; the exponent here may depend on
 the fixed hierarchy level.
 \item $\QH=\BPQP$.
 \item $\NEXP=\REXP$ and $\EXPH=\BPEXP$.
 \item The fixed, unconditionally defined language $G$ from
 \cref{cor:unconditional-qh-hard} satisfies
 \[
  G\in\BPTIME\!\left(
     2^{O((\Llog(n)\log\Llog(n))^{d^5})}\right)
  \setminus(\NP/\poly\cup\coNP/\poly).
 \]
 Consequently,
 \begin{equation}
  \mathrm{BPP}\subsetneq
  \BPTIME\!\left(2^{O((\Llog(n)\log\Llog(n))^{d^5})}\right).
  \label{eq:weak-time-separation}
 \end{equation}
 This larger time class is contained in $\BPQP$, and
 $\BPQP\not\subseteq\NP/\poly\cup\coNP/\poly$.
 \item $\BPEXP\not\subseteq\NP/\poly\cup\coNP/\poly$.
\end{enumerate}
\end{theorem}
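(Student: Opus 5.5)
The starting point is \cref{cor:rand-sat}, which gives $\CSAT\in\RTIME(2^{O(\Llog(n)^d)})$ under $\hyp^{\mathrm{wr}}$. Without $\NP\subseteq\coNP/\poly$ there is no Yap collapse, so we cannot use \cref{thm:ph}. We climb the hierarchy level by level instead. For (i) we induct on $k$, using a generalized form of \cref{lem:second}. The claim is that if a predicate $B(x,u)$ with $|u|$ polynomial has a bounded-error algorithm of time $T(n)=2^{O(\Llog(n)^a)}$, then $\exists u\,B$ and $\forall u\,B$ are decidable in time $2^{O(\Llog(n)^{ad})}$. The proof repeats \cref{lem:second}:
\begin{itemize}
\item amplify $B$ to error $2^{-|u|-6}$;
\item fix a common tape $r$ for all $u$;
\item compile $u\mapsto\widehat B(x,u;r)$ into an explicit circuit of size $S=2^{O(\Llog(n)^a)}$;
\item run the randomized $\CSAT$ algorithm on it, at cost $2^{O(\Llog(S)^d)}=2^{O(\Llog(n)^{ad})}$.
\end{itemize}
The base case $k=1$ is \cref{cor:rand-sat} itself, with exponent $d$. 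Each further quantifier block multiplies the exponent by $d$, so $\Sigma_k^p\cup\Pi_k^p$ gets exponent $d^k$. Closure under complement handles $\Pi_k^p$. Since every level is quasipolynomial, $\PH\subseteq\BPQP$.

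Part (ii) is exactly \cref{thm:qh}. Its forward direction used only $\CSAT\in\RTIME(2^{O(\Llog(n)^d)})$, through \cref{lem:qp-block}, and its reverse direction is unconditional. We now have that input from \cref{cor:rand-sat}.

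For (iii), the $\NEXP=\REXP$ argument of \cref{thm:exp} uses only the randomized SAT algorithm, so it transfers verbatim. For $\EXPH=\BPEXP$ we redo \cref{lem:padding}, replacing the appeal to \cref{thm:ph} with part (i) at the fixed level $k$. A padded language in $\Sigma_k^p$ then costs $2^{O(\Llog(t)^{d^k})}$, and with $t=2^{n^c}$ this is $2^{O(n^{cd^k})}$, which lies in $\BPEXP$. The reverse inclusion is the unconditional covering argument.

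For (iv), the language $G$ is the prefix projection of $H$ onto $\ell(n)=\Llog(n)h(n)$ bits. Pad the five-block characterization of $H$ with $t(m)=2^{c(m+1)}$, which puts the padded language in $\Sigma_5^p$. Part (i) with $k=5$ decides $H$ in time $2^{O(m^{d^5})}$. Evaluating that algorithm on the length-$\ell(n)$ prefix gives the stated bound $2^{O((\Llog(n)\log\Llog(n))^{d^5})}$. The hardness of $G$ against $\NP/\poly\cup\coNP/\poly$ is unconditional by \cref{cor:unconditional-qh-hard}. The separation \eqref{eq:weak-time-separation} then follows as in \cref{cor:conditional-bpp-hierarchy}, by Adleman's inclusion $\mathrm{BPP}\subseteq\Pclass/\poly$ and the fact that the clock dominates every polynomial. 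Containment in $\BPQP$ holds because $(\Llog\log\Llog)^{d^5}\le\Llog^{2d^5}$. Part (v) is $H\in\BPTIME(2^{O(m^{d^5})})\subseteq\BPEXP$ together with the near-maximum circuit hardness of $H$.

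The main obstacle is the inductive step of (i), specifically how the exponents compose and how errors are controlled. The inner algorithm is itself randomized and must be amplified to error $2^{-|u|-6}$ with only polynomially many repetitions. Its coin tape and explicit circuit are quasipolynomial, so we have to check that the clock is uniform on every tape and that $\Llog(S)=O(\Llog(n)^a)$. Only under those conditions does the new exponent come out as exactly $a\cdot d$, rather than something larger.
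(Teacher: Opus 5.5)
Your proposal is correct and follows essentially the same route as the paper. Part (i) is an induction on $k$ in which each quantifier block is removed by the common-tape compilation of \cref{lem:second} followed by the randomized SAT algorithm of \cref{cor:rand-sat}, which multiplies the exponent by $d$; this fixed-level bound then replaces \cref{thm:ph} in the padding arguments for (iii)--(v). The only difference is cosmetic: for (v) you read off $H\in\BPTIME(2^{O(m^{d^5})})\subseteq\BPEXP$ from (iv) instead of citing $\EXPH=\BPEXP$ from (iii), and the two are equivalent.
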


\begin{proof}
(i) For $k=1$, use \cref{cor:rand-sat} and complementation. Suppose the
claim holds at level $k$, and write a language of $\Sigma_{k+1}^p$ as
\[
 x\in L\iff\exists u\in\bits^{p(n)}\ B(x,u),\qquad B\in\Pi_k^p.
\]
The common-tape argument of \cref{lem:second} works for this predicate
as well. Its bounded-error algorithm has time
$2^{O(\Llog(n)^{d^k})}$ on the polynomial-length pair $(x,u)$.
Amplify its error to at most $2^{-p(n)-6}$. For a fixed $x$, a union bound
over all $u$ shows that a shared tape $r$ makes the amplified computation
correct for all $u$ with probability at least $1-2^{-6}$. This tape can
be quasipolynomially long; it is sampled at runtime, not supplied as
polynomial advice.

For fixed $(x,r)$, compile $u\mapsto\widehat B(x,u;r)$ into an explicit
circuit of size
\[
 S=2^{O(\Llog(n)^{d^k})}.
\]
Run \cref{cor:rand-sat} on that circuit, amplifying its error to at most
$2^{-6}$. On a good tape the circuit is satisfiable exactly when
$x\in L$. The total error is bounded by the sum of the two failure
probabilities, and the total time is
\[
 S^{O(1)}+2^{O(\Llog(S)^d)}
 =2^{O(\Llog(n)^{d^{k+1}})}.
\]
Complementation gives $\Pi_{k+1}^p$. This is an induction for fixed $k$;
its constants and machines may depend on $k$.

(ii) The proof of \cref{lem:qp-block,thm:qh} requires only a randomized
SAT algorithm whose clock is closed under fixed quasipolynomial
compositions. Substitute \cref{cor:rand-sat} for
\cref{cor:random-sat}. The reverse covering containment is unchanged
and unconditional.

(iii) The proof of $\NEXP\subseteq\REXP$ in \cref{thm:exp} uses only
the one-sided SAT algorithm; the reverse containment is unconditional.
For $\EXPH\subseteq\BPEXP$, repeat the padding construction in
\cref{lem:padding} for a fixed level $k$, now using part (i) in place
of \cref{thm:ph}. A language in $\Sigma_k\mathrm{EXP}$ with resource
bound $2^{n^c}$ pads to a $\Sigma_k^p$ language on length
$N\le2^{n^c+1}$, after a finite-length normalization. Its randomized
clock is $2^{O(n^{c d^k})}$, including the construction cost. The
reverse containment is the unconditional covering argument used in
\cref{cor:exph}.

(iv) Write $\ell(n)=\Llog(n)\max\{1,\lceil\log_2\Llog(n)\rceil\}$.
The unconditional lemmas in \cref{sec:nd-hard-extension} give
$G(x)=H(x_1\cdots x_{\ell(|x|)})$ above a fixed cutoff, with
$H\in\Sigma_5\mathrm{TIME}(2^{O(m)})$ and
$G\notin\NP/\poly\cup\coNP/\poly$. Pad the $m$-bit input of $H$
to length $N=2^{O(m)}$ and use part (i) with $k=5$. This gives time
$2^{O(m^{d^5})}$ for $H$ and hence $2^{O(\ell(n)^{d^5})}$ for $G$.
Thus $G\in\BPQP$ but $G\notin\mathrm{BPP}$ by
$\mathrm{BPP}\subseteq\Pclass/\poly$~\cite{Adl78}. The clock in
\eqref{eq:weak-time-separation} dominates every polynomial, so the
inclusion of BPP is also valid. The circuit exclusion is inherited from
the unconditional construction; its randomized upper bound is the step
using $\hyp^{\mathrm{wr}}$.

(v) Part (iii) places the same unconditional $H$ in $\BPEXP$. Its
near-maximum signed existential circuit lower bound excludes both
polynomial-advice classes.
\end{proof}

\begin{remark}[Structural and advice distinctions under the weak hypothesis]
\label[remark]{rem:weak-scope}
For $d>1$, whether $\Hwr$ implies $\NP\subseteq\coNP/\poly$
remains unresolved here. \Cref{subsec:certificate-boundary} identifies
the obstruction in the present method: limited independence still gives
polynomial-size table descriptions and advised certified SAT search,
but a heavy-output certificate can have quasipolynomial length. The
positive-relation theorem then lacks a polynomial-time $\NP/\poly$
verifier. Consequently the structural collapse $\PH=\Sclass^{\NP}$,
the level-independent exponent $d^2$, and the polynomial-length
symmetric simulation of the whole PH have not been obtained in this
regime. Under an inverse-polynomial relative saving,
\cref{thm:relative-gap-transfer} establishes these conclusions.
At $d=1$, \cref{cor:rand-sat} already gives $\NP=\class{RP}$ and
the structural consequences follow.

The exponent $d^5$ in \cref{thm:wr}(iv) comes from the fifth-level
signed existential diagonal language. For separation from BPP alone,
a deterministic-circuit diagonalization suffices. Its canonical-table
formula has the four blocks
$\exists T\,\forall(D,T')\,\exists(y,D')\,\forall y'$: ordinary circuit
disagreement needs one input, and agreement with an earlier table needs
one universal input. The same counting and prefix projection give an
unconditional $\Sigma_4\mathrm{TIME}(2^{O(m)})$ hard language and a
projected language outside $\Pclass/\poly$. Applying \cref{thm:wr}(i)
with $k=4$ gives the sharper BPP-only separation
\[
 \mathrm{BPP}\subsetneq
 \BPTIME\!\left(2^{O((\Llog(n)\log\Llog(n))^{d^4})}\right).
\]
The fifth-level version is retained in the theorem for its exclusion of
$\NP/\poly\cup\coNP/\poly$, not just $\Pclass/\poly$; the underlying
diagonal method is the same Kannan-style construction~\cite{kannan82}.

The one-sided randomized analogue of \eqref{eq:binary-advice} does
survive. More generally, if a designated advice $a_n$ and a
polynomial-time verifier $R$ give
$x\in L\iff\exists z\ R(x,a_n,z)$ with polynomial witness length,
compile $R(x,a_n,\cdot)$ into a circuit and run \cref{cor:rand-sat}.
The only advice supplied is $a_n$; the SAT subroutine samples its own
isolation and reconstruction coins. Thus
\begin{equation}
 2\mc\subseteq
 \RTIME\bigl(2^{O(\Llog(n)^d)}\bigr)/O(n).
 \label{eq:randomized-binary-advice}
\end{equation}
Here $2\mc$ is the original deterministic binary-comparator class,
and correctness is required under its designated advice. The uniform
deterministic promise-unique algorithm is replaced by
\cref{prop:rand-reconstruction}; the proof gives
$\UEXP\subseteq\REXP$, not the equality $\UEXP=\EXP$.
\end{remark}

\begin{remark}[The relativized limit covers both randomized regimes]
\label[remark]{rem:randomized-oracle}
The comparator in \cref{sec:oracle} is deterministic, so the same oracle
satisfies $\hyp$, $\Hrel$, and $\Hwr$. The proofs of
\cref{prop:rand-reconstruction,cor:rand-sat} relativize, using oracle
circuits and the relativized complete set. Thus \cref{cor:oracle-hd}
limits improvements to the logarithmic power $d$ in the uniform
randomized SAT clock in either randomized regime. Its lower bound
concerns this SAT clock, rather than the PH exponents $d^k$ and $d^2$.
Relative to this oracle $\PH^A=\NP^A$, as before.
\end{remark}

\section{Relative-gap comparators via succinct sample tables}
\label{sec:relative-gap}

\subsection{Hypothesis and result}

Write $\Td(n)=2^{O(\Llog(n)^d)}$, with the union-over-constants
convention of \cref{sec:intro}. The randomized comparator is polynomially
clocked on every tape; the derived reconstruction and search procedures
have the longer clocks stated below. Advice is fixed per input length.
Correctness of an advised decision relation is required only on its
designated advice.

\begin{definition}[Inverse-polynomial relative saving]\label[definition]{def:relative-saving}
Fix an integer $d\ge1$. The hypothesis $\Hrel$ asserts that there are
fixed positive integers $a,c_0$ and a randomized polynomial-time algorithm
$g$ such that, on every $t$-tuple $\tau$ of SAT instances of maximum
individual length $n$, with $t\ge c_0\Llog(n)^d$,
\[
 \Pr[g(\tau)=\chi_{\SAT}(\tau)]
 \le \bigl(1-(n+2)^{-a}\bigr)2^{-t}.
\]
The output has exactly $t$ bits on every tape.
\end{definition}

The saving is relative to the uniform-guess error $2^{-t}$; its absolute
value $2^{-t}(n+2)^{-a}$ can be smaller than every inverse polynomial
when $t=\omega(\log n)$. The hypotheses satisfy
\[
 \hyp\ \Longrightarrow\ \Hrel\ \Longrightarrow\ \Hwr,
\]
since a deterministic comparator has zero error and
$(n+2)^{-a}\ge2^{-a\Llog(n)^d}$.
The stronger condition $\Hsr$ of polynomial-time error reduction to
$2^{-p(\ell)}$ for every fixed polynomial $p$, with a common arity
threshold, satisfies $\Hsr\Rightarrow\Hrel$: take $p(\ell)=\ell+1$
and use $2^{-\ell-1}\le2^{-t-1}$.

\Needspace{8\baselineskip}
\begin{theorem}[Relative-gap transfer]\label{thm:relative-gap-transfer}
Under $\Hrel$ the following statements hold.
\begin{enumerate}[label=(\roman*)]
\item $\SAT\in\NP/\poly\cap\coNP/\poly$, and
      $\PH=\class{S}_2^{\NP}$.
\item $\PH\subseteq\BPTIME(2^{O(\Llog(n)^{d^2})})$.
\item $\PH\subseteq\class{S}_2^p[\Td]$ and
      $\PH\subseteq\DTIME(\Td)/\poly$. In the first containment
      both certificates have polynomial length and the predicate is
      deterministic and oracle-free.
\item Promise Unique-Circuit-SAT has deterministic search in time
      $\Td$ with polynomial advice, and certified satisfiability search
      has such advice and clock with soundness under every candidate
      advice string.
\end{enumerate}
Together with \cref{cor:relative-consequences}, this gives every
conclusion of \cref{thm:main}, with the same exponents, except the
uniform deterministic promise-unique algorithm in item~(i) and the
equality $\UEXP=\EXP$ in item~(iv). Uniform promise-unique search has
the one-sided randomized guarantee of \cref{prop:rand-reconstruction}.
The deterministic binary-comparator bound \eqref{eq:binary-advice}
is replaced by
\[
 2\mc\subseteq\DTIME(\Td)/\poly;
\]
its one-sided randomized version \eqref{eq:randomized-binary-advice}
retains ordinary linear advice.
\end{theorem}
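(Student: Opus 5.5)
The heart of the proof is item~(i); the other items then follow by rerunning earlier proofs with new inputs. I will build a positive comparison relation $W(\tau,b)\in\NP/\poly$ and apply the positive-relation version of \cref{thm:abg} (Appendix~\ref{app:positive-abg}). The relation should satisfy $\forall\tau\,\exists b\,W(\tau,b)$ and $W(\tau,b)\Rightarrow b\ne\chi_{\SAT}(\tau)$ on tuples of arity $k(n)=\lceil c_0\Llog(n)^d\rceil$.

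\textbf{Step 1: the sample table.} Let $b_n$ be the tuple-code length and $\gamma=(n+2)^{-a}$. Set $R=\Theta((b_n+1)/\gamma^2)$ and $J=2^{k}R$. The designated advice is a seed of a limited-independence generator producing $J$ comparator tapes $r_1,\dots,r_J$. By Chernoff-type tail bounds for $\Theta(b_n)$-wise independence, together with a union bound over the at most $2^{b_n}$ tuples of length $n$, some seed makes every tuple's empirical output counts on these tapes accurate at relative accuracy $\gamma/4$. The accuracy threshold is set at count $(1-\gamma/2)R$ for patterns of mean $\ge R$. As in \cref{lem:mode}, the true vector then appears fewer than $(1-\gamma/2)R$ times, while the heavy pattern $b^*$ appears at least that often. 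The seed has polynomial length because the independence needed is only $O(b_n)$.

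\textbf{Step 2: the relation.} Define $W(\tau,b)$ to hold iff there are $\lceil(1-\gamma/2)R\rceil$ distinct indices $j\le J$ with $g(\tau;r_j)=b$. The certificate is the occurrence list of length $\Theta(R\log J)=\poly(n)$, and it is checked in polynomial time from the advice seed. Totality comes from $b^*$. Soundness holds on the designated seed, since the true vector is light there. Applying Appendix~\ref{app:positive-abg} gives $\SAT\in\NP/\poly\cap\coNP/\poly$, and then \cref{thm:yap} gives $\PH=\Sclass^{\NP}$.

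\textbf{Step 3: the remaining items.} Item~(ii) follows from \cref{cor:rand-sat} (note $\Hrel\Rightarrow\Hwr$), \cref{lem:second}, and the proof of \cref{thm:ph} with \cref{prop:structural} replaced by~(i). For item~(iv), I will hardwire the sample-table seed as advice and use the empirical mode over the $J$ tapes as a deterministic comparator. This gives deterministic reconstruction as in \cref{prop:unique}, correct on the designated advice, and it runs within $\Td$ because $J=2^{O(\Llog(n)^d)}$. Wrapping it in \cref{lem:certified-search} adds isolation seed lists, and direct witness checking keeps soundness under every candidate advice. Item~(iii) then follows from the proofs of \cref{thm:s2-simulation} and \cref{cor:ph-advice}, with bundles now carrying both the table seed and the isolation seeds. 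Soundness under arbitrary bundles again comes only from checked witnesses.

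\textbf{Main obstacle.} I expect the main obstacle to be Step~1. Bounded independence has to control the counts of \emph{all} output patterns at once, not only $\chi$ and $b^*$, and the certificate must stay polynomial. This is where $\gamma\ge1/\poly$ matters: it keeps $R$, and hence the occurrence list, polynomial. I would first try the $k$-wise-independence tail bound of Schmidt--Siegel--Srinivasan with $k=\Theta(b_n)$. That bound only has to handle the two distinguished patterns per tuple, since soundness needs only that the true vector is light.
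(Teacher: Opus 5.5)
Your proposal is correct and follows essentially the paper's route. It uses a limited-independence table of $J=2^kR$ locally evaluable tapes and an occurrence-list certificate that puts the heavy-output relation in $\NP/\poly$, then Appendix~\ref{app:positive-abg} with \cref{thm:yap}, and finally the empirical-mode comparator inside reconstruction, certified search, and the two-bundle symmetric simulation. The one simplification in the paper is to set the threshold at $R$ itself: $J=2^kR$ outputs in $2^k$ patterns force, by pigeonhole, some pattern of count at least $R$ on every table, so only the true vector's upper tail (\cref{lem:limited-tail}) is needed, and your lower-tail estimate for $b^*$, while valid, is unnecessary.
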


The sample-table construction uses only limited independence; both
its moment estimate and its polynomial-evaluation implementation are
given below. Appendix~\ref{app:positive-abg} supplies the full
positive-relation advice transfer. The randomized-comparability definition in
Beigi--Etesami--Gohari instead requires inverse-polynomial \emph{absolute}
saving over $2^{-t}$~\cite[Definition~1 and Lemma~3]{BEG17}; that
condition restricts their parameter to logarithmic arity. The distinction
from relative saving is essential here.

\subsection{Limited independence and local table evaluation}

Only an upper-tail bound for each tuple is needed. We use the classical
limited-independence moment method of Schmidt, Siegel, and
Srinivasan~\cite{SSS95}; the following elementary form, including the
constants used below, is proved here.

\begin{lemma}[An even-moment tail bound]\label[lemma]{lem:limited-tail}
Let $Y_1,\ldots,Y_J\in\{0,1\}$ be $r$-wise independent, where $r\ge2$
is even and $r\le J$. Put $S=\sum_iY_i$ and $\mu=\E S$.
If $0<\gamma\le1$, $\mu\le(1-\gamma)R$, and
$R\ge32r\gamma^{-2}$, then
\[
 \Pr[S\ge R]\le2^{-r}.
\]
\end{lemma}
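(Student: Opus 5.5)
We will use the $r$-th moment Markov bound on the centered sum. Put $Z_i=Y_i-\E Y_i$ and $D=S-\mu=\sum_iZ_i$. Since $S\ge R$ forces $D\ge R-\mu\ge\gamma R$, and $r$ is even so $D^r\ge0$, Markov gives
\[
 \Pr[S\ge R]\le\frac{\E[D^r]}{(\gamma R)^r}.
\]
Because $r$-wise independence determines every moment of order at most $r$, $\E[D^r]$ equals its value for fully independent Bernoulli variables with the same means. Thus the whole argument reduces to a moment estimate for independent centered summands.

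Expand $\E[D^r]=\sum_{i_1,\dots,i_r}\E[Z_{i_1}\cdots Z_{i_r}]$. Any term in which some index occurs exactly once vanishes. In the remaining terms every distinct index occurs at least twice, so there are at most $s\le r/2$ distinct indices. Since $|Z_i|\le1$ and $\E Z_i^2\le p_i=\E Y_i$, we have $|\E Z_i^{k}|\le p_i$ for $k\ge2$. A term with distinct index set $\{j_1,\dots,j_s\}$ is therefore bounded by $p_{j_1}\cdots p_{j_s}$. Summing over index sets gives at most $\mu^s/s!\le\mu^s$. The number of ways to arrange $r$ positions into $s$ labelled blocks is at most $s^r$, and $s\le r/2$. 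Hence
\[
 \E[D^r]\le\sum_{s=1}^{r/2}s^r\mu^s\le\sum_{s=1}^{r/2}(r/2)^{r}\mu^{s}.
\]
Since $R\ge 32r\gamma^{-2}\ge r$, we may replace $\mu$ by the larger quantity $R$ (the case $\mu<1$ only helps). This yields $\E[D^r]\le r(r/2)^{r}\max(1,R)^{r/2}$.

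A cleaner bookkeeping uses $s^r\binom{r}{2s}$-style counting to get the standard form $\E[D^r]\le(r\max(\mu,r))^{r/2}\le (rR)^{r/2}$ up to a factor $2^{r/2}$. I would aim for $\E[D^r]\le(2rR)^{r/2}$, which is the Schmidt--Siegel--Srinivasan shape.

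Then
\[
 \Pr[S\ge R]\le\frac{(2rR)^{r/2}}{\gamma^rR^r}=\Bigl(\frac{2r}{\gamma^2R}\Bigr)^{r/2}\le\Bigl(\frac{2}{32}\Bigr)^{r/2}=4^{-r}\cdot 4^{0}\le2^{-r},
\]
using $R\ge32r\gamma^{-2}$. The slack in the constant $32$ absorbs a looser moment constant; even $\E[D^r]\le(8rR)^{r/2}$ suffices.

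The main obstacle is establishing a clean moment bound of the form $(CrR)^{r/2}$ with an explicit small $C$, carefully. I would do it by grouping terms by the partition of the $r$ positions into $s$ blocks of size at least $2$. The number of such partitions with $s$ blocks is at most $\binom{r}{s}s^{r-s}$, or more crudely $(r/2)^{r}/(\cdot)$. Combined with the factor $\mu^s/s!$, and using $\mu\le R$ and $r\le R$ to maximize $s=r/2$, this gives a bound of at most $(r^{r}/ (r/2)!)\,R^{r/2}\cdot 2^{O(r)}\le (CrR)^{r/2}$. The constant $C$ must then be checked against $32$. The condition $r\le J$ is used only so that $r$-wise independence is meaningful.
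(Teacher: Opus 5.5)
Your framework matches the paper's: Markov's inequality on the even central moment, the reduction to fully independent summands, the vanishing of terms with a singleton index, and $|\E Z_i^k|\le p_i$ for $k\ge2$. The gap is the step that carries the whole lemma, namely a bound on $\E[D^r]$ with an explicit constant.

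The only estimate you actually prove is $\E[D^r]\le r(r/2)^rR^{r/2}$. It is of order $(r^2R/4)^{r/2}$, so it loses a factor of order $r^{r/2}$. Combined with $\gamma^2R\ge32r$, it yields only $\Pr[S\ge R]\le r\,(r/128)^{r/2}$, which is not even below $1$ for $r\ge128$. The lemma must hold for every even $r$, and the paper later applies it with $r_n=2(b_n+2)$, which grows polynomially.

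The rest of your moment argument is announced rather than proved (``I would aim for $(2rR)^{r/2}$'', ``$2^{O(r)}$'', ``the constant $C$ must then be checked''). At the boundary $R=32r\gamma^{-2}$, your Markov step needs exactly $\E[D^r]\le(8rR)^{r/2}$, so an unquantified $2^{O(r)}$ factor settles nothing. The partition sketch also pairs the wrong quantities. $\binom{r}{s}s^{r-s}$ counts unlabelled partitions, and for those the sum over distinct indices assigned to the blocks is $s!\,e_s(p)\le\mu^s$, not $\mu^s/s!$. Only the labelled count $s^r$ from your earlier step may be paired with $\mu^s/s!$.

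The repair is to keep the factorial you discarded. Your own expansion gives $\E[D^r]\le\sum_{s=1}^{r/2}f(s)$ with $f(s)=s^rR^s/s!$, using $\mu\le R$.
\begin{itemize}
\item For $1\le s<r/2$, $f(s+1)/f(s)=(1+1/s)^r\,R/(s+1)\ge2R/r\ge64$.
\item Hence the sum is at most $\tfrac{64}{63}f(r/2)\le\tfrac{64}{63}(erR/2)^{r/2}$, using $(r/2)!\ge(r/(2e))^{r/2}$.
\item Markov then gives $\Pr[S\ge R]\le\tfrac{64}{63}(e/64)^{r/2}\le2^{-r}$ for every even $r\ge2$.
\end{itemize}

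With that change, your direct counting is a legitimate alternative to the paper's argument. The paper instead majorizes the exponential generating function, $\E(S-\mu)^r\le r!\,[z^r]\exp(\mu(e^z-1-z))\le r!\,z^{-r}\exp(\mu(e^z-1-z))$. It evaluates at $z=\sqrt{r/(2R)}$, using $e^z-1-z\le z^2$ and $r!\le r^r$, to get $(2erR)^{r/2}$, and finishes with $(2e/32)^{r/2}\le2^{-r}$. The generating-function route handles all partition shapes at once without term-by-term counting. The counting route, once the factorial is kept, gives a smaller constant and shows directly that the top term $s=r/2$ dominates because $R\ge32r$.
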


\begin{proof}
The $r$th centered moment agrees with that of fully independent
Bernoulli variables having the same marginals: every monomial in its
expansion involves at most $r$ distinct variables. For independent
variables put $p_i=\E Y_i$ and $Z_i=Y_i-p_i$. Then $\E Z_i=0$ and,
for every integer $j\ge2$,
$|\E Z_i^j|\le\E|Z_i|^j\le\E Z_i^2\le p_i$.
Expanding the moment and majorizing the coefficients by nonnegative
power series gives, for any $z>0$,
\begin{align*}
 \E(S-\mu)^r
 &\le r!\,[z^r]\prod_i\bigl(1+p_i(e^z-1-z)\bigr)\\
 &\le r!\,[z^r]\exp\bigl(\mu(e^z-1-z)\bigr)\\
 &\le r!z^{-r}\exp\bigl(\mu(e^z-1-z)\bigr).
\end{align*}
In the last line $z$ is a positive evaluation point; coefficient
majorization justifies the preceding two lines. Choose
$z=\sqrt{r/(2R)}\le1$. Using $e^z-1-z\le z^2$, $\mu\le R$, and
$r!\le r^r$, we obtain
\[
 \E(S-\mu)^r\le(2erR)^{r/2}.
\]
Since $r$ is even and $S\ge R$ implies $|S-\mu|\ge\gamma R$,
Markov's inequality now gives
\[
 \Pr[S\ge R]
 \le\left(\frac{2er}{\gamma^2R}\right)^{r/2}
 \le(1/4)^{r/2}=2^{-r}.
\]
\end{proof}

\subsection{Certifiable heavy outputs}

The construction applies to any language $A$. Let $k(n)\ge1$ be
polynomially bounded, nondecreasing, and polynomial-time computable.
Suppose a polynomially clocked randomized algorithm $g$, on every
$k(n)$-tuple $\tau$ of strings of length at most $n$, satisfies
\begin{equation}
 \Pr[g(\tau)=\chi_A(\tau)]\le(1-\gamma_n)2^{-k(n)},
 \qquad \gamma_n=(n+2)^{-a},
 \label{eq:relative-gap}
\end{equation}
for a fixed integer $a\ge1$.

\begin{lemma}[A locally evaluable limited-independence table]
\label[lemma]{lem:relative-table}
Under \eqref{eq:relative-gap}, there are polynomial-length advice
strings $\sigma_n$ and a deterministic polynomial-time tape evaluator
$\mathcal R(n,\sigma_n,\tau,i)$ with the following property. There are
$J=2^{k(n)}R$ sample positions, for a polynomially bounded integer $R$,
and simultaneously for every permitted tuple $\tau$,
\begin{align}
 \#\{i<J:g(\tau;\mathcal R(n,\sigma_n,\tau,i))
                    =\chi_A(\tau)\}&<R,\notag\\
 \max_{b\in\bits^{k(n)}}
 \#\{i<J:g(\tau;\mathcal R(n,\sigma_n,\tau,i))=b\}&\ge R.
 \label{eq:relative-pigeonhole}
\end{align}
The evaluator is polynomially clocked on all candidate advice strings.
\end{lemma}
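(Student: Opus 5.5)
The plan is to take $R=\Theta((b_n+1)/\gamma_n^2)$, where $b_n$ bounds the tuple-code length, so that a union bound over all permitted tuples closes. We then realize the $J$ sample tapes as the outputs of an $r$-wise independent generator with $r=\Theta(b_n+1)$. The seed $\sigma_n$ of that generator is fixed by a probabilistic existence argument and used as advice.

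Concretely, let $\rho(n)$ be the padded random-tape length of $g$ on permitted tuples; it is polynomial because $g$ is polynomially clocked. Work over $\F_{2^s}$ with $s\ge\max\{\rho(n),\,b_n+\log_2J\}$. Since $\log_2J=k(n)+\log_2R$ is polynomial, $s$ is polynomial, and an irreducible modulus of degree $s$ can be placed in the advice; a deterministic search would take $2^{O(s)}$ time and is too slow. Index sample positions by pairs $(\tau,i)$, injectively encoded as field elements. Let the advice also contain the coefficients of a uniformly random polynomial $P$ of degree $<r$. Define $\mathcal R(n,\sigma_n,\tau,i)$ to be the first $\rho(n)$ bits of $P(\langle\tau,i\rangle)$. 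The values at distinct points are $r$-wise independent and uniform in $\F_{2^s}$, and truncation preserves uniformity. So for each fixed $\tau$ and each fixed $b$, the indicators
\[
Y_i=\ind\bigl[g(\tau;\mathcal R(n,\sigma_n,\tau,i))=b\bigr]
\]
are $r$-wise independent Bernoulli variables. Evaluation takes $\poly(s,r)$ time, and malformed advice is parsed into a default, so the evaluator is polynomially clocked on all candidate advice.

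For the first inequality, take $b=\chi_A(\tau)$. By \eqref{eq:relative-gap}, $\mu=J\Pr[g=\chi_A(\tau)]\le(1-\gamma_n)R$. \Cref{lem:limited-tail}, applied with $R\ge32r\gamma_n^{-2}$ and $r$ even with $r\le J$, gives $\Pr[S\ge R]\le2^{-r}$. Choosing $r\ge b_n+2$ and taking a union bound over the at most $2^{b_n+1}$ permitted tuple encodings shows that some fixed polynomial $P$ makes the true-vector count $<R$ for every $\tau$ simultaneously. Fix that $P$ as $\sigma_n$. Its length is $rs=\poly(n)$, and $R=O((b_n+1)(n+2)^{2a})$ is polynomial.

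The second inequality needs no probability: it holds for every advice string. The $J=2^{k(n)}R$ samples are distributed among the $2^{k(n)}$ output vectors, so by pigeonhole some $b$ receives at least $R$ of them. The main point to check carefully is the field-size and encoding bookkeeping. The points $\langle\tau,i\rangle$ must be distinct for distinct pairs, even though $J$ is exponential in $k(n)$, so that only polynomial length $s$ is needed. The degree bound $r$ must also satisfy the evenness and $r\le J$ hypotheses of \cref{lem:limited-tail}; any fixed finite exceptions at small $n$ can be absorbed into the advice.
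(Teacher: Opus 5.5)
Your proposal is correct and follows the paper's proof. It uses the same parameters $r=\Theta(b_n)$ (even, $r\ge b_n+2$) and $R\ge 32r\gamma_n^{-2}$, and the same advice: an irreducible modulus together with the coefficients of a random degree-$<r$ polynomial over $\F_{2^s}$. It then applies \cref{lem:limited-tail} with a union bound over tuple codes for the first inequality, and pigeonhole on every advice string for the second.

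The only difference is minor. You make each whole tape a single truncated field value, so the independence order is $r$ and the field degree must exceed the tape length. The paper instead addresses individual tape bits, using independence order $r_nc_n$ and one coordinate per address. Both constructions yield $r$-wise independent block indicators. The distinctness of evaluation points across different tuples, which you flag as the main thing to check, is not needed: the union bound uses only the points belonging to a single tuple.
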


\begin{proof}
Write $k=k(n)$. Encode a string of length at most $n$ by the nonzero
$(n+1)$-bit string $0^{n-|x|}1x$. Thus tuple codes have length
$b_n=(n+1)k$, and there are at most $2^{b_n}$ valid codes. Let $c_n\ge1$
be a power-of-two polynomial upper bound on the number of comparator
coins, with unused coins padded. Set
\begin{equation}
 r_n=2(b_n+2),\qquad
 R=2^{\lceil\log_2(32r_n\gamma_n^{-2})\rceil},\qquad
 J=2^kR.
 \label{eq:limited-parameters}
\end{equation}
View the full table as $2^{b_n}$ rows, $J$ blocks per row, and $c_n$
bits per block. Its bit length is
\[
 M_n=2^{b_n}Jc_n,
 \qquad s_n=\log_2M_n=\poly(n).
\]
Generate these bits with exact $q_n$-wise independence, where
$q_n=r_nc_n$. To do so, fix a monic irreducible $f_n$ of degree $s_n$
over $\F_2$, identify the $M_n$ bit addresses with the elements of
$\F_{2^{s_n}}=\F_2[Z]/(f_n)$, and choose independently uniform field
coefficients $a_0,\ldots,a_{q_n-1}$. The bit at address $u$ is one
fixed nonzero binary coordinate of
\[
 P(u)=\sum_{j=0}^{q_n-1}a_ju^j.
\]
For any at most $q_n$ distinct addresses, the Vandermonde evaluation
map is surjective. Their field values, and hence the selected output
bits, are independent and uniform. This also covers any comparator
that reads its tape adaptively: each block is a full uniform tape.

Fix one tuple $\tau$, and let $Y_i$ indicate that block $i$ produces
$\chi_A(\tau)$. Any $r_n$ different blocks use $r_nc_n=q_n$ distinct
table coordinates, so these indicators are $r_n$-wise independent.
Their sum $N_\tau$ has mean at most $(1-\gamma_n)R$. Therefore
\cref{lem:limited-tail} gives
\[
 \Pr[N_\tau\ge R]\le2^{-r_n},\qquad
 \Pr[\exists\text{ valid }\tau:N_\tau\ge R]
 \le2^{b_n-r_n}=2^{-b_n-4}<1.
\]
No independence between different rows is required. Some coefficient
choice therefore has the desired strict bound for every tuple.

Advise that coefficient list and $f_n$. The total length is
$q_ns_n+s_n+1=\poly(n)$. A requested bit is obtained by Horner
evaluation; reading one comparator tape requires $c_n$ such evaluations.
All arithmetic is on $s_n$-bit field elements, so local evaluation is
polynomial-time without expanding the table. On arbitrary candidate
advice, require only the fixed lengths and a monic degree-$s_n$ modulus:
arithmetic in the quotient ring is still polynomial-time even when the
modulus is reducible. Malformed descriptions use a fixed all-zero
table. These conventions preserve the all-advice clock; the independence
argument uses the designated irreducible modulus.

Finally, on every table, $J=2^kR$ outputs occupy $2^k$ patterns, so at
least one pattern occurs $R$ times. On the designated table that pattern
cannot be the true vector, proving \eqref{eq:relative-pigeonhole}.
\end{proof}

\begin{corollary}[A total comparison relation in NP/poly]
\label[corollary]{cor:relative-np-relation}
There is a relation $W$ such that, for every permitted tuple $\tau$,
\[
 \exists b\ W(\tau,b),\qquad
 W(\tau,b)\Longrightarrow b\ne\chi_A(\tau),
\]
and membership in $W$ is verifiable nondeterministically in polynomial
time with polynomial advice.
There is also a deterministic advised comparator for these tuples
running in time $2^{k(n)}\poly(n)$, with polynomial advice.
\end{corollary}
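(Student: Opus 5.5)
Define $W$ directly from the table of \cref{lem:relative-table}: with advice $\sigma_n$ fixed, put $W(\tau,b)$ true exactly when there are $R$ distinct sample indices $i_1<\dots<i_R<J$ such that $g(\tau;\mathcal R(n,\sigma_n,\tau,i_j))=b$ for every $j$. The first step is totality. By the second line of \eqref{eq:relative-pigeonhole}, some pattern $b$ occurs at least $R$ times among the $J=2^{k(n)}R$ sampled outputs, so $W(\tau,b)$ holds for that $b$. The second step is exclusion. By the first line of \eqref{eq:relative-pigeonhole}, the true vector occurs fewer than $R$ times, so no list of $R$ distinct indices can witness it, and $W(\tau,b)$ implies $b\ne\chi_A(\tau)$. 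Both facts hold simultaneously for all permitted tuples, because the lemma's guarantee is uniform over $\tau$ on the designated advice.

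The third step is the nondeterministic verification bound, which I expect to be the only point needing care. The certificate is the list $(i_1,\dots,i_R)$. Each index has $\log_2 J=k(n)+\log_2 R$ bits, which is polynomial because $k$ is polynomially bounded and $R$ is a polynomially bounded power of two. Hence the certificate has length $O(R(k(n)+\log R))=\poly(n)$. The verifier checks the following, each in polynomial time:
\begin{enumerate}[label=(\alph*)]
\item the indices are strictly increasing and below $J$;
\item for each $j$, it recomputes the tape $\mathcal R(n,\sigma_n,\tau,i_j)$ with the polynomial-time evaluator;
\item for each $j$, it runs the polynomially clocked $g$ on that tape and compares the output with $b$.
\end{enumerate}
With $R=\poly(n)$ repetitions, the total time is polynomial. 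The evaluator is clocked on all candidate advice, so the verifier is polynomially clocked even off the designated advice. Correctness of the relation itself is claimed only on the designated $\sigma_n$, matching the paper's advice convention. The advice is $\sigma_n$, bundled for the relevant length $n$.

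The final step is the deterministic comparator. Given $\sigma_n$ and $\tau$, it enumerates all $J=2^{k(n)}R$ indices, evaluates each tape and the corresponding output of $g$, tallies the counts for the patterns that occur (sorting or a table of size at most $J$), and outputs the pattern of maximal count, breaking ties by a fixed order. That count is at least $R$, while the true vector has count below $R$, so the output is never $\chi_A(\tau)$. The running time is $J\cdot\poly(n)=2^{k(n)}\poly(n)$, using only the polynomial advice $\sigma_n$.
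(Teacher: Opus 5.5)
Your proposal is correct and follows the paper's proof essentially step for step. It defines the heavy-output relation certified by $R$ distinct sample indices, derives totality and exclusion from \eqref{eq:relative-pigeonhole}, and uses the empirical-mode comparator over all $J=2^{k(n)}R$ samples. The paper adds two small points. It bundles $\sigma_j$ for all $j$ up to the relation-input length, since ordinary $\NP/\poly$ advice is indexed by $|(\tau,b)|$ rather than by $n$. It also gives an optional prefix-and-pad extension of the comparator to arities above $k(n)$, which the statement as phrased does not require.
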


\begin{proof}
Under the designated $\sigma_n$, define $W(\tau,b)$ to mean that
$b$ occurs in at least $R$ of the $J$ sampled outputs. A witness lists
$R$ distinct sample indices. Its length is $O(R\log J)=\poly(n)$,
and the verifier evaluates the corresponding tapes and runs $g$.
Totality and exclusion of the true vector follow from
\cref{lem:relative-table}. The parameter $n$ may be supplied in unary; to put
the relation in the ordinary input-length advice convention, bundle
$\sigma_j$ for all $j$ up to the relation-input length. This remains
polynomial advice.

For deterministic comparison, enumerate all $J$ outputs and return their
empirical mode, breaking ties in a fixed order. Its count is at least
$R$, whereas the true vector's count is below $R$. Enumeration and
sorting cost $2^{k(n)}\poly(n)$; $\log J$ is polynomial. This procedure
has its stated clock and always returns a vector even on incorrect
candidate advice, using a fixed default parsing convention. For a tuple
of arity larger than $k(n)$, where $n$ is the maximum length in the whole
tuple, apply the level-$n$ procedure to its first $k(n)$ entries and append
zeros. The prefix entries have length at most $n$, so its incorrect
prefix vector makes the entire returned vector incorrect.
\end{proof}

The relation in \cref{cor:relative-np-relation} is not asserted to be in
$\Pclass/\poly$. A heavy vector has a short certificate, whereas
finding it by the construction can take $2^{k(n)}\poly(n)$ time.
This distinction is what permits the next step without asserting
polynomial-size SAT decision circuits.

\subsection{The positive form of the advice theorem}

\begin{lemma}[Positive-query Amir--Beigel--Gasarch transfer]
\label[lemma]{lem:positive-abg}
Let $k(n)\ge1$ be polynomially bounded. Suppose a relation $W$ is total
on $k(n)$-tuples of $n$-bit inputs and accepts only vectors different from
the true $A$-membership vector. If $W\in\NP/\poly$, then
\[
 A\in\NP/\poly\cap\coNP/\poly.
\]
The two verifiers may use a common polynomial-length advice sequence.
\end{lemma}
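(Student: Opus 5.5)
The plan is to rerun the Amir--Beigel--Gasarch advice argument behind \cref{thm:abg}. The only relation ever queried is $W$, and it is queried only positively, through membership witnesses. So an $\NP/\poly$ verifier for $W$ can be inlined into the outer nondeterministic verifier, and its advice bundled with the outer advice. This is what removes the oracle superscript in $\NP^W/\poly$.

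The basic certificate has the following form. Let $Y=\{(y_1,c_1),\ldots,(y_s,c_s)\}$ be a list of $n$-bit strings labelled with their true bits $c_j=\chi_A(y_j)$. Say $x$ is \emph{resolved} by $Y$ if two things hold. First, there is a $k(n)$-tuple $\tau$ with $x$ at some position $i$ and every other entry taken from $Y$. Second, there is a vector $b$ with $W(\tau,b)$ that agrees with the labels on all positions other than $i$. Since $b\neq\chi_A(\tau)$ and $b$ is correct off position $i$, it must satisfy $b_i\neq\chi_A(x)$. Hence $\chi_A(x)=1-b_i$, and the witness $(\tau,i,b,\text{witness for }W)$ certifies the bit of $x$ in both directions. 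The NP verifier for $A$ guesses such a certificate with $b_i=0$. The coNP verifier (that is, an NP verifier for $\overline A$) guesses one with $b_i=1$. Both verifiers use one common advice string: the list $Y$ together with the designated $W$-advice for the relevant tuple lengths. Soundness of both verifiers needs only that $W$ is correct on its designated advice and that the labels in $Y$ are correct.

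It remains to show that a polynomial-size $Y$ can resolve every $x\in\bits^n$, or, failing that, that a short sequence of such lists can be layered. I would argue by downward induction on the number $j$ of tuple positions still filled by unresolved strings. Fix the current unresolved set $U$. Totality of $W$ on $k$-tuples drawn from $U$ gives, for every such tuple, an accepted $b$ differing from $\chi_A$ somewhere. The set of true vectors $\chi_A\restriction U$ is therefore excluded on every $k$-subset. By the Sauer--Shelah style counting of \cref{lem:sauer}, or directly by an averaging argument over tuples, there is a position pattern in which $k-1$ truthfully labelled members of $U$ pin the remaining disagreement onto the last coordinate for at least a $1/\poly$ fraction, or a $|U|^{-1+1/k}$-type fraction, of the other members of $U$. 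Adding those $k-1$ strings to $Y$ and iterating greedily in the set-cover fashion shrinks $U$ geometrically. This needs $O(\poly(n))$ rounds, because $k(n)$ is polynomially bounded and $|U|\le2^n$.

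The main obstacle is this combinatorial step. A single accepted $b$ may differ from the truth in several coordinates, so one cannot simply label $k-1$ entries truthfully and read off the last one. I would first try the ABG ``first disagreement'' induction. For each $j\le k$, call a $(j-1)$-prefix of labelled strings \emph{good} if, for many $x$, the accepted vectors on (prefix, $x$, arbitrary completion) agree with the prefix labels. If no prefix of length $j-1$ is good for a dense set, one passes to length $j$ with a reduced set. The pigeonhole at depth $k$ then contradicts totality. Carrying the density loss through at most $k(n)$ levels keeps $|Y|$ polynomial. I expect this accounting, rather than the NP/poly inlining, to need the most care.
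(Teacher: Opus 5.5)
\textbf{There is a genuine gap in the combinatorial core.} Your inlining step is fine and matches the paper: every use of $W$ is a positive test, so its $\NP/\poly$ verifier and advice can be bundled into the outer verifier. The problem is that you have only one certificate type. It asks for an accepted $b$ that agrees with the true labels of the advised strings at every position except the one holding $x$. This is exactly the paper's Case~1, where truthfully labelled anchors reduce the arity all the way to a total one-coordinate relation $W_1$.

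Your claim that some polynomial-size $Y$, or a layered sequence of such lists, resolves every $x$ is false in general. Likewise, the ``contradiction with totality at depth $k$'' does not occur. Take any $A\in\Pclass$ and $k\ge2$, and let $W(\tau,b)$ hold exactly when $b$ is the bitwise complement of $\chi_A(\tau)$. This $W$ is total, excludes the true vector, and lies in $\Pclass$. Yet every accepted vector disagrees with the true label of every entry taken from $Y$. So no $x\notin Y$ is ever resolved, whatever $Y$ is and however it is layered. The density claim you hoped to get from Sauer--Shelah or averaging therefore cannot hold.

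\textbf{The missing idea is the second branch of the Amir--Beigel--Gasarch dichotomy.} Appending truthfully labelled anchors as a suffix forces the disagreement into the remaining coordinates and so reduces the arity. The paper does this greedily. It adds an anchor $z$ whenever the set $D_m(z)$ covers a quarter of the residual set, where $D_m(z)$ consists of the $(m-1)$-tuples $\mathbf t$ having an accepted vector on $(\mathbf t,z)$ that is correct at $z$.

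When no such $z$ exists, a nonempty residual set $T$ remains with $|D_m(x)\cap T|<|T|/4$ for every $x$. On each $\mathbf u\in T\setminus D_m(x)$, every accepted vector (with truthful suffix) is wrong at the coordinate of $x$, whatever it says about $\mathbf u$. That coordinate therefore reveals $\chi_A(x)$ without any agreement condition on the other entries.

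Membership of $\mathbf u$ in $T\setminus D_m(x)$ cannot be checked, so the paper handles it as follows:
\begin{itemize}
\item It advises a fixed list of $M=16(n+1)+1$ tuples from $T$. By a Chernoff bound and a union bound over $\bits^n$, this list contains at least $h=(M+1)/2$ good positions for every $x$ simultaneously.
\item The verifier for bit $b$ guesses $h$ distinct positions, each with a $W$-certificate whose $x$-coordinate is $1-b$.
\item At most $h-1$ positions are bad for $x$, so the wrong bit cannot pass.
\end{itemize}
In the complement example, the greedy cover fails immediately at $m=k$, with $T$ equal to all of $(\bits^n)^{k-1}$, and every position is good. Without this residual-set majority verifier, your argument cannot be completed.
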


\begin{proof}
Appendix~\ref{app:positive-abg} gives a self-contained construction.
The anchor-cover branch uses one positive $W$-certificate. In the
residual branch, a fixed list of $M=16(n+1)+1$ tuples has a strict
majority of good positions for every input. The NP verifier selects
exactly $(M+1)/2$ distinct positions and supplies one positive
$W$-certificate for each. The appendix proves the simultaneous
majority guarantee and the soundness of this verifier for both
requested membership bits.
\end{proof}

\begin{corollary}[Polynomial arity with an inverse-polynomial relative gap]\label[corollary]{cor:relative-advice}
Under \eqref{eq:relative-gap},
\[
 A\in\NP/\poly\cap\coNP/\poly.
\]
In particular, $\Hrel$ implies this containment for SAT.
\end{corollary}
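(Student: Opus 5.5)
The plan is to compose the two results just established. First, under \eqref{eq:relative-gap}, I apply \cref{cor:relative-np-relation}. It supplies a relation $W$ that is total on $k(n)$-tuples of strings of length at most $n$ and accepts only vectors $b\neq\chi_A(\tau)$. Membership in $W$ is verified nondeterministically in polynomial time with the polynomial advice $\sigma_n$: the witness lists $R$ distinct sample indices, and the verifier recomputes each sampled tape via $\mathcal R$ and runs $g$. Thus $W\in\NP/\poly$, after bundling the advice $\sigma_j$ for all $j$ up to the relation-input length, as noted in the proof of that corollary.

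Second, I feed $W$ into \cref{lem:positive-abg}. One mismatch needs care: the corollary speaks of tuples of strings of length \emph{at most} $n$, while the lemma speaks of $k(n)$-tuples of $n$-bit inputs. Restricting $W$ to tuples of exactly $n$-bit strings preserves totality, exclusion of the true vector, and the $\NP/\poly$ verifier. The lemma then yields $A\in\NP/\poly\cap\coNP/\poly$, with common advice. The arity $k(n)$ is polynomially bounded by assumption, which is exactly the lemma's hypothesis. So the whole argument is a two-line composition.

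For the ``in particular'' clause, I take $A=\SAT$ with the $\Hrel$ comparator and $k(n)=\lceil c_0\Llog(n)^d\rceil$. This $k$ is nondecreasing, polynomial-time computable and polynomially bounded. \Cref{def:relative-saving} gives error at most $(1-(n+2)^{-a})2^{-t}$ on every tuple of arity $t\ge c_0\Llog(n)^d$ whose maximum individual length is $n$. This matches \eqref{eq:relative-gap} with $\gamma_n=(n+2)^{-a}$ on tuples of strings of length at most $n$, provided the maximum length is exactly $n$. The one point needing a check is tuples whose entries all have length below $n$. I would handle these by noting that the level-$n$ constructions in \cref{lem:relative-table} only ever need the exclusion property on the tuples fed to them. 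For the lemma's $n$-bit inputs, the maximum length is exactly $n$, so the hypothesis applies directly.

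The main obstacle is this bookkeeping about lengths and advice conventions, not any new combinatorics. The substantive work is already contained in \cref{lem:relative-table} (limited independence plus the pigeonhole certificate) and in the appendix proof of \cref{lem:positive-abg}.
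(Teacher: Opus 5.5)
Your proposal is correct and follows the paper's proof: restrict \cref{cor:relative-np-relation} to equal-length inputs, apply \cref{lem:positive-abg}, and take $k(n)=\lceil c_0\Llog(n)^d\rceil$ for SAT. You do not need to reopen \cref{lem:relative-table} to handle shorter tuples: for a tuple of maximum length $n'\le n$, $\Hrel$ applies at level $n'$ because $k(n)\ge c_0\Llog(n')^d$, and it gives saving $(n'+2)^{-a}\ge(n+2)^{-a}$, so \eqref{eq:relative-gap} holds in full by monotonicity, which is exactly how the paper disposes of that case.
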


\begin{proof}
Restrict \cref{cor:relative-np-relation} to equal-length inputs and apply
\cref{lem:positive-abg}. For SAT take
$k(n)=\lceil c_0\Llog(n)^d\rceil$. Every tuple of inputs of length
at most $n$ is covered, and its relative saving is at least
$(n+2)^{-a}$ by monotonicity of that lower bound.
\end{proof}

\subsection{Recovering the clocks and the symmetric simulation}

The reconstruction and isolation arguments in
\cref{sec:unique,sec:isolation,sec:randomized} give the following interfaces:
\begin{enumerate}[label=(\alph*)]
\item a weak randomized comparator gives one-sided randomized SAT search
      in time $\Td$;
\item a deterministic valid comparator evaluated in time $\Td$ on the
      polynomial-size auxiliary tuples still gives deterministic
      promise-unique search in time $\Td$;
\item the reconstruction clock is valid for arbitrary excluded patterns:
      each coordinate subset forbids one pattern, so Sauer--Shelah bounds
      the surviving list even when the exclusion is semantically wrong;
\item every returned candidate is checked directly in the input circuit.
\end{enumerate}
Here multiplication of a fixed number of $\Td$ factors, and polynomial
blowup in the argument of $\Td$, preserve its exponent $d$.
These interfaces are the parameter-explicit reconstruction of
\cref{prop:unique,prop:rand-reconstruction} and the randomized SAT
algorithm of \cref{cor:rand-sat}, based on Sivakumar's method~\cite{Siv99}.

\begin{proof}[Proof of \cref{thm:relative-gap-transfer}]
\Cref{cor:relative-advice} gives $\coNP\subseteq\NP/\poly$.
Apply \cref{thm:yap}~\cite{CCHO05} to obtain $\PH=\Sclass^{\NP}$.
Also $\Hrel\Rightarrow\Hwr$, so \cref{cor:rand-sat} supplies the
uniform randomized SAT algorithm. Substitute these two inputs into
\cref{lem:second} and the proof of \cref{thm:ph}. They give the
level-independent clock $2^{O(\Llog(n)^{d^2})}$ for all of PH.

For deterministic advised reconstruction, replace each comparison on
an auxiliary tuple of maximum length $v$ by the empirical-mode
comparator from \cref{cor:relative-np-relation}, using its advice $\sigma_v$.
Since $v=N^{O(1)}$ and $k(v)=O(\Llog(N)^d)$, each replacement costs
$2^{O(\Llog(N)^d)}$. There are at most $2^{O(\Llog(N)^d)}$
comparisons, so the total clock has the same exponent. Bundle the
$\sigma_v$ for all lengths up to a fixed polynomial bound in $N$;
the resulting advice is polynomially long. Correct advice makes all
comparisons valid and supplies promise-unique completeness.

For certified search, apply the isolation-list construction in
\cref{lem:certified-search}.
The list for length $N$ has $O(N)$ blocks of length $3N+1$; all restricted
circuits have length at most a fixed polynomial $S(N)$. Include table
advice for every auxiliary maximum length reached by reconstruction on
circuits up to $S(N)$. If $B(v)$ bounds those lengths, the added advice
has length at most
\[
 \sum_{j\le B(S(N))}|\sigma_j|=\poly(N).
\]
Resetting and reusing a description at a length introduces no further
advice cost. There are only polynomially many isolated circuits.

On arbitrary candidate advice, every comparison still returns a pattern,
so the reconstruction list and time bounds remain valid. Any witness is
checked directly in the original circuit; thus incorrect advice can
cause failure but not a false positive. Under designated table advice,
the reconstruction is correct on every uniquely satisfiable restriction,
and a designated isolation list hits one for every satisfiable circuit
of the input length. This establishes certified search with the stated
clock, polynomial advice, and all-advice soundness.

Use that search as $F$ in \cref{lem:two-bundles} and the proof of
\cref{thm:s2-simulation}. Each prover includes seed lists and the table
descriptions for all polynomially bounded query lengths. The OR of the
two verified search outcomes equals CircuitSAT whenever one bundle is
designated. Induction on the adaptive query sequence therefore preserves
the winning-certificate guarantees of $\class{S}_2^{\NP}$. Both bundles
have polynomial length, and the resulting deterministic predicate runs
in time $\Td$ on all bundles. This proves
$\PH\subseteq\class{S}_2^p[\Td]$.

Finally, \cref{lem:ph-np-advice} makes NP/poly closed under
complementation and polynomially bounded projections, yielding
$\PH\subseteq\NP/\poly$. Compile each advised NP verifier into a
polynomial-size circuit with its advice and input hardwired. Running
certified search with its designated table-and-isolation advice decides
that language deterministically in time $\Td$ with polynomial advice.
This proves the remaining containment.
\end{proof}

\begin{corollary}[Time and advice consequences]\label[corollary]{cor:relative-consequences}
Under $\Hrel$,
\begin{gather*}
 \QH=\BPQP,\qquad
 \class{EXPH}=\class{BPEXP},\qquad
 \class{NEXP}=\class{REXP},\\
 \mathrm{BPP}\subsetneq
 \BPTIME\!\left(2^{O(\Llog(n)^{d^2+\varepsilon})}\right)
 \quad(\varepsilon>0\text{ fixed}).
\end{gather*}
The fixed signed-circuit diagonal language $G$ of
\cref{cor:unconditional-qh-hard} satisfies
\[
 G\in\BPTIME\!\left(2^{O((\Llog(n)\log\Llog(n))^{d^2})}\right)
 \setminus\bigl(\NP/\poly\cup\coNP/\poly\bigr).
\]
Its hardness is the unconditional diagonalization in
\cref{sec:nd-hard-extension}; the hypothesis supplies the randomized
clock. The same transfer places the fixed hard language $H$ in
$\BPTIME(2^{O(n^{d^2})})$ and yields
\[
 \PH/\poly=\NP/\poly=\coNP/\poly,\qquad
 \PH\subsetneq\QH=\BPQP,\qquad
 \BPQP\not\subseteq\PH/\poly.
\]
The fixed-polynomial circuit lower bounds of \cref{rem:kannan-clock}
also retain their exact exponent $d^2$.
\end{corollary}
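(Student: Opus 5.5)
The proof replays the $\hyp$-based arguments of \cref{sec:qh,sec:exp,sec:nd-hard-extension}, substituting the two inputs that \cref{thm:relative-gap-transfer} already supplies under $\Hrel$. The first input is the uniform one-sided SAT algorithm of \cref{cor:rand-sat}: it follows from $\Hrel\Rightarrow\Hwr$ and gives $\SAT,\CSAT\in\RTIME(\Td)$. The second is the level-independent bound $\PH\subseteq\BPTIME(2^{O(\Llog(n)^{d^2})})$ of \cref{thm:relative-gap-transfer}(ii). The other needed fact is the advice containment $\coNP\subseteq\NP/\poly$ from \cref{cor:relative-advice}. I will check that every earlier proof uses $\hyp$ only through these three facts, and then list the consequences.

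\emph{Time collapses.} For $\QH=\BPQP$, \cref{lem:qp-block} and the first half of \cref{thm:qh} need only a randomized CircuitSAT algorithm whose clock is closed under fixed quasipolynomial compositions. \cref{cor:rand-sat} supplies one. The reverse covering inclusion is unconditional. For $\NEXP=\REXP$, the proof of \cref{thm:exp} uses only the one-sided SAT algorithm. For $\EXPH=\BPEXP$, I rerun \cref{lem:padding} with \cref{thm:relative-gap-transfer}(ii) in place of \cref{thm:ph}, which yields the clock $t(n)^{O(1)}+2^{O(\Llog(t(n))^{d^2})}$. \Cref{cor:exph} then follows exactly as before, again with an unconditional reverse inclusion.

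\emph{Diagonal languages and the separation.} Padding $H$ through the same modified \cref{lem:padding}, as in \cref{thm:diagonal-randomized}, gives $H\in\BPTIME(2^{O(n^{d^2})})$. Restricting to prefixes gives $G\in\BPTIME(2^{O((\Llog(n)\log\Llog(n))^{d^2})})$. Exclusion from $\NP/\poly\cup\coNP/\poly$ is the unconditional content of \cref{cor:unconditional-qh-hard}. The bound $(\log\Llog n)^{d^2}=o(\Llog(n)^\varepsilon)$ and Adleman's theorem then give the BPP separation, exactly as in \cref{cor:qp-no-np-poly,cor:conditional-bpp-hierarchy}. The fixed-polynomial lower bounds of \cref{rem:kannan-clock} come from applying the same PH clock to $L_k\in\Sigma_5^p$.

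\emph{Advice statements.} \Cref{cor:relative-advice} gives $\coNP\subseteq\NP/\poly$, so \cref{cor:ph-strict-qh} applies verbatim. With \cref{lem:ph-np-advice}, it gives $\PH/\poly=\NP/\poly=\coNP/\poly$ and $\PH\subsetneq\QH$. Combining this with $\QH=\BPQP$ yields $\BPQP\not\subseteq\PH/\poly$.

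\emph{Main check.} The one point needing care is that \cref{thm:ph} was originally derived through \cref{prop:structural}, which relies on $\hyp$. So I must confirm that every later use of the PH clock goes through \cref{thm:relative-gap-transfer}(ii) instead. That theorem was itself obtained from \cref{cor:relative-advice}, \cref{thm:yap}, \cref{lem:second} and \cref{cor:rand-sat}, none of which requires deterministic comparability. I must also confirm that nowhere uses the deterministic promise-unique solver of \cref{prop:unique}. Those uses appear only in $\UEXP=\EXP$ and \eqref{eq:binary-advice}, and both are excluded from the corollary.
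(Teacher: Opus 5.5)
Your proposal is correct and follows the paper's own proof. That proof likewise reruns \cref{thm:qh,cor:exph}, the $\NEXP=\REXP$ part of \cref{thm:exp}, \cref{thm:diagonal-randomized,rem:kannan-clock,cor:conditional-bpp-hierarchy,cor:ph-strict-qh} with the inputs from \cref{thm:relative-gap-transfer,cor:relative-advice}: the randomized SAT algorithm, the level-independent PH clock, and $\coNP\subseteq\NP/\poly$. One small inaccuracy in your final check: the deterministic solver of \cref{prop:unique} also enters through \cref{lem:certified-search}, and hence \cref{cor:ph-advice,thm:s2-simulation}, not only $\UEXP=\EXP$ and \eqref{eq:binary-advice}; none of these feeds a conclusion listed here, so your argument is unaffected.
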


\begin{proof}
Apply \cref{thm:relative-gap-transfer} in the proofs of
\cref{thm:qh,cor:exph}, the $\NEXP=\REXP$ part of \cref{thm:exp},
and \cref{thm:diagonal-randomized,cor:ph-advice,cor:ph-strict-qh}.
The proof of \cref{rem:kannan-clock} uses the same level-independent
PH simulation on its fixed projection language. This accounts for all
remaining assertions of \cref{thm:main} covered by the transfer.
The time separation follows exactly as in
\cref{cor:conditional-bpp-hierarchy}: use the unconditional $G$, the
level-independent PH simulation with \cref{lem:padding}, and
$\mathrm{BPP}\subseteq\Pclass/\poly$~\cite{Adl78}. For binary membership comparability,
compile the original linear-advice verifier and use randomized SAT search
to keep linear advice; deterministic certified search additionally stores
the polynomial-size table descriptions.
\end{proof}

\subsection{The certificate-length boundary}
\label{subsec:certificate-boundary}

The two regimes are separated by the length of the heavy-output
certificate, not by the length of the sample-table description. For
arity $k(n)$ and tuple-code length $b_n=(n+1)k(n)$, the construction
chooses the least power of two
\[
 R\ge32r_n\gamma_n^{-2},\qquad r_n=2(b_n+2),\qquad J=2^{k(n)}R.
\]
Hence
\begin{equation}
 R=\Theta\bigl((b_n+1)/\gamma_n^2\bigr),\qquad
 |\pi|=\Theta(R\log_2J)
       =\Theta\bigl(R(k(n)+\log_2R)\bigr),
 \label{eq:occurrence-boundary}
\end{equation}
where $\pi$ explicitly lists $R$ distinct occurrence indices. Each
index is checked by local table evaluation followed by one run of the
original polynomial-time comparator.

\paragraph{Inverse-polynomial relative advantage.}
If $\gamma_n^{-1}\le\poly(n)$, both $R$ and $|\pi|$ are polynomially
bounded. The heavy-output relation is in $\NP/\poly$, and
\cref{lem:positive-abg} converts it into two-sided nondeterministic
advice for SAT. This gives $\PH=\Sclass^{\NP}$ and all the transferred
conclusions of \cref{thm:relative-gap-transfer,cor:relative-consequences},
including PH exponent $d^2$ and symmetric predicate exponent $d$.
The uniform deterministic promise-unique solver and $\UEXP=\EXP$
remain specific to the deterministic hypothesis.

\paragraph{Quasipolynomially small relative advantage.}
If the available guarantee is only
$\gamma_n=2^{-\Theta(\Llog(n)^d)}$ with $d>1$, then for
$k(n)=\Theta(\Llog(n)^d)$ the threshold $R$ and the occurrence
certificate in \eqref{eq:occurrence-boundary} have length
$2^{\Theta(\Llog(n)^d)}$. Listing and checking those occurrences is
quasipolynomial, not polynomial. It therefore does not establish the
$\NP/\poly$ relation required by \cref{lem:positive-abg}.
The weaker guarantee still gives randomized SAT exponent $d$, PH
exponent $d^k$ at fixed level $k$, and the hierarchy equalities and time
separations of \cref{thm:wr}; whether it implies
$\NP\subseteq\coNP/\poly$ remains unresolved here.

The table advice stays polynomial. Indeed its independence order
$q_n=r_nc_n$ does not depend on $\gamma_n$, and its field degree is
\[
 s_n=b_n+k(n)+\log_2R+\log_2c_n=\poly(n).
\]
The same even-moment bound and polynomial-evaluation construction apply
with this larger $R$. Local evaluation depends polynomially on
$\log R$, so empirical-mode enumeration takes
$2^{k(n)}R\,\poly(n)=2^{O(\Llog(n)^d)}$ under $\Hwr$.
Advised reconstruction and isolation therefore still give certified
SAT search with polynomial advice and exponent $d$. They do not supply
a polynomial-time verifier for the heavy-output relation. The
certificate-length threshold explains the two hypotheses used in the
paper; it is a threshold of this method, rather than a converse theorem.
For $d=1$, the weak saving is already inverse-polynomial, so the two
regimes coincide up to constants.

For example, when $d=2$, a relative saving of $(n+2)^{-3}$ gives a
polynomial heavy-output certificate and randomized PH exponent $4$.
The weaker saving $2^{-\Theta(\lambda(n)^2)}$ retains randomized SAT
exponent $2$ but yields only exponent $2^k$ at the $k$th PH level by
\cref{thm:wr}, without the structural advice step.

Designated table descriptions are nonuniform, and an $\NP/\poly$
comparison relation need not have a polynomial-time output selector.
Thus the relative-gap transfer supplies neither $\SAT\in\Pclass/\poly$
nor the ordinary second-level collapse for $d>1$. The deterministic
oracle of \cref{sec:oracle} satisfies both randomized hypotheses, so
\cref{cor:oracle-hd} still limits improvements to the logarithmic power
$d$ in their randomized SAT clock.

\section{Summary and status}
\label{sec:summary}

Under $\hyp$, certified search gives
\[
 \PH\subseteq\Sclass^p\!\left[2^{O(\Llog(n)^d)}\right],\qquad
 \PH\subseteq\DTIME\!\left(2^{O(\Llog(n)^d)}\right)/\poly.
\]
The symmetric certificates have polynomial length. Two competing seed
bundles answer the simulated NP queries exactly as soon as one bundle
is designated, avoiding a completeness test for guessed advice.
The uniform randomized PH clock is $2^{O(\Llog(n)^{d^2})}$,
independent of the hierarchy level. Padding yields $\QH=\BPQP$,
$\EXPH=\BPEXP$, $\NEXP=\REXP$, and $\UEXP=\EXP$; the binary
comparator result separately retains ordinary linear advice.

The unconditional diagonal languages are
$H\in\Sigma_5 E$ and
$G\in\QH\setminus(\NP/\poly\cup\coNP/\poly)$.
The advice consequence of $\hyp$ gives $\PH\subseteq\NP/\poly$,
and hence $\PH\subsetneq\QH$ independently of the randomized
simulation. Applying that simulation to $G$ and using Adleman's
theorem instead gives the quantitative, advice-free separation
\[
 \mathrm{BPP}\subsetneq
 \BPTIME\!\left(2^{O(\Llog(n)^{d^2+\varepsilon})}\right)
 \qquad(\varepsilon>0\text{ fixed}).
\]
This clock remains subexponential under every fixed number of
self-compositions, the smaller-gap regime distinguished from the
Karpinski--Verbeek subexponential separation by
Lu--Oliveira--Santhanam~\cite{karpinski-verbeek87,lu-oliveira-santhanam21}.
The qualitative implication from $\QH\subseteq\BPQP$ also applies
under other hypotheses, including $\NP\subseteq\mathrm{BPP}$.
\Cref{rem:kannan-clock} retains exponent $d^2$ for a separate language
at each fixed polynomial circuit-size bound.

The new relativized lower bound complements the randomized SAT upper
bound. For each admissible $K$, one layered oracle distribution almost
surely gives a $K$-membership-comparable complete set,
$\NP^A=\coNP^A$, and a unary NP language outside
$\BPTIME^A(T)$ whenever $T(n)2^{-K(n)}\to0$.
For $K(n)=\lceil c_0\Llog(n)^d\rceil$, $d\ge2$, this rules out all
clocks $2^{\delta\Llog(n)^d}$ with $0<\delta<c_0$ and fixes the
logarithmic power $d$ among relativizing randomized bounds.
The one-length coupling controls adaptive queries to every table level;
separated input lengths turn that estimate into a single-oracle result.

\paragraph{Two randomized-comparator regimes.}
An inverse-polynomial relative saving $\gamma(n)\ge1/\poly(n)$ below
$2^{-t}$ gives the structural collapse and the level-independent
randomized PH exponent $d^2$, together with the exponent-$d$ symmetric
and polynomial-advice simulations. Limited independence supplies
succinct sample tables, and polynomially many occurrence indices certify
a heavy output. The resulting $\NP/\poly$ comparison relation feeds
the self-contained positive-relation theorem in
Appendix~\ref{app:positive-abg}. \Cref{thm:relative-gap-transfer}
therefore carries all conclusions of \cref{thm:main} except its uniform
deterministic promise-unique algorithm and $\UEXP=\EXP$.
For binary comparators, deterministic decoding uses polynomial advice;
one-sided randomized decoding keeps linear advice.

A saving only $2^{-O(\Llog(n)^d)}$ still gives one-sided randomized
SAT time $2^{O(\Llog(n)^d)}$, fixed-level PH exponent $d^k$,
$\QH=\BPQP$, $\EXPH=\BPEXP$, and $\NEXP=\REXP$. The fixed signed
existential diagonal language receives randomized exponent $d^5$;
a fourth-level deterministic diagonal gives the BPP-only separation
with exponent $d^4$. For $d>1$, the implication
$\Hwr\Rightarrow\NP\subseteq\coNP/\poly$ remains unresolved here.
The obstruction is explicit:
\[
 R=\Theta\bigl((b_n+1)/\gamma(n)^2\bigr),\qquad
 |\pi|=\Theta(R\log J).
\]
For quasipolynomially small $\gamma$, this occurrence certificate can
be quasipolynomially long even though the table description remains
polynomial. Polynomial-advice certified SAT search therefore survives,
but the positive-relation structural transfer does not follow. At
$d=1$ the weak saving is inverse-polynomial and this distinction
vanishes. The same deterministic oracle comparator covers both
regimes and limits their uniform randomized SAT exponent.

For $d>1$, the ordinary second-level target \eqref{eq:target} remains
unproved. The oracle has $\PH^A=\NP^A$, so it addresses the randomized
time improvement rather than that target. In the certified-search
route, a polynomial-time replacement with polynomial advice is
equivalent to $\SAT\in\Pclass/\poly$. The explicit longer predicate
clock is the remaining step in that route.

\paragraph{Preparation and provenance.}
This note was developed through an AI-assisted mathematical discussion.
The human participant supplied the certified-search, symmetric-alternation,
padding, and layered-oracle proposals, together with the attribution and
hierarchy-context revisions. Integration checked witness extraction,
common-length encodings, the oracle's level dependence and hybrid
independence, finite exceptional inputs, and the stated resource bounds.
The randomized-comparator extension was also supplied by the human
participant; integration checked mode amplification and the
randomness, certificate-length, and advice bounds. The relative-gap
supplement is integrated in \cref{sec:relative-gap}. The human
participant subsequently proposed replacing its sample-table generator
by limited independence and requested the complete positive-relation
proof. This revision implements that replacement, gives the finite-field
local evaluator and its even-moment tail bound, and proves the
cover-versus-residual transfer in Appendix~\ref{app:positive-abg} with
an explicit majority threshold and witness format. The randomized
results are organized into two regimes according to the occurrence-
certificate length. The imported results are cited where used. The arguments have not been formally verified in a
proof assistant.

\clearpage
\appendix
\section{A self-contained positive-relation advice transfer}
\label{app:positive-abg}

We prove \cref{lem:positive-abg} by the cover-versus-residual argument
of Amir, Beigel, and Gasarch~\cite[Theorem~4.4]{ABG03}, retaining the
certificates for every positive relation test. The proof gives a single
polynomial-length advice string for both membership and nonmembership;
it never asks the verifier to certify a negative answer about $W$.

\begin{proof}[Proof of \cref{lem:positive-abg}]
Fix $n\ge1$, put $X=\bits^n$, $k=k(n)$, and let
$a(x)=\chi_A(x)$. For tuples, $a(\mathbf x)$ denotes the vector of
coordinate labels. All choices below are nonuniform choices for this
one length. The case $n=0$ is handled by advising its one membership bit.

\paragraph{Normalize the NP/poly verifier for $W$.}
A fixed polynomial $Q(n)$ bounds the encoding lengths of every pair
$(\boldsymbol\tau,\boldsymbol\beta)$ with
$\boldsymbol\tau\in X^k$ and $\boldsymbol\beta\in\bits^k$.
Bundle the designated $W$-advice for every encoding length at most
$Q(n)$ into $\eta_n$. Pad certificates to a common polynomial bound
$w(n)$. There is then a deterministic polynomial-time predicate $V_W$
such that, for every pair under consideration,
\begin{equation}
 W(\boldsymbol\tau,\boldsymbol\beta)
 \iff \exists\omega\in\bits^{w(n)}\;
 V_W(n,\boldsymbol\tau,\boldsymbol\beta,\eta_n,\omega)=1.
 \label{eq:abg-np-normalization}
\end{equation}
The advice and clock are polynomial in $n$, since $k(n)$ has a fixed
polynomial upper bound. The value of $k(n)$ can itself be advised;
computability of the arity function is unnecessary for this lemma.
Every invocation below is a positive test of the left side of
\eqref{eq:abg-np-normalization} and will be implemented by guessing
$\omega$ and checking its right side.

\paragraph{Relations with a truthfully labeled suffix.}
Suppose finite anchor pools $Z_{m+1},\ldots,Z_k\subseteq X$ have been
chosen, together with their true labels. Define a relation on $m$-tuples
by
\begin{equation}
 \begin{split}
 W_m(\mathbf t,\boldsymbol\beta)\iff{}&
 \exists z_{m+1}\in Z_{m+1},\ldots,z_k\in Z_k:\\
 &W\bigl((\mathbf t,z_{m+1},\ldots,z_k),
        (\boldsymbol\beta,a(z_{m+1}),\ldots,a(z_k))\bigr).
 \end{split}
 \label{eq:abg-suffix-relation}
\end{equation}
At $m=k$ the suffix is empty and $W_k=W$. In all uses of
\eqref{eq:abg-suffix-relation}, a witness chooses one index in each
pool and one $V_W$-certificate. It does not enumerate the product of the
pools. Thus each $W_m$ has a polynomial-size NP certificate given the
pools and their labels.

We maintain the invariant that for every $\mathbf t\in X^m$,
\begin{equation}
 \exists\boldsymbol\beta\in\bits^m\ W_m(\mathbf t,\boldsymbol\beta),
 \qquad
 W_m(\mathbf t,\boldsymbol\beta)
       \Longrightarrow \boldsymbol\beta\ne a(\mathbf t).
 \label{eq:abg-suffix-invariant}
\end{equation}
It holds initially by the hypothesis on $W$. The exclusion implication
always follows from the true suffix labels; the construction below
preserves totality when decreasing $m$.

\paragraph{The finite covering construction.}
At a stage $m\ge2$ satisfying \eqref{eq:abg-suffix-invariant}, for each
$z\in X$ define
\[
 D_m(z)=\{\mathbf t\in X^{m-1}:
      \exists\boldsymbol\beta\in\bits^{m-1}\;
      W_m((\mathbf t,z),(\boldsymbol\beta,a(z)))\}.
\]
Start with $T=X^{m-1}$ and $Z_m=\varnothing$. While $T\ne\varnothing$
and some $z$ satisfies $|D_m(z)\cap T|\ge|T|/4$, add $z$ to $Z_m$ and
replace $T$ by $T\setminus D_m(z)$. Each iteration removes at least one
quarter of the residual set. Thus
\begin{equation}
 |Z_m|\le4n(m-1)+1.
 \label{eq:abg-pool-size}
\end{equation}
Indeed, after $j$ iterations the residual size is at most
$2^{n(m-1)}(3/4)^j$, which is less than one by the displayed bound.
A previously selected $z$ cannot cover a nonempty later residual set,
so repetitions of anchors are unnecessary.

If $T=\varnothing$, every $(m-1)$-tuple has a covering anchor in $Z_m$.
Using that anchor in \eqref{eq:abg-suffix-relation} proves totality of
$W_{m-1}$, and its exclusion property follows from the true suffix.
Continue at stage $m-1$. Otherwise stop, retaining this nonempty $T$;
then
\begin{equation}
 |D_m(x)\cap T|<|T|/4\qquad\text{for every }x\in X.
 \label{eq:abg-residual-property}
\end{equation}
The pools, construction choices, and membership labels are used only to
establish existence of advice. The final verifier does not run this
construction or test the residual property.

\paragraph{Case 1: all stages are covered.}
We reach the total one-coordinate relation $W_1$. Since it excludes
$a(x)$, it accepts precisely the other bit. On input $x$ and requested
membership bit $b\in\{0,1\}$, the verifier guesses
$z_j\in Z_j$ for $2\le j\le k$ and a certificate for
\begin{equation}
 W\bigl((x,z_2,\ldots,z_k),
        (1-b,a(z_2),\ldots,a(z_k))\bigr).
 \label{eq:abg-exhausted-verifier}
\end{equation}
The relation's totality supplies such a certificate if $b=a(x)$;
exclusion of the true vector prevents one if $b\ne a(x)$.
This also handles $k=1$ directly, with no pools or covering stages.

\paragraph{Case 2: a nonempty residual set remains.}
Suppose the construction stops at stage $m\ge2$. For each $x\in X$
put $T_x=T\setminus D_m(x)$. Equation~\eqref{eq:abg-residual-property}
gives $|T_x|>3|T|/4$. On a tuple $\mathbf u\in T_x$, the invariant
makes $W_m((\mathbf u,x),\cdot)$ total, but the definition of $D_m(x)$
forbids every accepted vector whose last bit is $a(x)$. Hence, for every $\mathbf u\in T_x$ and both values of $b$,
\begin{equation}
 \left[
 \exists\boldsymbol\beta\in\bits^{m-1}\;
 W_m((\mathbf u,x),(\boldsymbol\beta,1-b))
 \right]\iff b=a(x).
 \label{eq:abg-good-position}
\end{equation}
This conclusion holds simultaneously for all choices of the NP
witnesses: on a good position, the wrong requested bit has no witness.

Take
\begin{equation}
 M=16(n+1)+1,\qquad h=(M+1)/2.
 \label{eq:abg-majority-threshold}
\end{equation}
Choose $M$ independent uniform elements $\mathbf u_1,\ldots,\mathbf u_M$
of the finite set $T$, allowing repetitions. For any fixed $x$, each
position belongs to $T_x$ with probability greater than $3/4$.
The elementary additive Chernoff bound gives
\[
 \Pr\bigl[\#\{i:\mathbf u_i\in T_x\}<h\bigr]
 \le e^{-M/8}.
\]
Since $2^n e^{-M/8}<1$, a union bound fixes one list for which
\begin{equation}
 \#\{i\in[M]:\mathbf u_i\in T_x\}\ge h
       \qquad\text{for every }x\in X.
 \label{eq:abg-fixed-majority}
\end{equation}
Only this list, not the residual set $T$, is stored as advice. Sampling
from $T$ is an existence argument and need not be efficient.

\paragraph{The majority verifier and its exact witness.}
For input $(x,b)$ the verifier guesses $h$ distinct \emph{positions}
\[
 1\le i_1<i_2<\cdots<i_h\le M.
\]
For each selected position $i_j$, it guesses a vector
$\boldsymbol\beta^{(j)}\in\bits^{m-1}$, one anchor
$z_{m+1}^{(j)}\in Z_{m+1},\ldots,z_k^{(j)}\in Z_k$, and one
$V_W$-certificate for
\begin{equation}
 \begin{split}
 W\bigl(
   (\mathbf u_{i_j},x,z_{m+1}^{(j)},\ldots,z_k^{(j)}),
   (\boldsymbol\beta^{(j)},1-b,
       a(z_{m+1}^{(j)}),\ldots,a(z_k^{(j)}))
 \bigr).
 \end{split}
 \label{eq:abg-majority-verifier}
\end{equation}
It accepts exactly when all $h$ certificates pass the deterministic
check in \eqref{eq:abg-np-normalization}. Positions must be distinct;
the advised tuples at different positions need not be.

If $b=a(x)$, at least $h$ positions are good by
\eqref{eq:abg-fixed-majority}, and each supplies an accepting certificate
by \eqref{eq:abg-good-position}. If $b\ne a(x)$, no good position can
have such a certificate. There are at most $M-h=h-1$ bad positions,
so no selection of $h$ distinct positions can pass. This proves
completeness and soundness of the majority test, rather than assuming
an oracle for the truth values of its NP predicates.

\paragraph{Advice, witness length, and the two sides.}
The advice stores $k$, a case flag, the stopping index if applicable,
$\eta_n$, the anchor pools and their true labels, and, in Case~2, the
fixed list of $M$ tuples. By \eqref{eq:abg-pool-size}, the pools and
labels use $O((n+1)^2k^2)$ bits. The list uses
$Mn(m-1)=O((n+1)^2k)$ bits. Both are polynomial, as is $\eta_n$.
Each guessed anchor is a position in an advised pool. The majority
witness has $h=O(n+1)$ copies of a polynomial-length relation witness,
a $(m-1)$-bit vector, and at most $k-m$ anchor indices; it too has
polynomial length. Every check, including distinctness of positions
and evaluation of the $h$ relation certificates, is polynomial-time.
All formats and clocks have fixed polynomial bounds on arbitrary
candidate advice; correctness uses the designated advice just constructed.

We have one common advice sequence and a verifier accepting $(x,b)$
exactly when $b=a(x)$ has a witness. Fixing $b=1$ proves
$A\in\NP/\poly$, and fixing $b=0$ proves
$\overline A\in\NP/\poly$. Thus
$A\in\NP/\poly\cap\coNP/\poly$, as required.
\end{proof}


\begin{thebibliography}{99}
\small
\setlength{\itemsep}{2pt}

\bibitem{Siv99}
D.~Sivakumar.
\newblock On membership comparable sets.
\newblock \emph{Journal of Computer and System Sciences}, 59(2):270--280,
1999.
\newblock \href{https://doi.org/10.1006/jcss.1999.1653}{doi:10.1006/jcss.1999.1653}.

\bibitem{BD13}
Sebastian Ben Daniel.
\newblock \emph{The Complexity of Direct-Sum Questions}.
\newblock Ph.D. thesis, Ben-Gurion University of the Negev, March~1, 2013.
\newblock Chapter~5, especially Theorem~5.2.1 and Section~5.3.
\newblock National Library of Israel catalogue record:
\url{https://www.nli.org.il/en/dissertations/NNL_ALEPH990036909080205171/NLI}.

\bibitem{BD26}
Sebastian Ben Daniel.
\newblock \emph{Linear Certificates for Membership Comparability,
Quadratic Barriers for Selectors}.
\newblock \href{https://arxiv.org/abs/2609.31053}{arXiv:2609.31053} [cs.CC], 2026.
\newblock Theorem~3.1, Theorem~7.1, and Appendices~F and~H.

\bibitem{ABG03}
Amihood Amir, Richard Beigel, and William Gasarch.
\newblock Some connections between bounded query classes and non-uniform
complexity.
\newblock \emph{Information and Computation}, 186(1):104--139, 2003.
\newblock \href{https://doi.org/10.1016/S0890-5401(03)00091-9}{doi:10.1016/S0890-5401(03)00091-9}.
\newblock Theorem~4.4 is cited using the authors' long version,
\url{https://www.cs.umd.edu/~gasarch/papers/nonuniform.pdf}.

\bibitem{CCHO05}
Jin-Yi Cai, Venkatesan~T. Chakaravarthy, Lane~A. Hemaspaandra,
and Mitsunori Ogihara.
\newblock Competing provers yield improved Karp--Lipton collapse results.
\newblock \emph{Information and Computation}, 198(1):1--23, 2005.
\newblock \href{https://doi.org/10.1016/j.ic.2005.01.002}{doi:10.1016/j.ic.2005.01.002}.
\newblock Earlier technical-report version: University of Rochester,
UR CSD/TR759, revised November 2002,
\url{https://hdl.handle.net/1802/594}.

\bibitem{VV86}
Leslie~G. Valiant and Vijay~V. Vazirani.
\newblock NP is as easy as detecting unique solutions.
\newblock \emph{Theoretical Computer Science}, 47:85--93, 1986.
\newblock \href{https://doi.org/10.1016/0304-3975(86)90135-0}{doi:10.1016/0304-3975(86)90135-0}.

\bibitem{Sud96}
Madhu Sudan.
\newblock Maximum likelihood decoding of Reed--Solomon codes.
\newblock In \emph{Proceedings of the 37th Annual IEEE Symposium on
Foundations of Computer Science}, pages 164--172, 1996.
\newblock The finite-field reconstruction statement used here is the
form recorded as Lemma~4 in~\cite{Siv99}.

\bibitem{Sauer72}
Norbert Sauer.
\newblock On the density of families of sets.
\newblock \emph{Journal of Combinatorial Theory, Series A}, 13:145--147,
1972.

\bibitem{cai-s2zpp}
Jin-Yi Cai.
\newblock $S_2^p\subseteq\ZPP^{\NP}$.
\newblock \emph{Journal of Computer and System Sciences}, 73(1):25--35,
2007.
\newblock \href{https://doi.org/10.1016/j.jcss.2003.07.015}{doi:10.1016/j.jcss.2003.07.015}.

\bibitem{kannan82}
Ravi Kannan.
\newblock Circuit-size lower bounds and non-reducibility to sparse sets.
\newblock \emph{Information and Control}, 55(1--3):40--56, 1982.
\newblock \href{https://doi.org/10.1016/S0019-9958(82)90382-5}{doi:10.1016/S0019-9958(82)90382-5}.

\bibitem{chen-hirahara-ren24}
Lijie Chen, Shuichi Hirahara, and Hanlin Ren.
\newblock Symmetric exponential time requires near-maximum circuit size.
\newblock In \emph{Proceedings of the 56th Annual ACM Symposium on Theory
of Computing (STOC 2024)}, pages 1990--1999. ACM, 2024.
\newblock \href{https://doi.org/10.1145/3618260.3649624}{doi:10.1145/3618260.3649624}.
\newblock Author-hosted proceedings version:
\url{https://hanlin-ren.github.io/files/pdf/stoc24_S2E_lb.pdf}.

\bibitem{li24}
Zeyong Li.
\newblock Symmetric exponential time requires near-maximum circuit size:
simplified, truly uniform.
\newblock In \emph{Proceedings of the 56th Annual ACM Symposium on Theory
of Computing (STOC 2024)}, pages 2000--2007. ACM, 2024.
\newblock \href{https://doi.org/10.1145/3618260.3649615}{doi:10.1145/3618260.3649615}.
\newblock Revised author manuscript: ECCC TR23-156, revision~1,
April~4, 2024, \url{https://eccc.weizmann.ac.il/report/2023/156/}.

\bibitem{bft98}
Harry Buhrman, Lance Fortnow, and Thomas Thierauf.
\newblock Nonrelativizing separations.
\newblock In \emph{Proceedings of the 13th Annual IEEE Conference on
Computational Complexity}, pages 8--12. IEEE Computer Society, 1998.
\newblock \href{https://doi.org/10.1109/CCC.1998.694585}{doi:10.1109/CCC.1998.694585}.
\newblock Author-hosted manuscript:
\url{https://lance.fortnow.com/papers/files/nonrel.pdf}.


\bibitem{Adl78}
Leonard~M. Adleman.
\newblock Two theorems on random polynomial time.
\newblock In \emph{Proceedings of the 19th Annual Symposium on Foundations
of Computer Science}, pages 75--83. IEEE Computer Society, 1978.
\newblock \href{https://doi.org/10.1109/SFCS.1978.37}{doi:10.1109/SFCS.1978.37}.

\bibitem{Bar02}
Boaz Barak.
\newblock A probabilistic-time hierarchy theorem for ``slightly
non-uniform'' algorithms.
\newblock In Jos\'e D.~P. Rolim and Salil Vadhan, editors,
\emph{Randomization and Approximation Techniques in Computer Science
(RANDOM 2002)}, volume 2483 of \emph{Lecture Notes in Computer Science},
pages 194--208. Springer, 2002.
\newblock \href{https://doi.org/10.1007/3-540-45726-7_16}{doi:10.1007/3-540-45726-7\_16}.

\bibitem{FS04}
Lance Fortnow and Rahul Santhanam.
\newblock Hierarchy theorems for probabilistic polynomial time.
\newblock In \emph{Proceedings of the 45th Annual IEEE Symposium on
Foundations of Computer Science}, pages 316--324. IEEE Computer Society,
2004.
\newblock \href{https://doi.org/10.1109/FOCS.2004.33}{doi:10.1109/FOCS.2004.33}.
\newblock Author-hosted manuscript:
\url{https://lance.fortnow.com/papers/files/probhier.pdf}.

\bibitem{He25}
Songhua He.
\newblock A note on a hierarchy theorem for promise-BPTIME.
\newblock \emph{Electronic Colloquium on Computational Complexity},
TR25-004, revision~1, December~5, 2025.
\newblock \url{https://eccc.weizmann.ac.il/report/2025/004/}.


\bibitem{karpinski-verbeek87}
Marek Karpinski and Rutger Verbeek.
\newblock Randomness, provability, and the separation of Monte Carlo time
and space.
\newblock In Egon B\"orger, editor, \emph{Computation Theory and Logic},
volume 270 of \emph{Lecture Notes in Computer Science}, pages 189--207.
Springer, 1987.
\newblock \href{https://doi.org/10.1007/3-540-18170-9_166}{doi:10.1007/3-540-18170-9\_166}.

\bibitem{lu-oliveira-santhanam21}
Zhenjian Lu, Igor~C. Oliveira, and Rahul Santhanam.
\newblock Pseudodeterministic algorithms and the structure of probabilistic time.
\newblock In \emph{Proceedings of the 53rd Annual ACM SIGACT Symposium on
Theory of Computing (STOC 2021)}, pages 303--316. ACM, 2021.
\newblock \href{https://doi.org/10.1145/3406325.3451085}{doi:10.1145/3406325.3451085}.
\newblock Author manuscript: \url{https://arxiv.org/abs/2103.08539}.

\bibitem{bennett-gill81}
Charles~H. Bennett and John Gill.
\newblock Relative to a random oracle $A$,
$\Pclass^A\ne\NP^A\ne\coNP^A$ with probability~1.
\newblock \emph{SIAM Journal on Computing}, 10(1):96--113, 1981.
\newblock \href{https://doi.org/10.1137/0210008}{doi:10.1137/0210008}.

\bibitem{SSS95}
Jeanette~P. Schmidt, Alan Siegel, and Aravind Srinivasan.
\newblock Chernoff--Hoeffding bounds for applications with limited independence.
\newblock \emph{SIAM Journal on Discrete Mathematics}, 8(2):223--250, 1995.
\newblock \href{https://doi.org/10.1137/S089548019223872X}{doi:10.1137/S089548019223872X}.
\newblock Author-hosted manuscript:
\url{https://www.cs.umd.edu/~srin/PDF/ch-bounds.pdf}.

\bibitem{BEG17}
Salman Beigi, Omid Etesami, and Amin Gohari.
\newblock The value of help bits in randomized and average-case complexity.
\newblock \emph{Computational Complexity}, 26:119--145, 2017.
\newblock \href{https://doi.org/10.1007/s00037-016-0135-x}{doi:10.1007/s00037-016-0135-x}.
\newblock Preprint: \url{https://arxiv.org/abs/1408.0499}.

\end{thebibliography}
\end{document}